\RequirePackage{fix-cm}
\documentclass[aps,pra,superscriptaddress,twocolumn,10pt,bibliography]{revtex4-2}
\usepackage{anyfontsize}

\usepackage{amsmath,amsfonts,amssymb,amstext,amsthm}
\usepackage{mathtools}
\usepackage{mathrsfs}
\usepackage{nccmath}
\usepackage{blkarray}
\usepackage{bm}

\allowdisplaybreaks

\usepackage{graphicx}
\usepackage[dvipsnames,svgnames]{xcolor}

\usepackage{tikz}
\usetikzlibrary{arrows.meta,backgrounds,calc,decorations.pathmorphing,fit,matrix,positioning}

\usepackage{array}
\usepackage{makecell}
\usepackage{afterpage}
\usepackage{pifont}

\usepackage{paralist}
\usepackage[most]{tcolorbox}
\usepackage{comment}
\usepackage[normalem]{ulem}

\newcounter{restated}
\newenvironment{restated}[2]{%
  \stepcounter{restated}%
  \expandafter\def\csname the#1\endcsname{\ref*{#2}}%
  \expandafter\def\csname theH#1\endcsname{restated.\arabic{restated}}%
  \def\thetheorem{\ref*{#2}}%
  \addtocounter{theorem}{-1}%
  \def\restatedenv{#1}%
  \begin{#1}}{\expandafter\end\expandafter{\restatedenv}}

\usepackage[page,header]{appendix}

\usepackage{algpseudocode}

\newcounter{protocol}
\renewcommand{\theprotocol}{\arabic{protocol}}

\newenvironment{protocolbox}[2]{%
    \refstepcounter{protocol}%
    \par\medskip
    \noindent\rule{\linewidth}{0.4pt}\par
    \vspace{0em}
    \noindent\textbf{Protocol~\theprotocol. #2}%
    \label{#1}%
    \par\vspace{-0.6em}
    \noindent\rule{\linewidth}{0.4pt}\par
    \vspace{0.15em}
}{%
    \par\vspace{0.25em}
    \noindent\rule{\linewidth}{0.4pt}\par
    \medskip
}

\algrenewcommand\algorithmicindent{1em}

\makeatletter
\newcommand{\StateP}[1]{%
  \State
  \parbox[t]{\dimexpr\linewidth-\ALG@tlm\relax}{\raggedright #1\strut}%
}

\newcommand{\StatexP}[1]{%
  \Statex
  \parbox[t]{\dimexpr\linewidth-\ALG@tlm\relax}{\raggedright #1\strut}%
}

\makeatother

\usepackage[sort&compress]{natbib}
\usepackage{bibunits}

\makeatletter
\def\stocname{Contents}
\def\supp@numberline#1{\makebox[2.6em][l]{#1}}

\newcommand\l@suppsec[2]{%
  \par\addvspace{4pt}\noindent
  \begingroup
    \let\numberline\supp@numberline
    \bfseries #1%
  \endgroup
  \nobreak\leaders\hbox to .6em{\hfil.\hfil}\hfill
  \nobreak\hbox{#2}\par
}

\newcommand\l@suppsub[2]{%
  \par\noindent\hspace*{2.6em}
  \begingroup
    \let\numberline\supp@numberline
    #1%
  \endgroup
  \nobreak\leaders\hbox to .6em{\hfil.\hfil}\hfill
  \nobreak\hbox{#2}\par
}

\newcommand\supptableofcontents{%
  {\let\addcontentsline\@gobblethree \section*{\stocname}}%
  \begingroup
    \let\toc@pre\@empty
    \let\toc@post\@empty
    \let\l@section\l@suppsec
    \let\l@subsection\l@suppsub
    \@starttoc{stoc}%
  \endgroup
}
\makeatother

\usepackage[linktocpage,colorlinks]{hyperref}
\hypersetup{
    colorlinks = true,
    citecolor  = YellowOrange,
    linkcolor  = RoyalBlue,
    urlcolor   = RedViolet,
}

\usepackage{cleveref}

\newtheorem{theorem}{Theorem}
\newtheorem{proposition}[theorem]{Proposition}
\newtheorem{lemma}[theorem]{Lemma}
\newtheorem{corollary}[theorem]{Corollary}
\newtheorem{definition}[theorem]{Definition}

\theoremstyle{definition}

\AddToHook{env/proposition/begin}{\crefalias{theorem}{proposition}}
\AddToHook{env/lemma/begin}{\crefalias{theorem}{lemma}}
\AddToHook{env/corollary/begin}{\crefalias{theorem}{corollary}}
\AddToHook{env/definition/begin}{\crefalias{theorem}{definition}}
\AddToHook{env/question/begin}{\crefalias{theorem}{question}}
\AddToHook{env/remark/begin}{\crefalias{theorem}{remark}}
\AddToHook{env/example/begin}{\crefalias{theorem}{example}}
\AddToHook{env/conjecture/begin}{\crefalias{theorem}{conjecture}}

\crefname{equation}{Eq.}{Eqs.}
\Crefname{equation}{Eq.}{Eqs.}

\crefname{section}{Sec.}{Secs.}
\Crefname{section}{Sec.}{Secs.}

\crefname{subsection}{Sec.}{Secs.}
\Crefname{subsection}{Sec.}{Secs.}

\crefname{figure}{Fig.}{Figs.}
\Crefname{figure}{Fig.}{Figs.}

\crefname{table}{Table}{Tables}
\Crefname{table}{Table}{Tables}

\crefname{algorithm}{Algorithm}{Algorithms}
\Crefname{algorithm}{Algorithm}{Algorithms}

\crefname{protocol}{Protocol}{Protocols}
\Crefname{protocol}{Protocol}{Protocols}

\crefname{theorem}{Theorem}{Theorems}
\Crefname{theorem}{Theorem}{Theorems}

\crefname{proposition}{Proposition}{Propositions}
\Crefname{proposition}{Proposition}{Propositions}

\crefname{lemma}{Lemma}{Lemmas}
\Crefname{lemma}{Lemma}{Lemmas}

\crefname{corollary}{Corollary}{Corollaries}
\Crefname{corollary}{Corollary}{Corollaries}

\crefname{definition}{Definition}{Definitions}
\Crefname{definition}{Definition}{Definitions}

\crefname{question}{Question}{Questions}
\Crefname{question}{Question}{Questions}

\crefname{remark}{Remark}{Remarks}
\Crefname{remark}{Remark}{Remarks}

\crefname{example}{Example}{Examples}
\Crefname{example}{Example}{Examples}

\crefname{conjecture}{Conjecture}{Conjectures}
\Crefname{conjecture}{Conjecture}{Conjectures}

\crefname{suppsection}{Supplementary Note}{Supplementary Notes}
\Crefname{suppsection}{Supplementary Note}{Supplementary Notes}

\crefname{suppsubsection}{Supplementary Note}{Supplementary Notes}
\Crefname{suppsubsection}{Supplementary Note}{Supplementary Notes}

\newcommand{\ket}[1]{|#1\rangle}

\newcommand{\bra}[1]{\langle#1|}
\newcommand{\ketbra}[2]{|#1\rangle\langle#2|}
\newcommand{\braket}[2]{\langle #1 \vert #2 \rangle}

\DeclareMathOperator{\tr}{Tr}

\newcommand{\rid}{{\mathrm{id}}}
\newcommand{\bR}{\mathbb{R}}
\newcommand{\bC}{\mathbb{C}}

\newcommand{\bI}{\mathbb{I}}
\newcommand{\bE}{\mathbb{E}}

\newcommand{\cD}{{\mathcal{D}}}
\newcommand{\cE}{{\mathcal{E}}}
\newcommand{\cF}{{\mathcal{F}}}

\newcommand{\cH}{{\mathcal{H}}}
\newcommand{\cI}{{\mathcal{I}}}

\newcommand{\cK}{{\mathcal{K}}}
\newcommand{\cL}{{\mathcal{L}}}
\newcommand{\cM}{{\mathcal{M}}}
\newcommand{\cN}{{\mathcal{N}}}
\newcommand{\cO}{{\mathcal{O}}}
\newcommand{\cP}{{\mathcal{P}}}

\newcommand{\cR}{{\mathcal{R}}}
\newcommand{\cS}{{\mathcal{S}}}
\newcommand{\cT}{{\mathcal{T}}}
\newcommand{\cU}{{\mathcal{U}}}
\newcommand{\cV}{{\mathcal{V}}}

\newcommand{\rF}{{\mathrm{F}}}

\newcommand{\rI}{{\mathrm{I}}}

\newcommand{\rP}{{\mathrm{P}}}

\newcommand{\rR}{{\mathrm{R}}}

\newcommand{\sA}{{\mathsf{A}}}
\newcommand{\sB}{{\mathsf{B}}}
\newcommand{\sC}{{\mathsf{C}}}
\newcommand{\sD}{{\mathsf{D}}}
\newcommand{\sE}{{\mathsf{E}}}

\newcommand{\sM}{{\mathsf{M}}}

\newcommand{\sR}{{\mathsf{R}}}

\newcommand{\sX}{{\mathsf{X}}}

\newcommand{\scE}{\mathscr{E}}

\newcommand{\bmE}{\bm{E}}
\newcommand{\bmF}{\bm{F}}
\newcommand{\bmB}{\bm{B}}
\newcommand{\bmq}{\bm{q}}
\newcommand{\bmM}{\bm{M}}

\newcommand{\f}{\frac}

\newcommand{\til}{\tilde}
\newcommand{\widtil}{\widetilde}

\newcommand{\supp}{\mathrm{supp}}

\newcommand{\diag}{\mathrm{diag}}

\newcommand{\rvec}{\mathrm{vec}}
\newcommand{\rspan}{\mathrm{span}}

\newcommand{\rrange}{\mathrm{range}}

\newcommand{\lag}{\langle}
\newcommand{\rag}{\rangle}

\newtcolorbox{dashedbox}[1][]{
  enhanced,
  sharp corners,
  boxrule=0.5pt,
  colback=white,
  colframe=black,
  fonttitle=\bfseries,
  title=#1,
  dash pattern=on 2pt off 2pt,
  borderline={0.5pt}{0pt}{black!60!white,dashed},
  before skip=5pt, after skip=5pt,
  boxsep=5pt
}

\begin{document}

\title{When Classical Correlations Certify Entanglement Recovery}

\author{Takeru~Utsumi}
\email{takeru-utsumi@g.ecc.u-tokyo.ac.jp}
\affiliation{Graduate School of Arts and Sciences, The University of Tokyo, 3-8-1 Komaba, Meguro-ku, Tokyo 153-8902, Japan}
\author{Yota~Tachibana}
\affiliation{Department of Mathematical Informatics, Graduate School of Informatics, Nagoya University, Furo-cho, Chikusa-ku, Nagoya, Aichi 464-8601, Japan}
\author{Yoshifumi~Nakata}
\affiliation{Department of Computer Science, School of Computing, Institute of Science Tokyo, 4259 Nagatsuta-cho, Midori-ku, Yokohama, Kanagawa 226-8501, Japan}
\affiliation{Yukawa Institute for Theoretical Physics, Kyoto University, Kitashirakawa Oiwake-cho, Sakyo-ku, Kyoto 606-8502, Japan}
\author{Takaya~Matsuura}
\affiliation{RIKEN Center for Quantum Computing (RQC), 2-1 Hirosawa, Wako, Saitama 351-0198, Japan}
\author{Ryuji~Takagi}
\affiliation{Graduate School of Arts and Sciences, The University of Tokyo, 3-8-1 Komaba, Meguro-ku, Tokyo 153-8902, Japan}
\author{Francesco~Buscemi}
\affiliation{Department of Mathematical Informatics, Graduate School of Informatics, Nagoya University, Furo-cho, Chikusa-ku, Nagoya, Aichi 464-8601, Japan}
\author{Masato~Koashi}
\affiliation{Photon Science Center, Graduate School of Engineering, The University of Tokyo, 7-3-1 Hongo, Bunkyo-ku, Tokyo 113-8656, Japan}

\date{\today}

\begin{abstract}

Entanglement and state distinguishability have long been central topics in quantum foundations. While each has developed into a rich subject in its own right, they can be connected through measurements in complementary bases. This connection provides insights into quantum and classical correlations and underlies many information-processing tasks, most notably quantum error correction (QEC). However, it has remained largely open whether the connection between entanglement and distinguishability extends to general measurements without assuming complementarity. We answer this question in the affirmative: the key is irreducibility induced by measurements, a much more relaxed condition than complementarity. Specifically, for POVMs satisfying an irreducibility condition, we establish a quantitative relation between the infidelity of one-sided local transformation into a maximally entangled state and the failure probability in state discrimination. We then characterize when both errors can vanish simultaneously. Our results have direct applications to entanglement distillation and QEC. Their errors can be certified by estimating classical input-output correlations from experimentally accessible local measurements, without requiring complementarity. The results also extend to quantum measurement theory. We derive novel trade-off relations that bridge two historically distinct approaches: information gain--irreversibility and observable noise--disturbance. Our connection between entanglement and distinguishability via irreducibility thus offers a common framework for entanglement distillation, QEC, and measurement trade-offs.

\end{abstract}
\maketitle

\section{Introduction}
\label{sec:introduction}
Correlation is a central concept in physics, from classical to quantum theory.
Purely quantum correlations---most notably entanglement---reveal fundamental non-classical features of nature, as exemplified by violations of Bell inequalities~\cite{Bell1964OntheEinsteinPodolskyRosenparadox, Clauser1969ProposedExperimenttoTestLocal, Brunner2014bellnonlocality}, and serve as operational resources for quantum computation~\cite{Raussendorf2001OneWayQuantumComputer, Jozsa2003entanglementquantumcomputationalspeedup, Horodecki2009quantumentanglement}, quantum communication~\cite{Schumacher1996sendingentanglement, lloyd1997capacity, barnum1998inftrans, Barnum2000OnquantumFidel, Shor2003QuantumChannelsHowtoFind}, and quantum sensing and metrology~\cite{Giovannetti2004QuantumEnhancedMeasurements, Giovannetti2011Advancesquantummetrology, Degen2017Quantumsensing}.
Beyond these direct applications, quantum and classical correlations also provide a powerful framework for formulating quantum information-processing tasks~\cite{devetak2005private, Devetak2005capacityQchannel, devetak2005distillation, Abeyesinghe2009Motherfamilytree, Datta2011apexfamilytree, Wilde2017ConvrersePrivate}.
Such formulations offer practical tools for analysis and implementation and, at the same time, often clarify the underlying conceptual structure.

An illustrative example is quantum error correction (QEC)~\cite{Shor1995schemereducingdecoherence, Steane1996errorcorrectingcodesinquantum, LaflammePerfectQuantumErrorCorrectingCode}. 
From a correlation-based perspective, QEC can be characterized by a maximally entangled state (MES) between the system and the reference system that purifies the logical input~\cite{hayden2008decoupling, dupuis2010decoupling, dupuis2014one}.
Together with the fact that an MES can be characterized by correlations in complementary bases, such as mutually unbiased Pauli-$X$ and $Z$ bases, protecting quantum information can be interpreted as protecting classical information encoded in two complementary bases~\cite{Shor2000simpleproofsecurity, Luo2007Efficientlyimplementablecodes, wilde2010convolutional}.
Indeed, measuring the reference system in either basis reduces the QEC problem to discriminating the corresponding output states after noise~\cite{Chefles2000Quantumstatediscrimination, Barnett2009Quantumstatediscrimination, Bae2015statediscriminationapplications}.
This viewpoint is not merely conceptual but is also operationally useful since the correlation-based interpretation turns a genuinely quantum recovery task into the simultaneous recovery of two complementary classical messages. This is most exemplified in Calderbank--Shor--Steane (CSS) codes~\cite{Calderbank1996goodquantumcodeexists, Steane1996MultipleParticleErrorCorrection}, where logical-$X$ and -$Z$ information is handled separately.

A similar structure, concerning correlations and distinguishability, also appears in quantum measurement theory.
Since Heisenberg’s original proposal~\cite{Heisenberg1927Uberden, WheelerZurek1983QuantumTheoryMeasurement}, uncertainty relations have been studied from various, typically operational, perspectives.
In quantum-correlation-based approaches~\cite{Groenewold1971InformationGain, Ozawa1986InformationGain, Fuchs1999informationgainvsstatedis, Winter2001Compressionofquantummeasurement, Winter2004ExtrinsicIntrinsic, Maccone2007Entropicinformation, Buscemi2008GlobalInformationBalance, Wilde2012imulatingquantummeasurements, Berta2014identifyinginformationgain, Anshu2019ConvexSplitHypothesisTesting}, they are formulated as trade-off relations in terms of correlations between the measurement output and a reference system: how much information is gained and how much quantum correlation is irreversibly lost due to the measurement.
In observable-based approaches~\cite{Ozawa2003universallyvalidreformulation, ozawa2004uncertaintyprinciplequantuminstruments, Ozawa2005Universaluncertainty, OZAWA2004ncertaintynoisedisturbancegeneralizedmeasure}, the focus is instead on the impact of measurements on the observables.
Operationally, the trade-offs between noise and disturbance on observables are quantified by how indistinguishable the eigenstates of the observable become after the measurement, or, equivalently, by the loss of classical information about the eigenstate labels~\cite{Buscemi2014NoiseandDisturbance, Renes2017uncertaintyAnoperationalapproach}.
These two formulations,
one based on correlations with a reference system and the other based on distinguishability of classical labels, are different manifestations of common properties of quantum measurements.

The parallel structures described above suggest a general correspondence between quantum correlations and classical distinguishability of measurement outcomes.
Since its first observation nearly a century ago~\cite{Einstein1935CanQuantumMechanical}, such a correspondence has been studied extensively and is now well understood via complementarity of mutually unbiased bases (MUBs)~\cite{koashi2007complementaritydistillablesecretkey, Renes2013thephysics, renes2016uncertainrelationsandAQEC, Renes2022QuantumInformationCM, Berta2014Entanglementassistedguessingcomplementary}. Beyond complementarity, in the zero-error case, perfect distinguishability of the outcomes of an irreducible family of projective measurements certifies maximal entanglement~\cite{Zhu2021ZeroUncertaintyStates, Sarkar2022Certificationincompatiblemeasurements, ran2026zerouncertaintystatesrelativeobservable}. In the presence of errors, a quantitative correspondence for general measurements beyond complementary bases has remained unclear. Prior works~\cite{Berta2010UncertaintyPrinciplequantummemory, Bergh2021ExperimentallyAccessibleBoundsDistillableEntanglement, nakata2025constructing} provided partial evidence for such a quantitative correspondence in the contexts of uncertainty relations and QEC. These results, however, remain perturbative around MUBs, and the relation in the general regime was left as a conjecture. Resolving this problem clarifies the structure of the measurements underlying the correspondence and its robustness to errors. It is also of practical importance in applications such as QEC, where errors are unavoidable and the measurements available on a physical platform are often neither mutually unbiased nor projective.

In this work, we establish a general correspondence between quantum correlations and the distinguishability of classical labels.
We consider a bipartite state in which one subsystem undergoes a measurement from a prescribed family of positive operator-valued measures (POVMs), and ask when the ability to distinguish the measurement outcomes from the other subsystem certifies that a local operation on that subsystem can recover a maximally entangled state. Our main theorem gives quantitative bounds that convert label-discrimination errors into an error bound for local conversion of the bipartite state into an MES, and characterizes when both errors vanish simultaneously. The measurements need not be mutually unbiased or projective. The relevant condition is instead the irreducibility induced by the POVMs, which roughly means that the action of the measurements cannot be simultaneously decomposed into smaller actions. Under this condition, classical distinguishability is sufficient to quantitatively certify quantum correlations.

This criterion leads to various certification methods that use only experimentally accessible classical data. For entanglement distillation, Alice and Bob can certify how well a maximally entangled state can be distilled by one-shot local operations and classical communication (LOCC), from outcome coincidences of the local measurements available to them. In entanglement distribution, the same idea reduces to a send-and-measure test certifying how faithfully entanglement is distributed through an unknown noisy channel. In QEC, it gives a prepare-and-measure test that certifies decoding performance from the decoding errors of associated classical--quantum (CQ) codes. In all of these applications, the recoverability of genuinely quantum correlations is certified from classical input-output statistics under realistic measurement constraints.

The same perspective also unifies two formulations of quantum measurement uncertainty. We introduce one-shot measures of four quantities: the information gain and irreversibility of a measurement, and the noise in measuring an observable and the disturbance to it. We then derive trade-off relations connecting these quantities. These relations show that the information gain--irreversibility and observable noise--disturbance approaches, which have developed in parallel for a long time, can be viewed as two expressions of the same entanglement--distinguishability relation. 

Overall, our results identify irreducibility, rather than complementarity, as a principle behind the quantitative correlation--distinguishability correspondence. Beyond the applications developed here, this irreducibility-based perspective suggests a broadly applicable route for using experimentally accessible classical data to probe entanglement, recoverability, and measurement trade-offs.

\section{Results}
\label{sec:mainresults}

Physical systems are denoted by $\sA, \sB, \dots$, with $\cH^\sA$ the Hilbert space of $\sA$ and $d_\sA = \dim \cH^\sA$. We write $\sA \cong \sB$ when $\cH^\sA$ and $\cH^\sB$ are isomorphic. Operators on $\cH^\sA$ and maps from $\cH^\sA$ to $\cH^\sB$ are denoted by $T^\sA$ and $\cT^{\sA\to\sB}$, respectively. System labels, as well as identity operators and maps, are omitted when clear from the context. Missing system labels on states denote partial traces, e.g., $\rho^\sA=\tr_\sB[\rho^{\sA\sB}]$. A pure state $\ket{\varphi}$ is also written as $\varphi = \ketbra{\varphi}{\varphi}$.
We write $\bar{\cdot}$ and $\cdot^\top$ for the complex conjugate and transpose in a fixed basis, and $\cdot^\dag$ for the Hermitian conjugate. 
Negative powers of a positive semidefinite operator are defined on its support.
The complete notation is summarized in~\cref{sec:notation in appendix}.

\subsection{Connecting entanglement and distinguishability}
\label{sec:main result on entanglement generation and state discrimination}

We start our analysis in~\cref{sec:quantitative connection} with a general task of local state transformation and connect the error to that of the state discrimination. We then investigate in~\cref{sec:irreducibility maximal entanglement} when the local transformation to an MES is achievable, and in~\cref{sec:vanishing errors} when both errors vanish simultaneously.

\begin{figure*}[t]
    \centering
    \begingroup
    \def\MainPaperEmbedding{1}
    \ifdefined\MainPaperEmbedding
\def\BeginMainTheoremSetupFigure{}
\def\EndMainTheoremSetupFigure{}
\else
\documentclass[multi=tikzpicture,border=1pt]{standalone}

\usepackage{fix-cm}
\usepackage{amsmath,bm,mathrsfs}
\usepackage{lmodern}
\usepackage[scaled=0.92]{helvet}
\usepackage{tikz}
\usetikzlibrary{arrows.meta,backgrounds,calc}
\DeclareFontFamily{U}{rsfs}{\skewchar\font127}
\DeclareFontShape{U}{rsfs}{m}{n}{<-> rsfs10}{}
\def\BeginMainTheoremSetupFigure{\begin{document}}
\def\EndMainTheoremSetupFigure{\end{document}}
\fi

\providecommand{\sA}{\mathsf{A}}
\providecommand{\sB}{\mathsf{B}}
\providecommand{\cP}{\mathcal{P}}
\providecommand{\cV}{\mathcal{V}}
\providecommand{\rP}{\mathrm{P}}
\providecommand{\scE}{\mathscr{E}}
\providecommand{\ket}[1]{\lvert #1\rangle}

\definecolor{panelborder}{RGB}{166,172,178}
\definecolor{alicefill}{RGB}{255,238,215}
\definecolor{aliceborder}{RGB}{214,126,37}
\definecolor{bobfill}{RGB}{218,237,252}
\definecolor{bobborder}{RGB}{51,126,185}
\definecolor{targetborder}{RGB}{26,164,161}

\newcommand{\MainTheoremSetupFigure}{%
  \def\panelbottom{0.15}%
  \def\stateheight{3.20}%
  \def\stateY{3.435}%
  \def\upperwireoffset{0.885}%
  \def\lowerwireoffset{-0.885}%
  \def\optimalPovmY{2.55}%
  \def\outputArrowY{1.18}%
  \def\resultLabelY{1.08}%
  \def\statefontsize{11.5}%
  \def\statebaselineskip{13}%
  \def\targetSourceDrop{0.48}%
  \def\bobRegionRight{5.57}%
  \def\bobRegionTop{3.62}%
  \def\targetCaptionOffset{0.66}%
  \def\alicePovmWidth{2.55}%
  \def\alicePovmFontSize{8.0}%
  \def\alicePovmBaselineSkip{9.2}%
  \def\aliceOutcomeArrowEnd{14.56}%
  \def\aliceOutcomeLabelX{14.62}%

\begin{tikzpicture}[
  x=0.966cm,
  y=1cm,
  >=Latex,
  font=\fontsize{8.2}{9.6}\selectfont\fontfamily{phv}\selectfont,
  line cap=round,
  line join=round,
  wire/.style={draw=black, line width=1pt, -{Latex[length=2.0mm,width=1.25mm]}},
  classical/.style={draw=black, double, double distance=0.75pt,
                    line width=0.35pt, -{Latex[length=2.0mm,width=1.25mm]}},
  shared/.style={classical},
  dependency/.style={draw=black, line width=1pt,
                     -{Latex[length=1.8mm,width=1.1mm]}},
  box/.style={draw=black, rounded corners=1.2mm, fill=white, line width=1pt,
              align=center, inner sep=2mm},
  panel/.style={draw=panelborder, rounded corners=1.4mm, fill=white,
                line width=1pt},
  smalllabel/.style={font=\fontsize{7.2}{8.2}\selectfont\fontfamily{phv}\selectfont},
  boxcaption/.style={font=\fontsize{8.0}{9.2}\selectfont\fontfamily{phv}\selectfont},
  targetcaption/.style={font=\fontsize{9.6}{10.8}\selectfont\fontfamily{phv}\selectfont},
  boxmath/.style={font=\fontsize{10}{11.5}\selectfont},
  resultmath/.style={font=\fontsize{11}{13}\selectfont}
]

\begin{scope}[on background layer]
  \path[panel] (0,\panelbottom) rectangle (8.90,6.75);
  \path[panel] (9.10,\panelbottom) rectangle (18.00,6.75);
  \path[fill=bobfill, rounded corners=1.5mm]
      (1.82,1.30) rectangle (\bobRegionRight,\bobRegionTop);
\end{scope}

\node[anchor=base west, font=\bfseries\fontsize{8}{9.2}\selectfont] at (0.25,6.35)
  {a};
\node[anchor=base west, font=\bfseries\fontsize{9}{10.5}\selectfont] at (0.62,6.35)
  {One-sided LO state transformation};
\node[anchor=base west, font=\bfseries\fontsize{8}{9.2}\selectfont] at (9.35,6.35)
  {b};
\node[anchor=base west, font=\bfseries\fontsize{9}{10.5}\selectfont] at (9.72,6.35)
  {State discrimination};

\node[box, minimum width=1.18cm, minimum height=\stateheight cm,
      font=\fontsize{\statefontsize}{\statebaselineskip}\selectfont]
      (xia) at (1.02,\stateY) {$\xi^{\sA\sB}$};
\coordinate (aAstart) at ($(xia.east)+(0,\upperwireoffset)$);
\coordinate (aBstart) at ($(xia.east)+(0,\lowerwireoffset)$);

\node[anchor=south west, font=\fontsize{8}{9}\selectfont\bfseries]
      at (1.94,1.42) {Bob};

\node[box, draw=targetborder, fill=white, minimum width=2.62cm,
      minimum height=\stateheight cm] (compare) at (7.30,\stateY) {};
\node[box, draw=bobborder, fill=white, minimum width=1.70cm,
      minimum height=0.86cm, inner sep=1.5mm, boxmath]
      (channel) at ($(xia.east)!0.5!(compare.west)+(0,\lowerwireoffset)$)
      {$\cP^{\sB\to\hat{\sB}}$};
\node[targetcaption] at ($(compare.center)+(0,\targetCaptionOffset)$)
      {target state};
\node[font=\fontsize{10}{11.5}\selectfont, inner sep=0pt]
      (targetformulaH) at ($(compare.center)+(0,0.02)$)
      {$\ket{\phi}^{\sA\hat{\sB}}\in\cV_\Omega$};
\node[font=\fontsize{8}{9}\selectfont, anchor=north]
      (EqSourceH) at ($(targetformulaH.south)+(0.58,-\targetSourceDrop)$)
      {$(\bm E,\bm q)$};
\draw[dependency] (EqSourceH.north) -- ($(targetformulaH.south)+(0.58,0)$);

\draw[wire] (aAstart) -- ($(compare.west)+(0,\upperwireoffset)$);
\draw[wire] (aBstart) -- (channel.west);
\draw[wire] (channel.east) -- ($(compare.west)+(0,\lowerwireoffset)$);
\node[smalllabel, anchor=south]
      at (2.02,{\stateY+\upperwireoffset+0.09}) {$\sA$};
\node[smalllabel, anchor=north]
      at (2.02,{\stateY+\lowerwireoffset-0.06}) {$\sB$};
\node[smalllabel, anchor=north]
      at ($(channel.east)+(0.22,-0.06)$) {$\hat{\sB}$};

\draw[densely dotted, line width=1pt,
      -{Latex[length=1.8mm,width=1.1mm]}]
      (compare.south) -- (7.30,\outputArrowY);
\node[resultmath, anchor=north] at (7.30,\resultLabelY)
      {$\varepsilon\mkern1.5mu\bigl(\cP(\xi),\phi\bigr)$};

\begin{scope}[on background layer]
  \path[fill=alicefill, rounded corners=1.5mm]
      (10.92,3.76) rectangle (17.86,6.08);
  \path[fill=bobfill, rounded corners=1.5mm]
      (10.92,0.35) rectangle (17.86,3.62);
\end{scope}

\node[anchor=north west, font=\fontsize{8}{9}\selectfont\bfseries]
      at (11.12,6.00) {Alice};
\node[anchor=south west, font=\fontsize{8}{9}\selectfont\bfseries]
      at (11.12,0.47) {Bob};

\node[box, minimum width=1.18cm, minimum height=\stateheight cm,
      font=\fontsize{\statefontsize}{\statebaselineskip}\selectfont]
      (xib) at (10.12,\stateY) {$\xi^{\sA\sB}$};
\coordinate (bAstart) at ($(xib.east)+(0,\upperwireoffset)$);
\coordinate (bBstart) at ($(xib.east)+(0,\lowerwireoffset)$);

\node[font=\fontsize{8.8}{10.0}\selectfont, opacity=0]
      (EfamilyAnchor) at (12.72,5.45) {$\{E_1,\,\cdots,E_L\}$};
\node[font=\fontsize{8.8}{10.0}\selectfont]
      (Efamily) at (12.72,5.33) {$\{E_1,\,\cdots,E_L\}$};
\node[box, draw=aliceborder, minimum width=\alicePovmWidth cm,
      minimum height=0.86cm, inner sep=1mm,
      font=\fontsize{\alicePovmFontSize}{\alicePovmBaselineSkip}\selectfont]
      (El) at (12.72,4.32) {$E_l^{\sA}=\{E_{j|l}^{\sA}\}_j$};
\draw[draw=aliceborder, line width=1pt,
      -{Latex[length=1.7mm,width=1.05mm]}]
      (EfamilyAnchor.south) -- node[font=\fontsize{8}{9}\selectfont,
      right, xshift=0.8mm] {$q_l$}
      (El.north);
\draw[wire] (bAstart) -- (El.west);
\node[smalllabel, anchor=south]
      at (11.12,{\stateY+\upperwireoffset+0.03}) {$\sA$};
\node[smalllabel, anchor=north]
      at (11.12,{\stateY+\lowerwireoffset-0.03}) {$\sB$};

\draw[classical] (El.east) -- (\aliceOutcomeArrowEnd,4.32);
\node[font=\fontsize{8}{9}\selectfont, anchor=west]
      at (\aliceOutcomeLabelX,4.32) {$j$};

\node[box, draw=bobborder, fill=white, minimum width=2.30cm,
      minimum height=1.34cm, inner sep=2.2mm]
      (opt) at (16.00,\optimalPovmY) {};
\node[boxcaption] at ($(opt.center)+(0,0.23)$) {optimal POVM};
\node[boxmath] at ($(opt.center)+(0,-0.20)$) {$\{F_j^{\sB}\}_j$};
\draw[wire] (bBstart) -- (opt.west);

\draw[classical] (opt.south) --
      node[font=\fontsize{8}{9}\selectfont, right, xshift=0.8mm] {$j$}
      (16.00,\outputArrowY);
\node[resultmath, anchor=north] at (16.00,\resultLabelY)
      {$\rP_{\mathrm{opt}}\mkern1.5mu\bigl(\scE_{\xi,E_l}\bigr)$};

\coordinate (lsharestartH) at ($(Efamily.east)+(0.12,0)$);
\draw[shared] (lsharestartH) -| ($(opt.north)+(0,0.02)$);
\node[font=\fontsize{8}{9}\selectfont, fill=alicefill, inner sep=1pt,
      anchor=south west] at ($(lsharestartH)+(0.12,0.04)$) {$l$};

\end{tikzpicture}%
}

\BeginMainTheoremSetupFigure
\MainTheoremSetupFigure
\EndMainTheoremSetupFigure
    \endgroup
    \caption{The two tasks on a shared state $\xi^{\sA\sB}$ connected by Theorem~\ref{thm:1}. \textbf{a}, In the one-sided LO state-transformation task, Bob applies a channel $\cP^{\sB\to\hat{\sB}}$ and compares the resulting state with a target state $\ket{\phi}^{\sA\hat{\sB}}\in\cV_\Omega$. \textbf{b}, In the state-discrimination task, Alice chooses a POVM $E_l^{\sA}$ with probability $q_l$, shares $l$ with Bob, and Bob performs an optimal POVM to guess Alice's outcome $j$. Here, $\cV_\Omega$ in \textbf{a} is determined by the family of Alice's POVMs $\bmE$ and the probability distribution $\bmq$ used in \textbf{b}.}
    \label{fig:main-theorem-setup}
\end{figure*}

\subsubsection{Quantitative connection between the two errors}
\label{sec:quantitative connection}

While our main goal is to clarify the connection between entanglement and classical distinguishability, we first consider two tasks on a possibly mixed state $\xi^{\sA\sB}$ shared between Alice and Bob: transforming $\xi^{\sA\sB}$ into a pure state by a one-sided local operation (LO), and discriminating the states induced on $\sB$ by Alice's measurement. See also~\cref{fig:main-theorem-setup}.

In the first task, Bob transforms $\xi^{\sA\sB}$ into a target state $\ket{\phi}^{\sA\hat{\sB}}$ via a quantum channel $\cP^{\sB\to\hat{\sB}}$, i.e., a completely positive and trace-preserving (CPTP) map on $\sB$.
We use the infidelity to quantify the error of this transformation:
\begin{equation}
\label{eq:def quantum error main}
    \varepsilon\big(\cP(\xi), \phi\big)
    \coloneqq 1 - \rF\big(\cP^{\sB\to\hat{\sB}}(\xi^{\sA\sB}), \ketbra{\phi}{\phi}^{\sA\hat{\sB}}\big),
\end{equation}
where $\rF(\rho,\sigma)=\big(\tr\sqrt{\sqrt{\sigma}\rho\sqrt{\sigma}}\big)^2$ is the Uhlmann fidelity.
The target of primary interest will be an MES, but we keep $\ket{\phi}$ general for now.

In the second task, Alice measures $\sA$ of the state $\xi^{\sA\sB}$, and Bob tries to guess the measurement outcome from $\sB$. More concretely, let $L$ be an arbitrary positive integer, let $\bmE=\{E_l^\sA\}_{l=1}^L$ denote a family of POVMs on $\sA$, and $\bmq = \{q_l\}_{l=1}^L$ be a probability distribution with $q_l>0$ for all $l$. Alice chooses a POVM  $E_l^\sA=\{E_{j|l}^{\sA}\}_{j=1}^{J_l}$ with probability $q_l$ and measures $\sA$ of $\xi^{\sA\sB}$, obtaining a measurement outcome $j$ with probability $p_{j|l} = \tr\big[E_{j|l}^\sA \xi^{\sA}\big]$. This measurement on $\sA$ induces an ensemble of states on the system $\sB$: 
\begin{equation}
    \scE_{\xi, E_l} \coloneqq \big\{p_{j|l}; \ \xi_{j|l}^\sB\big\}_{j=1}^{J_l},    
\end{equation}
where $\xi_{j|l}^\sB = \tr_\sA\big[E_{j|l}^\sA \xi^{\sA\sB}\big]/p_{j|l}$. 
From this ensemble, Bob tries to guess Alice's measurement outcome $j$. Assuming that Alice's choice of $l$ is shared with Bob, the optimal guessing probability is given by
\begin{align}
\label{eq:optimal guessing probability main}
    \rP_{\rm opt}(\scE_{\xi, E_l}) \coloneqq \max_{\{F_j^\sB\}} \sum_j p_{j|l} \tr[F_j^\sB \xi_{j|l}^\sB].
\end{align}
Here, the maximization is over all POVMs $\{F_j^\sB\}_j$.
This can be viewed as a state-discrimination task by Bob.

The LO state transformation and state discrimination are particularly fundamental in quantum information processing. The former underlies many important tasks, such as entanglement distillation and approximate QEC~\cite{devetak2005distillation, schumacher2001approximateerrorcorrection}, while the latter is central to numerous CQ tasks, including quantum cryptography~\cite{Bennett1992cryptographynonorthogonalstates, Dusek2000Unambiguousdiscriminationcryptography, Scarani2009securitypracticalqkd} and classical communication over quantum channels~\cite{Holevo1973BoundsQuantityInformationTransmitted, holevo1998capacityclassical, schumacher1997sendingclassicalinfo}. Their broader applications can be found in Ref.~\cite{khatri2024principlemodern}.

The two tasks are, a priori, unrelated except that both start from a common state $\xi^{\sA\sB}$. The LO state transformation depends on the channel $\cP$ and the target $\ket{\phi}$, while state discrimination depends on the family of measurements $\bmE$ and the probability distribution $\bmq$.
Nevertheless, for a suitable choice of the target state in the LO state transformation, these tasks are closely connected.

To clarify the connection, we introduce an operator $\Omega$ from the POVMs $\bmE$ and the probability distribution $\bmq$. Let $\hat{\sB} \cong \sA$.
For each $l$, define $\Omega_l^{\sA\hat{\sB}} \coloneqq \sum_j E_{j|l}^{\sA}\otimes \bar E_{j|l}^{\hat{\sB}}$.
Taking the average over $\bmq$, we define
\begin{align}
\label{eq:def of average of omega}
    \Omega^{\sA\hat{\sB}} \coloneqq \sum_{l=1}^L q_l \Omega_l^{\sA\hat{\sB}}.
\end{align}
This operator satisfies $0 \leq \Omega^{\sA\hat{\sB}} \leq \bI^{\sA\hat{\sB}}$, which implies $\|\Omega\|_\infty \leq 1$, where $\|\cdot\|_\infty$ is the operator norm defined by the largest singular value.
In the following, we refer to the eigenspace $\cV_\Omega$ of $\Omega^{\sA\hat{\sB}}$ with the largest eigenvalue $\|\Omega\|_\infty$ as the \emph{principal eigenspace}. 

The spectral gap of $\Omega^{\sA\hat{\sB}}$, which is the difference between the largest and the second-largest eigenvalues, is denoted by $\nu(\Omega)$. 
Here, we count eigenvalues with multiplicity. If the largest eigenvalue is degenerate, i.e., $\dim \cV_\Omega \geq 2$, then $\nu(\Omega) = 0$.

Our first result connecting the errors in LO state transformation and in state discrimination is as follows. The proof is given in~\cref{sec:proof thm1 new}.

\begin{theorem}
\label{thm:1}
Let $\ket{\phi}^{\sA\hat{\sB}} \in \cV_\Omega$ be a target state of the LO state transformation task.
Then, for any state $\xi^{\sA\sB}$, there exists a channel $\cP^{\sB\to\hat{\sB}}$ such that
\begin{align}
\label{eq:equation of main thm1}
    \nu(\Omega) \varepsilon(\cP(\xi), \phi) 
    \leq \|\Omega\|_\infty - \big\lag \rP_{\rm opt}^2(\scE_{\xi, E})\big\rag,
\end{align}
where $\big\lag \rP_{\rm opt}^2(\scE_{\xi, E}) \big\rag \coloneqq \sum_{l=1}^L q_l \left[\rP_{\rm opt}(\scE_{\xi, E_l})\right]^2$.
\end{theorem}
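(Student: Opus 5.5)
The plan has two parts. First, a purely spectral step reduces the statement to a lower bound on $\tr[\Omega\,\cP(\xi)]$. Second, a concrete channel $\cP$, built from $\xi$ alone, delivers that lower bound via a Cauchy--Schwarz argument.

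\emph{Spectral step.} If $\dim\cV_\Omega\ge 2$ then $\nu(\Omega)=0$. The right-hand side of \cref{eq:equation of main thm1} is then nonnegative, because each $\rP_{\rm opt}\le 1$ and, as the second step will show, $\langle \rP^2_{\rm opt}\rangle\le\max_\cP \tr[\Omega\cP(\xi)]\le\|\Omega\|_\infty$. So we may assume $\cV_\Omega=\rspan\{\ket{\phi}\}$. Write $\Omega\le \|\Omega\|_\infty\,\phi+\lambda_2(\bI-\phi)$, with $\lambda_2=\|\Omega\|_\infty-\nu(\Omega)$. Since $\phi$ is pure, $\rF(\rho,\phi)=\bra{\phi}\rho\ket{\phi}$, and for any state $\rho^{\sA\hat\sB}$ we get
\begin{equation*}
\tr[\Omega\rho]\le \|\Omega\|_\infty-\nu(\Omega)\bigl(1-\rF(\rho,\phi)\bigr).
\end{equation*}
Applying this with $\rho=\cP(\xi)$ shows it suffices to construct a single channel $\cP^{\sB\to\hat\sB}$, independent of $l$, satisfying
\begin{equation*}
\sum_j \tr\bigl[(E_{j|l}\otimes\bar E_{j|l})\,\cP(\xi)\bigr]\ \ge\ \rP_{\rm opt}(\scE_{\xi,E_l})^2\qquad\text{for every } l.
\end{equation*}
Averaging this over $\bmq$ then gives the theorem.

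\emph{Construction of $\cP$.} I would take $\cP$ to be the transpose (Petz-type) channel associated with $\xi$. Regard $\xi^{\sA\sB}$ as defining, through its Choi-type correspondence, a map from $\sA$ to $\sB$. Then $\cP$ is the adjoint of that map sandwiched by $(\xi^{\sB})^{-1/2}$ and $(\xi^{\sA})^{1/2}$, so that it sends $\sB$ back to $\hat\sB\cong\sA$. Concretely, I would write $\xi^{\sA\sB}=(\bI\otimes\cN)(\ketbra{\Gamma_\xi}{\Gamma_\xi})$, where $\ket{\Gamma_\xi}=(\sqrt{\xi^\sA}\otimes\bI)\ket{\Gamma}^{\sA\hat\sB}$ is the canonical purification of $\xi^\sA$ and $\cN^{\hat\sB\to\sB}$ is a channel. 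Such a representation exists up to the usual support issues, which the convention on negative powers handles. I would then set $\cP$ to be the Petz recovery map of $\cN$ with respect to $\bar\xi^{\hat\sB}$. This channel is chosen without reference to $\bmE$, which is exactly what makes it independent of $l$.

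\emph{Key inequality (main obstacle).} With this $\cP$, I expect the transpose trick $(X\otimes\bI)\ket{\Gamma}=(\bI\otimes X^\top)\ket{\Gamma}$ to turn $\tr[(E_{j|l}\otimes\bar E_{j|l})\cP(\xi)]$ into a squared Hilbert--Schmidt norm. The expected form is $\|(\xi^\sB)^{-1/4}\,\omega_{j}\,(\xi^\sB)^{-1/4}\|_2^2$, where $\omega_j=\tr_\sA[E_{j|l}\xi^{\sA\sB}]=p_{j|l}\xi^\sB_{j|l}$. In parallel, the optimal guessing probability can be rewritten as
\begin{equation*}
\rP_{\rm opt}=\sum_j\tr[F_j\omega_j]=\sum_j\tr\bigl[(\xi^{\sB})^{1/4}F_j(\xi^{\sB})^{1/4}\cdot(\xi^\sB)^{-1/4}\omega_j(\xi^\sB)^{-1/4}\bigr].
\end{equation*}
Cauchy--Schwarz in the Hilbert--Schmidt inner product, applied jointly over $j$, bounds this by the square root of $\sum_j\|(\xi^\sB)^{-1/4}\omega_j(\xi^\sB)^{-1/4}\|_2^2$ times the square root of $\sum_j\tr[(\xi^\sB)^{1/2}F_j(\xi^\sB)^{1/2}F_j]$. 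The last factor is at most $\sum_j\tr[\xi^\sB F_j]=1$, since $F_j^2\le F_j$ (more precisely, $\tr[\sqrt{\xi}F\sqrt{\xi}F]\le\tr[\xi F^2]\le \tr[\xi F]$). Squaring gives the desired per-$l$ bound.

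The delicate part is verifying the identity between $\sum_j\tr[(E_j\otimes\bar E_j)\cP(\xi)]$ and the weighted Hilbert--Schmidt norm, with exactly the right powers of $\xi^\sB$ and $\xi^\sA$. If the $\tfrac14$-powers do not match, I would first try the symmetric weighting $(\xi^\sB)^{-1/2}$ on one side only, and adjust the definition of $\cP$ accordingly so that the same Cauchy--Schwarz step still closes.
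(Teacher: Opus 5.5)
Your proposal is correct and is essentially the paper's proof. The spectral step is the same operator inequality, and your channel is exactly the pretty good recovery map $\cP_\xi$. The identity you flag as delicate is the paper's \cref{lem:3}, and it holds with precisely your $\tfrac14$-powers, because $\cP_\xi^\dag(\bar E_{j|l}) = (\xi^\sB)^{-1/2}\omega_j(\xi^\sB)^{-1/2}$ is the pretty good measurement element. Your Cauchy--Schwarz step is the standard proof of the Barnum--Knill inequality, which the paper cites rather than re-derives.
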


The result is nontrivial when $\nu(\Omega) \neq 0$, or equivalently, when the principal eigenspace $\cV_\Omega$ is one-dimensional. 
In this case, the target state $\ket{\phi}^{\sA\hat{\sB}}$ is uniquely determined, and dividing~\cref{eq:equation of main thm1} by $\nu(\Omega)$ yields an upper bound on the error of the LO state transformation in terms of the optimal guessing probabilities. Successful state discrimination implies a reliable transformation of $\xi^{\sA\sB}$ into $\ket{\phi}^{\sA\hat{\sB}}$, revealing a connection between the two tasks.

Notably, we explicitly provide a channel achieving~\cref{eq:equation of main thm1}. The channel is called the \emph{pretty good recovery map}~\cite{Berta2014Entanglementassistedguessingcomplementary}, which is defined for a given state $\xi^{\sA\sB}$ as 
\begin{align}
\label{eq:state-based petz map main}
    \cP_\xi^{\sB\to\hat{\sB}}(\cdot) 
    \coloneqq \tr_\sB\big[(\xi^{\hat{\sB}\sB})^{\top_{\hat{\sB}}}(\xi^\sB)^{-1/2}(\cdot)(\xi^\sB)^{-1/2}\big],
\end{align}
where $\xi^{\hat{\sB}\sB}$ is $\xi^{\sA\sB}$ with $\sA$ relabeled as $\hat{\sB} \cong \sA$, and $\top_{\hat{\sB}}$ denotes the partial transpose on $\hat{\sB}$.
This is CPTP on the support of $\xi^\sB = \tr_\sA[\xi^{\sA\sB}]$ and reduces to the well-known Petz map~\cite{petz1986sufficient, petz1988sufficiency} upon expressing $\xi^{\hat{\sB}\sB}$ as the output of a channel acting on a pure state. 
See~\cref{sec:proof thm1 new} for details.

A converse-type statement to~\cref{thm:1}, showing that reliable one-sided LO state transformation bounds the guessing probabilities from below, can also be shown. See~\cref{sec:derivation of converse relation} for the proof.

\begin{proposition}
\label{prop:converse relation}
For any state $\xi^{\sA\sB}$, we have 
\begin{align}
\label{eq:converse relation}
    \big\lag \rP_{\rm opt}(\scE_{\xi, E})\big\rag
    \geq \|\Omega\|_\infty\big(1 - \min_{\cP, \ket{\phi}}\varepsilon(\cP(\xi), \phi)\big),
\end{align}
where $\big\lag \rP_{\rm opt}(\scE_{\xi, E})\big\rag \coloneqq \sum_{l=1}^L q_l \rP_{\rm opt}(\scE_{\xi, E_l})$.
The minimization is taken over all channels $\cP^{\sB \to \hat{\sB}}$ and all states $\ket{\phi}^{\sA\hat{\sB}}$ in the principal eigenspace $\cV_\Omega$ of $\Omega^{\sA\hat{\sB}}$.
\end{proposition}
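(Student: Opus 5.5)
Fix an arbitrary channel $\cP^{\sB\to\hat{\sB}}$ and a target $\ket{\phi}^{\sA\hat{\sB}}\in\cV_\Omega$. The plan is to build, for each $l$, a guessing strategy for Bob out of $\cP$ and $\bar{E}_l$, and then to show that its average success probability is at least $\|\Omega\|_\infty(1-\varepsilon(\cP(\xi),\phi))$. Since $\rP_{\rm opt}$ is a maximum over strategies, minimizing $\varepsilon$ over $(\cP,\ket{\phi})$ then gives \cref{eq:converse relation}.

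\textbf{Step 1: the strategy.} Bob applies $\cP$ and then measures $\hat{\sB}$ with the POVM $\{\bar E_{j|l}^{\hat{\sB}}\}_j$. This is a valid POVM, since complex conjugation preserves positivity and $\sum_j \bar E_{j|l}=\bar{\bI}=\bI$. Its success probability is
\[
\sum_j \tr\bigl[(E_{j|l}^\sA\otimes \bar E_{j|l}^{\hat{\sB}})\,\cP(\xi)\bigr]=\tr[\Omega_l\,\cP(\xi)],
\]
so $\rP_{\rm opt}(\scE_{\xi,E_l})\ge \tr[\Omega_l\,\cP(\xi)]$. Averaging over $\bmq$ gives $\lag \rP_{\rm opt}\rag\ge \tr[\Omega\,\rho]$, where $\rho\coloneqq\cP(\xi)^{\sA\hat{\sB}}$.

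\textbf{Step 2: lower-bounding $\tr[\Omega\rho]$ by the fidelity.} Let $\Pi_\phi=\ketbra{\phi}{\phi}$. Since $\ket{\phi}$ is an eigenvector of $\Omega$ with eigenvalue $\|\Omega\|_\infty$, we have
\[
\Omega=\|\Omega\|_\infty\Pi_\phi+(\bI-\Pi_\phi)\,\Omega\,(\bI-\Pi_\phi).
\]
The cross terms vanish because $\Pi_\phi$ commutes with $\Omega$, and the second term is positive semidefinite because $\Omega\ge0$. Hence $\Omega\ge \|\Omega\|_\infty\Pi_\phi$, and therefore
\[
\tr[\Omega\rho]\ge\|\Omega\|_\infty\bra{\phi}\rho\ket{\phi}=\|\Omega\|_\infty\rF(\rho,\phi)=\|\Omega\|_\infty\bigl(1-\varepsilon(\cP(\xi),\phi)\bigr).
\]
The middle equality holds because $\phi$ is pure.

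\textbf{Step 3: conclusion.} Combining Steps 1 and 2, the bound holds for every pair $(\cP,\ket{\phi})$ with $\ket{\phi}\in\cV_\Omega$, so it holds in particular for the minimizing pair. The minimum exists because the set of channels and the unit sphere of $\cV_\Omega$ are compact and the fidelity is continuous.

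The only point needing care is Step 1: I need to check that $\Omega_l$ is built with $\bar E$ on $\hat{\sB}$, which makes Bob's measurement match exactly. The operator inequality in Step 2 is routine once it is observed that $\ket{\phi}$ lies in the principal eigenspace, so no significant obstacle is expected.
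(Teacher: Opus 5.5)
Your proof is correct and follows the paper's argument. Like you, the paper bounds $\sum_l q_l\rP_{\rm opt}(\scE_{\xi,E_l})\ge\tr[\Omega\,\cP(\xi)]$ by noting that $\{\cP^\dag(\bar E_{j|l})\}_j$ is a POVM on $\sB$, which is your ``apply $\cP$, then measure $\bar E_l$'' strategy, and then uses $\Omega\ge\|\Omega\|_\infty\times(\text{projector})$. The only difference is that the paper uses the full projector $\Pi_{\cV_\Omega}$ together with the identity $\max_{\tau:\supp(\tau)\subseteq\cV_\Omega}\rF(\sigma,\tau)=\tr[\Pi_{\cV_\Omega}\sigma]$, which yields a slightly stronger version allowing mixed targets supported on $\cV_\Omega$, whereas your inequality $\Omega\ge\|\Omega\|_\infty\ketbra{\phi}{\phi}$ is exactly what is needed for pure targets.
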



\subsubsection{Irreducibility and maximally entangled targets}
\label{sec:irreducibility maximal entanglement}

\cref{thm:1} is applicable when the target state $\ket{\phi}^{\sA\hat{\sB}} \in \cV_\Omega$, that is, it is an eigenstate of $\Omega^{\sA\hat{\sB}}$ with the largest eigenvalue. As our primary interest is an MES, we next identify when $\ket{\phi}^{\sA\hat{\sB}}$ is an MES.

To this end, we consider the action of $\Omega^{\sA\hat{\sB}}$ on an MES $\ket{\Phi}^{\sA\hat{\sB}}$. This action can be transferred to Alice’s system alone, as $\Omega^{\sA\hat{\sB}}\ket{\Phi}^{\sA\hat{\sB}} = W^\sA\ket{\Phi}^{\sA\hat{\sB}}$, where
\begin{equation}
    W^\sA \coloneqq \sum_{j, l} q_l \big(E_{j|l}^{\sA}\big)^2.
\end{equation}
This positive operator $W^{\sA}$ satisfies $\|\Omega\|_\infty \leq \|W^{\sA}\|_\infty \leq 1$ (see~\cref{eq:omega w relation}), and as we will see, the saturation of the first inequality is key to identifying a maximally entangled target. Below, the principal eigenspace of $W^\sA$ is denoted by $\cV_W$.

We also introduce algebraic notions associated with a general family of POVMs. 

\begin{definition}[$\bmE$-invariance, $\bmE$-irreducibility]
\label{def:reducibility}
Let $\bmE = \{E_l = \{E_{j|l}\}_j\}_{l=1}^L$ be a family of POVMs acting on a Hilbert space $\cH$.
A subspace $\cK \subseteq \cH$ is $\bmE$-invariant if $E_{j|l} \cK \subseteq \cK$ for all $j$ and $l$. 
Moreover, a nonzero subspace $\cK \subseteq \cH$ is $\bmE$-irreducible if it is $\bmE$-invariant and contains no $\bmE$-invariant subspaces other than $\{0\}$ and $\cK$ itself.
\end{definition}

Since $E_{j|l}$ are Hermitian, the orthogonal complement of any $\bmE$-invariant subspace is again $\bmE$-invariant.
Hence,~\cref{def:reducibility} has the equivalent formulation: a nonzero $\bmE$-invariant subspace $\cK$ is $\bmE$-irreducible if and only if it admits no orthogonal decomposition $\cK=\bigoplus_a\cR_a$ into two or more nonzero $\bmE$-invariant subspaces.

In algebraic terms, the irreducibility in~\cref{def:reducibility} is that of an invariant subspace $\cK$ under the $*$-algebra generated by the POVM elements $\{E_{j|l}\}_{j,l}$, equivalently, the triviality of the corresponding commutant on $\cK$. 
Related structures appear in QEC and measurement theory: the noise operator algebra and its commutant for noiseless subsystems~\cite{Knill2000QuantumErrorCorrectionGeneralNoise, Holbrook2003NoiselessSubsystemsStructureCommutant}, and the irreducibility of a pure-state ensemble for information-disturbance tradeoffs~\cite{Barnum2001reversibleextractionclassicalinformation, Buscemi2009TowardsUnifiedApproach}.

The following theorem gives a condition under which the target state $\ket{\phi}^{\sA\hat{\sB}}$ in \cref{thm:1} is an MES, in terms of $\Omega^{\sA\hat{\sB}}$ and $W^{\sA}$ as well as the algebraic structures of $\bmE$.

\begin{theorem}
\label{prop:qualitative equivalent condition}
For a family of POVMs $\bmE$ and a probability distribution $\bmq$, the following are equivalent:
\begin{itemize}
\item[(I)] The principal eigenspace $\cV_W$ of $W$ contains exactly one $\bmE$-irreducible subspace.
\item[(II)] $\|\Omega\|_\infty = \|W\|_\infty$ and $\nu(\Omega) \neq 0$.
\end{itemize}
Under these conditions, the principal eigenspace $\cV_\Omega$ of $\Omega^{\sA\hat{\sB}}$ is one-dimensional and is spanned by
\begin{align}
\label{eq:subspace mes on k star}
    \ket{\Phi_\star}^{\sA\hat{\sB}} = \f{1}{\sqrt{d_{\cK_\star}}}\sum_j \ket{u_j}^\sA\ket{\bar{u}_j}^{\hat{\sB}},
\end{align}
where $\cK_{\star}$ is the unique $\bmE$-irreducible subspace in $\cV_W$, $\{\ket{u_j}^\sA\}_j$ is any orthonormal basis in $\cK_\star$, and $d_{\cK_\star}$ is the dimension of $\cK_\star$.
\end{theorem}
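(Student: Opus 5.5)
The plan is to identify $\hat{\sB}$ with $\sA$ through the vectorization $\ket{X} = (X\otimes \bI)\ket{\Gamma}$, with $\ket{\Gamma}=\sum_i\ket{i}\ket{i}$. Under this identification,
\begin{equation*}
(A\otimes \bar B)\ket{X}=\ket{A X B^\dagger}.
\end{equation*}
Hence $\Omega$ acts on operators as the completely positive, unital, self-adjoint map $\Lambda(X)=\sum_{j,l} q_l E_{j|l} X E_{j|l}$. The Rayleigh quotient on a unit-norm operator is
\begin{equation*}
\tr[X^\dagger \Lambda(X)] = \sum q_l \tr[X^\dagger E_{j|l} X E_{j|l}].
\end{equation*}

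The first step is to bound this quotient by $\|W\|_\infty$. By Cauchy--Schwarz, $|\tr[X^\dagger E X E]|\le \tr[X^\dagger E^2 X]^{1/2}\tr[X E^2 X^\dagger]^{1/2}$. Applying AM--GM term by term gives
\begin{equation*}
\tr[X^\dagger\Lambda(X)]\le \tfrac12\big(\tr[X^\dagger W X]+\tr[X W X^\dagger]\big)\le \|W\|_\infty.
\end{equation*}
This reproves $\|\Omega\|_\infty\le\|W\|_\infty$ and, more importantly, characterizes equality. Equality $\tr[X^\dagger\Lambda(X)]=\|W\|_\infty$ holds iff three conditions hold. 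First, the ranges of $X$ and $X^\dagger$ lie in $\cV_W$. Second, the Cauchy--Schwarz step is tight, so $E_{j|l}X = X E_{j|l}$ for every $j,l$ with $q_l>0$. Third, the AM--GM step is tight; this follows once the first two hold.

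Thus, if $\|\Omega\|_\infty=\|W\|_\infty$, the space $\cV_\Omega$ equals the set of operators $X$ that are supported on $\cV_W$ (both left and right) and commute with every $E_{j|l}$. Conversely, any nonzero such $X$ attains $\|W\|_\infty$. So $\|\Omega\|_\infty=\|W\|_\infty$ iff this commutant space $\mathcal{C}$ is nonzero, and in that case $\cV_\Omega\cong\mathcal{C}$.

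The key structural lemma concerns the spectral projector $P_W$ of $\cV_W$. Since $W=\sum q_l E_{j|l}^2$ lies in the algebra $\mathcal{A}$ generated by the $E_{j|l}$, the projector $P_W$ is a polynomial in $W$. Hence $\cV_W$ is $\bmE$-invariant, and $\mathcal{C}$ is the commutant of $\mathcal{A}$ restricted to $\cV_W$. This restricted algebra is a finite-dimensional $*$-algebra. Decompose $\cV_W=\bigoplus_a \cK_a$ into $\bmE$-irreducible subspaces and group them into isotypic components, so that $\mathcal{C}\cong\bigoplus_\alpha M_{m_\alpha}(\bC)$ by Schur's lemma and the double commutant theorem. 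Then $\dim\mathcal{C}=\sum_\alpha m_\alpha^2$, and this equals $1$ iff there is exactly one isotypic component with multiplicity one.

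The main obstacle is checking that this matches "$\cV_W$ contains exactly one $\bmE$-irreducible subspace". If the multiplicity $m$ is at least $2$, then $\cK\oplus\cK$ contains infinitely many irreducible copies, namely graphs of intertwiners. If the decomposition is multiplicity-free with at least two summands, there are at least two irreducible subspaces. If $\cV_W$ is itself irreducible, every invariant subspace is $0$ or $\cV_W$, so it is the only one.

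Combining these, (I) holds iff $\dim\mathcal{C}=1$, which holds iff $\|\Omega\|_\infty=\|W\|_\infty$ and $\dim\cV_\Omega=1$, i.e. (II). For the direction (II)$\Rightarrow$(I), note that $\|\Omega\|_\infty=\|W\|_\infty$ gives $\cV_\Omega\cong\mathcal{C}$, and $\nu(\Omega)\ne0$ forces $\dim\mathcal{C}=1$. Note also that $\mathcal{C}\neq 0$ always holds, since $P_W\in\mathcal{C}$; this means $\|\Omega\|_\infty=\|W\|_\infty$ is in fact automatic.

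Finally, under (I) we have $\cK_\star=\cV_W$ and $\mathcal{C}=\bC P_{\cK_\star}$. The vector $\ket{P_{\cK_\star}}$, after normalization by $\sqrt{d_{\cK_\star}}$, is $\frac{1}{\sqrt{d_{\cK_\star}}}\sum_j\ket{u_j}\ket{\bar u_j}$ for any orthonormal basis $\{\ket{u_j}\}_j$ of $\cK_\star$, which gives the stated $\ket{\Phi_\star}$.
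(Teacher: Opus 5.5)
Your Rayleigh-quotient analysis is correct. It gives both $\|\Omega\|_\infty\le\|W\|_\infty$ and, when equality holds, the exact top eigenspace $\cV_\Omega\cong\mathcal C$: the operators with range and support in $\cV_W$ that commute with every $E_{j|l}$. (One small slip: $\Lambda$ is not unital, since $\Lambda(\bI)=W$.)

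The proof breaks at the ``key structural lemma''. Writing $P_W$ as a polynomial in $W\in\mathcal A$ only places $P_W$ in $\mathcal A$. But $\bmE$-invariance of $\cV_W$ means $[P_W,E_{j|l}]=0$, i.e.\ $P_W\in\mathcal A'$, and an element of an algebra need not commute with its generators. Concretely, take a qubit with $q_1=q_2=\tfrac12$, $E_1=\{\ketbra{0}{0}+\tfrac12\ketbra{1}{1},\ \tfrac12\ketbra{1}{1}\}$ and $E_2=\{\ketbra{+}{+},\ketbra{-}{-}\}$. Then $W=\ketbra{0}{0}+\tfrac34\ketbra{1}{1}$, and $\cV_W=\rspan\{\ket{0}\}$ is not invariant under $\ketbra{+}{+}$. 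Hence $\mathcal C=\{0\}$, and a direct computation gives $\|\Omega\|_\infty=(5+\sqrt5)/8<1=\|W\|_\infty$.

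This has three consequences. First, $\|\Omega\|_\infty=\|W\|_\infty$ is not automatic, which is why it appears in (II). Second, $\mathcal C$ is not the commutant of $\mathcal A$ on $\cV_W$; that restriction is not even defined. Third, the conclusion $\cK_\star=\cV_W$ under (I) is false. To see this, adjoin to the example a line $\ket{a}$ on which $E_{1|1},E_{1|2}$ act as $1$ and $E_{2|1},E_{2|2}$ act as $0$. Then $\cV_W=\rspan\{\ket{a},\ket{0}\}$ contains exactly one $\bmE$-irreducible subspace, $\rspan\{\ket{a}\}$. Accordingly $\cV_\Omega$ is spanned by the product state $\ket{a}\ket{\bar a}$, not by $\rvec(P_{\cV_W})/\sqrt2$.

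The repair is to replace $\cV_W$ by the largest $\bmE$-invariant subspace $\cK_{\max}\subseteq\cV_W$, the sum of all invariant subspaces contained in $\cV_W$. For $X\in\mathcal C$ you have $E_{j|l}X\cH=XE_{j|l}\cH\subseteq X\cH$, so $\rrange(X)$ is an invariant subspace of $\cV_W$ and hence lies in $\cK_{\max}$; the same holds for $X^\dagger$. Since the $E_{j|l}$ are block diagonal with respect to $\cK_{\max}\oplus\cK_{\max}^\perp$, $\mathcal C$ is exactly the zero extension of the commutant of $\{E_{j|l}|_{\cK_{\max}}\}$. It follows that $\|\Omega\|_\infty=\|W\|_\infty$ iff $\cK_{\max}\neq\{0\}$. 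Your Schur-lemma count, applied on $\cK_{\max}$ instead of $\cV_W$, then gives $\dim\mathcal C=1$ iff $\cK_{\max}$ is irreducible iff (I). In that case $\cK_\star=\cK_{\max}$ and $\mathcal C=\bC P_{\cK_\star}$, which yields the stated $\ket{\Phi_\star}$.

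Repaired this way, your route is essentially the paper's alternative commutant argument; its $\cK=\sum_{T\in\cS_W}\rrange(T)$ equals $\cK_{\max}$. What your version buys is that the equality analysis of the Rayleigh quotient produces the commutation relations and the range constraint in one step. It thereby avoids both the positive-part decomposition of eigenoperators and the Perron--Frobenius lemma used in the paper's main proof.
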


Note that $\cK_\star$, and thus its dimension $d_{\cK_\star}$, depends not only on the choice of $\bmE$ but also on that of $\bmq$, through $\bmq$ in $\cV_W$. Such dependence does not arise for projection-valued measures (PVMs), where $W = \bI$. The proof of \Cref{prop:qualitative equivalent condition} relies on quantum Perron--Frobenius-type arguments for irreducible maps~\cite{evans1978SpectralPropertiesPositiveMaps, Albeverio1978Frobeniustheorypositivemaps, wolf2012guidedtour}. See~\cref{sec:proof of irreducibility and unique state spectrum equivalence} for the proof and an example of this dependence.

As a direct consequence of~\cref{thm:1,prop:qualitative equivalent condition,prop:converse relation}, we obtain the following corollary.
\begin{widetext}
\begin{corollary}
\label{cor:1}
Suppose the principal eigenspace $\cV_W$ contains exactly one $\bmE$-irreducible subspace $\cK_\star$, or equivalently, $\|\Omega\|_\infty = \|W\|_\infty$ and $\nu(\Omega) \neq 0$.
Then, for any state $\xi^{\sA\sB}$, the pretty good recovery map $\cP_\xi^{\sB\to\hat{\sB}}$ defined by~\cref{eq:state-based petz map main} satisfies
\begin{align}
\label{eq:corollary bound}
    1 - \f{\big\lag \rP_{\rm opt}(\scE_{\xi, E}) \big\rag}{\|W\|_\infty} 
    \leq \varepsilon\big(\cP_\xi(\xi), \Phi_\star\big)
    \leq \frac{\|W\|_\infty - \big\lag \rP_{\rm opt}^2(\scE_{\xi, E}) \big\rag}{\nu(\Omega)},
\end{align}
where $\ket{\Phi_\star}^{\sA\hat{\sB}}$ is the MES on $\cK_\star\otimes\bar{\cK}_\star$, which carries $\log d_{\cK_\star}$ ebits of entanglement.
\end{corollary}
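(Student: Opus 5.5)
The corollary follows by combining the three earlier results. The plan is to check that, under the hypothesis, every object in \cref{thm:1} and \cref{prop:converse relation} specializes to the one appearing in \cref{eq:corollary bound}.

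\textbf{Step 1: identify the target.} By \cref{prop:qualitative equivalent condition}, the hypothesis (I) is equivalent to (II). Hence $\|\Omega\|_\infty=\|W\|_\infty$ and $\nu(\Omega)>0$. Moreover, $\cV_\Omega$ is one-dimensional and spanned by $\ket{\Phi_\star}$ from \cref{eq:subspace mes on k star}. That vector is the normalized maximally entangled vector on $\cK_\star\otimes\bar{\cK}_\star$, so its Schmidt rank is $d_{\cK_\star}$ with uniform coefficients. It therefore carries $\log d_{\cK_\star}$ ebits.

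\textbf{Step 2: upper bound.} Apply \cref{thm:1} with $\ket{\phi}=\ket{\Phi_\star}\in\cV_\Omega$. We need the bound for the specific channel $\cP_\xi$, not merely for some channel. The text after \cref{thm:1} states that the pretty good recovery map \cref{eq:state-based petz map main} is the channel achieving \cref{eq:equation of main thm1}, so I will invoke the proof of \cref{thm:1} in that form. This gives $\nu(\Omega)\,\varepsilon(\cP_\xi(\xi),\Phi_\star)\le \|\Omega\|_\infty-\langle \rP^2_{\rm opt}\rangle$. Substituting $\|\Omega\|_\infty=\|W\|_\infty$ and dividing by $\nu(\Omega)>0$ yields the right inequality.

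\textbf{Step 3: lower bound.} Since $\cV_\Omega$ is spanned by $\ket{\Phi_\star}$, the minimization over $\ket{\phi}$ in \cref{prop:converse relation} runs only over $\ket{\Phi_\star}$, up to an irrelevant phase. It follows that
\[
\min_{\cP,\phi}\varepsilon(\cP(\xi),\phi)=\min_\cP\varepsilon(\cP(\xi),\Phi_\star)\le \varepsilon(\cP_\xi(\xi),\Phi_\star).
\]
\cref{prop:converse relation} then gives $\langle\rP_{\rm opt}\rangle\ge \|W\|_\infty\bigl(1-\varepsilon(\cP_\xi(\xi),\Phi_\star)\bigr)$. Rearranging, which requires $\|W\|_\infty>0$, gives the left inequality. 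Positivity holds because $W\ge \sum_{j,l} q_l E_{j|l}^2/\!\max_{j,l}\|E_{j|l}\|$ is nonzero, since the $E_{j|l}$ sum to $\bI$ for each $l$.

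\textbf{Main obstacle.} The only delicate point is Step 2. \cref{thm:1} is phrased existentially, so I must rely on its proof constructing $\cP_\xi$ explicitly. There is also a technicality: $\cP_\xi$ is CPTP only on the support of $\xi^\sB$. It should be extended to a full channel by adding any CPTP action on the complement of that support, which does not change its action on $\xi$.
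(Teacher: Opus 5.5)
Your proposal is correct and follows the paper's route: the paper obtains \cref{cor:1} as a direct consequence of three results. These are \cref{prop:qualitative equivalent condition}, which gives $\cV_\Omega=\rspan\{\ket{\Phi_\star}\}$ and $\|\Omega\|_\infty=\|W\|_\infty$; \cref{thm:1} in the form proved as \cref{thm:general thm1}, which is stated for the Petz map $\cP_\xi$ itself and has $\tilde{\nu}(\Omega)=\nu(\Omega)$ when $\dim\cV_\Omega=1$; and \cref{prop:converse relation}, with the minimum bounded above by its value at $\cP_\xi$. One small slip: your displayed inequality $W\ge \sum_{j,l}q_lE_{j|l}^2/\max_{j,l}\|E_{j|l}\|_\infty$ points the wrong way in general (since $\max_{j,l}\|E_{j|l}\|_\infty\le 1$), but $\|W\|_\infty>0$ holds anyway, because $W=0$ would force every $E_{j|l}=0$ (as $q_l>0$ and $E_{j|l}^2\ge 0$), contradicting $\sum_j E_{j|l}=\bI$.
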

\end{widetext}

\cref{cor:1} bounds the error of extracting entanglement from a given state $\xi^{\sA\sB}$ by a one-sided LO in terms of the guessing probabilities in state discrimination. First, Alice chooses $\bmE$ and $\bmq$ such that the principal eigenspace $\cV_W$ contains a unique $\bmE$-irreducible subspace $\cK_\star$, preferably as large as possible for more entanglement to be certified. 
Once $\bmE$ and $\bmq$ are fixed, Alice measures $\sA$ using $E_l$ with probability $q_l$, and then, Bob performs the corresponding state-discrimination task on $\sB$. 
Due to~\cref{eq:corollary bound}, the guessing probabilities bound the infidelity of extracting $\log d_{\cK_\star}$ ebits via the pretty good recovery map.
Thus, high guessing probabilities ensure reliable extraction of entanglement, and vice versa up to the factor $\|W\|_\infty$.

For given $\bmE$ and $\bmq$, whether $\cV_W$ contains exactly one $\bmE$-irreducible subspace can be verified using condition~(II) of~\cref{prop:qualitative equivalent condition}.
Its first equality, $\|\Omega\|_\infty = \|W\|_\infty$, requires the inequality $\|\Omega\|_\infty \leq \|W\|_\infty$ to be saturated. This is the case for many measurements of interest, including all PVMs and symmetric POVMs such as SIC-POVMs~\cite{Renes2004Symmetricinformationallycomplete} and trine POVMs~\cite{Barnett2009Quantumstatediscrimination}, for which $W \propto \bI$ and hence $\cV_W = \cH^\sA$. It then remains to check whether $\nu(\Omega) \neq 0$. 
For example, this is satisfied by measurements in MUBs. In contrast, $\nu(\Omega) = 0$ for the measurements in the bases $\{\ket{1}, \ket{2}, \ket{3}\}$ and $\big\{\f{1}{\sqrt{2}}(\ket{1}+\ket{2}), \f{1}{\sqrt{2}}(\ket{1}-\ket{2}), \ket{3}\big\}$ ($L=2$), which share $\ket{3}$ and are block-diagonal in $\rspan\{\ket{1}, \ket{2}\} \oplus \rspan\{\ket{3}\}$. Consequently, $\cV_W = \cH^\sA$ contains two $\bmE$-irreducible subspaces, $\rspan\{\ket{1}, \ket{2}\}$ and $\rspan\{\ket{3}\}$.

When $\|\Omega\|_\infty = \|W\|_\infty$ but $\nu(\Omega) = 0$, the target state $\ket{\phi}^{\sA \hat{\sB}}$ is not uniquely determined as an MES. In this case, the largest $\bmE$-invariant subspace $\cK$ in $\cV_W$ is the direct sum $\cK = \bigoplus_a \cR_a$ of two or more $\bmE$-irreducible subspaces $\cR_a$. All POVM elements $E_{j|l}$ are block-diagonal with respect to $\{\cR_a\}_a$. Such measurements induce the same ensemble $\scE_{\xi, E_l}$ for the MES $\ket{\Phi_\star} = \sum_a \sqrt{d_{\cR_a}/d_{\cK}}\ket{\Phi_a}$ and for the incoherent mixture $\sum_a (d_{\cR_a}/d_{\cK})\ketbra{\Phi_a}{\Phi_a}$ of the block MESs $\ket{\Phi_a}$. Even perfect state discrimination cannot tell the two apart, so entanglement across the blocks cannot be certified.

Nevertheless, our results remain informative blockwise even in this case. 
Alice first projects onto the irreducible blocks $\cR_a$ via the measurement $\{\Pi_a, \bI - \Pi_a\}$, obtaining the conditional state $\xi_a = (\Pi_a \otimes \bI)\xi(\Pi_a \otimes \bI)/r_a$ with probability $r_a = \tr[\Pi_a \xi]$.
Since each $\cR_a$ is $\bmE$-invariant, the restricted operators $\big\{E_{j|l}\big|_{\cR_a}\big\}_j$ form a POVM on $\cR_a$, and the spectral gap $\nu(\Omega_a)$ of $\Omega_a = \sum_{j, l} q_l E_{j|l}\big|_{\cR_a} \otimes \bar{E}_{j|l}\big|_{\bar{\cR}_a}$ must be nonzero. 
For the block-local MES $\ket{\Phi_a}$,~\cref{thm:1} implies that
\begin{align}
\label{eq:block bound}
    \varepsilon(\cP_{\xi_a}(\xi_a), \Phi_a) \leq \f{\|\Omega_a\|_\infty - \big\lag \rP_{\rm opt}^2(\scE_{\xi_a, E})\big\rag}{\nu(\Omega_a)}.
\end{align}
Although a global MES $\ket{\Phi_\star}$ cannot be certified when $\nu(\Omega) = 0$, each block MES $\ket{\Phi_a}$ can still be certified probabilistically.

In~\cref{sec:construction povm from given state}, we consider the inverse direction, where the target state is fixed first and then compatible POVMs are constructed. This is possible for a bipartite pure state with any Schmidt coefficients, extending our result beyond the MES. In~\cref{sec:equivalent conditions positive spectral gap}, we give \emph{quantitative} bounds on the spectral gap in terms of the structure of the underlying Hilbert space.


\subsubsection{Simultaneous vanishing of the two errors}

\label{sec:vanishing errors}

By combining the results so far, we can characterize the regime in which both the errors in local transformation to the MES and in state discrimination vanish simultaneously. The derivation is in~\cref{sec:vanishing error equivalence}.

\begin{corollary}
\label{thm:vanishing error equivalence}
Suppose that the principal eigenspace $\cV_W$ of $W^\sA$ contains exactly one $\bmE$-irreducible subspace $\cK_\star$.
Then, the following are equivalent:
\begin{itemize}
    \item[(i)] $\|W^\sA\|_\infty = 1$.
    \item[(ii)] For every $l=1,\dots,L$, $\big\{E_{j|l}^\sA\big|_{\cK_\star}\big\}_j$ forms a PVM on $\cK_\star$.
    \item[(iii)] For every system $\sB$ and every state $\xi^{\sA\sB}$, 
    \begin{align}
    \label{eq:vanish equivalence equation}
    \varepsilon_{\rm opt}^{\Phi_\star}(\xi) = 0  \,\Longrightarrow\,
    \rP_{\rm opt}(\scE_{\xi,E_l}) = 1 \ \text{for all } l = 1, \dots, L.
    \end{align}
\end{itemize}
Here, $\big|_{\cK_\star}$ in (ii) denotes the restriction to $\cK_\star$, $\ket{\Phi_\star}^{\sA\hat{\sB}}$ is the MES on $\cK_\star \otimes \bar{\cK}_\star$, and $\varepsilon_{\rm opt}^{\Phi_\star}(\xi) \coloneqq \min_{\cP}\varepsilon\big(\cP(\xi),\Phi_\star\big)$, where the minimization is taken over all channels $\cP^{\sB\to\hat{\sB}}$.
\end{corollary}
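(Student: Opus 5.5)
We will prove the cycle (i)$\Rightarrow$(ii)$\Rightarrow$(iii)$\Rightarrow$(i), using throughout that $\cK_\star\subseteq\cV_W$ is $\bmE$-invariant. Hence every $E_{j|l}$ commutes with the projector $\Pi_\star$ onto $\cK_\star$, and the restrictions $E_{j|l}|_{\cK_\star}$ form POVMs on $\cK_\star$.

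\emph{(i)$\Rightarrow$(ii).} Fix a unit vector $\ket{u}\in\cK_\star$. Since $\cK_\star\subseteq\cV_W$, we have $\bra{u}W\ket{u}=\|W\|_\infty=1$, that is,
\[
\sum_{l}q_l\sum_j \bra{u}E_{j|l}^2\ket{u}=1 .
\]
Because $0\le E_{j|l}\le\bI$, we have $E_{j|l}^2\le E_{j|l}$, so each inner sum satisfies $\sum_j\bra{u}E_{j|l}^2\ket{u}\le\sum_j\bra{u}E_{j|l}\ket{u}=1$. As $q_l>0$ for every $l$, equality must hold termwise. Thus $\bra{u}(E_{j|l}-E_{j|l}^2)\ket{u}=0$ for all $j,l$. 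Since $E_{j|l}-E_{j|l}^2\ge0$, this forces $(E_{j|l}-E_{j|l}^2)\ket{u}=0$. Because $\ket{u}\in\cK_\star$ was arbitrary, each restriction $E_{j|l}|_{\cK_\star}$ is idempotent, hence a projector, which gives (ii).

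\emph{(ii)$\Rightarrow$(iii).} Suppose $\varepsilon_{\rm opt}^{\Phi_\star}(\xi)=0$. Then some channel $\cP$ satisfies $\cP(\xi)=\Phi_\star$; the minimum is attained by compactness of the set of channels. By \cref{prop:converse relation}, the average $\langle\rP_{\rm opt}\rangle$ is at least $\|\Omega\|_\infty$, and under our hypothesis \cref{prop:qualitative equivalent condition} gives $\|\Omega\|_\infty=\|W\|_\infty$.

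It remains to show $\|W\|_\infty=1$. The cleanest route avoids the norm altogether and argues directly. On $\Phi_\star$, Alice's projective measurement $\{E_{j|l}|_{\cK_\star}\}_j$ yields outcome states on $\hat{\sB}$ proportional to $\bar E_{j|l}|_{\bar\cK_\star}$. These are mutually orthogonal projectors, so they are perfectly distinguishable. Composing $\cP$ with that discriminating measurement is a valid POVM on $\sB$. Since the ensemble induced on $\sB$ by Alice's measurement is mapped by $\cP$ to the ensemble on $\hat{\sB}$, this composite guesses $j$ with certainty, giving $\rP_{\rm opt}(\scE_{\xi,E_l})=1$ for every $l$. (Along the way, (ii) gives $W\Pi_\star=\Pi_\star$, so $\|W\|_\infty=1$ anyway.)

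\emph{(iii)$\Rightarrow$(i).} Apply (iii) with $\sB=\hat{\sB}$ and $\xi=\Phi_\star$, taking $\cP=\rid$, so that $\varepsilon_{\rm opt}^{\Phi_\star}(\xi)=0$. Then (iii) gives $\rP_{\rm opt}(\scE_{\Phi_\star,E_l})=1$ for all $l$. Perfect discrimination implies that the unnormalized outcome states $\bar E_{j|l}|_{\bar\cK_\star}/d_{\cK_\star}$ have mutually orthogonal supports for each $l$. Positive operators with orthogonal supports that sum to $\Pi_\star$ must each be projectors, because $E_{j|l}=E_{j|l}\sum_{k}E_{k|l}=E_{j|l}^2$ on $\cK_\star$. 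Hence $\bra{u}W\ket{u}=\sum_l q_l\sum_j\bra{u}E_{j|l}^2\ket{u}=1$ for $\ket{u}\in\cK_\star$. Combined with $\|W\|_\infty\le1$, this yields (i).

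The main obstacle is the bookkeeping in (iii)$\Rightarrow$(i), namely converting perfect distinguishability of the $\Phi_\star$-induced ensemble into orthogonality of the restricted POVM elements. A related subtlety is that the minimum defining $\varepsilon_{\rm opt}^{\Phi_\star}$ must be attained, which holds by compactness.
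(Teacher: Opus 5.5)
Your proof is correct. It reaches (i) from (iii) by a different and more elementary route than the paper.

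\textbf{How the paper argues.}
\begin{itemize}
\item It obtains (i)$\Leftrightarrow$(ii) in one stroke from the operator identity $\sum_{j,l}q_l\big(Q_{j|l}-Q_{j|l}^2\big)=(1-\|W\|_\infty)\bI_{\cK_\star}$, where $Q_{j|l}=E_{j|l}|_{\cK_\star}$. Your (i)$\Rightarrow$(ii) is the same computation done vector by vector.
\item It derives (i)$\Rightarrow$(iii) from \cref{prop:converse relation} with $\|\Omega\|_\infty=\|W\|_\infty=1$. This is your first route, which your parenthetical $W\Pi_\star=\Pi_\star$ completes. Your ``direct'' route, pulling the perfect discriminator $\{\bar E_{j|l}\}_j$ on $\hat{\sB}$ back through $\cP^\dag$, is exactly the mechanism inside that proposition.
\item It proves (iii)$\Rightarrow$(i) by inserting $\xi=\Phi_\star$ into \cref{thm:1}. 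Nonnegativity of that theorem's left-hand side gives $1=\big\langle\rP_{\rm opt}^2\big\rangle\le\|\Omega\|_\infty=\|W\|_\infty\le 1$.
\end{itemize}

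\textbf{Where you differ.} For (iii)$\Rightarrow$(i), you turn perfect discriminability of the $\Phi_\star$-induced ensemble into orthogonality of the restricted elements $E_{j|l}|_{\cK_\star}$, hence their idempotence. In effect you prove (iii)$\Rightarrow$(ii) directly and then read off $\|W\|_\infty=1$.

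\textbf{What each buys.} The paper's route is shorter given its machinery, and it presents the corollary as the zero-error limit of its two quantitative bounds. Your direct arguments are self-contained: they avoid the Petz map, the Barnum--Knill inequality, and the identity $\|\Omega\|_\infty=\|W\|_\infty$. They also make clear that only two facts are used: the $\bmE$-invariance of $\cK_\star$ and the inclusion $\cK_\star\subseteq\cV_W$.
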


Thus, under the condition that $\cV_W$ contains exactly one $\bmE$-irreducible subspace $\cK_\star$, the connection between entanglement and state distinguishability is characterized by how the POVMs in $\bmE$ act on $\cK_\star$. As shown in (ii) and (iii), the implication in~\cref{eq:vanish equivalence equation} holds when all the POVMs are projective on $\cK_\star$. The converse of~\cref{eq:vanish equivalence equation} holds regardless of conditions~(i) and~(ii), since $\rP_{\rm opt}(\scE_{\xi, E_l}) = 1$ for all $l$ makes the right-hand side of~\cref{eq:equation of main thm1} nonpositive, and $\nu(\Omega) \neq 0$ then forces $\varepsilon_{\rm opt}^{\Phi_\star}(\xi) = 0$. The two errors therefore vanish together, for every state $\xi^{\sA\sB}$, if and only if the POVMs are projective on $\cK_\star$.

We illustrate these conditions with two examples, one satisfying (i) and one violating it. First, consider binary projective measurements $E_l = \{\ketbra{e_l}{e_l},\, \bI - \ketbra{e_l}{e_l}\}$ for $l=1, 2, 3$, where $\ket{e_l} = \f{1}{\sqrt{2}}(\ket{l}+\ket{l+1})$ and $\ket{4}=\ket{1}$. They act irreducibly on a three-dimensional space, which constitutes the unique $\bmE$-irreducible subspace in $\cV_W$, with $\|W\|_\infty = 1$.
\cref{thm:vanishing error equivalence} implies that $\varepsilon_{\rm opt}^{\Phi_\star}(\xi) = 0$ if and only if all $\rP_{\rm opt}(\scE_{\xi, E_l}) = 1$.
By contrast, the trine POVM~\cite{Barnett2009Quantumstatediscrimination} admits a unique two-dimensional $\bmE$-irreducible subspace even for $L=1$, but violates~(i) since $W = 2\bI/3$. Indeed, at $\xi = \ketbra{\Phi_\star}{\Phi_\star}$ one achieves $\varepsilon_{\rm opt}^{\Phi_\star}(\xi) = 0$, while $\rP_{\rm opt}(\scE_{\xi, E})$ never exceeds $2/3$ (see also~\cref{eq:upper bound guess prob} for a general upper bound). Ideal local conversion to the MES no longer implies perfect discrimination.

The following is a direct consequence of \cref{thm:vanishing error equivalence}, specialized to the case $L=1$. The derivation is given in~\cref{sec:vanishing error equivalence}.

\begin{corollary}
\label{prop:L_1_trivial}
Suppose that the principal eigenspace $\cV_W$ contains exactly one $\bmE$-irreducible subspace $\cK_\star$.
If $\|W\|_\infty = 1$ and $L = 1$, then $d_{\cK_\star} = 1$; that is, $\ket{\Phi_\star}^{\sA\hat{\sB}}$ is a product state. Consequently, for $d_{\cK_\star} \geq 2$, the implication in~\cref{eq:vanish equivalence equation} requires $L \geq 2$.
\end{corollary}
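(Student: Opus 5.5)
We derive this from \cref{thm:vanishing error equivalence}, using the equivalence (i)$\Leftrightarrow$(ii). The hypotheses are that $\cV_W$ contains exactly one $\bmE$-irreducible subspace $\cK_\star$ and that $\|W\|_\infty = 1$. So condition (i) holds, and by (ii) the restricted family $\{E_{j|1}|_{\cK_\star}\}_j$ is a PVM on $\cK_\star$. (Note that $\cK_\star$ is $\bmE$-invariant, so these restrictions are genuine operators on $\cK_\star$.) With $L=1$ there is only this single POVM, and the proof reduces to a commutation argument.

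The key step is as follows. The projections $P_j \coloneqq E_{j|1}|_{\cK_\star}$ are mutually orthogonal and sum to $\bI_{\cK_\star}$, so they pairwise commute. Hence the $*$-algebra they generate on $\cK_\star$ is commutative. Equivalently, for any nonzero $P_j$, its range $P_j\cK_\star$ is invariant under every $P_{j'}$: we have $P_{j'}P_j\cK_\star = \delta_{jj'}P_j\cK_\star$.

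Next we lift this back to $\cH^\sA$. Because $\cK_\star$ is invariant under each $E_{j|1}$, we have $E_{j|1}(P_{j'}\cK_\star) = P_jP_{j'}\cK_\star \subseteq P_{j'}\cK_\star$. So each range $P_j\cK_\star$ is an $\bmE$-invariant subspace contained in $\cK_\star$. By irreducibility of $\cK_\star$, every nonzero $P_j\cK_\star$ must equal all of $\cK_\star$. The ranges of distinct $P_j$ are orthogonal, so exactly one $P_j$ is nonzero and it equals $\bI_{\cK_\star}$.

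Finally, we show $\dim\cK_\star = 1$. At this point every $E_{j|1}$ acts on $\cK_\star$ as a scalar, either $0$ or $1$. Consequently, any one-dimensional subspace of $\cK_\star$ is $\bmE$-invariant, and irreducibility forces $d_{\cK_\star}=1$. By \cref{prop:qualitative equivalent condition}, $\ket{\Phi_\star} = \ket{u_1}\ket{\bar u_1}$ is then a product state.

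The final sentence of the statement is the contrapositive. If $d_{\cK_\star}\ge 2$ and the implication in \cref{eq:vanish equivalence equation} holds, then (iii) holds, so (i) gives $\|W\|_\infty=1$. The argument above then rules out $L=1$, so $L\ge2$.

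The main point requiring care is the lifting step: invariance of the subspaces $P_j\cK_\star$ must be checked under the full operators $E_{j|1}$ on $\cH^\sA$, not just under their restrictions. This follows directly from the $\bmE$-invariance of $\cK_\star$.
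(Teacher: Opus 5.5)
Your proof is correct, but it takes a different route from the paper's. Both arguments begin the same way: (i)$\Rightarrow$(ii) of \cref{thm:vanishing error equivalence} makes the restrictions $Q_j = E_{j|1}|_{\cK_\star}$ a PVM on $\cK_\star$.

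The paper then argues spectrally. On $\cK_\star\otimes\bar{\cK}_\star$, the operator $\Omega$ reduces to $\sum_j Q_j\otimes\bar{Q}_j$, a projector of rank $\sum_j r_j^2$ with $r_j = \mathrm{rank}\,Q_j$. So the eigenvalue-$1$ eigenspace of $\Omega$ has dimension at least $\sum_j r_j^2$. Since $\nu(\Omega)\neq 0$ forces that dimension to be $1$, one gets $\sum_j r_j^2 = 1$, and hence $d_{\cK_\star} = \sum_j r_j = 1$.

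You instead work directly from the definition of $\bmE$-irreducibility. With $L=1$, the $Q_j$ are commuting projectors, so their ranges are $\bmE$-invariant subspaces of $\cK_\star$, and irreducibility collapses $\cK_\star$ to a line. You could shorten this by taking a common eigenvector of the commuting $Q_j$ directly.

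Your argument is more elementary and shows the underlying reason: a single PVM generates a commutative algebra, whose irreducible subspaces are one-dimensional. It uses only the irreducibility of $\cK_\star$, not the spectral gap; the uniqueness hypothesis enters only through invoking (i)$\Leftrightarrow$(ii). The paper's argument instead ties the conclusion to the nondegeneracy of the principal eigenspace of $\Omega$, matching the spectral viewpoint used throughout the paper.
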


\cref{prop:L_1_trivial} shows that simultaneous vanishing of the two errors for an entangled target requires at least two POVMs: for $L = 1$, conditions~(i)--(iii) of \cref{thm:vanishing error equivalence} can hold only in the trivial case $d_{\cK_\star} = 1$, where $\ket{\Phi_\star}^{\sA\hat{\sB}}$ is not entangled.

Before concluding this section, we discuss related literature. Quantitative correspondences between entanglement recovery and state discrimination have been established for complementary measurements~\cite{koashi2007complementaritydistillablesecretkey, Renes2013thephysics, renes2016uncertainrelationsandAQEC, Renes2022QuantumInformationCM, Berta2014Entanglementassistedguessingcomplementary}. Nakata et al.~\cite{nakata2025constructing} extended them to any two bases by constructing a decoder for general QEC, yet their error bound remains nonzero for non-MUBs even when the classical decoding errors vanish. Entropic uncertainty relations lower-bound the distillable entanglement for any two bases but certify maximal entanglement only for MUBs~\cite{Berta2010UncertaintyPrinciplequantummemory, Bergh2021ExperimentallyAccessibleBoundsDistillableEntanglement}. The zero-error case beyond complementarity has been explored using irreducible families of PVMs~\cite{Zhu2021ZeroUncertaintyStates, ran2026zerouncertaintystatesrelativeobservable}, and error-tolerant versions based on quantum steering are known for specific MUB-based measurements~\cite{Sarkar2022Certificationincompatiblemeasurements, Orthey2025Certifyingclassesmeasurements}. \Cref{prop:qualitative equivalent condition,thm:vanishing error equivalence} contain the equivalences of Refs.~\cite{Zhu2021ZeroUncertaintyStates, ran2026zerouncertaintystatesrelativeobservable} as special cases, while our bounds apply to arbitrary states and to families of POVMs satisfying the condition in~\cref{prop:qualitative equivalent condition}. In particular, the zero-error equivalence extends to POVMs precisely when they are projective on the irreducible subspace $\cK_\star$.


\subsection{Applications in quantum information processing}
\label{sec:Applications in quantum information processing}

An immediate application of our results is entanglement certification~\cite{Pallister2018OptimalVerificationEntangledStates, Zhu2019optimalverificationfidelityestimation, Li2019efficientverification, Yu2022StatisticalMethodsStateVerification, Li2025HighDimensionalEntanglementwitnessed, Li2026QuantumStateVerification}, which has broad theoretical and practical significance. Here, we explain the implications of our results in related tasks.

In~\cref{SSS:Verifyin_Entanglement}, we provide a simple LOCC protocol for checking the amount of distillable entanglement that works even when the available operations are restricted. We demonstrate in~\cref{SSS:demonstration_ent_distill} the strength of the protocol in concrete settings. In~\cref{SSS:QEC}, a similar approach is used to guarantee the performance of quantum error-correcting codes (QECCs).

\subsubsection{Certifying entanglement distillation by available measurements}
\label{SSS:Verifyin_Entanglement}

Given a shared bipartite state $\xi^{\sA\sB}$, Alice and Bob aim to determine how much entanglement can be distilled via LOCC. A naive approach is to use LOCC tomography, but it may be experimentally demanding and computationally intractable. Since current quantum platforms can implement only a restricted class of operations, it is practically important to construct a certification protocol that accommodates experimental limitations~\cite{Schneeloch2018Quantifyinghighdimensionalentanglement, ArnonFriedman2019DeviceIndependentCertification, Bergh2021ExperimentallyAccessibleBoundsDistillableEntanglement}.

Based on our results, we construct a simple LOCC protocol that certifies a distillable amount of entanglement even under restricted operations. The protocol is specified by a family of Alice's POVMs $\bmE=\{E_l^{\sA}\}_{l=1}^L$, a family of Bob's POVMs $\bmF=\{F_l^{\sB}\}_{l=1}^L$, and a probability distribution $\bmq = \{q_l\}_{l=1}^L$ over the choice of POVM. These choices can reflect possible experimental restrictions. 
Using $\bmE$ and $\bmq$, we have
\begin{equation}
    \Omega^{\sA\hat{\sB}} = \sum_{j,l} q_l E^{\sA}_{j|l} \otimes \bar{E}^{\hat{\sB}}_{j|l}, \ \ \ W^\sA = \sum_{j,l} q_l \big(E^{\sA}_{j|l}\big)^2.
\end{equation}
The only condition we require is that the principal eigenspace $\cV_W$ of $W$ contains exactly one $\bmE$-irreducible subspace, denoted by $\cK_\star$. By~\cref{prop:qualitative equivalent condition}, this is equivalent to $\|\Omega\|_\infty = \| W \|_{\infty}$ and $\nu(\Omega)\neq 0$.

Alice and Bob measure $\sA$ and $\sB$ of $\xi^{\sA\sB}$ by POVMs $E_l^{\sA} = \{ E_{j|l}^{\sA} \}_{j=1}^{J_l}$ and $F_l^{\sB} = \{F_{j|l}^{\sB} \}_{j=1}^{J_l}$, respectively, with probability $q_l$, and check if their outcomes coincide. By repeating this, they evaluate
\begin{equation}
    C_{\mathrm{coin}}(\bmE, \bmF | \xi)  = \sum_{l=1}^L q_l p_l^2,
\end{equation} 
where $p_l = \sum_{j=1}^{J_l} \tr\big[(E_{j|l}^{\sA} \otimes F_{j|l}^{\sB}) \xi^{\sA\sB}\big]$ is the coincidence probability. Note that $C_{\mathrm{coin}}(\bmE, \bmF | \xi)\leq \lag \rP_{\rm opt}^2(\scE_{\xi, E})\rag$ as Bob's POVMs are fixed in this protocol.
Following~\cref{cor:1}, they can certify that $\log d_{\cK_\star}$ ebits can be distilled with infidelity $\varepsilon_{\mathrm{distill}}$ bounded as
\begin{align}
\label{Eq:upper_e_distill}
    \varepsilon_{\mathrm{distill}} \leq \f{\|\Omega\|_\infty - C_{\mathrm{coin}}(\bmE, \bmF | \xi)}{\nu(\Omega)}. 
\end{align}

This idea is, however, somewhat idealized, because exact evaluation of $p_l$ requires infinitely many copies of $\xi^{\sA\sB}$. This leads us to propose~\cref{Protocol:LOCC_Ent_Check} for estimating $C_{\mathrm{coin}}(\bmE, \bmF | \xi)$ from finite copies of the state. It is a particularly simple LOCC protocol, based on fixed non-adaptive local measurements and classical communication to check the coincidence of their outcomes. 

\begin{figure}[t]
\begin{protocolbox}{Protocol:LOCC_Ent_Check}{Checking Distillable Entanglement}
    \begin{algorithmic}[1]
        \Statex \textbf{Given:} 
        \begin{compactenum}
            \item Copies of a state $\xi^{\sA\sB}$ shared between Alice $\sA$ and Bob $\sB$,
            \item a family of Alice's POVMs $\bmE=\{E_l^\sA\}_{l=1}^L$,
            \item a family of Bob's POVMs $\bmF=\{F_l^\sB\}_{l=1}^L$,
            \item a probability distribution $\bmq = \{q_l\}_{l=1}^L$,
            \item an accuracy parameter $\kappa > 0$ and a failure probability $\delta \in (0,1)$.
        \end{compactenum}
        \Statex \textbf{Condition:} The principal eigenspace $\cV_W$ of $W$ contains exactly one $\bmE$-irreducible subspace $\cK_\star$.
        \Statex \textbf{Output:}  An estimate of $C_{\mathrm{coin}}(\bmE, \bmF | \xi)$.
        \Statex
        
        \State Set $m = \lceil \f{1}{2 \kappa^2} \log \f{2}{\delta} \rceil$.
        \For{$t=1,\ldots,m$}
            \StateP{Alice samples $l$ with probability $q_l$ and tells it to Bob.}
            \StateP{Alice measures her subsystems of the two copies of $\xi^{\sA\sB}$ using the POVM $E_l^{\sA} = \{E_{j|l}^\sA\}_{j=1}^{J_l}$ and obtains outcomes $(j_{a_1}, j_{a_2})$.}
            \StateP{Bob measures his subsystems of the two copies using the POVM $F_l^\sB= \{F_{j|l}^\sB\}_{j=1}^{J_l}$ and obtains outcomes $(j_{b_1}, j_{b_2})$.}
            \StateP{Alice and Bob classically communicate. If $j_{a_1}=j_{b_1}$ and $j_{a_2}=j_{b_2}$, set $c_t = 1$. Otherwise, set $c_t=0$.}
        \EndFor
        \State Output $\widehat C_{\rm coin} = \f{1}{m}\sum_{t=1}^m c_t$.
    \end{algorithmic}
\end{protocolbox}
\end{figure}

 Since $\widehat{C}_{\mathrm{coin}}$ in \cref{Protocol:LOCC_Ent_Check} is unbiased, i.e., $\bE\big[\widehat{C}_{\mathrm{coin}}\big] = C_{\mathrm{coin}}(\bmE,\bmF | \xi)$, and $c_t \in \{0,1\}$, the Hoeffding inequality~\cite{Hoeffding1963ProbabilityInequalitiesSums} guarantees that the output of~\cref{Protocol:LOCC_Ent_Check}, $\widehat{C}_{\mathrm{coin}}$, is $\kappa$-close to $C_{\mathrm{coin}}(\bmE,\bmF | \xi)$ except with probability at most $\delta$. Importantly, this is achieved by the number of copies of $\xi^{\sA \sB}$ that is dependent only on the accuracy parameter $\kappa$ and the failure probability $\delta$, but is completely independent of the state $\xi$ and the number of POVMs $L$.

From~\Cref{cor:1}, \Cref{Protocol:LOCC_Ent_Check} provides a one-shot entanglement-distillation guarantee and a concrete LOCC procedure.

\begin{corollary}[Certifying LOCC distillation of $\log{d_{\cK_\star}}$ ebits]
\label{cor:entanglement_distillation}
Let $\kappa>0$ and $\delta \in (0,1)$, and let $\widehat{C}_{\mathrm{coin}}$ be the outcome of~\cref{Protocol:LOCC_Ent_Check}, which uses $\cO(\kappa^{-2} \log \delta^{-1})$ copies of $\xi^{\sA\sB}$. Then, the LOCC entanglement-distillation protocol described below transforms $\xi^{\sA\sB}$ into an MES carrying $\log{d_{\cK_\star}}$ ebits with infidelity $\varepsilon_{\mathrm{distill}}$ satisfying, with probability at least $1-\delta$,
\begin{align}
    \varepsilon_{\mathrm{distill}} 
    \leq \f{\|\Omega\|_\infty - \widehat{C}_{\mathrm{coin}} + \kappa}{\nu(\Omega)}.
\end{align}
\end{corollary}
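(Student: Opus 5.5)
The plan has three parts: construct the LOCC distillation protocol from the pretty good recovery map, bound its infidelity with \cref{cor:1} and the coincidence estimate, and control the statistical error of \cref{Protocol:LOCC_Ent_Check} by Hoeffding.

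\emph{The protocol.} It is one-way: Bob applies the pretty good recovery map $\cP_\xi^{\sB\to\hat{\sB}}$ of \cref{eq:state-based petz map main}, which the parties may compute from a description of $\xi$. Alternatively, they can use any channel attaining the bound of \cref{thm:1}. Alice's part is trivial, or at most an isometric relabeling of $\cK_\star\subseteq\cH^\sA$ and $\bar{\cK}_\star\subseteq\cH^{\hat{\sB}}$ onto $d_{\cK_\star}$-dimensional registers. This identifies $\ket{\Phi_\star}$ of \cref{eq:subspace mes on k star} with a standard MES carrying $\log d_{\cK_\star}$ ebits, and the infidelity does not change because the map is a fixed local isometry on the target's support. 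If output registers of exact dimension $d_{\cK_\star}$ are required, Alice can also apply the projection onto $\cK_\star$ and the corresponding completion. This can only increase fidelity to $\Phi_\star$, since $\Phi_\star$ is supported on $\cK_\star\otimes\bar{\cK}_\star$.

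\emph{Deterministic infidelity bound.} Under the condition of \cref{Protocol:LOCC_Ent_Check}, \cref{prop:qualitative equivalent condition} gives $\cV_\Omega=\rspan\{\ket{\Phi_\star}\}$, $\nu(\Omega)>0$ and $\|\Omega\|_\infty=\|W\|_\infty$. The upper bound of \cref{cor:1} then yields
\[
\varepsilon_{\mathrm{distill}}\le \frac{\|\Omega\|_\infty-\lag \rP_{\rm opt}^2(\scE_{\xi,E})\rag}{\nu(\Omega)}.
\]
Next I compare coincidence with optimal guessing. Define $\til F_{j}=F_{j|l}$; this is a valid POVM for discriminating $\scE_{\xi,E_l}$. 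Then
\[
p_l=\sum_j \tr\big[(E_{j|l}\otimes F_{j|l})\xi\big]=\sum_j p_{j|l}\tr\big[F_{j|l}\xi_{j|l}\big]\le \rP_{\rm opt}(\scE_{\xi,E_l}).
\]
Squaring is monotone on $[0,1]$ and the $q_l$ are nonnegative, so $C_{\rm coin}(\bmE,\bmF|\xi)\le\lag\rP_{\rm opt}^2\rag$. This recovers \cref{Eq:upper_e_distill}.

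\emph{Statistics.} This is the only step that needs care, because $C_{\rm coin}$ is quadratic in $p_l$. In round $t$, $l$ is drawn with probability $q_l$, and the two copies are measured independently with the same $l$. Conditioned on $l$, the two coincidence events are therefore independent, each with probability $p_l$, so $\bE[c_t]=\sum_l q_l p_l^2=C_{\rm coin}$. The $c_t$ are i.i.d. in $\{0,1\}$, so Hoeffding gives
\[
\Pr\big[|\widehat C_{\rm coin}-C_{\rm coin}|\ge\kappa\big]\le 2e^{-2m\kappa^2}\le\delta
\]
for $m=\lceil \frac{1}{2\kappa^2}\log\frac2\delta\rceil$. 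The protocol thus consumes $2m=\cO(\kappa^{-2}\log\delta^{-1})$ copies. With probability at least $1-\delta$ we have $C_{\rm coin}\ge\widehat C_{\rm coin}-\kappa$, and substituting this into the deterministic bound gives the claim.

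The main thing I would double-check is the i.i.d./unbiasedness assumption, which requires the copies to be independent (product state $\xi^{\otimes}$). I would state this assumption explicitly; adversarial or correlated copies would need a de Finetti-type argument, which is beyond this corollary.
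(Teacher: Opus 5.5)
Your proposal is correct and follows essentially the same route as the paper: the upper bound of \cref{cor:1} with $\|W\|_\infty=\|\Omega\|_\infty$, the inequality $C_{\mathrm{coin}}(\bmE,\bmF|\xi)\le\lag\rP_{\rm opt}^2(\scE_{\xi,E})\rag$ (Bob's fixed POVMs are a feasible discrimination strategy), and Hoeffding's inequality for the unbiased $\{0,1\}$-valued $c_t$. The protocol is likewise Bob's pretty good recovery map $\cP_\xi$ up to a local relabeling of $\cK_\star$; the paper has Alice apply a unitary $U$ sending a basis of $\cK_\star$ to computational basis vectors and Bob apply $\bar U$ after $\cP_\xi$, while your explicit remark that unbiasedness of $\widehat C_{\mathrm{coin}}$ requires independent product copies of $\xi$ states an assumption the paper leaves implicit.
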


The distillation protocol in~\cref{cor:entanglement_distillation} is one-way and non-adaptive. Alice applies a local unitary $U$ that transforms a basis of $\cK_\star$ to the computational basis, and tells Bob what unitary she has applied. Bob then applies the pretty good recovery map $\cP_\xi$ in Eq.~\eqref{eq:state-based petz map main} followed by $\bar{U}$.

While~\cref{Protocol:LOCC_Ent_Check} can be executed with any choice of POVMs, $\bmE$ and $\bmF$, suitable choices would be important to obtain a large coincidence of outcomes. From a theoretical point of view, the optimal error bound is obtained by optimizing the choices of POVMs. However,~\cref{cor:entanglement_distillation} holds for any POVMs, as long as Alice's choice satisfies the uniqueness condition on the irreducible subspace. This flexibility is one of the practical strengths of our approach.


\subsubsection{Demonstration: send-and-measure protocol}  \label{SSS:demonstration_ent_distill}


\begin{figure*}[t]
    {\centering
    \includegraphics[width=1.0\textwidth]{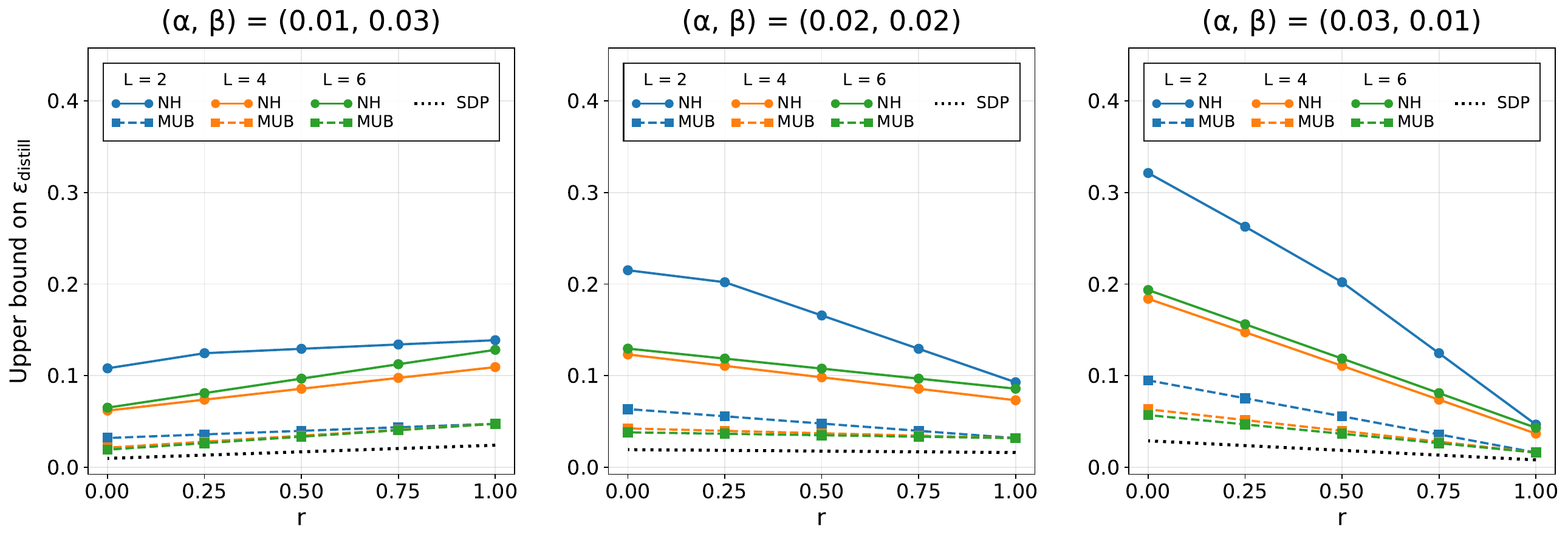}
    }
    \caption{Error bounds for $\varepsilon_{\mathrm{distill}}$ on distilling an MES from the noisy state $\cN^{\sA'\to\sB} (\Phi^{\sA\sA'})$, where the noise $\cN^{\sA'\to\sB}$ is a probabilistic mixture of depolarizing and dephasing channels with noise parameters $\alpha$ and $\beta$, respectively. The mixing probabilities of the former and latter channels are $1-r$ and $r$, respectively. We set $d=5$. Dashed and solid lines correspond to MUB-based measurements and neighboring-level Hadamard (NH)-based measurements, respectively. In both cases, $L$ denotes the number of measurement bases, optimized over the choice of $L$ MUB- or NH-based bases. As a benchmark, we also show, with the black dotted lines, the error $\varepsilon_{\mathrm{distill}}$ achieved by the optimal local operation on Bob's system, which is computed by a semidefinite program (SDP). At $L = 4$, the NH-based bound remains within a factor of $4.6$--$6.4$ of this optimal error and within $2.3$--$2.9$ of the corresponding MUB-based bound. See~\cref{sec:NN-hadamard simulation} for details of the setup.}
    \label{fig:numerical NNH bases simulation}
\end{figure*}


We demonstrate the performance of our approach in a simple entanglement-distribution setting, where Alice prepares an MES $\ket{\Phi}^{\sA\sA'}$ and sends $\sA'$  to Bob through an unknown noisy channel $\cN^{\sA'\to\sB}$.
We set $\xi^{\sA\sB} = \cN^{\sA'\to\sB} (\Phi^{\sA\sA'})$ and apply \cref{cor:entanglement_distillation}.

When the measurements are rank-1 projective, i.e., measurements in orthonormal bases, the protocol admits a simpler send-and-measure implementation. Since Alice's only action on $\sA$ is the basis measurement, she can perform it before the transmission; this is equivalent to directly sending a uniformly random complex-conjugate basis state instead of preparing an MES.
Bob then measures the received state and tells his outcome to Alice. This simplified send-and-measure protocol can notably reduce the experimental overhead. Although Bob's measurements are arbitrary, using the complex conjugates of Alice's bases is natural in the low-noise regime, as this choice is optimal in the noiseless case.

In this setting, $W = \bI$ as the measurements are rank-1 PVMs, implying that $\|W\|_\infty = 1$ and $\cV_W = \cH^\sA$. The MES is an eigenstate of $\Omega$ with eigenvalue $1$, and $\|\Omega\|_\infty = 1$. One then needs to check whether $\nu(\Omega) \neq 0$ and how large the unique irreducible subspace $\cK_\star$ is.
We consider concrete measurement bases and analyze the corresponding send-and-measure LOCC protocol.
The detailed calculations are given in~\cref{sec:calculation of bounds in a specific case}.

We start with a pair of MUB-based measurements, such as measurements in the computational and Fourier bases. When the former and the latter bases are chosen with probability $q$ and $1-q$, respectively, one can show that $\nu(\Omega) = \min\{q, 1-q\}$. Unless $q=0$ or $1$, $\nu(\Omega) \neq 0$, which implies that the irreducible subspace $\cK_\star$ of $\cV_W$ is unique. 
Moreover, one can check that $\cK_\star = \cH^\sA$. Thus, Alice and Bob can certify by the send-and-measure LOCC protocol that $\log d$ ebits can be LOCC distilled with infidelity
\begin{align}
    \varepsilon_{\mathrm{distill}}
    &\leq  \min_{0< q <1}\f{1-\big(q p_\mathrm{comp}^2 + (1-q) p_\mathrm{Fourier}^2\big)}{\min\{q, 1-q\}}  \\
    \label{Eq:Ent_Check_Fourier}
    &=  2-\big(p_\mathrm{comp}^2 + p_\mathrm{Fourier}^2\big),
\end{align}
where $p_\mathrm{comp}$ and $p_\mathrm{Fourier}$ are the coincidence probabilities in the corresponding bases. In this choice of bases, setting $q=1/2$ always yields the tightest bound, independently of the values of $p_\mathrm{comp}$ and $p_\mathrm{Fourier}$.

MUB-based measurements may, however, be experimentally demanding due to the difficulty of implementing the Fourier transform. As an alternative, we also consider a qudit encoded in energy levels, as is typical in various quantum platforms. In this setting, the energy basis is usually accessible, and rotations between neighboring energy levels may also be feasible. We therefore fix the first basis as the energy eigenbasis $\{\ket{1},\ket{2},\dots,\ket{d}\}$. The second basis is obtained by sequentially applying Hadamard rotations first to all neighboring odd-even pairs of the energy levels and then to all neighboring even-odd pairs. We choose the former and the latter bases with probabilities $q$ and $1-q$, respectively.

In this choice, we can show that
\begin{equation}
    \nu(\Omega) = \f{1 - \sqrt{1- 4 q (1-q) \sin^2(\f{\pi}{d})}}{2},
\end{equation}
which implies that the irreducible subspace $\cK_\star$ is unique if $q \neq 0, 1$. It can also be shown that $\cK_\star = \cH^{\sA}$. Hence, the send-and-measure LOCC protocol enables Alice and Bob to certify that $\log d$ ebits are LOCC distillable with infidelity
\begin{align}
    \varepsilon_{\mathrm{distill}}   
    &\leq
    \inf_{0<q<1} 2\f{1-\big(q p_\mathrm{comp}^2 +  (1-q) p_\mathrm{Hadamard}^2\big)}{1 - \sqrt{1- 4 q (1-q) \sin^2(\f{\pi}{d})}}.\label{Eq:Ent_Check_Hadamard}
\end{align}
Unlike the Fourier case, the optimal choice of $q$ can depend on the coincidence probabilities $p_{\mathrm{comp}}$ and $p_{\mathrm{Hadamard}}$. To obtain the tightest bound, we first estimate each probability within error $\kappa$ by repeating $\cO(\kappa^{-2}\log\delta^{-1})$ send-and-measure protocols. We then minimize the bound with each probability replaced by its estimate minus $\kappa$ (or zero if negative), which yields a valid bound with probability at least $1-\delta$.

We compare~\cref{Eq:Ent_Check_Fourier,Eq:Ent_Check_Hadamard} in a special case. When all coincidence probabilities happen to be the same value $p$, these bounds lead to $2(1 - p^2)$ and $2(1-p^2)/(1 - \cos(\pi/d))$, respectively. The factor $1/(1 - \cos(\pi/d))$ is the price of using the simpler basis generated by Hadamard rotations between neighboring energy levels. Nevertheless, when $d$ is not too large, the protocol still provides a meaningful quantitative LOCC distillation guarantee.

There is considerable flexibility, even if we focus on rank-1 PVMs. While the above examples use only two measurement bases, one can also use more. In~\cref{fig:numerical NNH bases simulation}, we numerically study such cases for $d=5$, assuming that the noisy channel $\cN^{\sA'\to\sB}$ is a probabilistic mixture of depolarizing and dephasing channels. For each $L$, the $L$ bases are chosen from the available MUB- or neighboring-level Hadamard (NH)-based ones so as to minimize the error bound. The results show that the MUB-based measurements yield consistently tighter bounds, remaining within a small factor of the actual error $\varepsilon_{\mathrm{distill}}$.
The NH bases nevertheless provide meaningful guarantees. 

We emphasize that the error bound does not necessarily improve even if the number $L$ of bases increases. Among the NH-based bounds in our numerical study, the best is obtained at $L=4$, and the error bound becomes worse when $L=6$. 
This non-monotonicity may reflect two effects of sampling the measurement bases uniformly: adding bases with lower coincidence probabilities can increase the numerator of the bound, while the resulting decrease in the probability of selecting each existing basis can narrow the spectral gap $\nu(\Omega)$.

In addition,~\cref{sec:finding basis pairs better than MUBs} numerically explores basis pairs that may yield a tighter bound on $\varepsilon_{\mathrm{distill}}$ than MUBs, focusing on a single-qubit case and the noisy state obtained from the MES under the biased noise.


\subsubsection{Certifying QEC decoding error by CQ decoding errors} 
\label{SSS:QEC}

Our entanglement-certification result also provides a way of checking the performance of QECCs.
Let $\cN^{\sC \to \sB}$ be a noisy channel and let $\cE^{\sA' \to \sC}$ be an encoding channel. While the decoding error $\varepsilon_{\mathrm{QEC}}$ can be evaluated in several ways, one of the standard ways~\cite{watrous2018TheoryQI} is the entanglement-fidelity criterion, defined using an MES $\ket{\Phi}^{\sA\sA'}$ between the input $\sA'$ and a reference $\sA$ as 
\begin{align}
    \varepsilon_{\mathrm{QEC}} \coloneqq \min_{\cD} \varepsilon_{\mathrm{QEC}}(\cD),
\end{align}
where the minimization is over all decoders $\cD^{\sB \to \sA'}$,
\begin{equation}
    \varepsilon_{\mathrm{QEC}}(\cD)
    \coloneqq 1- \rF\big(\cD^{\sB \to \sA'} \circ \cN^{\sC\to\sB} \circ \cE^{\sA'\to\sC}(\Phi^{\sA\sA'}), \Phi^{\sA\sA'}\big),
\end{equation}
and $\sA \cong \sA'$.
This criterion can be related to other measures of decoding error, such as one-shot formulations based on the diamond distance, with a slight modification of the encoding map~\cite{kretschmann2004tema, watrous2018TheoryQI, khatri2024principlemodern, nakata2025constructing}.

Precisely evaluating decoding errors is generally difficult. For stabilizer codes under Pauli noise, one can efficiently estimate the logical error rate via stabilizer-based simulation. Such an estimate, however, relies on the assumed noise model and becomes approximate when the physical noise is not fully characterized. This issue is more significant for general QECCs, including random-coding-type QEC~\cite{lloyd1997capacity, shor2002quantum, devetak2005private, nakata2021oneshot, Gullans2020QuantumCodingLowDepthRandom, Darmawan2024LowdepthrandomCliffordcircuits, Liu2026approximateqec1Dlogdepthcircuits} and QECCs in many-body systems~\cite{Brandao2019Quantumerrorcorrectingcodeseigenstates, Bao2019Eigenstatethermalizationhypothesisaqec, Qasim2025Approximateqeceigenstatethermalizationchaos, Rampp2024haydenpreskillchaoticintegrable, Nakata2024haydenpreskillhamiltoniansystem, nakata2025constructing}, for which explicit and efficient decoders are not necessarily available. Our result bypasses this difficulty by relating the QEC decoding error to classical--quantum (CQ) decoding errors, which are directly measurable without implementing a QEC decoder.

The procedure is essentially the same as the send-and-measure protocol considered above, but we call it a prepare-and-measure protocol.
Let $\bmB = \{B_l\}_{l=1}^{L}$ be a family of orthonormal bases in $\sA'$, and let $\bmM = \{M_l \}_{l=1}^L$ be a family of POVMs on $\sB$, where
\begin{equation}
    B_l = \{\ket{b_{j|l}}^{\sA'}\}_{j=1}^{d_{\sA'}}, \ \ \ 
    M_l = \{ M_{j|l}^\sB\}_{j=1}^{d_{\sA'}}.
\end{equation}
In each round, one chooses $l$ with probability $q_l$, then chooses $j$ uniformly at random, and encodes $\ket{b_{j|l}}$ by $\cE$. After the encoded state becomes noisy, the state is measured by $M_l$. 
This corresponds to the situation where $W = \bI$ (and thus $\cV_W = \cH^\sA$), and  
\begin{equation}
    \Omega^{\sA\sA'} 
    = \sum_{l=1}^{L} q_l \sum_{j=1}^{d_{\sA'}} \ketbra{\bar{b}_{j|l}}{\bar{b}_{j|l}}^\sA \otimes \ketbra{b_{j|l}}{b_{j|l}}^{\sA'}.
\end{equation}
Note that the irreducibility of $\cH^\sA$ with respect to a measurement in the complex-conjugate basis $\bar{\bmB}$ is equivalent to that of $\cH^{\sA'}$ with respect to $\bmB$.

This procedure is naturally interpreted as testing the CQ code $\{\cE^{\sA' \to \sC}(\ketbra{b_{j|l}}{b_{j|l}}^{\sA'})\}_{j=1}^{d_{\sA'}}$, defined in the basis $B_l$. When this CQ code is decoded by $M_l$, the average decoding error is given by
\begin{equation}
    \varepsilon_l 
    \coloneqq 1 - \f{1}{d_{\sA'}} \sum_{j=1}^{d_{\sA'}} \tr\Big[M_{j|l} \cN \circ \cE\big(\ketbra{b_{j|l}}{b_{j|l}}\big)\Big].
\end{equation}
The important quantity in our protocol is then the average of the squared \emph{success} probabilities of the CQ decoding,
\begin{equation}
    C_{\mathrm{CQ}}(\bmB, \bmM) 
    \coloneqq \sum_{l=1}^L q_l (1-\varepsilon_l)^2,
\end{equation}
which provides a quantitative guarantee on the QEC performance of the encoding $\cE$ against the noise $\cN$. 
By experimentally estimating this quantity, one can obtain the decoding error bound. An explicit protocol is given in~\cref{Protocol:QEC_Check}.

\begin{figure}[t]
\begin{protocolbox}{Protocol:QEC_Check}{Checking QEC decoding error by CQ decoding errors}
    \begin{algorithmic}[1]
        \Statex \textbf{Given:} 
        \begin{compactenum}
            \item A noisy channel $\cN^{\sC \to \sB}$,
            \item an encoder $\cE^{\sA' \to \sC}$,
            \item a family of orthonormal bases $\bmB = \{B_l\}_{l=1}^L$, where 
            $B_l = \{ \ket{b_{j|l}}^{\sA'}\}_{j=1}^{d_{\sA'}}$ is an orthonormal basis in $\sA'$,
            \item a family of POVMs $\bmM = \{M_l\}_{l=1}^L$, where 
            $M_l \coloneqq \{M_{j|l}^{\sB}\}_{j=1}^{d_{\sA'}}$ is a POVM on $\sB$,
            \item a probability distribution $\bmq = \{q_l\}_{l=1}^L$,
            \item an accuracy parameter $\kappa > 0$ and a failure probability $\delta \in (0,1)$.
        \end{compactenum}
        \Statex \textbf{Condition:} The Hilbert space $\cH^{\sA'}$ is irreducible with respect to the basis measurements in $\bmB$.
        \Statex \textbf{Output:} An estimate $\widehat{C}_{\mathrm{CQ}}$ of 
        $C_{\mathrm{CQ}}(\bmB, \bmM)$.
        \Statex

        \State Set $m = \lceil \f{1}{2\kappa^2}\log\f{2}{\delta} \rceil$.
        \For{$t=1,\ldots,m$}
            \StateP{Sample $l \in \{1,\ldots,L\}$ with probability $q_l$.}
            \For{$r=1,2$}
                \StateP{Sample $j$ uniformly at random from $\{1,\dots, d_{\sA'}\}$ and prepare $\cN \circ \cE(\ketbra{b_{j|l}}{b_{j|l}})$.}
                \StateP{Measure $\cN \circ \cE(\ketbra{b_{j|l}}{b_{j|l}})$ by the POVM $M_{l}=\{M_{j|l}^{\sB}\}_{j=1}^{d_{\sA'}}$ to obtain an outcome $i$.}
                \StateP{Set $x_r = 1$ if $j = i$ and $0$ otherwise.}
            \EndFor
            \StateP{Set $c_t = x_1 x_2$.}
        \EndFor
        \State Output $\widehat{C}_{\mathrm{CQ}} = \f{1}{m}\sum_{t=1}^m c_t$.
    \end{algorithmic}
\end{protocolbox}
\end{figure}

\begin{corollary}[Checking QEC performance by CQ codes]
\label{cor:QEC_check}
Let $\kappa>0$ and $\delta \in (0,1)$, and let $\widehat{C}_{\mathrm{CQ}}$ be the outcome of~\cref{Protocol:QEC_Check}, which uses $\cO(\kappa^{-2} \log \delta^{-1})$ runs on the inputs $\cN^{\sC\to\sB}$, $\cE^{\sA'\to\sC}$, $\bmB$, $\bmM$, and $\bmq$. Then, the decoding error $\varepsilon_{\mathrm{QEC}}$ satisfies, with probability at least $1-\delta$,
\begin{align}
\label{Eq:QEC_Dec_Error_CQ_Dec_Error_1}
    \varepsilon_{\mathrm{QEC}} 
    \leq \f{1 - \widehat{C}_{\mathrm{CQ}} + \kappa}{\nu(\Omega)}. 
\end{align}
\end{corollary}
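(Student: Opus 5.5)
The plan is to reduce \cref{cor:QEC_check} to \cref{thm:1}, applied to the state $\xi^{\sA\sB} \coloneqq \cN\circ\cE(\Phi^{\sA\sA'})$, with Alice's POVMs taken as the basis measurements $E_l=\{\ketbra{\bar b_{j|l}}{\bar b_{j|l}}^{\sA}\}_j$ on the reference. Then add a Hoeffding step.

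The first step identifies the relevant operators. The POVMs are rank-one PVMs, so $W=\bI$ and $\|W\|_\infty=1$. The MES $\ket{\Phi}^{\sA\hat{\sB}}$ satisfies $\Omega\ket{\Phi}=W\ket{\Phi}=\ket{\Phi}$, so $\|\Omega\|_\infty=1=\|W\|_\infty$. Irreducibility of $\cH^{\sA'}$ under $\bmB$ is equivalent to irreducibility of $\cH^\sA$ under $\bar{\bmB}$. Hence $\cV_W=\cH^\sA$ contains exactly one $\bmE$-irreducible subspace, $\cK_\star=\cH^\sA$. By \cref{prop:qualitative equivalent condition}, $\nu(\Omega)\neq0$ and $\cV_\Omega$ is spanned by the standard MES. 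Relabeling $\hat{\sB}\cong\sA'$, the target $\Phi_\star$ coincides with $\Phi^{\sA\sA'}$, using the identity $\sum_j\ket{\bar b_j}\ket{b_j}=\sum_j\ket{j}\ket{j}$.

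The second step applies \cref{thm:1}. Any channel $\cP^{\sB\to\sA'}$ is a valid decoder, so
\[
\varepsilon_{\mathrm{QEC}}\le\varepsilon(\cP(\xi),\Phi)\le\f{1-\lag\rP_{\rm opt}^2(\scE_{\xi,E})\rag}{\nu(\Omega)}.
\]
Next I compute the ensemble. Measuring $\sA$ of $\cN\circ\cE(\Phi)$ with $\ketbra{\bar b_{j|l}}{\bar b_{j|l}}$ gives outcome $j$ with probability $1/d_{\sA'}$. By the transpose trick $(\ketbra{\bar b}{\bar b}\otimes\bI)\ket{\Phi}=(\bI\otimes\ketbra{b}{b})\ket{\Phi}$, the conditional state on $\sB$ is $\cN\circ\cE(\ketbra{b_{j|l}}{b_{j|l}})$. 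Since $M_l$ is one particular POVM, $\rP_{\rm opt}(\scE_{\xi,E_l})\ge1-\varepsilon_l\ge0$. Squaring preserves this inequality for nonnegative quantities, which gives $\lag\rP_{\rm opt}^2\rag\ge C_{\mathrm{CQ}}(\bmB,\bmM)$.

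The third step is statistical. Because $l$ is fixed within a round and the two $j$'s are independent, $\bE[c_t]=\sum_l q_l(1-\varepsilon_l)^2=C_{\mathrm{CQ}}$. The $c_t\in\{0,1\}$ are i.i.d., so Hoeffding's inequality with $m=\lceil\f{1}{2\kappa^2}\log\f2\delta\rceil$ gives $C_{\mathrm{CQ}}\ge\widehat C_{\mathrm{CQ}}-\kappa$ with probability at least $1-\delta$. Substituting this yields \cref{Eq:QEC_Dec_Error_CQ_Dec_Error_1}.

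The only delicate point is the identification in step one. The conjugations must be handled consistently: $\Omega$ is built from Alice's measurement $\bar b$, with conjugates $b$ on $\hat{\sB}=\sA'$. One must check that this matches the $\Omega^{\sA\sA'}$ displayed before the corollary, and that the principal eigenvector is exactly the $\Phi^{\sA\sA'}$ used in the definition of $\varepsilon_{\mathrm{QEC}}$, not merely some MES.
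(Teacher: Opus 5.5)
Your proposal is correct and follows the paper's route. The paper obtains this corollary directly from \cref{cor:1} (that is, \cref{thm:1} combined with \cref{prop:qualitative equivalent condition}) applied to $\xi^{\sA\sB}=\cN\circ\cE(\Phi^{\sA\sA'})$ with the conjugate-basis PVMs on the reference, then uses $C_{\mathrm{CQ}}(\bmB,\bmM)\le\lag\rP_{\rm opt}^2(\scE_{\xi,E})\rag$ for fixed decoding POVMs and the same Hoeffding step as in \cref{Protocol:LOCC_Ent_Check}; your checks of the conjugation conventions, the induced ensemble, and $\bE[c_t]=C_{\mathrm{CQ}}$ fill in exactly the details the paper leaves implicit.
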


\Cref{cor:QEC_check} directly follows from \Cref{cor:1} with $\xi^{\sA\sB} = \cN^{\sC\to\sB}\circ\cE^{\sA'\to\sC}(\Phi^{\sA\sA'})$.
It shows that our prepare-and-measure protocol enables us to certify an upper bound on the decoding error without implementing a full decoder: one only needs to prepare basis states and measure the corresponding noisy encoded outputs with the chosen POVMs, using a number of rounds independent of the system size. Note that the decoding error of the Petz map obeys the same bound in Eq.~\eqref{Eq:QEC_Dec_Error_CQ_Dec_Error_1}.

The bound in~\cref{Eq:QEC_Dec_Error_CQ_Dec_Error_1} depends on the choice of decoding POVMs $\bmM$ of the CQ codes, and optimization over them makes the bound tightest. However, this optimization is not a part of the protocol, and any choice of POVMs gives a valid certificate; unlike process-fidelity bounds for a given operation~\cite{Hofmann2005Complementaryclassicalfidelities, Mayer2018Quantumprocessfidelitybounds}, the POVMs for different bases need not arise from a common decoder. A natural choice when the encoding is unitary and the noise is weak is the POVM obtained by applying the inverse of the encoding unitary and then measuring in the corresponding basis.

\cref{cor:QEC_check} can be viewed as a reduction from the decoding problem of a QECC to the decoding problems of the associated CQ codes. Such a reduction has been studied in the literature, leading to non-trivial decoders for CSS codes, quantum polar codes, and random codes~\cite{Renes2012efficientpolar, wilde2013PolarCodesClassicalQuantum, Renes2014PolarCodesPrivateQuantum, Renes2015effcientpolarnopresherdent, renes2016uncertainrelationsandAQEC, dupuis2021polarQCclifford, nakata2025constructing}. However, almost all of these studies are based on two MUBs, reflecting the standard intuition that quantum information is protected when classical information in two complementary bases is protected. 

Our result shows that mutual unbiasedness is not necessary in this approach. What is required instead is that the chosen family of bases makes the whole logical Hilbert space irreducible. As long as this requirement is satisfied, the same strategy works even for general families of bases, possibly more than two and not necessarily mutually unbiased. This matters when MUBs are hard to implement, as for the energy-level encoding in~\cref{SSS:demonstration_ent_distill}, where rotations between neighboring levels are more accessible than the Fourier transform. This provides a broader foundation for developing reduction-based approaches to decoding and certifying QECCs beyond the conventional MUB-based framework.

\subsection{Quantum measurement and trade-off relations}
\label{sec:Application to measurement process}

Our approach, which links entanglement and distinguishability via irreducibility, also contributes to quantum measurement theory. 
To illustrate this, we specialize the setting of the previous section to the case where the state $\xi$ is generated by applying a measurement process $\cM^{{\sA'}\to\sB\sM}$ (i.e., a channel induced by a quantum instrument) to a pure state $\ket{\sigma}^{\sA\sA'}$.
Specifically, consider the state given by
\begin{align}
    \Theta^{\sA\sB\sM}
    &\coloneqq \cM^{{\sA'}\to\sB\sM}(\ketbra{\sigma}{\sigma}^{\sA\sA'}) \\
    &= \sum_m p_m  \theta_m^{\sA\sB} \otimes \ketbra{m}{m}^\sM,
\end{align}
where $\{\ket{m}^\sM\}_m$ is an orthonormal basis of the register $\sM$, which stores the measurement outcome $m$ occurring with probability $p_m$, and $\theta_m^{\sA\sB}$ denotes the state on $\sA\sB$ conditioned on the outcome $m$.
In addition, we define a measurement channel $\Lambda^{{\sA'}\to\sM}$ associated with the measurement process $\cM^{{\sA'}\to\sB\sM}$ by tracing out $\sB$ from the output: $\Lambda^{{\sA'}\to\sM} \coloneqq \tr_\sB \circ \cM^{{\sA'}\to\sB\sM}$.

Two approaches to trade-off relations in quantum measurement have been studied: the information gain--irreversibility approach and the noise--disturbance approach based on observables.
Along these lines, in~\cref{sec:information gain and irreversibility}, we introduce one-shot versions of these quantities and discuss their operational meaning.
In~\cref{sec:trade-off relation between the information gain and the observable disturbance}, we use our entanglement-distinguishability results to derive a unified trade-off relation, thereby bridging the two approaches.


\subsubsection{Information gain--irreversibility and noise--disturbance approaches}
\label{sec:information gain and irreversibility}

\emph{Information gain} quantifies the amount of information obtainable through a given measurement.
To our knowledge, this notion was first proposed in Ref.~\cite{Groenewold1971InformationGain}, later studied in operational frameworks such as measurement compression and measurement simulation~\cite{Winter2001Compressionofquantummeasurement, Winter2004ExtrinsicIntrinsic, Wilde2012imulatingquantummeasurements, Berta2014identifyinginformationgain}.
In these studies, information gain is quantified by entropy reduction or mutual information.
A convenient way to capture this is to purify the input of a measurement channel $\Lambda^{\sA'\to\sM}$ as $\ket{\sigma}^{\sA\sA'}$ with a reference $\sA$, and measure how much correlation the outcome $\sM$ retains with $\sA$.

Building on these works, we quantify information gain using mutual information. Since we address the \emph{one-shot} setting, we particularly use the sandwiched quantum R\'{e}nyi mutual information of order $\alpha = \f{1}{2}$ (its formal definition is given in Eq.~\eqref{eq:def of 1/2 mutual information}), which, upon smoothing, reduces to the standard mutual information in the i.i.d. asymptotic limit.

\begin{definition}[One-shot information gain]
\label{def:one-shot information gain}
For a state $\ket{\sigma}^{\sA\sA'}$ and a measurement channel $\Lambda^{\sA'\to\sM}$, let $\Theta^{\sA\sM} = \Lambda^{\sA'\to\sM}(\ketbra{\sigma}{\sigma}^{\sA\sA'})$. The one-shot information gain $\mathrm{IG}(\sigma, \Lambda)$ is defined as 
\begin{align}
    \mathrm{IG}(\sigma, \Lambda) \coloneqq \widetilde{I}_{\f{1}{2}}(\sA; \sM)_\Theta,
\end{align}
where $\widetilde{I}_{\f{1}{2}}(\sA; \sM)_\Theta = -\log\max_\tau \rF(\Theta^{\sA\sM}, \sigma^\sA \otimes \tau^\sM)$, and the maximization is over all states $\tau^\sM$.
It satisfies $0 \leq \mathrm{IG}(\sigma, \Lambda) \leq -\log{\tr[(\sigma^{\sA'})^2]} \leq \log{d_{\sA'}}$. 
\end{definition}

The quantity $\mathrm{IG}(\sigma,\Lambda)$ has a direct information-theoretic interpretation. The measurement on $\sA'$ induces an ensemble decomposition $\sigma^\sA=\sum_m p_m\theta_m^\sA$, while the register $\sM$ records the corresponding label $m$. Thus, $\mathrm{IG}(\sigma,\Lambda)$ quantifies, in a R\'{e}nyi sense, how much information the measurement outcome provides about the conditional state of $\sA$.
If the post-measurement state $\Theta^{\sA\sM}$ is close to a product state, the measurement outcome in $\sM$ provides essentially no information about $\sA$. Accordingly, $\mathrm{IG}(\sigma,\Lambda)\approx0$.
In contrast, if $\Theta^{\sA\sM}$ takes the correlated form $\sum_m p_m \ketbra{\psi_m}{\psi_m}^\sA \otimes \ketbra{m}{m}^\sM$ for some states $\{\ket{\psi_m}^\sA\}_m$, then $\mathrm{IG}(\sigma, \Lambda)$ equals its maximal value $-\log{\tr[(\sigma^{\sA'})^2]}$. This is the R\'{e}nyi-2 uncertainty of the state to be measured, and hence, in this case, the measurement outcome fully resolves this uncertainty by specifying a pure conditional state of $\sA$.

The upper bound $-\log\tr[(\sigma^{\sA'})^2]$ more generally shows that the information revealed by the measurement is limited not by the measurement alone but also by the initial correlation in $\ket{\sigma}^{\sA\sA'}$ that characterizes the uncertainty in $\sA$. If $\ket{\sigma}^{\sA\sA'}$ is close to a product state, this uncertainty is small, and no measurement can reveal much information about $\sA$. By contrast, if $\ket{\sigma}^{\sA\sA'}$ is close to an MES, the uncertainty is nearly maximal and can be fully resolved by a suitable measurement.

The following proposition shows that any rank-1 projective measurement achieves the upper bound on $\mathrm{IG}(\sigma, \Lambda)$; its proof is given in~\cref{sec:upperbound on the info gain}.
\begin{proposition}
\label{prop:upperbound infogain}
For an arbitrary orthonormal basis $E = \{\ket{e_m}^{\sA'}\}_m$, let $\Lambda_E^{\sA'\to\sM}$ be the rank-1 projective measurement defined by
\begin{align}
    \label{eq:rank-1 proj measure}
    \Lambda_E^{\sA'\to\sM}(\cdot) 
    = \sum_m \tr_{\sA'}\big[\ketbra{e_m}{e_m}^{\sA'}(\cdot)\big] \ketbra{m}{m}^\sM.
\end{align}
Then, for a state $\ket{\sigma}^{\sA\sA'}$, it achieves the maximal value of the information gain, $\mathrm{IG}(\sigma, \Lambda_E) = -\log{\tr[(\sigma^{\sA'})^2]}$.
\end{proposition}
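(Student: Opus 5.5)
The plan is a direct computation of the fidelity maximization in the definition of $\widetilde{I}_{1/2}$, using the Schmidt decomposition of $\ket{\sigma}$. Write $\ket{\sigma}^{\sA\sA'} = (\bI^\sA\otimes\sqrt{\sigma^{\sA'}})\,\ket{\Gamma}$ up to the standard transpose convention, with $\ket{\Gamma}=\sum_i\ket{i}\ket{i}$ unnormalized. Measuring $\sA'$ in the basis $E$ gives
\begin{align}
    \Theta^{\sA\sM} = \sum_m \ketbra{\psi_m}{\psi_m}^\sA\otimes\ketbra{m}{m}^\sM,
\end{align}
where the $\ket{\psi_m}^\sA = (\bI\otimes\bra{e_m})\ket{\sigma}$ are unnormalized. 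Their norms are $p_m=\braket{\psi_m}{\psi_m} = \bra{e_m}\sigma^{\sA'}\ket{e_m}$. Thus $\Theta$ has exactly the correlated form mentioned before the proposition, and the remaining work is to evaluate $\max_\tau \rF(\Theta,\sigma^\sA\otimes\tau)$.

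First, reduce the maximization to classical $\tau$. Because $\Theta$ is block diagonal in the $\sM$ basis, the pinching channel on $\sM$ leaves $\Theta$ invariant. Data processing for the fidelity then shows that replacing $\tau$ by its diagonal part $\sum_m t_m\ketbra{m}{m}$ can only increase $\rF$.

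Next, compute the fidelity for diagonal $\tau$. For a rank-one $\ketbra{\psi_m}{\psi_m}$ one has $\sqrt{\rF}(\ketbra{\psi_m}{\psi_m}, t_m\sigma^\sA) = \sqrt{t_m\bra{\psi_m}\sigma^\sA\ket{\psi_m}}$. Summing over blocks gives
\begin{align}
    \sqrt{\rF} = \sum_m \sqrt{t_m}\,\sqrt{\bra{\psi_m}\sigma^\sA\ket{\psi_m}}.
\end{align}
By Cauchy--Schwarz, the maximum over probability vectors $t$ equals $\big(\sum_m \bra{\psi_m}\sigma^\sA\ket{\psi_m}\big)^{1/2}$, so $\max_\tau\rF = \sum_m \bra{\psi_m}\sigma^\sA\ket{\psi_m}$.

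Finally, evaluate this sum. It equals $\sum_m \bra{\sigma}(\sigma^\sA\otimes \ketbra{e_m}{e_m})\ket{\sigma} = \bra{\sigma}\sigma^\sA\otimes\bI\ket{\sigma}=\tr[(\sigma^\sA)^2]$, using completeness of the basis $E$. Since $\tr[(\sigma^\sA)^2]=\tr[(\sigma^{\sA'})^2]$ for a pure bipartite state, we get $\mathrm{IG} = -\log\tr[(\sigma^{\sA'})^2]$, which saturates the stated upper bound. Notably, the basis drops out entirely through the completeness relation.

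The only delicate step is the pinching reduction to diagonal $\tau$, together with the blockwise additivity of $\sqrt{\rF}$ for block-diagonal operators. Both are standard: the fidelity of direct sums is the sum of the blockwise root fidelities, and pinching is a CPTP map fixing $\Theta$. If one prefers to avoid data processing here, one can instead lower-bound the maximum by choosing the optimal diagonal $\tau$, and match this with the general upper bound $\mathrm{IG}\le-\log\tr[(\sigma^{\sA'})^2]$ stated in \cref{def:one-shot information gain}. That route also yields equality, and the upper bound may be taken as given.
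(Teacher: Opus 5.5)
Your main argument is correct and is essentially the paper's proof. Pinching on $\sM$ restricts to diagonal $\tau$, and Cauchy--Schwarz over the weights gives $\max_\tau\rF=\sum_m p_m\rF(\theta_m,\sigma^\sA)$. Purity of the conditional states then turns each term into $\tr[\theta_m\sigma^\sA]$, and these sum to $\tr[(\sigma^\sA)^2]$. The paper phrases this last step as the equality case of $\rF(\theta_m,\sigma^\sA)\ge\tr[\theta_m\sigma^\sA]$, and that same inequality is how it proves the general upper bound you take as given.

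The fallback in your final paragraph, however, does not work. Fixing a particular $\tau$ only lower-bounds $\max_\tau\rF$, i.e., it yields $\mathrm{IG}\le-\log\tr[(\sigma^{\sA'})^2]$. That is the same direction as the general bound, so the two cannot combine into an equality. The pinching step is therefore indispensable: it is exactly what supplies $\max_\tau\rF\le\tr[(\sigma^\sA)^2]$.
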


We also introduce the one-shot \emph{irreversibility} of a measurement process, which quantifies the recoverability of a quantum system after measurements and has been studied in its trade-off with information gain~\cite{Maccone2007Entropicinformation, Buscemi2008GlobalInformationBalance}.

\begin{definition}[One-shot irreversibility]
\label{def:irreversibility}
For a state $\ket{\sigma}^{\sA\sA'}$ and the measurement process $\cM^{{\sA'}\to\sB\sM}$, the one-shot irreversibility is defined as
\begin{align}
\label{eq:definition of q-disturb}
    \mathrm{Irv}(\sigma, \cM) 
    \coloneqq -\log \max_\cP \rF(\cP^{\sB\sM\to\hat{\sB}}(\Theta^{\sA\sB\sM}), \ketbra{\sigma}{\sigma}^{\sA\hat{\sB}}),
\end{align}
where $\Theta^{\sA\sB\sM} = \cM^{{\sA'}\to\sB\sM}(\ketbra{\sigma}{\sigma}^{\sA\sA'})$, and $\hat{\sB} \cong \sA'$.
The maximization is taken over all channels from $\sB\sM$ to $\hat{\sB}$.
It satisfies $0 \leq \mathrm{Irv}(\sigma, \cM) \leq -2\log\|\sigma^{\sA'}\|_\infty \leq 2\log{d_{\sA'}}$.
\end{definition}

This definition clearly captures how well the initial state $\ket{\sigma}^{\sA\sA'}$ can be recovered from the post-measurement state $\Theta^{\sA\sB\sM}$ using only local operations on $\sB\sM$.
We show in~\cref{sec:upperbound on q-distub} that the upper bound on $\mathrm{Irv}(\sigma, \cM)$ is given by $-2\log\|\sigma^{\sA'}\|_\infty$ by considering a measurement that maximally disturbs the system.

As expected, information gain and irreversibility satisfy a trade-off relation: a large information gain forces a large irreversibility, so that gaining more information renders the initial state harder to recover. This is a one-shot counterpart of the results in Refs.~\cite{Maccone2007Entropicinformation, Buscemi2008GlobalInformationBalance}. The proof of this proposition is given in~\cref{sec:proof tradeoff onfo gain and q-disturb}.

\begin{proposition}
\label{thm:Trade-off relation between the information gain and irreversibility}
    For a state $\ket{\sigma}^{\sA\sA'}$, the one-shot information gain and irreversibility of a measurement process $\cM^{{\sA'}\to\sB\sM}$ satisfy the relation: $\mathrm{IG}(\sigma, \Lambda) \leq \mathrm{Irv}(\sigma, \cM)$, where $\Lambda^{\sA'\to\sM} = \tr_\sB \circ \cM^{\sA'\to\sB\sM}$.
\end{proposition}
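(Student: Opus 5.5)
The relation is equivalent to $\max_\cP \rF(\cP(\Theta^{\sA\sB\sM}),\sigma^{\sA\hat{\sB}}) \leq \max_\tau \rF(\Theta^{\sA\sM},\sigma^\sA\otimes\tau^\sM)$. So the plan is to start from an optimal recovery channel $\cP^{\sB\sM\to\hat{\sB}}$ and build from it a state $\tau^\sM$ with $\rF(\Theta^{\sA\sM},\sigma^\sA\otimes\tau^\sM)$ at least as large. The mechanism should be monotonicity of fidelity under a CPTP map combined with the classical structure of $\sM$: since $\Theta^{\sA\sB\sM}=\sum_m p_m\theta_m^{\sA\sB}\otimes\ketbra{m}{m}^\sM$, the register $\sM$ can be copied.

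Let $\sM'\cong\sM$ and define the copying isometry $V\ket{m}^\sM=\ket{m}^\sM\ket{m}^{\sM'}$. Then
\[
\widetilde{\Theta}^{\sA\sB\sM\sM'} \coloneqq V\Theta V^\dagger=\sum_m p_m\theta_m^{\sA\sB}\otimes\ketbra{m}{m}^\sM\otimes\ketbra{m}{m}^{\sM'}.
\]
Apply $\cP^{\sB\sM\to\hat{\sB}}$ to $\widetilde{\Theta}$, leaving $\sM'$ untouched, to obtain $\omega^{\sA\hat{\sB}\sM'}$. Its marginal on $\sA\hat{\sB}$ is $\cP(\Theta)$, and its marginal on $\sA\sM'$ is $\Theta^{\sA\sM}$ because $\cP$ does not act on $\sA\sM'$.

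By Uhlmann's theorem, since $\ket{\sigma}^{\sA\hat{\sB}}$ is pure, there is an extension $\ket{\sigma}\bra{\sigma}\otimes\tau^{\sM'}$ with
\[
\rF(\omega^{\sA\hat{\sB}\sM'},\ \sigma^{\sA\hat{\sB}}\otimes\tau^{\sM'}) = \rF(\cP(\Theta),\sigma).
\]
This is the standard fact that the fidelity of a mixed state with a pure state equals the maximal fidelity of its extension with a product of that pure state and some state on the extra system; take $\tau^{\sM'}$ proportional to $\bra{\sigma}\omega\ket{\sigma}$.

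Now trace out $\hat{\sB}$. By data processing for fidelity,
\[
\rF(\Theta^{\sA\sM'},\ \sigma^\sA\otimes\tau^{\sM'}) \geq \rF(\cP(\Theta),\sigma).
\]
Maximizing over $\cP$ and taking $-\log$ gives $\mathrm{IG}(\sigma,\Lambda)\leq\mathrm{Irv}(\sigma,\cM)$.

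The main point to verify carefully is the Uhlmann step with the pure target. Write $\omega^{\sA\hat{\sB}\sM'}$ and set $\tau\propto\tr_{\sA\hat{\sB}}[(\sigma\otimes\bI)\omega]$. Then
\[
\rF(\omega,\sigma\otimes\tau)=\big(\tr\sqrt{\ldots}\big)^2=\tr[(\sigma\otimes\bI)\omega]=\rF(\tr_{\sM'}\omega,\sigma).
\]
This holds because $\sqrt{\sigma\otimes\tau}\,\omega\,\sqrt{\sigma\otimes\tau}=\sigma\otimes\sqrt{\tau}X\sqrt{\tau}$ with $X=\bra{\sigma}\omega\ket{\sigma}\propto\tau$, so the square root gives $\tr X$ directly. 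This is a short direct check.

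The copying step is also essential. Without $\sM'$, the channel $\cP$ consumes $\sM$ and the $\sA\sM$ correlation would be lost. The classicality of $\sM$, and nothing else, is what makes the argument work.
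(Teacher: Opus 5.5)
Your argument is correct, and it reaches the inequality by a more elementary route than the paper's.

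The paper works with the Stinespring dilation $V_\cM$ of Eq.~\eqref{eq:dilation of measuremant}, which writes a copy ${\sM'}$ of the outcome into the environment next to $\sE$. It then invokes the Uhlmann-type duality of Lemma~\ref{lem:local optimal trans} to obtain the exact identity $\mathrm{Irv}(\sigma,\cM)=\widetilde{I}_{\f{1}{2}}(\sA;\sE{\sM'})_\Theta$. The claim follows by tracing out $\sE$ and identifying $\Theta^{\sA{\sM'}}$ with $\Theta^{\sA\sM}$.

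You instead duplicate the classical register directly on the mixed state and keep the copy outside the recovery channel. You then need only the pure-target identity $\max_\tau\rF(\omega^{XY},\ketbra{\sigma}{\sigma}^X\otimes\tau^Y)=\bra{\sigma}\omega^X\ket{\sigma}$, plus monotonicity under $\tr_{\hat{\sB}}$. That identity is exactly the easy direction of Lemma~\ref{lem:local optimal trans}, and it holds for any extension, not just a purification. So you avoid both the dilation and the cited lemma. The ingredient the two proofs share is that the outcome can be copied into a register the recovery never touches.

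What the paper's route buys is an equality rather than an inequality. It identifies the irreversibility as the R\'enyi-$\f{1}{2}$ correlation between $\sA$ and the whole complementary system $\sE{\sM'}$. The paper reuses this identity for the upper bound $\mathrm{Irv}(\sigma,\cM)\leq-2\log\|\sigma^{\sA'}\|_\infty$, and it locates the slack between $\mathrm{IG}$ and $\mathrm{Irv}$ entirely in $\sE$.

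Two small points on your check. The trace of the square root is $\sqrt{\tr X}$, and its square is $\tr X$. When $\tr X=0$, the inequality is trivial.
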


As a remark, a nontrivial reverse inequality to this proposition, namely, $\mathrm{Irv}(\sigma, \cM) \leq c \mathrm{IG}(\sigma, \Lambda)$ for some constant $c$, does not hold. This is because one can construct a situation in which the information gain is small while the irreversibility is large, for example, by discarding the system $\sA'$ into the environment.


Apart from the approach to characterizing quantum measurements by the information gain and irreversibility, there has been another approach based on the disturbance and noise on observables.
It dates back to discussions by Heisenberg~\cite{Heisenberg1927Uberden, WheelerZurek1983QuantumTheoryMeasurement} and has since been developed alongside studies of uncertainty relations~\cite{Ozawa2003universallyvalidreformulation, ozawa2004uncertaintyprinciplequantuminstruments, Ozawa2005Universaluncertainty, OZAWA2004ncertaintynoisedisturbancegeneralizedmeasure, Buscemi2014NoiseandDisturbance, Renes2017uncertaintyAnoperationalapproach}.

In Ref.~\cite{Buscemi2014NoiseandDisturbance}, disturbance and noise are formulated within an operational framework by how much information about the observable is lost from the post-measurement system.
Specifically, consider an observable $O = \sum_{j=1}^{D} o_j P_j$ with distinct eigenvalues $o_j$, mutually orthogonal projectors $P_j$ summing to the identity, and degeneracies $d_j \coloneqq \tr[P_j]$ satisfying $\sum_{j=1}^D d_j = d$.
By preparing the input of the measurement in the state $P_j/d_j$ with probability $d_j/d$, we can characterize the measurement by the difficulty of recovering the eigenvalue label $j$ from the measurement outputs (classical outcomes and quantum post-measurement state).
For a nondegenerate observable, this reduces to a uniform prior over the $d$ eigenstates.

Based on this interpretation, we define the disturbance and noise for an observable by a measurement process $\cM^{{\sA'}\to\sB\sM}$ and a measurement channel $\Lambda^{\sA' \to \sM} = \tr_\sB\circ\cM^{\sA'\to\sB\sM}$, respectively. Both formulations are based on state discrimination and quantified by the optimal guessing probability (\cref{eq:optimal guessing probability main}).

\begin{definition}[One-shot disturbance for an observable]
\label{def:observable disturbance}
The one-shot disturbance of an observable $O^{\sA'}$ under a measurement process $\cM^{{\sA'}\to\sB\sM}$ is defined by 
\begin{align}
\label{eq:definition of c-disturbance}
    \mathrm{Dst}(O, \cM)
    &\coloneqq -\log{\rP_{\rm opt}(\scE_{O, \cM})},
\end{align}
where $\scE_{O, \cM}$ is the ensemble
\begin{align}
    \scE_{O, \cM} = \big\{d_j/d_{\sA'}; \ \cM^{{\sA'}\to\sB\sM}\big(P_j^{\sA'}/d_j\big)\big\}_{j=1}^{D}.
\end{align}
It satisfies $0 \leq \mathrm{Dst}(O, \cM) \leq \log{(d_{\sA'}/\max_j d_j)}$.
\end{definition}

This quantity quantifies the recoverability of the eigenvalue label $j$ from the post-measurement system $\sB\sM$. 
It is small when the label $j$ can be recovered accurately, and large when the recovery is difficult.

The same quantity can be rewritten in entropic terms, via the conditional min-entropy (see~\cref{eq:def of conditional min entropy} for the definition), as $\mathrm{Dst}(O, \cM) = \widetilde{H}_\infty(\sX|\sB\sM)_\Xi$, where
\begin{align}
\label{eq:def of Xi}
    \Xi^{\sX\sB\sM} 
    = \f{1}{d_{\sA'}} \sum_{j=1}^{D} \ketbra{j}{j}^\sX \otimes \cM^{\sA'\to\sB\sM}\big(P_j^{\sA'}\big).
\end{align}
This provides a one-shot counterpart to the disturbance for observables in Ref.~\cite{Buscemi2014NoiseandDisturbance}, which is defined via the conditional Shannon entropy.

By contrast, the noise (also known as error) for observables measures how accurately a given measurement resolves the observable $O$, focusing on the intuition that a more accurate measurement allows the label $j$ to be better inferred from the outcome in the register $\sM$.

\begin{definition}[One-shot noise for an observable]
The one-shot noise of an observable $O^{\sA'}$ under a measurement channel $\Lambda^{\sA'\to\sM} = \tr_\sB\circ\cM^{\sA'\to\sB\sM}$ is defined by 
\begin{align}
    \mathrm{Noi}(O, \Lambda) \coloneqq -\log{\rP_{\rm opt}}(\scE_{O, \Lambda}),
\end{align}
where $\scE_{O, \Lambda} = \{d_j/d_{\sA'}; \, \Lambda^{\sA'\to\sM}(P_j^{\sA'}/d_j)\}_{j=1}^{D}$.
It satisfies $0 \leq \mathrm{Noi}(O, \Lambda) \leq \log{(d_{\sA'}/\max_j d_j)}$.
\end{definition}

This reflects how well the label $j$ can be recovered from the classical outcome in $\sM$ alone. The noise is small when $j$ can be recovered accurately, and large otherwise.

The noise is clearly larger than the disturbance: 
\begin{align}
\label{eq:noise disturbance relation}
    \mathrm{Noi}(O, \Lambda) \geq \mathrm{Dst}(O, \cM), 
\end{align}
as the former ignores access to the system $\sB$.
As with the disturbance, the noise has the entropic expression: $\mathrm{Noi}(O, \Lambda) = \widetilde{H}_\infty(\sX|\sM)_\Xi$, where $\Xi^{\sX\sM} = \tr_\sB[\Xi^{\sX\sB\sM}]$ and $\Xi^{\sX\sB\sM}$ is given in Eq.~\eqref{eq:def of Xi}.

The disturbance and noise obey, as expected, a trade-off relation across different observables. Let $\{O_l^{\sA'}\}_l$ be a family of observables labeled by $l$, each with the spectral decomposition $O_l^{\sA'} = \sum_{j=1}^{D_l} o_{j|l} P_{j|l}^{\sA'}$ and degeneracies $d_{j|l} \coloneqq \tr[P_{j|l}^{\sA'}]$. For any two of them, the following holds; we prove \Cref{prop:one-shot noise disturbance} in \Cref{sec:proof of disturbance noise trade offs}.

\begin{proposition}
\label{prop:one-shot noise disturbance}
For a measurement process $\cM^{\sA'\to\sB\sM}$ with $\Lambda^{\sA'\to\sM} = \tr_\sB\circ\cM^{\sA'\to\sB\sM}$ and any two observables $O_1^{\sA'}$ and $O_2^{\sA'}$, we have
\begin{align}
\label{eq:one-shot noise disturbance}
    2^{-\mathrm{Dst}(O_1, \cM)} + 2^{-\mathrm{Noi}(O_2, \Lambda)}
    \leq 1 + \sqrt{c(O_1, O_2)},
\end{align}
where $c(O_1, O_2) \coloneqq \max_{j,k}\big\|P_{j|1}^{\sA'}P_{k|2}^{\sA'}\big\|_\infty^2$.
\end{proposition}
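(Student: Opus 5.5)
The plan is to put both guessing probabilities into the same entanglement-based picture used throughout the paper, and then bound a sum of two effect operators by a norm estimate for two projections. Let $\ket{\Phi}^{\sA\sA'}$ be the MES with $\sA \cong \sA'$, and set $\Theta^{\sA\sB\sM} = \cM^{\sA'\to\sB\sM}(\Phi^{\sA\sA'})$. Measuring $\sA$ with $\{\bar P_{j|1}\}_j$ gives outcome $j$ with probability $d_{j|1}/d_{\sA'}$ and leaves $\sB\sM$ in the state $\cM(P_{j|1}/d_{j|1})$, by the transpose trick. Hence
\[
2^{-\mathrm{Dst}(O_1,\cM)} = \max_{\{F_j^{\sB\sM}\}} \tr\Big[\Big(\sum_j \bar P_{j|1}\otimes F_j\Big)\Theta\Big].
\]
In the same way,
\[
2^{-\mathrm{Noi}(O_2,\Lambda)} = \max_{\{G_k^{\sM}\}} \tr\Big[\Big(\sum_k \bar P_{k|2}\otimes \bI^\sB\otimes G_k\Big)\Theta\Big].
\]
Writing $X$ and $Y$ for the two operators inside the traces, the left-hand side of \cref{eq:one-shot noise disturbance} equals $\tr[(X+Y)\Theta] \le \|X+Y\|_\infty$ for the optimal choices. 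It therefore suffices to show $\|X+Y\|_\infty \le 1+\sqrt{c(O_1,O_2)}$, at least on the support relevant for $\Theta$.

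Next I will use that $\sM$ is classical. Since $\Theta$ is block-diagonal in $\{\ket{m}^\sM\}$, we may assume without loss of generality that $F_j = \sum_m F_{j,m}^{\sB}\otimes\ketbra{m}{m}$, where $\{F_{j,m}\}_j$ is a POVM on $\sB$ for each $m$. The optimal guess from $\sM$ alone can also be taken deterministic, namely $G_k = \sum_{m:\,k(m)=k}\ketbra{m}{m}$. Then $X+Y$ is block-diagonal in $m$, and its $m$-th block is
\[
X_m + Y_m = \sum_j \bar P_{j|1}\otimes F_{j,m} + \bar P_{k(m)|2}\otimes \bI^\sB .
\]
So it suffices to bound each block by $1+\sqrt c$.

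For a fixed block, I take a Naimark dilation of $\{F_{j,m}\}_j$. This is an isometry $V:\cH^\sB\to\cH^{\sB'}$ together with projections $\{\Pi_j\}_j$ on $\sB'$ such that $F_{j,m}=V^\dag \Pi_j V$. Then $X_m+Y_m = (\bI\otimes V)^\dag(\Pi+Q)(\bI\otimes V)$, where $\Pi = \sum_j \bar P_{j|1}\otimes\Pi_j$ and $Q=\bar P_{k(m)|2}\otimes\bI$ are both orthogonal projections. This gives $\|X_m+Y_m\|_\infty\le\|\Pi+Q\|_\infty$. For two projections one has $\|\Pi+Q\|_\infty = 1+\|\Pi Q\|_\infty$, which follows from Jordan's two-subspace decomposition or the $2\times 2$ block computation.

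Finally, I compute
\[
\|\Pi Q\|_\infty^2 = \|Q\Pi Q\|_\infty = \Big\|\sum_j \bar P_{k|2}\bar P_{j|1}\bar P_{k|2}\otimes \Pi_j\Big\|_\infty = \max_j \|P_{j|1}P_{k|2}\|_\infty^2 \le c(O_1,O_2).
\]
The third equality holds because the $\Pi_j$ are mutually orthogonal, and complex conjugation preserves norms. I expect the main obstacle to be justifying the two reductions cleanly: the block-diagonal form of $F_j$ with a deterministic $G_k$, and the dilation step. Once these are in place, the two-projection identity finishes the proof.
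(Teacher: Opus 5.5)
Your proof is correct, and it ends at the same two-projection estimate $\|\Pi+Q\|_\infty\le 1+\|\Pi Q\|_\infty$ as the paper, but it gets there by a different reduction.

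\textbf{The paper's route.} It never uses the classicality of $\sM$ directly. It purifies $\Theta$ through the Stinespring isometry, with an environment $\sE$ and a copy $\sM'$ of the outcome register. It then rewrites the noise guess as a measurement on $\sM'$ and relaxes it to an arbitrary POVM on $\sE\sM'$. Finally it proves a general lemma bounding
\begin{align}
\Big\|\sum_j Q_j\otimes F_j\otimes\bI+\sum_k R_k\otimes\bI\otimes G_k\Big\|_\infty
\end{align}
for POVMs on two disjoint systems, which requires Naimark-dilating both POVMs and taking a maximum over pairs $(j,k)$.

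\textbf{Your route.} You dephase $F_j$ in the $\sM$ basis and make the guess from $\sM$ deterministic. Then $X+Y$ splits into blocks whose second term is the single projection $\bar P_{k(m)|2}\otimes\bI$. So you need only one Naimark dilation, and $\|\Pi Q\|_\infty$ reduces to a maximum over $j$ alone.

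\textbf{What each buys.} Your argument is more elementary and self-contained. However, it exploits the specific structure that the noise is read from a classical register that also sits inside the disturbance system. The paper's purification route proves a strictly stronger, monogamy-type statement. The bound survives when $2^{-\mathrm{Noi}(O_2,\Lambda)}$ is replaced by the optimal probability of guessing the $O_2$ label from the whole complementary output $\sE\sM'$. The paper's lemma also applies to arbitrary POVMs on disjoint systems.

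\textbf{A small correction.} Your last equality $\|Q\Pi Q\|_\infty=\max_j\|P_{j|1}P_{k|2}\|_\infty^2$ should be $\le$ when some $\Pi_j=0$. Likewise, only the inequality direction of the two-projection identity is needed. Neither affects the conclusion.
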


Using the arithmetic-geometric mean inequality, we obtain a simpler form of Eq.~\eqref{eq:one-shot noise disturbance}:
\begin{align}
\label{eq:one-shot noise disturbance log form}
    \mathrm{Dst}(O_1, \cM) + \mathrm{Noi}(O_2, \Lambda)
    \geq 2\log{\bigg(\f{2}{1+\sqrt{c(O_1, O_2)}}\bigg)}.
\end{align}
This is a one-shot counterpart of the noise--disturbance trade-off in Ref.~\cite{Buscemi2014NoiseandDisturbance}, formulated there by the Shannon-entropic quantities.

The bound in \Cref{prop:one-shot noise disturbance} is in fact optimal: for the $1$-qubit case with Pauli operators $O_1 = Z$ and $O_2 = X$, a projective measurement that lies midway between the eigenbases of $Z$ and $X$ gives $2^{-\mathrm{Dst}(O_1, \cM)} = 2^{-\mathrm{Noi}(O_2, \Lambda)} = \cos^2(\pi/8)$, attaining equality in Eqs.~\eqref{eq:one-shot noise disturbance} and~\eqref{eq:one-shot noise disturbance log form}.


\subsubsection{Bridging the two approaches: a trade-off relation among the measurement quantities}
\label{sec:trade-off relation between the information gain and the observable disturbance}

So far, we have described two approaches to characterizing quantum measurements. One is the information-gain--irreversibility approach, and the other is the noise--disturbance approach based on observables. Although these two have long developed along separate lines, \cref{thm:1} relates disturbance to irreversibility, thereby establishing their connection.

To this end, we again consider a family of observables $\{O_l^{\sA'}\}_{l=1}^L$, where $O_l^{\sA'} = \sum_{j=1}^{D_l} o_{j|l} P_{j|l}^{\sA'}$ is chosen with probability $q_l > 0$.
We then specialize the family of POVMs $\bmE$ to one in which each $E_l^\sA$ is the complex-conjugate PVM $\{\bar{P}_{j|l}^\sA\}_{j=1}^{D_l}$.
Correspondingly, the operator $\Omega^{\sA\hat{\sB}}$ with $\hat{\sB} \cong \sA'$ becomes  
\begin{align}
\label{inteq:90}
    \Omega^{\sA\hat{\sB}}
    &= \sum_{l=1}^L q_l \Omega_l^{\sA\hat{\sB}} \\
    &= \sum_{l=1}^L q_l \sum_{j=1}^{D_l} \bar{P}_{j|l}^\sA \otimes P_{j|l}^{\hat{\sB}},
\end{align}
which satisfies $\|\Omega\|_\infty = 1$.

Since $\Omega^{\sA\hat{\sB}}\ket{\Phi}^{\sA\hat{\sB}} = \ket{\Phi}^{\sA\hat{\sB}}$, we fix the initial state to be the MES $\ket{\Phi}^{\sA\sA'}$.
Measuring $\sA$ by $\{\bar{P}_{j|l}^\sA\}_j$ yields outcome $j$ with probability $d_{j|l}/d_{\sA'}$ and induces the ensemble: 
\begin{align}
    \scE_{O_l, \cM} = \big\{d_{j|l}/d_{\sA'};\ \cM^{\sA'\to\sB\sM}\big(P_{j|l}^{\sA'}/d_{j|l}\big)\big\}_{j=1}^{D_l},
\end{align}
on $\sB\sM$.

In this situation, the following theorem holds. See~\cref{sec:tradeoff infogain and classical disturb} for a proof.

\begin{theorem}
\label{thm:Trade-off relation between the information gain and the observable disturbance}
For a measurement process $\cM^{{\sA'}\to\sB\sM}$ on the MES $\ket{\Phi}^{\sA\sA'}$, the following trade-off relation holds:
\begin{align}
\label{eq:tradeoff infogain and c-disturbance equation}
    \nu(\Omega)\big(1-2^{-\mathrm{Irv}(\Phi, \cM)}\big) 
    \leq 1 - \big\langle 2^{-2\mathrm{Dst}(O, \cM)} \big\rangle,
\end{align}
where $\big\langle 2^{-2\mathrm{Dst}(O, \cM)}\big\rangle \coloneqq \sum_{l=1}^L q_l 2^{-2\mathrm{Dst}(O_l, \cM)}$, and $\Omega^{\sA\hat{\sB}}$ is given in Eq.~\eqref{inteq:90}.
\end{theorem}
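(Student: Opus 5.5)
The plan is to apply \cref{thm:1} directly, with the shared state taken to be $\xi^{\sA\sB'} \coloneqq \Theta^{\sA\sB\sM} = \cM^{\sA'\to\sB\sM}(\Phi^{\sA\sA'})$, where Bob's system is the composite $\sB' = \sB\sM$. Alice's family of POVMs is $\bmE = \{\{\bar P_{j|l}^\sA\}_j\}_l$ with distribution $\bmq$, and the target is $\ket{\phi} = \ket{\Phi}^{\sA\hat{\sB}}$. The main check is that this target lies in the principal eigenspace $\cV_\Omega$. Since $\bar P_{j|l}\otimes P_{j|l}\ket{\Phi} = \bar P_{j|l}\bar P_{j|l}^{\top}\ket{\Phi}$, and $\bar P^\top = P^\dagger{}^{\ldots}$ reduces to the fact that $(\bar P)^\top = P^\dagger = P$, each $\Omega_l$ acts on $\ket{\Phi}$ as $\sum_j \bar P_{j|l}^2 = \bI$. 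Hence $\Omega\ket{\Phi} = \ket{\Phi}$. Together with $\|\Omega\|_\infty \le 1$, this gives $\|\Omega\|_\infty = 1$ and $\ket{\Phi}\in\cV_\Omega$, which is what \cref{thm:1} requires.

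\Cref{thm:1} then gives a channel $\cP^{\sB\sM\to\hat{\sB}}$ with
\[
\nu(\Omega)\,\varepsilon(\cP(\Theta),\Phi) \le 1 - \sum_l q_l\,\rP_{\rm opt}^2(\scE_{\Theta,E_l}).
\]
Next, identify the ensembles. Measuring $\sA$ of $\Phi^{\sA\sA'}$ with $\bar P_{j|l}$ gives outcome probability $d_{j|l}/d_{\sA'}$. Using the transpose trick, the conditional state on $\sA'$ is $P_{j|l}/d_{j|l}$, and since $\cM$ acts only on $\sA'$ it commutes with Alice's measurement. So the induced ensemble is exactly $\scE_{O_l,\cM}$, and $\rP_{\rm opt}(\scE_{\Theta,E_l}) = 2^{-\mathrm{Dst}(O_l,\cM)}$. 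The right-hand side therefore becomes $1-\langle 2^{-2\mathrm{Dst}(O,\cM)}\rangle$.

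For the left-hand side, the definition of $\mathrm{Irv}$ maximizes the fidelity over all channels $\sB\sM\to\hat{\sB}$, so $1-\varepsilon(\cP(\Theta),\Phi)\le 2^{-\mathrm{Irv}(\Phi,\cM)}$. Equivalently, $\varepsilon \ge 1-2^{-\mathrm{Irv}}$. Because $\nu(\Omega)\ge 0$, multiplying this through yields \cref{eq:tradeoff infogain and c-disturbance equation}.

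There is no genuinely hard step. The only points needing care are the conjugation and transpose bookkeeping: that $\bar P_{j|l}$ on $\sA$ steers $\sA'$ to $P_{j|l}$, not to $\bar P_{j|l}$, and that the $\Omega$ given in Eq.~\eqref{inteq:90} matches the general definition $\sum_j E_{j|l}\otimes\bar E_{j|l}$ with $E_{j|l}=\bar P_{j|l}$. The theorem also needs no irreducibility assumption, since it holds trivially when $\nu(\Omega)=0$.
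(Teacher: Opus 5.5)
Your proposal is correct and matches the paper's proof essentially step for step: you apply \cref{thm:1} to $\Theta^{\sA\sB\sM}$ with Bob's system $\sB\sM$ and Alice's PVMs $\{\bar P_{j|l}\}_j$, check that $\Omega\ket{\Phi}=\ket{\Phi}$ (so $\|\Omega\|_\infty=1$ and $\ket{\Phi}\in\cV_\Omega$), identify $\scE_{\Theta,E_l}=\scE_{O_l,\cM}$, and use $\min_\cP\varepsilon(\cP(\Theta),\Phi)=1-2^{-\mathrm{Irv}(\Phi,\cM)}$. The only blemish is the garbled transpose line, which should read $(\bar P_{j|l}\otimes P_{j|l})\ket{\Phi}=\bar P_{j|l}P_{j|l}^{\top}\ket{\Phi}=\bar P_{j|l}^{2}\ket{\Phi}$, using $P_{j|l}^{\top}=\bar P_{j|l}$ for Hermitian $P_{j|l}$; your conclusion is unaffected.
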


When the choice of the observables has the property $\nu(\Omega) \neq 0$,~\cref{thm:Trade-off relation between the information gain and the observable disturbance} implies that the irreversibility and the disturbance for an observable are constrained to each other. If the disturbance is small for all the observables, the measurement process is nearly reversible. Equivalently, a large irreversibility forces at least some observables to become irrecoverable.

By combining~\cref{thm:Trade-off relation between the information gain and irreversibility,thm:Trade-off relation between the information gain and the observable disturbance,eq:noise disturbance relation}, we obtain a trade-off relation that unifies all the measures introduced so far into a comprehensive form:
\begin{multline}
\label{eq:tradeoff all}
    \nu(\Omega)\big(1-2^{-\mathrm{IG}(\Phi, \Lambda)}\big)  
    \leq \nu(\Omega)\big(1-2^{-\mathrm{Irv}(\Phi, \cM)}\big)\\
    \leq 1 - \big\langle 2^{-2\mathrm{Dst}(O, \cM)} \big\rangle
    \leq 1 - \big\langle 2^{-2\mathrm{Noi}(O, \Lambda)} \big\rangle,
\end{multline}
where $\big\lag 2^{-2\mathrm{Noi}(O, \Lambda)} \big\rag \coloneqq \sum_{l=1}^L q_l 2^{-2\mathrm{Noi}(O_l, \Lambda)}$. 
When they have small values, they can be rewritten in a more intuitive form, such as
\begin{multline}
\nu(\Omega)\mathrm{IG}(\Phi, \Lambda) 
    \leq \nu(\Omega)\mathrm{Irv}(\Phi, \cM) \\
    \lesssim 2\sum_{l=1}^L q_l \mathrm{Dst}(O_l, \cM) \leq 2\sum_{l=1}^L q_l \mathrm{Noi}(O_l, \Lambda).
\end{multline}
This chain shows that gaining substantial information necessarily incurs large irreversibility, which in turn forces large disturbance and noise on observables.
This result bridges the two historically distinct approaches to measurement theory into a unified trade-off relation.

Our result also provides a novel type of \emph{uncertainty relation} between observables. We consider the case with $L=2$ and $q_1 = q_2 = 1/2$. Using the relation between the information gain and the disturbance for an observable in~\cref{eq:tradeoff all}, we obtain
\begin{multline}
\label{eq:info gain and cl distu L=2}
    2^{-2\mathrm{Dst}(O_1, \cM)} + 2^{-2\mathrm{Dst}(O_2, \cM)} \\
    \leq 2\left(1 - \nu\left(\frac{\Omega_1 + \Omega_2}{2}\right)\left(1-2^{-\mathrm{IG}(\Phi, \Lambda)}\right)\right).
\end{multline}
This can be viewed as an uncertainty relation between $\mathrm{Dst}(O_1, \cM)$ and $\mathrm{Dst}(O_2, \cM)$,
analogous to \Cref{prop:one-shot noise disturbance}, but both quantities are now disturbances.
Such a pair does not admit a bound determined by the observables alone---taking $\cM$ to be the identity channel makes both disturbances vanish.
The information gain on the right-hand side constrains it instead, yielding a novel uncertainty relation made possible by \Cref{thm:1}.

This relation becomes more concrete for a nondegenerate observable $O_1^{\sA'}$ and a measurement process that perfectly distinguishes its eigenbasis $\{\ket{e_{m|1}}\}_{m=1}^{d_{\sA'}}$, namely,
\begin{align}
    &\cM^{{\sA'}\to\sB\sM}(\cdot) \notag\\
    &\hspace{1pc}= \sum_m \bra{e_{m|1}}(\cdot)\ket{e_{m|1}}\ketbra{e_{m|1}}{e_{m|1}}^\sB \otimes \ketbra{m}{m}^\sM,
\end{align}
where $\sB \cong \sA'$.
In this case, $\mathrm{Dst}(O_1, \cM) = 0$ and $\mathrm{IG}(\Phi, \Lambda) = \log d_{\sA'}$ due to~\cref{prop:upperbound infogain}. Then,~\cref{eq:info gain and cl distu L=2} reduces to
\begin{multline}
\label{eq:info gain and single cl distu L=2}
    \mathrm{Dst}(O_2, \cM) \\
    \geq -\f{1}{2}\log{\left(1-2\left(1-\f{1}{d_{\sA'}}\right)\nu\left(\f{\Omega_1 + \Omega_2}{2}\right)\right)}.
\end{multline}
This demonstrates that the given measurement process, which extracts maximal information about $O_1$, necessarily disturbs $O_2$, by an amount that grows with $\nu((\Omega_1 + \Omega_2)/2)$.
This fact is particularly pronounced when the eigenbases of nondegenerate observables $O_1^{\sA'}$ and $O_2^{\sA'}$ are MUBs, for which $\nu\big((\Omega_1 + \Omega_2)/2\big) = 1/2$. Hence, $\mathrm{Dst}(O_2, \cM) \geq \f{1}{2}\log{d_{\sA'}}$, implying that this measurement process, optimal for $O_1^{\sA'}$, cannot make $\mathrm{Dst}(O_2, \cM)$ small and thus extracts little information about $O_2^{\sA'}$.

In~\cref{sec:measurement trade-off simulation}, we numerically study the trade-off between information gain and disturbance, examining how measurement sharpness and observable choices affect the tightness of the bounds.


\section{Discussion}
\label{sec:discussions}

We provided quantitative bounds relating the errors in one-sided LO transformation to the MES and state discrimination for a family of POVMs, revealing a general principle behind the correspondence between quantum correlations and classical distinguishability of measurement labels. At the heart of this connection is the irreducibility induced by the POVMs, a much weaker requirement than the complementarity on which previous quantitative results have largely relied. Moreover, we characterized the conditions under which both errors vanish simultaneously.

Based on this connection, we constructed certification protocols in which distillable entanglement and QEC decoding errors are bounded by classical input-output statistics, without direct implementation of state tomography or decoders. Numerical evaluations with experimentally accessible bases confirmed that the resulting bounds remain informative beyond MUB-based measurements.
We also formulated one-shot versions of information gain, irreversibility, and disturbance and noise for observables, and provided their trade-off relations. This bridges the information gain--irreversibility and observable noise--disturbance approaches in quantum measurement theory. 
Our work hence sheds light on a common framework underlying entanglement distillation, QEC, and quantum measurement trade-offs.

This study opens several directions for future research. One promising direction is to further develop QEC. Our extension beyond MUBs offers novel insights, particularly toward generalizing MUB-based code constructions, such as CSS codes, to non-MUB settings. It may also contribute to quantum polar codes~\cite{Renes2012efficientpolar, wilde2013PolarCodesClassicalQuantum, Renes2014PolarCodesPrivateQuantum, Renes2015effcientpolarnopresherdent, dupuis2021polarQCclifford} and recently developed non-orthogonal codes~\cite{bhalerao2025improvingcommunicationrates}.
It is also interesting to ask whether a decoder can be constructed directly from the POVMs used for state discrimination, as is possible in the MUB setting~\cite{renes2016uncertainrelationsandAQEC, nakata2025constructing}.

From another perspective, the relation between entanglement and state distinguishability is ubiquitous in quantum information theory, and has the potential to unify several seemingly distinct tasks in the field. This viewpoint may provide connections between, for example, channel discrimination, magic state distillation, and other fundamental information-processing tasks.

\acknowledgments
The authors thank Zhaoyi Li, Seth Lloyd, Yosuke Mitsuhashi, Koki Ono, and Kaito Watanabe for helpful discussions.
This work was supported by JST SPRING Grant Number JPMJSP2108, JST CREST Grant Number JPMJCR23I3, JST PRESTO Grant Number JPMJPR2456, JST PRESTO Grant Number JPMJPR24FA, Japan, MEXT Quantum Leap Flagship Program (MEXT QLEAP) Grant No. JPMXS0120319794, JSPS KAKENHI Grant Number JP24K16975, JP25K00924, JP26H02015, JP26K00621, JST NEXUS Grant Number JPMJNX26C2, JST Moonshot R\&D Grant Number JPMJMS256E, RIKEN iTHEMS, RIKEN Pioneering Project `Mathematical foundation of quantum information' (PI Yasuyuki Kawahigashi).

\bibliographystyle{prx}
\bibliography{ref}


\clearpage
\onecolumngrid
\setcounter{page}{1}

\setcounter{section}{0}
\setcounter{equation}{0}

\renewcommand\thesection{S\arabic{section}}
\renewcommand\thesubsection{\thesection.\arabic{subsection}}
\renewcommand\theequation{S\arabic{equation}}
\renewcommand\theHequation{S\arabic{equation}}

\crefalias{section}{suppsection}
\crefalias{subsection}{suppsubsection}

\makeatletter
\renewcommand\p@subsection{}

\def\@sectioncntformat#1{Supplementary Note~S\arabic{section}:\quad}

\renewcommand\@hangfrom@section[3]{\@hangfrom{#1#2}#3}
\renewcommand\@hangfroms@section[2]{#1#2}

\let\original@addcontentsline\addcontentsline
\renewcommand\addcontentsline[3]{%
  \def\sit@a{#1}\def\sit@toc{toc}%
  \ifx\sit@a\sit@toc
    \original@addcontentsline{stoc}{#2}{#3}%
  \else
    \original@addcontentsline{#1}{#2}{#3}%
  \fi}
\makeatother

\begingroup
\begin{center}
{\large\bfseries Supplementary Information for\\[3pt]
``When Classical Correlations Certify Entanglement Recovery''\par}
\vspace{12pt}
{Takeru~Utsumi,$^{1}$ Yota~Tachibana,$^{2}$ Yoshifumi~Nakata,$^{3,4}$ Takaya~Matsuura,$^{5}$ \\ Ryuji~Takagi,$^{1}$ Francesco~Buscemi,$^{2}$ and Masato~Koashi$^{6}$\par}
\vspace{6pt}
{\small\itshape
$^{1}$Graduate School of Arts and Sciences, The University of Tokyo, 3-8-1 Komaba, Meguro-ku, Tokyo 153-8902, Japan\\
$^{2}$Department of Mathematical Informatics, Graduate School of Informatics, Nagoya University, Furo-cho, Chikusa-ku, Nagoya, Aichi 464-8601, Japan\\
$^{3}$Department of Computer Science, School of Computing, Institute of Science Tokyo, 4259 Nagatsuta-cho, Midori-ku, Yokohama, Kanagawa 226-8501, Japan\\
$^{4}$Yukawa Institute for Theoretical Physics, Kyoto University, Kitashirakawa Oiwake-cho, Sakyo-ku, Kyoto 606-8502, Japan\\
$^{5}$RIKEN Center for Quantum Computing (RQC), 2-1 Hirosawa, Wako, Saitama 351-0198, Japan\\
$^{6}$Photon Science Center, Graduate School of Engineering, The University of Tokyo, 7-3-1 Hongo, Bunkyo-ku, Tokyo 113-8656, Japan\par}
\end{center}
\vspace{10pt}
\endgroup

\supptableofcontents

\begin{bibunit}[prx]

\section{Notation}
\label{sec:notation in appendix}

We here summarize the notation used throughout this paper.
Physical systems are denoted by sans-serif letters, e.g., $\sA$, $\sB$, and $\sC$. 
We denote by $d_\cH$ the dimension of a Hilbert space $\cH$, and, for simplicity, write $d_\sA = \dim \cH^\sA$, where $\cH^\sA$ is a Hilbert space associated with system $\sA$. We write $\cH \cong \cK$ when Hilbert spaces $\cH$ and $\cK$ are isomorphic; in particular, we use $\sA \cong \sB$ to denote that $\cH^\sA$ and $\cH^\sB$ are isomorphic.
Operators on $\cH^\sA$ and maps from $\cH^\sA$ to $\cH^\sB$ are denoted by $T^\sA$ and $\cT^{\sA \to \sB}$, respectively; when the input and output systems coincide, we simply write $\cT^\sA$.
The identity operator and identity map are denoted by $\bI^\sA$ and $\cI^\sA$, respectively.
We omit system labels, identity operators, and identity maps whenever they are clear from the context.

A quantum state is described by a positive semidefinite (PSD) operator with unit trace, and a quantum channel is a completely positive (CP) and trace-preserving (TP) map. 
While we denote a pure state by $\ket{\varphi}$, the corresponding density operator is sometimes described as $\varphi = \ketbra{\varphi}{\varphi}$.
For a bipartite state $\rho^{\sA\sB}$, the reduced state on $\sA$ is written as $\rho^\sA = \tr_\sB \rho^{\sA\sB}$, where $\tr_\sB$ is the partial trace over the subsystem $\sB$; the full trace is denoted by $\tr$.

We denote the complex conjugate and the transpose in a given basis by $\bar{\cdot}$ and $\cdot^\top$, respectively, and denote the Hermitian conjugate by $\cdot^\dag$. 
The support of an operator $A$, i.e., the orthogonal complement of its kernel, is denoted by $\supp(A)$. 
For a PSD operator $A$, negative powers $A^{-a}$ for $a > 0$ are taken on $\supp(A)$.

We denote by $\ket{\Phi}$ a maximally entangled state (MES) defined in an orthonormal computational basis. For instance, the MES between $\sA$ and $\sA'$ is $\ket{\Phi}^{\sA\sA'} = \f{1}{\sqrt{d_\sA}}\sum_{i=1}^{d_\sA}\ket{i}^\sA\ket{i}^{\sA'}$, where $\{\ket{i}\}_i$ is the computational basis. 
Note that $\sqrt{d_\sA}\ket{\Phi}^{\sA\sA'} = \sum_i \ket{i}^\sA\ket{i}^{\sA'} = \sum_i\ket{e_i}^\sA\ket{\bar{e}_i}^{\sA'}$ holds for any orthonormal basis $\{\ket{e_i}^\sA\}_i$. Here, the complex conjugate is taken in the computational basis; if $\ket{e_i}^\sA = \sum_j c_{ij} \ket{j}^\sA$, then $\ket{\bar{e}_i}^\sA = \sum_j \bar{c}_{ij} \ket{j}^\sA$.

For an operator $A$, the Schatten-$p$ norm is defined by $\|A\|_p = (\tr[|A|^p])^{1/p}$ for $p \in [1, \infty)$, where $|A| = \sqrt{A^\dag A}$. We particularly use the trace norm, the Hilbert--Schmidt norm, and the operator norm, corresponding to $p=1, 2$, and $p \to \infty$, respectively. For a vector $v$, the Euclidean norm is $\|v\| =\sqrt{v^\dag v}$.
The fidelity between states $\rho$ and $\sigma$ is defined as $\rF(\rho, \sigma) \coloneqq \big(\tr\big[\sqrt{\sqrt{\sigma}\rho\sqrt{\sigma}}\big]\big)^2$.
It is monotonic under any quantum channel $\cT$, i.e., $\rF\big(\cT(\rho), \cT(\sigma)\big) \geq \rF(\rho, \sigma)$.

For $\alpha \in (0, 1) \cup (1, \infty)$, the sandwiched quantum R\'{e}nyi-$\alpha$ divergence of a state $\rho$ and a PSD operator $\sigma$ is defined as 
\begin{align}
    \widetilde{D}_\alpha(\rho \| \sigma) \coloneqq \f{1}{\alpha - 1} \log \Big(\tr\Big[\big(\sigma^{\f{1-\alpha}{2\alpha}}\rho\sigma^{\f{1-\alpha}{2\alpha}}\big)^\alpha\Big]\Big),
\end{align}
if $\alpha \in (0, 1)$ and $\tr(\rho\sigma) \neq 0$, or if $\alpha \in (1, \infty)$ and $\supp[\rho] \subseteq \supp[\sigma]$; otherwise, $\widetilde{D}_\alpha(\rho \| \sigma) \coloneqq \infty$~\cite{muller2013quantum}.
Of common interest are the cases $\alpha = 1/2$ and $\alpha \to \infty$.
We particularly use the R\'{e}nyi-$\f{1}{2}$ mutual information~\cite{Hayashi2016Correlationmutualinformation}: 
\begin{align}
\label{eq:def of 1/2 mutual information}
    \widetilde{I}_{\f{1}{2}}(\sA; \sB)_\rho 
    &\coloneqq \min_\sigma \widetilde{D}_{\f{1}{2}}(\rho^{\sA\sB} \| \rho^\sA \otimes \sigma^\sB) \\
    &= -\log\max_\sigma \rF(\rho^{\sA\sB}, \rho^\sA \otimes \sigma^\sB),
\end{align}
and the conditional R\'{e}nyi-$\infty$ entropy, which is also known as the conditional min-entropy~\cite{Renner05securityQKD}: 
\begin{align}
\label{eq:def of conditional min entropy}
    \widetilde{H}_\infty(\sA|\sB)_\rho 
    &\coloneqq -\min_\sigma \widetilde{D}_\infty(\rho^{\sA\sB} \| \bI^\sA \otimes \sigma^\sB).
\end{align}
In both quantities, the optimizations are taken over all states on $\sB$.


\section{Proof of Theorem~\ref{thm:1}}
\label{sec:proof thm1 new}

Our proof of Theorem~\ref{thm:1} uses the pretty good recovery map~\cite{Berta2014Entanglementassistedguessingcomplementary} and the pretty good measurement (PGM)~\cite{hausladen1994prettygood} associated with a certain ensemble.
We introduce the pretty good recovery map and its relation to the well-known Petz map, and then recall the PGM and prove a useful lemma relating them.

For a given state $\rho^{\sA\sB}$, the \emph{pretty good recovery map}~\cite{Berta2014Entanglementassistedguessingcomplementary} is defined as
\begin{align}
\label{eq:state based petz map}
    \cP_\rho^{\sB\to\sA}(\cdot) 
    = \tr_\sB\big[(\rho^{\sA\sB})^{\top_\sA}(\rho^\sB)^{-1/2}(\cdot)(\rho^\sB)^{-1/2}\big].
\end{align}
The partial transpose $\top_\sA$ is taken with respect to the computational basis of $\sA$.
A linear map $\cT^{\sB\to\sA}$ is CP if and only if its Choi operator $\cT^{\sB\to\sA}(\ketbra{I}{I}^{\sB\sB'})$ is positive semidefinite, where $\ket{I}^{\sB\sB'} = \sqrt{d_\sB}\ket{\Phi}^{\sB\sB'} = \sum_{i=1}^{d_\sB}\ket{i}^\sB\ket{i}^{\sB'}$ is the unnormalized MES in the computational basis $\{\ket{i}\}$.
Then, noting that $A^\sB\ket{I}^{\sB\sB'} = (A^{\sB'})^\top\ket{I}^{\sB\sB'}$ holds for any operator $A^\sB$, we see that the pretty good recovery map is CPTP on the support of $\rho^\sB$, because 
\begin{align}
    \cP_\rho^{\sB\to\sA}(\ketbra{I}{I}^{\sB\sB'}) 
    &= \big((\rho^{\sB'})^{-1/2}\rho^{\sA\sB'}(\rho^{\sB'})^{-1/2}\big)^\top \\
    &\geq 0,
\end{align}
and $\tr[\cP_\rho^{\sB\to\sA}(\cdot)] = \tr[\Pi_{\supp(\rho)}^\sB(\cdot)]$, where $\Pi_{\supp(\rho)}^\sB$ is the projector onto $\supp(\rho^\sB)$.

The pretty good recovery map can be regarded as the state-based version of the standard Petz map~\cite{petz1986sufficient, petz1988sufficiency}:
\begin{align}
    \cP_{\sigma, \cT}(\cdot) 
    = \sigma^{1/2}\cT^\dag\big(\cT(\sigma)^{-1/2}(\cdot)\cT(\sigma)^{-1/2}\big)\sigma^{1/2},
\end{align}
where $\cT^\dag$ denotes the adjoint map of $\cT$, defined by $\tr[A\cT(B)]=\tr[\cT^\dag(A)B]$ for any operators $A$ and $B$.
Indeed, define $\sigma^\sA \coloneqq (\rho^\sA)^\top$, and let $\sigma^{\sA'}$ be $\sigma^\sA$ with $\sA$ relabeled as $\sA'$.
Consider a decomposition of the state $\rho^{\sA\sB} = \cT^{\sA'\to\sB}(\ketbra{\psi}{\psi}^{\sA\sA'})$ with a channel $\cT^{\sA'\to\sB}$ and a pure state $\ket{\psi}^{\sA\sA'} = \sqrt{\sigma^{\sA'}}\ket{I}^{\sA\sA'}$, whose marginal on $\sA$ is $\rho^\sA$.
Uhlmann's theorem~\cite{uhlmann1976transition} ensures that such a decomposition always exists.
Noting that $(\ketbra{I}{I}^{\sA\sA'})^{\top_\sA} = F^{\sA\sA'}$, where $F^{\sA\sA'}$ is the swap operator, the right-hand side of Eq.~\eqref{eq:state based petz map} can be rephrased as 
\begin{align}
    &\tr_\sB\big[(\cT^{\sA'\to\sB}(\ketbra{\psi}{\psi}^{\sA\sA'}))^{\top_\sA} (\cT^{\sA'\to\sB}(\sigma^{\sA'}))^{-1/2}(\cdot)(\cT^{\sA'\to\sB}(\sigma^{\sA'}))^{-1/2}\big] \notag \\
    &= \tr_{\sA'}\big[\big((\sigma^{\sA'})^{1/2}\ketbra{I}{I}^{\sA\sA'}(\sigma^{\sA'})^{1/2}\big)^{\top_\sA} (\cT^{\sA'\to\sB})^\dag\big((\cT^{\sA'\to\sB}(\sigma^{\sA'}))^{-1/2}(\cdot)(\cT^{\sA'\to\sB}(\sigma^{\sA'}))^{-1/2}\big)\big] \\
    &= \tr_{\sA'}\big[F^{\sA\sA'}(\sigma^{\sA'})^{1/2}(\cT^{\sA'\to\sB})^\dag\big((\cT^{\sA'\to\sB}(\sigma^{\sA'}))^{-1/2} (\cdot)(\cT^{\sA'\to\sB}(\sigma^{\sA'}))^{-1/2}\big)(\sigma^{\sA'})^{1/2}\big] \\
    &= (\sigma^\sA)^{1/2}(\cT^{\sA\to\sB})^\dag\big((\cT^{\sA\to\sB}(\sigma^\sA))^{-1/2} (\cdot)(\cT^{\sA\to\sB}(\sigma^\sA))^{-1/2}\big)(\sigma^\sA)^{1/2},
\end{align}
where, in the last equality, we identify $\sA'$ with $\sA$ in the computational basis.
Hence, the standard form of the Petz map is obtained from the pretty good recovery map.
We simply refer to the pretty good recovery map given by Eq.~\eqref{eq:state based petz map} as the Petz map unless this causes confusion.

The PGM~\cite{hausladen1994prettygood} is a POVM measurement defined from an ensemble of states. Concretely, for an ensemble of states $\scE = \{p_j; \ \rho_j^\sB\}_{j=1}^J$, the POVM elements of the PGM $\Gamma_\scE^\sB$ are given by $\big\{p_j(\rho^\sB)^{-1/2}\rho_j^\sB(\rho^\sB)^{-1/2}\big\}_{j=1}^J$, where $\rho^\sB = \sum_{j=1}^J p_j \rho_j^\sB$.
To be precise, an additional POVM element is included which projects onto the kernel of $\rho^\sB$.

The PGM is known to satisfy the Barnum--Knill inequality~\cite{barnum2002reversing}: for any ensemble of states $\scE$, the guessing probability by the PGM, $\rP(\Gamma_\scE|\scE)$, is at least the square of the guessing probability by the optimal POVM $\rP_{\rm opt}(\scE)$; that is,  
\begin{equation}
\label{eq:property of PGM}
\big(\rP_{\rm opt}(\scE)\big)^2 \leq \rP(\Gamma_\scE|\scE).
\end{equation}
An improved inequality of Eq.~\eqref{eq:property of PGM} has been proposed in Ref.~\cite{renes2017better}, which can yield a tighter bound, but we use Eq.~\eqref{eq:property of PGM} in this work for simplicity.

For the Petz map and the PGM, we provide the following lemma.
\begin{lemma}
\label{lem:3}
Let $\sA \cong \hat{\sB}$. For any state $\rho^{\sA\sB}$ and POVM $E = \{E_j^{\sA}\}_j$, the Petz map $\cP_\rho^{\sB\to\hat{\sB}}$ and the PGM $\Gamma_{\scE_{\rho, E}}^\sB$ satisfy
\begin{align}
\label{eq:equality of petz and pgm}
    \rP(\Gamma_{\scE_{\rho, E}}|\scE_{\rho, E})
    = \tr\Big[\sum_j (E_j^{\sA} \otimes \bar{E}_j^{\hat{\sB}}) \cP_\rho^{\sB \to \hat{\sB}}(\rho^{\sA\sB})\Big],
\end{align}
where the ensemble $\scE_{\rho, E} = \{p_j; \ \rho_j^\sB\}_j$ is given by $p_j = \tr[E_j^\sA \rho^{\sA\sB}]$ and $\rho_j^\sB = \tr_\sA[E_j^\sA\rho^{\sA\sB}]/p_j$.
\end{lemma}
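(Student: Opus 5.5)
We prove the identity by direct computation: expand both sides into traces over $\sA\sB$ and match them using the transpose trick $A^\sB\ket{I}=(A^{\sB'})^\top\ket{I}$ together with $\tr[\bar X Y^\top]=\tr[X^\top{}^{\!*}\cdots]$-type manipulations.

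\textbf{Left-hand side.} By the definition of the PGM,
\begin{align*}
\rP(\Gamma_{\scE_{\rho,E}}|\scE_{\rho,E})=\sum_j \tr\big[\rho_j'^{\,\sB}(\rho^\sB)^{-1/2}\rho_j'^{\,\sB}(\rho^\sB)^{-1/2}\big].
\end{align*}
Here $\rho_j'^{\,\sB}:=p_j\rho_j^\sB=\tr_\sA[E_j^\sA\rho^{\sA\sB}]$ and $\rho^\sB=\sum_j\rho_j'^{\,\sB}$. The extra kernel element contributes nothing, since every $\rho_j'$ is supported in $\supp(\rho^\sB)$.

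\textbf{Right-hand side, first step.} Substitute the Petz map from Eq.~\eqref{eq:state based petz map}, with $\sA$ relabeled as $\hat{\sB}$. Write $\rho^{\hat{\sB}\sB''}$ for the copy of $\rho$ used inside the map; the input copy is $\rho^{\sA\sB}$, and the inner $\sB$ is traced out. This gives
\begin{align*}
\tr\Big[(E_j^\sA\otimes \bar E_j^{\hat{\sB}})\,\tr_{\sB}\big[(\rho^{\hat{\sB}\sB})^{\top_{\hat{\sB}}}(\rho^\sB)^{-1/2}\rho^{\sA\sB}(\rho^\sB)^{-1/2}\big]\Big].
\end{align*}
The $\sA$ factor acts only on the input copy, and $\tr_\sA[E_j^\sA\rho^{\sA\sB}]=\rho_j'^{\,\sB}$.

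\textbf{Right-hand side, second step.} For the $\hat{\sB}$ factor, use $\tr_{\hat{\sB}}[\bar E_j^{\hat{\sB}}X^{\top_{\hat{\sB}}}]=\tr_{\hat{\sB}}[(\bar E_j)^{\top}X]=\tr_{\hat{\sB}}[E_j^{\dagger}X]=\tr_{\hat{\sB}}[E_jX]$, which holds because $E_j$ is Hermitian. Applied to the partially transposed state, this yields $\tr_{\hat{\sB}}[E_j\rho^{\hat{\sB}\sB}]=\rho_j'^{\,\sB}$.

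\textbf{Combining.} The two factors together give the summand $\tr[\rho_j'(\rho^\sB)^{-1/2}\rho_j'(\rho^\sB)^{-1/2}]$, which matches the left-hand side term by term. Summing over $j$ completes the proof.

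\textbf{Main obstacle.} The only delicate step is the bookkeeping of the partial transpose against the complex conjugate: one must check that $\bar E^\top=E^\dagger=E$ in the same computational basis used for $\top_{\hat{\sB}}$, and that the partial trace over $\hat{\sB}$ commutes with the sandwiching by $(\rho^\sB)^{-1/2}$. The latter holds because those operators act only on $\sB$. A second point to verify is that the support restriction, where the Petz map is only trace-preserving on $\supp(\rho^\sB)$, causes no discrepancy. This is fine because $\rho^{\sA\sB}$ lies in $\cH^\sA\otimes\supp(\rho^\sB)$.
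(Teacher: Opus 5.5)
Your proof is correct and is essentially the paper's argument. The paper phrases the same computation through the adjoint: it writes the right-hand side as $\sum_j p_j \tr[(\cP_\rho^{\sB\to\hat{\sB}})^\dag(\bar{E}_j)\rho_j^\sB]$ and shows that $(\cP_\rho^{\sB\to\hat{\sB}})^\dag(\bar{E}_j)=p_j(\rho^\sB)^{-1/2}\rho_j^\sB(\rho^\sB)^{-1/2}$ is the PGM element, which rests on exactly your identity $\tr_{\hat{\sB}}[\bar{E}_j (\rho^{\hat{\sB}\sB})^{\top_{\hat{\sB}}}]=\tr_{\hat{\sB}}[E_j\rho^{\hat{\sB}\sB}]=p_j\rho_j^\sB$.
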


Lemma~\ref{lem:3} implies that guessing the outcome $j$ of an arbitrary POVM $E^\sA$ on system $\sA$ from system $\sB$ via the PGM is equivalent to performing the conjugate POVM $\bar{E}^{\hat{\sB}}$ on the output of the Petz map $\cP_\rho^{\sB\to\hat{\sB}}$.

\begin{proof}[Proof of Lemma~\ref{lem:3}]
For any state $\rho^{\sA\sB}$, any POVM $\{E_j^\sA\}_j$, and any channel $\cP^{\sB\to\hat{\sB}}$, the following holds:
\begin{align}
    \tr\Big[\sum_j (E_j^\sA \otimes \bar{E}_j^{\hat{\sB}})\cP^{\sB\to\hat{\sB}}(\rho^{\sA\sB})\Big]
    &= \sum_j \tr\big[\big(E_j^\sA \otimes (\cP^{\sB\to\hat{\sB}})^\dag(\bar{E}_j^{\hat{\sB}})\big)\rho^{\sA\sB}\big] \\
    \label{eq:map and povm general relation}
    &= \sum_j p_j \tr\big[(\cP^{\sB\to\hat{\sB}})^\dag(\bar{E}_j^{\hat{\sB}})\rho_j^\sB\big],
\end{align}
where we note that $\big\{(\cP^{\sB \to \hat{\sB}})^\dag(\bar{E}_j^{\hat{\sB}})\big\}_j$ forms a POVM since the adjoint map of a channel is CP and unital. 
For the Petz map $\cP_\rho^{\sB\to\hat{\sB}}$ defined in Eq.~\eqref{eq:state based petz map}, its adjoint is given by
\begin{align}
    (\cP_\rho^{\sB\to\hat{\sB}})^\dag(\cdot) = (\rho^\sB)^{-1/2}\tr_{\hat{\sB}}\big[\big((\cdot)^{\top_{\hat{\sB}}} \otimes \bI^\sB\big)\rho^{\hat{\sB}\sB}\big](\rho^\sB)^{-1/2}.
\end{align}
Hence, $(\cP_\rho^{\sB \to \hat{\sB}})^\dag(\bar{E}_j^{\hat{\sB}}) = p_j (\rho^\sB)^{-1/2}\rho_j^\sB(\rho^\sB)^{-1/2}$.
This coincides with the element of the PGM for the ensemble $\scE_{\rho, E}$, namely, $\Gamma_{\scE_{\rho, E}} = \{p_j\rho^{-1/2}\rho_j\rho^{-1/2}\}_j$. As a result, we obtain 
\begin{align}
    \tr\Big[\sum_j (E_j^\sA \otimes \bar{E}_j^{\hat{\sB}})\cP_\rho^{\sB\to\hat{\sB}}(\rho^{\sA\sB})\Big] 
    &= \sum_jp_j\tr\big[p_j(\rho^\sB)^{-1/2}\rho_j^\sB(\rho^\sB)^{-1/2}\rho_j^\sB\big] \\
    &= \rP(\Gamma_{\scE_{\rho, E}}|\scE_{\rho, E}).
\end{align}

\end{proof}

We now prove Theorem~\ref{thm:1} in a slightly generalized form, which remains nontrivial when the largest eigenvalue of $\Omega^{\sA\hat{\sB}}$ is degenerate.
Let $L$ be an arbitrary positive integer. For each $l = 1, 2, \ldots, L$, let $E_l^\sA = \{E_{j|l}^\sA\}_j$ be a POVM, and $\Omega_l^{\sA\hat{\sB}} = \sum_j E_{j|l}^\sA \otimes \bar{E}_{j|l}^{\hat{\sB}}$, where $\sA \cong \hat{\sB}$. For a probability distribution $\{q_l\}_{l=1}^L$ satisfying $q_l > 0$ for all $l$, let $\Omega^{\sA\hat{\sB}} = \sum_{l=1}^L q_l \Omega_l^{\sA\hat{\sB}}$, and let $\Pi_{\cV_\Omega}^{\sA\hat{\sB}}$ be the projector onto its principal eigenspace $\cV_\Omega$. We define $\tilde{\nu}(\Omega)$ as the difference between the two largest \emph{distinct} eigenvalues of $\Omega^{\sA\hat{\sB}}$, where $\tilde{\nu}(\Omega) \coloneqq \|\Omega\|_\infty$ if all eigenvalues of $\Omega^{\sA\hat{\sB}}$ have the same value. Unlike the spectral gap $\nu(\Omega)$, i.e., the difference between the largest and the second-largest eigenvalues of $\Omega^{\sA\hat{\sB}}$ counted with multiplicity, $\tilde{\nu}(\Omega)$ is always positive, and it reduces to $\nu(\Omega)$ when $\dim \cV_\Omega = 1$.

\begin{theorem}
\label{thm:general thm1}
For any state $\xi^{\sA\sB}$, the Petz map $\cP_\xi^{\sB\to\hat{\sB}}$ satisfies
\begin{align}
\label{inteq:general thm1}
    \min_{\tau: \, \supp(\tau) \subseteq \cV_\Omega}\big(1 - \rF\big(\cP_\xi^{\sB\to\hat{\sB}}(\xi^{\sA\sB}), \tau^{\sA\hat{\sB}}\big)\big)
    \leq \f{\|\Omega\|_\infty - \big\lag \rP_{\rm opt}^2(\scE_{\xi, E})\big\rag}{\tilde{\nu}(\Omega)},
\end{align}
where $\big\lag \rP_{\rm opt}^2(\scE_{\xi, E}) \big\rag \coloneqq \sum_{l=1}^L q_l \big(\rP_{\rm opt}(\scE_{\xi, E_l})\big)^2$, and the minimization is taken over all states $\tau^{\sA\hat{\sB}}$ supported on $\cV_\Omega$.
\end{theorem}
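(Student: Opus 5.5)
Set $\rho \coloneqq \cP_\xi^{\sB\to\hat{\sB}}(\xi^{\sA\sB})$. The plan is to link the guessing probabilities to the expectation value $\tr[\Omega\rho]$, and then to link that expectation value to the weight $\rho$ places on the principal eigenspace $\cV_\Omega$.

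\emph{Lower bound on $\tr[\Omega\rho]$.} For each $l$, apply Lemma~\ref{lem:3} with $E=E_l$ and $\rho^{\sA\sB}=\xi^{\sA\sB}$. It gives $\tr[\Omega_l \rho] = \rP(\Gamma_{\scE_{\xi,E_l}}|\scE_{\xi,E_l})$. The Barnum--Knill inequality~\eqref{eq:property of PGM} then gives $\tr[\Omega_l\rho] \geq \rP_{\rm opt}^2(\scE_{\xi,E_l})$. Averaging with weights $q_l$ yields
\begin{align}
\tr[\Omega\rho] \geq \big\lag \rP_{\rm opt}^2(\scE_{\xi,E})\big\rag .
\end{align}
One caveat: $\cP_\xi$ is only trace preserving on $\supp(\xi^\sB)$. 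Since $\xi^{\sA\sB}$ is supported on $\cH^\sA\otimes\supp(\xi^\sB)$, the output $\rho$ is still a normalized state, so this causes no problem.

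\emph{Spectral step.} Write $\lambda_1=\|\Omega\|_\infty$ and let $\Pi=\Pi_{\cV_\Omega}$. Every eigenvalue of $\Omega$ outside $\cV_\Omega$ is at most $\lambda_1-\tilde\nu(\Omega)$. This includes the convention case where all eigenvalues coincide: then $\Pi=\bI$, so there is no complement, and the bound holds trivially; choosing $\tilde\nu=\|\Omega\|_\infty$ there is consistent with $\Omega\ge0$. Hence
\begin{align}
\Omega \leq \lambda_1 \Pi + (\lambda_1-\tilde\nu)(\bI-\Pi),
\end{align}
and therefore $\tr[\Omega\rho] \le \lambda_1 - \tilde\nu\,(1-\tr[\Pi\rho])$. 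Combining with the previous step gives
\begin{align}
\tilde\nu(\Omega)\big(1-\tr[\Pi\rho]\big) \le \|\Omega\|_\infty - \big\lag \rP_{\rm opt}^2(\scE_{\xi,E})\big\rag .
\end{align}

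\emph{Relating $\tr[\Pi\rho]$ to fidelity.} It remains to show that $\min_{\tau}(1-\rF(\rho,\tau)) \le 1-\tr[\Pi\rho]$, i.e., to exhibit $\tau$ supported on $\cV_\Omega$ with $\rF(\rho,\tau)\ge \tr[\Pi\rho]$. Take the gentle-measurement choice $\tau = \Pi\rho\Pi/\tr[\Pi\rho]$ (if $\tr[\Pi\rho]=0$ the claim is trivial). Then $\sqrt{\tau}\rho\sqrt{\tau} \ge \sqrt{\tau}\,\Pi\rho\Pi\,\sqrt{\tau}$. Equivalently, use $\rF(\rho,\tau)=\|\sqrt\rho\sqrt\tau\|_1^2$ and bound $\|\sqrt\rho\sqrt\tau\|_1 \ge \tr[\sqrt\rho\,\Pi\sqrt\rho\,\Pi\ldots]$. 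The simplest route is a purification argument: purify $\rho$ as $\ket{\psi}$ and set $\ket{\psi'}=(\Pi\otimes\bI)\ket\psi/\|\cdot\|$, which purifies $\tau$. Then $|\langle\psi|\psi'\rangle|^2 = \tr[\Pi\rho]$, and Uhlmann's theorem gives $\rF(\rho,\tau)\ge \tr[\Pi\rho]$.

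I expect no step to be a real obstacle; this last fidelity step is the only one needing care, and the purification argument handles it cleanly. Theorem~\ref{thm:1} then follows because, when $\dim\cV_\Omega=1$, the only state supported on $\cV_\Omega$ is $\phi$ and $\tilde\nu=\nu$; when $\nu=0$, Theorem~\ref{thm:1} is trivial.
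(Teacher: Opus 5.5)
Your proof is correct and follows the paper's argument essentially step for step: Lemma~\ref{lem:3} plus the Barnum--Knill inequality give $\tr[\Omega\rho]\ge\langle \rP_{\rm opt}^2(\scE_{\xi,E})\rangle$; the spectral inequality $\tilde\nu(\Omega)\Pi_{\cV_\Omega}\ge\Omega-(\|\Omega\|_\infty-\tilde\nu(\Omega))\bI$ turns this into a bound on $1-\tr[\Pi_{\cV_\Omega}\rho]$; and the Uhlmann argument with $\tau=\Pi_{\cV_\Omega}\rho\Pi_{\cV_\Omega}/\tr[\Pi_{\cV_\Omega}\rho]$ gives $\rF(\rho,\tau)\ge\tr[\Pi_{\cV_\Omega}\rho]$. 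The paper also proves the matching bound $\rF(\rho,\tau)\le\tr[\Pi_{\cV_\Omega}\rho]$ (not needed here, but reused for Proposition~\ref{prop:converse relation}); you should delete the unfinished ``$\|\sqrt\rho\sqrt\tau\|_1\ge\ldots$'' fragment, though a direct computation would also work, since $\sqrt\tau\rho\sqrt\tau=\tr[\Pi_{\cV_\Omega}\rho]\,\tau^2$.
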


\Cref{thm:general thm1} contains \cref{thm:1}; if $\dim \cV_\Omega = 1$, the only state supported on $\cV_\Omega$ is $\ket{\phi}$ in \cref{thm:1}, and $\tilde{\nu}(\Omega) = \nu(\Omega) > 0$, which makes Eq.~\eqref{inteq:general thm1} equivalent to Eq.~\eqref{eq:equation of main thm1} with $\cP = \cP_\xi$.
If $\dim \cV_\Omega \geq 2$, then $\nu(\Omega) = 0$, and Eq.~\eqref{eq:equation of main thm1} becomes trivial, whereas Eq.~\eqref{inteq:general thm1} may still provide a nontrivial bound on the infidelity to the closest state supported on $\cV_\Omega$ owing to $\tilde{\nu}(\Omega) > 0$.

\begin{proof}[Proof of \Cref{thm:general thm1}]

We first show that, for any state $\sigma^{\sA\hat{\sB}}$,
\begin{align}
\label{eq:infidelity to subspace}
    \max_{\tau: \, \supp(\tau) \subseteq \cV_\Omega}\rF(\sigma^{\sA\hat{\sB}}, \tau^{\sA\hat{\sB}}) 
    = \tr\big[\Pi_{\cV_\Omega}^{\sA\hat{\sB}}\sigma^{\sA\hat{\sB}}\big].
\end{align}
Let $t \coloneqq \tr[\Pi_{\cV_\Omega}^{\sA\hat{\sB}}\sigma^{\sA\hat{\sB}}]$. For any state $\tau$ supported on $\cV_\Omega$, the measurement $\{\Pi_{\cV_\Omega}, \bI - \Pi_{\cV_\Omega}\}$ yields the outcome corresponding to $\Pi_{\cV_\Omega}$ with probability $t$ for $\sigma$ and with certainty for $\tau$. The monotonicity of the fidelity under this measurement then implies $\rF(\sigma, \tau) \leq t$.
We show that $\rF(\sigma, \tau) = t$ holds for some state $\tau$ supported on $\cV_\Omega$. 
If $t = 0$, this holds for any such $\tau$. If $t \neq 0$, the state $\tau = \Pi_{\cV_\Omega}\sigma\Pi_{\cV_\Omega}/t$, whose support is contained in $\cV_\Omega$, achieves this equality. 
To see this, for a purification $\ket{\sigma}^{\sA\hat{\sB}\sR}$ of $\sigma^{\sA\hat{\sB}}$, the vector $(\Pi_{\cV_\Omega}^{\sA\hat{\sB}} \otimes \bI^\sR)\ket{\sigma}^{\sA\hat{\sB}\sR}/\sqrt{t}$ is a purification of $\tau^{\sA\hat{\sB}}$, and Uhlmann's theorem~\cite{uhlmann1976transition} yields
\begin{align}
    \rF(\sigma^{\sA\hat{\sB}}, \tau^{\sA\hat{\sB}})
    &\geq \f{1}{t}\big|\bra{\sigma}^{\sA\hat{\sB}\sR}(\Pi_{\cV_\Omega}^{\sA\hat{\sB}} \otimes \bI^\sR)\ket{\sigma}^{\sA\hat{\sB}\sR}\big|^2 \\
    &= t.
\end{align}
Together with $\rF(\sigma, \tau) \leq t$, this implies Eq.~\eqref{eq:infidelity to subspace}.

We next use the following technique, which is well known in the context of quantum state verification~\cite{Pallister2018OptimalVerificationEntangledStates, Zhu2019optimalverificationfidelityestimation, Li2019efficientverification, Yu2022StatisticalMethodsStateVerification, Li2026QuantumStateVerification}.
Every eigenvalue of $\Omega^{\sA\hat{\sB}}$ other than $\|\Omega\|_\infty$ is at most $\|\Omega\|_\infty - \tilde{\nu}(\Omega)$ by definition, and hence the spectral decomposition of $\Omega^{\sA\hat{\sB}}$ yields
\begin{align}
\label{eq:certification}
    \tilde{\nu}(\Omega)\Pi_{\cV_\Omega} \geq \Omega - \big(\|\Omega\|_\infty - \tilde{\nu}(\Omega)\big)\bI.
\end{align}
For a state $\xi^{\sA\sB}$, by multiplying both sides of Eq.~\eqref{eq:certification} by $\cP_\xi^{\sB\to\hat{\sB}}(\xi^{\sA\sB})$ and taking the trace, we obtain
\begin{align}
    \tilde{\nu}(\Omega)\big(1 - \tr\big[\Pi_{\cV_\Omega}\cP_\xi^{\sB\to\hat{\sB}}(\xi^{\sA\sB})\big]\big)
    \label{inteq:1}
    \leq \|\Omega\|_\infty - \tr\big[\Omega^{\sA\hat{\sB}}\cP_\xi^{\sB\to\hat{\sB}}(\xi^{\sA\sB})\big].
\end{align}

The second term on the right-hand side of Eq.~\eqref{inteq:1} can be computed by using Lemma~\ref{lem:3} as follows:
\begin{align}
    \tr\big[\Omega^{\sA\hat{\sB}} \cP_\xi^{\sB\to\hat{\sB}}(\xi^{\sA\sB})\big] 
    &= \sum_{l=1}^L q_l\tr\Big[\sum_j(E_{j|l}^\sA \otimes \bar{E}_{j|l}^{\hat{\sB}})\cP_\xi^{\sB\to\hat{\sB}}(\xi^{\sA\sB})\Big] \\
    \label{inteq:43}
    &= \sum_{l=1}^L q_l \rP\big(\Gamma_{\scE_{\xi, E_l}}|\scE_{\xi, E_l}\big).
\end{align}
Note that for each $l$, $\{E_{j|l}^\sA\}_j$ is a POVM, and the ensemble $\scE_{\xi, E_l}$ is given by $\scE_{\xi, E_l} = \big\{p_{j|l}; \ \xi_{j|l}^\sB = \tr_\sA\big[E_{j|l}^\sA\xi^{\sA\sB}\big]/p_{j|l}\big\}_j$, where $p_{j|l} = \tr[E_{j|l}^\sA\xi^{\sA\sB}]$.
By substituting Eq.~\eqref{inteq:43} into Eq.~\eqref{inteq:1}, we obtain  
\begin{align}
\label{inteq:2}
    \tilde{\nu}(\Omega)\big(1 - \tr\big[\Pi_{\cV_\Omega}\cP_\xi^{\sB\to\hat{\sB}}(\xi^{\sA\sB})\big]\big)
    \leq \|\Omega\|_\infty - \sum_{l=1}^L q_l \rP\big(\Gamma_{\scE_{\xi, E_l}}|\scE_{\xi, E_l}\big).
\end{align}

Applying the Barnum--Knill inequality in Eq.~\eqref{eq:property of PGM} to bound the right-hand side of Eq.~\eqref{inteq:2}, we obtain
\begin{align}
\label{inteq:22}
    \tilde{\nu}(\Omega)\big(1 - \tr\big[\Pi_{\cV_\Omega}\cP_\xi^{\sB\to\hat{\sB}}(\xi^{\sA\sB})\big]\big)
    \leq \|\Omega\|_\infty - \big\lag \rP_{\rm opt}^2(\scE_{\xi, E})\big\rag,
\end{align}
where $\big\lag \rP_{\rm opt}^2(\scE_{\xi, E})\big\rag = \sum_{l=1}^L q_l \big(\rP_{\rm opt}(\scE_{\xi, E_l})\big)^2$. 
Noting that $\tilde{\nu}(\Omega) > 0$, we combine Eq.~\eqref{inteq:22} with Eq.~\eqref{eq:infidelity to subspace} for $\sigma = \cP_\xi^{\sB\to\hat{\sB}}(\xi^{\sA\sB})$, and obtain Eq.~\eqref{inteq:general thm1}.

\end{proof}

We briefly discuss the implications of Eq.~\eqref{inteq:22} for the range of $\rP_{\rm opt}(\scE_{\xi, E_l})$.
We note that for any state $\xi^{\sA\sB}$, $\rP_{\rm opt}(\scE_{\xi, E_l}) \leq \max_j \|E_{j|l}\|_\infty$ for all $l$, which follows from a straightforward calculation:
\begin{align}
    \rP_{\rm opt}(\scE_{\xi, E_l})
    &= \max_{\{F_j\}_j:\, \mathrm{POVM}} \sum_j p_{j|l}\tr[F_j^\sB \xi_{j|l}^\sB] \notag\\
    &= \max_{\{F_j\}_j:\, \mathrm{POVM}} \tr\Big[\sum_j(E_{j|l}^\sA \otimes F_j^\sB)\xi^{\sA\sB}\Big] \notag\\
    &\leq \max_j \|E_{j|l}\|_\infty
    \max_{\{F_j\}_j:\, \mathrm{POVM}}
    \tr\Big[\sum_j (\bI^\sA \otimes F_j^\sB)\xi^{\sA\sB}\Big] \notag\\
    \label{eq:upper bound guess prob}
    &= \max_j \|E_{j|l}\|_\infty.
\end{align}
Hence when $\rP_{\rm opt}(\scE_{\xi, E_l}) = 1$ is attainable in principle, $\max_j \|E_{j|l}\|_\infty = 1$ must hold.
On the other hand, the non-negativity of the left-hand side of Eq.~\eqref{inteq:22} implies that $\sum_{l=1}^L q_l \big(\rP_{\rm opt}(\scE_{\xi, E_l})\big)^2 \leq \|\Omega\|_\infty$.
Thus, if $\sum_{l=1}^L q_l \max_j \|E_{j|l}\|_\infty^2 > \|\Omega\|_\infty$, there exists no state $\xi^{\sA\sB}$ for which all $\rP_{\rm opt}(\scE_{\xi, E_l})$ simultaneously achieve $\max_j \|E_{j|l}\|_\infty$.


\section{Proof of Proposition~\ref{prop:converse relation}}
\label{sec:derivation of converse relation}

We prove Proposition~\ref{prop:converse relation}, which states that a reliable one-sided LO transformation bounds the guessing probabilities.
Let $E_l = \{E_{j|l}^\sA\}_j$ be a POVM, and $\Omega^{\sA\hat{\sB}} = \sum_{l=1}^L q_l \Omega_l^{\sA\hat{\sB}}$, where $\Omega_l^{\sA\hat{\sB}} = \sum_j E_{j|l}^\sA \otimes \bar{E}_{j|l}^{\hat{\sB}}$, and let $\{q_l\}_{l=1}^L$ be a probability distribution with $q_l > 0$.
As with \cref{thm:1}, we prove Proposition~\ref{prop:converse relation} in a slightly generalized form, where the target is extended from pure states in $\cV_\Omega$ to all states supported on $\cV_\Omega$.

\begin{proposition}
\label{prop:general converse relation}
For any state $\xi^{\sA\sB}$, we have
\begin{align}
\label{inteq:general converse relation}
    \big\lag \rP_{\rm opt}(\scE_{\xi, E})\big\rag
    \geq \|\Omega\|_\infty\Big(1 - \min_{\cP, \tau}\big(1 - \rF\big(\cP^{\sB\to\hat{\sB}}(\xi^{\sA\sB}), \tau^{\sA\hat{\sB}}\big)\big)\Big),
\end{align}
where the minimization is taken over all channels $\cP^{\sB\to\hat{\sB}}$ and all states $\tau^{\sA\hat{\sB}}$ such that $\supp\big(\tau^{\sA\hat{\sB}}\big) \subseteq \cV_\Omega$.
\end{proposition}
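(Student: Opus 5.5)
Fix any channel $\cP^{\sB\to\hat{\sB}}$ and any state $\tau^{\sA\hat{\sB}}$ with $\supp(\tau)\subseteq\cV_\Omega$. It suffices to show
\begin{align}
\big\lag \rP_{\rm opt}(\scE_{\xi, E})\big\rag \geq \|\Omega\|_\infty\, \rF\big(\cP(\xi^{\sA\sB}),\tau^{\sA\hat{\sB}}\big),
\end{align}
because taking the maximum of the right-hand side over $\cP$ and $\tau$ then gives Eq.~\eqref{inteq:general converse relation}. The argument has three steps.

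\textbf{Step 1: Bob's measurement gives a lower bound.} Bob first applies $\cP$ and then measures the conjugate POVM $\{\bar E_{j|l}^{\hat{\sB}}\}_j$. By Eq.~\eqref{eq:map and povm general relation} in the proof of Lemma~\ref{lem:3}, $\{\cP^\dag(\bar E_{j|l})\}_j$ is a valid POVM on $\sB$, and it attains guessing probability
\begin{align}
\tr\big[\Omega_l^{\sA\hat{\sB}}\,\cP(\xi^{\sA\sB})\big].
\end{align}
Averaging over $l$ with weights $q_l$ gives $\big\lag \rP_{\rm opt}(\scE_{\xi, E})\big\rag \geq \tr\big[\Omega\,\sigma\big]$, where $\sigma\coloneqq\cP(\xi^{\sA\sB})$.

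\textbf{Step 2: compress onto the principal eigenspace.} Since $\Omega\ge 0$ and $\Omega$ commutes with the projector $\Pi_{\cV_\Omega}$, we have $\Omega \geq \|\Omega\|_\infty \Pi_{\cV_\Omega}$ in the operator order. The difference is $\Omega(\bI-\Pi_{\cV_\Omega})$, which is positive semidefinite because it is the restriction of a positive operator to an invariant subspace. Therefore $\tr[\Omega\sigma]\ge \|\Omega\|_\infty\tr[\Pi_{\cV_\Omega}\sigma]$.

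\textbf{Step 3: relate the projection to fidelity.} By Eq.~\eqref{eq:infidelity to subspace} from the proof of \cref{thm:general thm1}, $\tr[\Pi_{\cV_\Omega}\sigma]=\max_{\tau'}\rF(\sigma,\tau')\ge \rF(\sigma,\tau)$. For the inequality direction alone, a direct argument suffices: the two-outcome measurement $\{\Pi_{\cV_\Omega},\bI-\Pi_{\cV_\Omega}\}$ cannot decrease fidelity, so $\rF(\sigma,\tau)\le \tr[\Pi_{\cV_\Omega}\sigma]$. Chaining Steps 1–3 gives the displayed inequality, and optimizing over $\cP$ and $\tau$ completes the proof.

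Proposition~\ref{prop:converse relation} follows as a special case, since pure states in $\cV_\Omega$ are among the admissible $\tau$. None of the steps is delicate. The only points needing care are the operator inequality in Step 2 and checking that $\cP^\dag(\bar E_{j|l})$ is a valid POVM even when $\cP$ is merely CPTP. The latter holds because $\cP^\dag$ is completely positive and unital.
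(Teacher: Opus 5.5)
Your proposal is correct and follows the paper's proof essentially step for step: the bound $\tr[\Omega\,\cP(\xi)]\le\big\lag \rP_{\rm opt}(\scE_{\xi,E})\big\rag$ via the POVM $\{\cP^\dag(\bar E_{j|l})\}_j$, then dropping the non-principal spectral terms of $\Omega\ge 0$, then relating $\tr[\Pi_{\cV_\Omega}\sigma]$ to fidelity. The only difference is that you fix $\tau$ first and use just the inequality $\rF(\sigma,\tau)\le\tr[\Pi_{\cV_\Omega}\sigma]$, whereas the paper invokes the full equality with $\max_\tau$ before optimizing over $\cP$; this changes nothing.
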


\Cref{prop:converse relation} follows from \cref{prop:general converse relation}, since restricting $\tau^{\sA\hat{\sB}}$ to pure states $\ket{\phi}^{\sA\hat{\sB}} \in \cV_\Omega$ does not decrease the minimum in Eq.~\eqref{inteq:general converse relation}; the two coincide when $\dim \cV_\Omega = 1$.

\begin{proof}[Proof of \Cref{prop:general converse relation}]

By a similar calculation to that used to derive Eq.~\eqref{eq:map and povm general relation}, we have 
\begin{align}
    \label{inteq:77}
    \tr\big[\Omega^{\sA\hat{\sB}}\cP^{\sB\to\hat{\sB}}(\xi^{\sA\sB})\big]
    &= \tr\Big[\sum_{j, l} q_l (E_{j|l}^\sA\otimes\bar{E}_{j|l}^{\hat{\sB}})\cP^{\sB\to\hat{\sB}}(\xi^{\sA\sB})\Big] \\
    &= \sum_{j, l} q_l p_{j|l} \tr\big[(\cP^{\sB\to\hat{\sB}})^\dag(\bar{E}_{j|l}^{\hat{\sB}})\xi_{j|l}^\sB\big] \\
    &\leq \sum_l q_l \max_{\{F_j\}_j: \, {\rm POVM}} \sum_j p_{j|l} \tr\big[F_j^\sB\xi_{j|l}^\sB\big] \\
    \label{inteq:62}
    &= \big\lag \rP_{\rm opt}(\scE_{\xi, E})\big\rag,
\end{align}
where $p_{j|l} = \tr[E_{j|l}^\sA\xi^{\sA\sB}]$, and $\xi_{j|l}^\sB = \tr_\sA[E_{j|l}^\sA\xi^{\sA\sB}]/p_{j|l}$.
We took the maximization over all POVMs $\{F_j^\sB\}_j$, noting that $\big\{(\cP^{\sB\to\hat{\sB}})^\dag(\bar{E}_{j|l}^{\hat{\sB}})\big\}_j$ forms a POVM.

Let $\Omega^{\sA\hat{\sB}} = \sum_i \lambda_i(\Omega) \Pi_{\cV_{\lambda_i}}^{\sA\hat{\sB}}$ be the spectral decomposition of $\Omega^{\sA\hat{\sB}}$, where $\Pi_{\cV_{\lambda_i}}^{\sA\hat{\sB}}$ is the projector onto the eigenspace $\cV_{\lambda_i}$ corresponding to $\lambda_i(\Omega)$. In particular, $\cV_\Omega = \cV_{\lambda_1}$ is the principal eigenspace of $\Omega^{\sA\hat{\sB}}$.
The left-hand side of Eq.~\eqref{inteq:77} can be evaluated as
\begin{align}
    \tr\big[\Omega^{\sA\hat{\sB}}\cP^{\sB\to\hat{\sB}}(\xi^{\sA\sB})\big] 
    &= \sum_i\lambda_i(\Omega) \tr\big[\Pi_{\cV_{\lambda_i}}^{\sA\hat{\sB}}\cP^{\sB\to\hat{\sB}}(\xi^{\sA\sB})\big] \\
    &= \|\Omega\|_\infty \tr\big[\Pi_{\cV_\Omega}^{\sA\hat{\sB}}\cP^{\sB\to\hat{\sB}}(\xi^{\sA\sB})\big]  
    + \sum_{i \geq2 }\lambda_i(\Omega) \tr\big[\Pi_{\cV_{\lambda_i}}^{\sA\hat{\sB}}\cP^{\sB\to\hat{\sB}}(\xi^{\sA\sB})\big] \\
    \label{inteq:63}
    &\geq \|\Omega\|_\infty \tr\big[\Pi_{\cV_\Omega}^{\sA\hat{\sB}}\cP^{\sB\to\hat{\sB}}(\xi^{\sA\sB})\big].
\end{align}
In the inequality, we simply drop the second term, which is non-negative as $\Omega^{\sA\hat{\sB}}$ is positive semidefinite.
By Eq.~\eqref{eq:infidelity to subspace} for $\sigma = \cP^{\sB\to\hat{\sB}}(\xi^{\sA\sB})$, the right-hand side of Eq.~\eqref{inteq:63} equals $\|\Omega\|_\infty\max_{\tau: \, \supp(\tau) \subseteq \cV_\Omega}\rF\big(\cP^{\sB\to\hat{\sB}}(\xi^{\sA\sB}), \tau^{\sA\hat{\sB}}\big)$. Combining Eqs.~\eqref{inteq:62} and~\eqref{inteq:63}, we obtain
\begin{align}
\label{inteq:65}
    \big\lag \rP_{\rm opt}(\scE_{\xi, E})\big\rag
    \geq \|\Omega\|_\infty \max_{\tau: \, \supp(\tau) \subseteq \cV_\Omega}\rF\big(\cP^{\sB\to\hat{\sB}}(\xi^{\sA\sB}), \tau^{\sA\hat{\sB}}\big),
\end{align}
which holds for any channel $\cP^{\sB\to\hat{\sB}}$.
Since the left-hand side is independent of $\cP^{\sB\to\hat{\sB}}$, maximizing the right-hand side over $\cP^{\sB\to\hat{\sB}}$ yields Eq.~\eqref{inteq:general converse relation}.

\end{proof}

\section{Proof of Theorem~\ref{prop:qualitative equivalent condition}}
\label{sec:proof of irreducibility and unique state spectrum equivalence}

We prove Theorem~\ref{prop:qualitative equivalent condition} by analyzing the spectral structure of the operator $\Omega$ defined in Eq.~\eqref{eq:def of average of omega}. To this end, we first introduce several technical tools.

We use a technique known as \emph{vectorization}~\cite{Horn1991TopicsinMatrixAnalysis, gilchrist2011vectorizationquantumsuse, Miszczak2011SINGULARVALUEDECOMPOSITION}.
Although a few notational conventions can be found in the literature, we define it as a linear map from $\bC^{d \times d}$ to $\bC^{d} \otimes \bC^{d}$, given by 
\begin{align}
\label{eq:definition of vectorization}
    \rvec: \ketbra{\psi}{\phi} \mapsto \ket{\psi}\ket{\bar{\phi}},
\end{align}
where the complex conjugate is taken in the computational basis. The vectorization in this way is a bijection between matrices and vectors; thus, its inverse $\rvec^{-1}$ is well-defined. The vectorization preserves the Hilbert--Schmidt norm: $\|A - B\|_2 = \|\rvec(A) - \rvec(B)\|$, and satisfies $\rvec(ABC) = (A \otimes C^\top)\rvec(B)$ for any operators $A$, $B$, and $C$.

Let $\cL(\cH)$ denote the set of linear operators on a Hilbert space $\cH$.
Via the vectorization isomorphism, spectral properties of the operator $\Omega$ are inherited by a CP trace-nonincreasing map $\cE^\Omega$ on $\cL(\cH)$, defined as
\begin{align}
\label{eq:completely dephasing}
    \cE^\Omega(\cdot) \coloneqq \sum_{j, l} q_l E_{j|l}(\cdot)E_{j|l},
\end{align}
where $\sum_{l=1}^L q_l = 1$ and $q_l > 0$.
For the spectral analysis that follows, for a linear map $\cF$ on $\cL(\cH)$ and $\lambda \in \bC$, we define
\begin{align}
    \cS_\lambda(\cF) \coloneqq \{T \in \cL(\cH) \mid \cF(T) = \lambda T\},
\end{align}
which is a linear subspace of $\cL(\cH)$.
We call $\lambda$ an eigenvalue of $\cF$ if $\cS_\lambda(\cF) \neq \{0\}$, in which case $\cS_\lambda(\cF)$ is the corresponding eigenspace. We denote the spectrum, the set of all eigenvalues, of $\cF$ by $\mathrm{spec}(\cF)$.
Note that the Hermiticity of the POVM elements $E_{j|l}$ makes the map $\cE^\Omega$ self-adjoint, and thus all its eigenvalues are real.

Next, we briefly examine the structure of an invariant subspace $\cK \subseteq \cH$ under the action of $\cE^\Omega$. See also \Cref{def:reducibility} for invariance and irreducibility with respect to POVMs.
Since $E_{j|l}$ are Hermitian, if $\cR \subseteq \cK$ is invariant under all $E_{j|l}$, its orthogonal complement $\cR^\perp$ is also invariant. Indeed, for any $\ket{u} \in \cR$ and $\ket{v} \in \cR^\perp$, the invariance of $\cR$ implies $\bra{v}E_{j|l}\ket{u} = 0$, which yields $\bra{u}E_{j|l}\ket{v} = \overline{\bra{v}E_{j|l}\ket{u}} = 0$ due to Hermiticity.
Because any invariant subspace can be recursively decomposed into mutually orthogonal invariant subspaces, $\cK$ admits an irreducible decomposition: $\cK = \bigoplus_a \cR_a$, where each $\cR_a$ is invariant under all $E_{j|l}$ and does not allow for further nontrivial invariant orthogonal decomposition.
While the decomposition is not unique in general, this does not affect the subsequent discussion, and we take one such decomposition.

The entire space $\cH$ is invariant under the action of $\{E_l\}_l$, and let $\cH = \bigoplus_a \cR_a$ be an arbitrary orthogonal decomposition of the Hilbert space $\cH$ into $\bmE$-irreducible subspaces. We then denote by $\Pi_a$ the projector onto the irreducible subspace $\cR_a$. The invariance of $\cR_a$ and the Hermiticity of $E_{j|l}$ give $[\Pi_a, E_{j|l}] = 0$ for all $j$, $l$, and $a$.
By decomposing an operator $T \in \cL(\cH)$ into the orthogonal $(a, b)$-blocks as $T = \sum_{a, b} \Pi_a T \Pi_b$, we see that $\cE^\Omega$ preserves this block structure: $\cE^\Omega(\Pi_a T \Pi_b) = \Pi_a \cE^\Omega(T) \Pi_b$.
For each $(a, b)$-block, we define a map $\cE_{ab}^\Omega: \Pi_a\cL(\cH)\Pi_b \to \Pi_a\cL(\cH)\Pi_b$ by
\begin{align}
    \cE_{ab}^\Omega(\cdot)
    = \sum_{j, l} q_l E_{j|l}\big|_{\cR_a}(\cdot)E_{j|l}\big|_{\cR_b},
\end{align}
where $E_{j|l}\big|_{\cR_a}$ is the restriction of $E_{j|l}$ to $\cR_a$.
Under this block decomposition, the action of $\cE^\Omega$ is expressed as
\begin{align}
    \cE^\Omega(T) = \sum_{a,b} \cE_{ab}^\Omega(T_{ab}),
\end{align}
where $T_{ab} = \Pi_a T \Pi_b$. Noting that the $(a,b)$-blocks are orthogonal to each other, the eigenspace $\cS_\lambda(\cE^\Omega)$ and the spectrum $\mathrm{spec}(\cE^\Omega)$ of $\cE^\Omega$ accordingly decompose as
\begin{align}
\label{eq:eigenspace block decomposition}
    \cS_\lambda(\cE^\Omega) = \bigoplus_{a, b} \cS_\lambda(\cE_{ab}^\Omega), \ \ \ \
    \mathrm{spec}(\cE^\Omega) = \bigcup_{a, b}\mathrm{spec}(\cE_{ab}^\Omega),
\end{align}
where $(a,b)$ runs over all pairs, so that the direct sum is over operator subspaces and includes not only the diagonal blocks but also the off-diagonal ones.

Based on the above discussion, we obtain the following theorem regarding the largest eigenvalue $\|\Omega\|_\infty$ and the spectral gap $\nu(\Omega)$ of the operator $\Omega$.
Recall that $\cV_W$ denotes the principal eigenspace of the operator $W = \sum_{j, l} q_l E_{j|l}^2$ corresponding to $\|W\|_\infty$.

\begin{theorem}
\label{thm:spectral structure of irreducible subspace}
Let $\lambda_1^{(a)}$ be the largest eigenvalue of $\cE_{aa}^\Omega$, and let $\lambda_1 = \|\Omega\|_\infty$ and $\nu = \nu(\Omega)$.
The following equivalences hold:
\begin{itemize}
    \item $\lambda_1 = \|W\|_\infty$ if and only if $\cV_W$ contains at least one nonzero $\bmE$-invariant subspace.
    \item $\nu \neq 0$ if and only if 
    \begin{enumerate}
        \item[(1)] there is a unique index $a_\star $ such that $\lambda_1^{(a_\star )} = \lambda_1$, and $\lambda_1^{(a)} < \lambda_1$ for all $a \neq a_\star$, and
        \item[(2)] $\lambda_1 \not\in \mathrm{spec}(\cE_{ab}^\Omega)$ for all $a \neq b$.
    \end{enumerate}
\end{itemize}
\end{theorem}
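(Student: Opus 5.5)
The plan is to transfer everything about $\Omega$ to the completely positive, self-adjoint map $\cE^\Omega$ via vectorization, and then analyze $\cE^\Omega$ block by block.

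\textbf{Step 1: identify $\Omega$ with $\cE^\Omega$.} Since each $E_{j|l}$ is Hermitian, $E_{j|l}^\top=\bar E_{j|l}$. The rule $\rvec(ABC)=(A\otimes C^\top)\rvec(B)$ then gives
\[
\rvec\big(\cE^\Omega(T)\big)=\Omega\,\rvec(T).
\]
Because $\rvec$ is a Hilbert--Schmidt isometry, $\Omega$ and $\cE^\Omega$ have the same eigenvalues with the same multiplicities. Hence $\lambda_1=\max_{\|T\|_2=1}\sum_{j,l}q_l\tr[T^\dag E_{j|l}TE_{j|l}]$, and $\nu\neq0$ exactly when $\dim\cS_{\lambda_1}(\cE^\Omega)=1$.

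\textbf{Step 2: the first equivalence.} Write each term as $\tr[X^\dag Y]$ with $X=E_{j|l}T$ and $Y=TE_{j|l}$. Cauchy--Schwarz followed by AM--GM gives
\[
\tr[X^\dag Y]\le\tfrac12\big(\|X\|_2^2+\|Y\|_2^2\big).
\]
Summing over $j,l$ with weights $q_l$ yields
\[
\sum_{j,l}q_l\tr[T^\dag E_{j|l}TE_{j|l}]\le\tfrac12\big(\tr[TT^\dag W]+\tr[T^\dag TW]\big)\le\|W\|_\infty .
\]
This reproves $\lambda_1\le\|W\|_\infty$.
\begin{itemize}
\item Suppose $\lambda_1=\|W\|_\infty$ and $T$ is a top eigenvector. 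Equality throughout forces $E_{j|l}T=TE_{j|l}$ for all $j,l$ (here $q_l>0$ is used). It also forces $\supp(TT^\dag)\subseteq\cV_W$. Since $T$ commutes with every $E_{j|l}$, the range of $T$ is a nonzero $\bmE$-invariant subspace, and it lies inside $\cV_W$.
\item Conversely, let $\cK\subseteq\cV_W$ be nonzero and $\bmE$-invariant, with projector $\Pi_\cK$. Then $\Pi_\cK$ commutes with every $E_{j|l}$, so $T=\Pi_\cK/\sqrt{d_\cK}$ has Rayleigh quotient $\tr[\Pi_\cK W]/d_\cK=\|W\|_\infty$. Hence $\lambda_1\ge\|W\|_\infty$, and equality holds.
\end{itemize}

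\textbf{Step 3: the second equivalence.} By Eq.~\eqref{eq:eigenspace block decomposition}, the multiplicity of $\lambda_1$ equals $\sum_{a,b}\dim\cS_{\lambda_1}(\cE^\Omega_{ab})$.

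Note that $\cE^\Omega(T^\dag)=\cE^\Omega(T)^\dag$. So if $T\in\cS_{\lambda_1}(\cE^\Omega_{ab})$ with $a\neq b$, then $T^\dag\in\cS_{\lambda_1}(\cE^\Omega_{ba})$. Off-diagonal occurrences of $\lambda_1$ therefore come in pairs and force multiplicity at least $2$. Consequently $\nu\neq0$ implies (2), and then $\lambda_1$ must be attained in exactly one diagonal block, which is (1).

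For the converse, assume (1) and (2). Then the multiplicity of $\lambda_1$ equals $\dim\cS_{\lambda_1}(\cE^\Omega_{a_\star a_\star})$, and it remains to show this dimension is $1$. Here I will invoke the quantum Perron--Frobenius theory for irreducible positive maps (Evans--H{\o}egh-Krohn; Wolf's notes), in three sub-steps.
\begin{itemize}
\item \emph{Irreducibility.} $\cE^\Omega_{aa}$ is irreducible on $\cL(\cR_a)$. A projection $P$ with $\cE^\Omega_{aa}(P)\le cP$ forces each Kraus operator $\sqrt{q_l}E_{j|l}$ to map $P\cR_a$ into itself, which contradicts $\bmE$-irreducibility of $\cR_a$.
\item \emph{Spectral radius.} For a positive map, the spectral radius is itself an eigenvalue. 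Since $\cE^\Omega_{aa}$ is self-adjoint and $\lambda_1^{(a_\star)}$ is its largest eigenvalue, the spectral radius equals $\lambda_1^{(a_\star)}$ (any eigenvalue $-r$ of modulus $r$ would force $r$ to be an eigenvalue as well).
\item \emph{Simplicity.} For an irreducible positive map, the Perron eigenvalue is nondegenerate, with a positive-definite eigenvector.
\end{itemize}
This gives $\dim\cS_{\lambda_1}(\cE^\Omega_{a_\star a_\star})=1$, and hence $\nu\neq0$.

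\textbf{Main obstacle.} The delicate point is Step 3's Perron--Frobenius argument: matching the paper's notion of $\bmE$-irreducibility to the hereditary-subalgebra notion of irreducibility used in the Perron--Frobenius theorem, and checking that "largest eigenvalue" coincides with the spectral radius. If the literature citation turns out to be awkward, I would instead argue simplicity directly. Take any Hermitian eigenvector $X$ in the top eigenspace and apply the equality analysis of Step 2 inside the block. That analysis shows the top eigenvectors of $\cE^\Omega_{a_\star a_\star}$ commute with all $E_{j|l}|_{\cR_{a_\star}}$. Schur's lemma on the irreducible block $\cR_{a_\star}$ then makes them scalar, so the eigenspace is one-dimensional.
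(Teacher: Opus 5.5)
Your proposal is correct. For the second equivalence it follows the paper's route. You decompose $\cS_{\lambda_1}(\cE^\Omega)$ into blocks, rule out off-diagonal and repeated diagonal occurrences of $\lambda_1$, and then apply the quantum Perron--Frobenius theorem to the irreducible map $\cE^\Omega_{a_\star a_\star}$, checking irreducibility by the same projection argument as the paper. Your pairing $T\mapsto T^\dag$ between the $(a,b)$ and $(b,a)$ blocks is a tidier version of the paper's $T_1,T_2$ construction.

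The genuine difference is in the first equivalence. The paper proves (i)$\Rightarrow$(ii) in two stages. It first shows, through a trace-norm equality and uniqueness of the Jordan decomposition, that $\cS_W$ is spanned by positive semidefinite eigenoperators. It then shows that the range of each such operator is $\bmE$-invariant, via $\rrange(T)=\sum_{j,l}\rrange(E_{j|l}TE_{j|l})$, and lies in $\cV_W$, by taking the trace. You instead use the equality case of the Rayleigh-quotient bound, which is the identity
\[
\sum_{j,l}q_l\big\|[E_{j|l},T]\big\|_2^2=\tr\big[W(TT^\dag+T^\dag T)\big]-2\tr\big[T^\dag\cE^\Omega(T)\big].
\]
Each $\tr[T^\dag E_{j|l}TE_{j|l}]$ is real by cyclicity, so your termwise step is legitimate. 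At eigenvalue $\|W\|_\infty$ this shows at once that every eigenoperator, not only the PSD ones, commutes with all $E_{j|l}$ and has range in $\cV_W$. This is the commutant picture $\cS_W\cong\mathrm{Comm}(\bmE)$ that the paper presents only afterwards as an alternative, and there only after first restricting to $\cK$ using the PSD decomposition.

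Your route is shorter and gives more structure: Schur's lemma then settles Theorem~\ref{prop:qualitative equivalent condition} directly. The paper's positivity route produces PSD spanning eigenoperators instead, which fits the Perron--Frobenius viewpoint it needs anyway.

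One caveat: the fallback in your ``Main obstacle'' paragraph would not work, so the Perron--Frobenius step cannot be replaced by Schur's lemma. Run inside the block, the equality analysis of Step~2 forces commutation only if $\lambda_1^{(a_\star)}=\big\|W\big|_{\cR_{a_\star}}\big\|_\infty$, equivalently only if $W$ acts as a scalar on $\cR_{a_\star}$. The second equivalence assumes neither this nor $\lambda_1=\|W\|_\infty$. In that situation the top eigenoperator of the block is in general a non-scalar positive definite operator that does not commute with the POVM elements.

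For example, the construction of Proposition~\ref{prop:construction povm from given state} has $\nu(\Omega)\neq0$ with top eigenoperator $\sum_i r_i\ketbra{u_i}{u_i}$. This operator does not commute with the rank-one element along $\sqrt{r_i}\ket{u_i}+\sqrt{r_k}\ket{u_k}$ whenever $r_i\neq r_k$.

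If you want a self-contained substitute, use a positivity argument rather than a commutation argument:
\begin{itemize}
\item For Hermitian $X$, $|X|$ does at least as well in the Rayleigh quotient, because $\tr[X_+\cE^\Omega(X_-)]\ge0$.
\item The range of any PSD top eigenoperator is $\bmE$-invariant, hence equal to all of $\cR_{a_\star}$.
\item Simplicity then follows by subtracting a suitable multiple of one positive definite top eigenoperator from another, which would produce a singular PSD top eigenoperator.
\end{itemize}
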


Since $\nu$ is independent of the decomposition of $\cH$, conditions (1) and (2) hold for one irreducible decomposition if and only if they hold for all.

As will be shown rigorously later in this section, Theorem~\ref{prop:qualitative equivalent condition} follows immediately from Theorem~\ref{thm:spectral structure of irreducible subspace} since the latter is more general.
From the first item of Theorem~\ref{thm:spectral structure of irreducible subspace}, when $\lambda_1 = \|W\|_\infty$, the largest $\bmE$-invariant subspace contained in $\cV_W$ is $\bigoplus_{a\in A}\cR_a$, where $A = \{a : \cR_a \subseteq \cV_W\}$ and each $\cR_a$ is $\bmE$-invariant.
By further imposing $\nu \neq 0$, condition~(1) ensures that $A$ consists of a single element $a_\star$, so that $\cR_{a_\star}$ is the unique $\bmE$-irreducible subspace $\cK_\star$ in $\cV_W$.
Conversely, if $\cV_W$ contains exactly one $\bmE$-irreducible subspace $\cK_\star$, the first item gives $\lambda_1 = \|W\|_\infty$. The uniqueness of $\cK_\star$ further implies conditions~(1) and~(2), and hence $\nu \neq 0$ by the second item. As a consequence, we obtain the equivalence:
\begin{align}
    \|\Omega\|_\infty = \|W\|_\infty \ \text{and} \ \nu(\Omega) \neq 0 
    \iff \cV_W \ \text{contains exactly one} \ \bmE\text{-irreducible subspace.}
\end{align}

\begin{proof}[Proof of Theorem~\ref{thm:spectral structure of irreducible subspace}]

We first show the equivalence between (i) $\lambda_1 = \|W\|_\infty$ and (ii) $\cV_W$ contains at least one nonzero $\bmE$-invariant subspace.
For notational simplicity, we write $\cS_W = \cS_{\|W\|_\infty}(\cE^\Omega)$, where $\cS_{\|W\|_\infty}(\cE^\Omega) = \{T \mid \cE^\Omega(T) = \|W\|_\infty T\}$.

\medskip
\noindent\textbf{(i) $\Rightarrow$ (ii).} 
The condition $\lambda_1 = \|W\|_\infty$ means that there is at least one vector $\ket{T} \neq 0$ such that $\Omega\ket{T} = \|W\|_\infty\ket{T}$. By applying $\rvec^{-1}$ to both sides, we obtain $\cE^\Omega(T) = \|W\|_\infty T$, where $T = \rvec^{-1}(\ket{T})$. 
This implies that $\cS_W$ contains at least one nonzero element $T$.

We show that any $T \in \cS_W$ can be decomposed as a linear combination of positive semidefinite (PSD) operators which also belong to $\cS_W$. 
Since $\cE^\Omega$ preserves Hermiticity, if $T \in \cS_W$, then $T^\dag \in \cS_W$. When we write $T = T_\rR + i T_\rI$, the Hermitian operators 
$T_\rR = (T + T^\dag)/2$ and $T_\rI = (T - T^\dag)/(2i)$ are both in $\cS_W$. Moreover, the Hermitian operator $T_\rR$ can be uniquely decomposed as $T_\rR = T_{\rR+} - T_{\rR-}$ with PSDs $T_{\rR\pm}$ satisfying $T_{\rR+} \perp T_{\rR-}$.
Since $T_\rR \in \cS_W$ and $\cE^\Omega$ is a CP trace-nonincreasing linear map, we see that 
\begin{align}
    \|W\|_\infty\|T_\rR\|_1 
    &= \|\cE^\Omega(T_\rR)\|_1 \\
    &= \|\cE^\Omega(T_{\rR+}) - \cE^\Omega(T_{\rR-})\|_1 \\
    &\leq \|\cE^\Omega(T_{\rR+})\|_1 + \|\cE^\Omega(T_{\rR-})\|_1 \\
    &\leq \|W\|_\infty\|T_{\rR+}\|_1 + \|W\|_\infty\|T_{\rR-}\|_1 \\
    &= \|W\|_\infty\|T_\rR\|_1,
\end{align}
where the second inequality follows from the fact that, for any $A \geq 0$, $\|\cE^\Omega(A)\|_1 = \tr[\cE^\Omega(A)] = \tr[W A] \leq \|W\|_\infty \tr[A] = \|W\|_\infty \|A\|_1$.
The first and last lines are identical, so all intermediate inequalities must hold with equality, which implies $\cE^\Omega(T_{\rR+}) \perp \cE^\Omega(T_{\rR-})$.
While $\|W\|_\infty T_\rR = \cE^\Omega(T_\rR) = \cE^\Omega(T_{\rR+}) - \cE^\Omega(T_{\rR-})$ by $T_\rR \in \cS_W$, the uniqueness of the decomposition of $T_\rR$ ensures that $\cE^\Omega(T_{\rR+}) = \|W\|_\infty T_{\rR+}$ and $\cE^\Omega(T_{\rR-}) = \|W\|_\infty T_{\rR-}$. Hence, $T_{\rR\pm} \in \cS_W$. The same argument applies to $T_\rI$, and thus $T_{\rI\pm} \in \cS_W$, with $T_\rI = T_{\rI+} - T_{\rI-}$.
As a result, any $T \in \cS_W$ can be decomposed into PSDs $T_{\rR\pm}, T_{\rI\pm} \in \cS_W$ as $T = (T_{\rR+} - T_{\rR-}) + i(T_{\rI+} - T_{\rI-})$.

We next show that for any PSD $T \in \cS_W$, its range $\rrange(T)$ is $\bmE$-invariant and contained in $\cV_W$. As seen above, if $\cS_W$ contains a nonzero element, it also contains a nonzero PSD element; thus, it suffices to consider PSD elements.

For any PSDs $A$ and $B$, $\rrange(A + B) = \rrange(A) + \rrange(B)$, where $\cU + \cV = \{\ket{u} + \ket{v} : \ket{u} \in \cU, \ket{v} \in \cV\}$ denotes the sum of subspaces.
Noting that $E_{j|l} T E_{j|l} \geq 0$ and $q_l > 0$, applying this to $\|W\|_\infty T = \cE^\Omega(T) = \sum_{j, l} q_l E_{j|l} T E_{j|l}$ yields that
\begin{align}
    \rrange(T) 
    &= \sum_{j, l} \rrange(E_{j|l} T E_{j|l}) \\
    &\supseteq \rrange(E_{j|l} T E_{j|l}) \\
    &= \rrange\big(E_{j|l} \sqrt{T}\big) \\
    &= E_{j|l} \, \rrange(T),
\end{align}
where we used $\rrange\big(\sqrt{T}\big) = \rrange(T)$.
This implies the invariance of $\rrange(T)$ under each $E_{j|l}$, i.e., $\rrange(T)$ is $\bmE$-invariant.

Let a PSD operator $T$ have eigenvalue decomposition $T = \sum_s t_s \ketbra{t_s}{t_s}$ with $t_s > 0$. Since $\tr[\cE^\Omega(T)] = \tr[WT]$, by taking the trace on both sides of $\cE^\Omega(T) = \|W\|_\infty T$, we obtain  
\begin{align}
\label{inteq:60}
    \sum_s t_s 
    \bra{t_s}\big(\|W\|_\infty \bI - W\big)\ket{t_s} = 0.
\end{align}
Since $W \leq \|W\|_\infty \bI$ by the definition of the operator norm, each term in the sum must be zero, and in fact, $(\|W\|_\infty\bI - W)\ket{t_s} = 0$ for all $s$.
As $\{\ket{t_s}\}_s$ forms a basis of $\rrange(T)$, this implies that $W = \|W\|_\infty \bI$ holds on $\rrange(T)$, i.e., $\rrange(T) \subseteq \cV_W$.

To summarize, the condition $\lambda_1 = \|W\|_\infty$ implies the existence of a nonzero PSD $T \in \cS_W$, whose range is $\bmE$-invariant and contained in $\cV_W$. Hence, $\cV_W$ contains at least one nonzero $\bmE$-invariant subspace.

\medskip
\noindent\textbf{(i) $\Leftarrow$ (ii).}
Let $\cR$ be a nonzero $\bmE$-invariant subspace contained in $\cV_W$, and let $\Pi_{\cR} \neq 0$ be the projector onto $\cR$. From the $\bmE$-invariance of $\cR$, all POVM elements $E_{j|l}$ satisfy $\Pi_{\cR}E_{j|l}\Pi_{\cR}\ket{v} = E_{j|l}\Pi_{\cR}\ket{v}$ for any vector $\ket{v} \in \cH$. 
Noting that $E_{j|l}$ is Hermitian, we obtain 
$\Pi_{\cR}E_{j|l}\Pi_{\cR} = E_{j|l}\Pi_{\cR} = \Pi_{\cR}E_{j|l}$. 
Then, we see that $\Pi_{\cR}$ satisfies
\begin{align}
    \cE^\Omega(\Pi_{\cR}) 
    &= \sum_{j, l} q_l E_{j|l}\Pi_{\cR}E_{j|l} \\
    &= \sum_{j, l} q_l \Pi_{\cR} E_{j|l}^2
       \Pi_{\cR} \\
    &= \Pi_{\cR} \Big(\sum_{j, l} q_l E_{j|l}^2\Big)
       \Pi_{\cR} \\
    &= \Pi_{\cR}\, W\, \Pi_{\cR} \\
    &= \|W\|_\infty\, \Pi_{\cR},
\end{align}
where in the last line we used $W\big|_{\cR} = \|W\|_\infty\, \bI_{\cR}$, which follows from $\cR \subseteq \cV_W$. This shows that $\|W\|_\infty$ is an eigenvalue of $\cE^\Omega$.

On the other hand, using the operator inequality $(E_{j|l} \otimes \bI - \bI \otimes \bar{E}_{j|l})^2 \geq 0$, we have $E_{j|l} \otimes \bar{E}_{j|l} \leq \f{1}{2}(E_{j|l}^2 \otimes \bI + \bI \otimes \bar{E}_{j|l}^2)$. Multiplying by $q_l$ and summing over $j$ and $l$ yields 
\begin{align}
\label{eq:omega w relation}
    \Omega \leq \f{1}{2}(W \otimes \bI + \bI \otimes \bar{W}),
\end{align}
which gives $\lambda_1 = \|\Omega\|_\infty \leq \|W\|_\infty$, where we used $\|\bar{W}\|_\infty = \|W\|_\infty$. Since $\|W\|_\infty \in \mathrm{spec}(\cE^\Omega)$ as shown above, and since $\mathrm{spec}(\cE^\Omega)$ coincides with the spectrum of $\Omega$, we conclude $\lambda_1 = \|W\|_\infty$.

\medskip

Next, we establish the equivalence between (iii) $\nu \neq 0$ and (iv) the following two conditions: (1) the uniqueness of $a_\star$ such that $\lambda_1^{(a_\star)} = \lambda_1$ and $\lambda_1^{(a)} < \lambda_1$ for $a \neq a_\star$, and (2) the absence of $\lambda_1$ from $\mathrm{spec}(\cE_{ab}^\Omega)$ for $a \neq b$, by using the block structure given in Eq.~\eqref{eq:eigenspace block decomposition}.
Note that $\nu \neq 0$ is equivalent to $\dim \cS_{\lambda_1}(\cE^\Omega) = 1$.

\medskip

\noindent\textbf{(iii) $\Rightarrow$ (iv).}
Since $\cE^\Omega$ preserves Hermiticity, $T \in \cS_{\lambda_1}(\cE^\Omega)$ implies $T^\dag \in \cS_{\lambda_1}(\cE^\Omega)$. As $\dim \cS_{\lambda_1}(\cE^\Omega) = 1$, we can take $\cS_{\lambda_1}(\cE^\Omega)$ to be spanned by a nonzero Hermitian operator $T = T^\dag$.

We decompose $T$ as $T = \sum_{a,b} T_{ab}$ with $T_{ab} = \Pi_a T \Pi_b$.
If we suppose $T_{aa} = 0$ for all $a$, there must exist some off-diagonal block $T_{ab} \neq 0$ with $a \neq b$. Hermiticity of $T$ gives $T_{ba} = T_{ab}^\dag$, and Eq.~\eqref{eq:eigenspace block decomposition} yields 
$T_{ab} \in \cS_{\lambda_1}(\cE_{ab}^\Omega)$ and $T_{ba} \in \cS_{\lambda_1}(\cE_{ba}^\Omega)$. We then define operators in $\cS_{\lambda_1}(\cE^\Omega)$ as 
\begin{align}
    T_1 = T_{ab} + T_{ba}, \ \ \ \
    T_2 = i(T_{ab} - T_{ba}).
\end{align}
They are Hermitian and linearly independent because any linear relation $c_1 T_1 + c_2 T_2 = 0$ yields $c_1 + i c_2 = 0$ on the $(a, b)$-block and $c_1 - i c_2 = 0$ on the $(b, a)$-block, forcing $c_1 = c_2 = 0$. 
This implies $\dim \cS_{\lambda_1}(\cE^\Omega) \geq 2$, which contradicts $\nu \neq 0$.
Hence, $T_{aa} \neq 0$ for some $a$, which means $\lambda_1 \in \mathrm{spec}(\cE_{aa}^\Omega)$, and thus $\lambda_1 = \lambda_1^{(a)}$ for this $a$.

We confirm the uniqueness of such an index $a$. If there are two distinct indices $a_1 \neq a_2$ satisfying $\lambda_1^{(a_i)} = \lambda_1$, we can pick nonzero $T_i \in \cS_{\lambda_1}(\cE_{a_i a_i}^\Omega)$ for each $i = 1, 2$. Since these operators have disjoint supports due to the orthogonality of the $(a_1, a_1)$- and $(a_2, a_2)$-blocks, they are necessarily linearly independent in $\cS_{\lambda_1}(\cE^\Omega)$. This contradicts the assumption that $\dim \cS_{\lambda_1}(\cE^\Omega) = 1$.
We denote this unique index by $a_\star$. As is evident from the spectral decomposition in Eq.~\eqref{eq:eigenspace block decomposition}, this uniqueness guarantees that $\lambda_1^{(a)} < \lambda_1$ for all $a \neq a_\star$.

Similarly, if $\lambda_1 \in \mathrm{spec}(\cE_{ab}^\Omega)$ for some $a \neq b$, a nonzero $T_{ab} \in \cS_{\lambda_1}(\cE_{ab}^\Omega)$ exists, which is linearly independent of $T_{a_\star a_\star } \in \cS_{\lambda_1}(\cE_{a_\star a_\star }^\Omega)$, again contradicting $\dim \cS_{\lambda_1}(\cE^\Omega) = 1$.

\medskip

\noindent\textbf{(iii) $\Leftarrow$ (iv).}
From conditions (1) and (2), we have $\cS_{\lambda_1}(\cE_{ab}^\Omega) = \{0\}$ for all pairs $(a, b)$ other than $(a_\star, a_\star)$. Then, Eq.~\eqref{eq:eigenspace block decomposition} reduces to $\cS_{\lambda_1}(\cE^\Omega) = \cS_{\lambda_1}(\cE_{a_\star a_\star}^\Omega)$.

To show that the spectral gap satisfies $\nu \neq 0$, i.e., that the largest eigenvalue $\lambda_1$ is non-degenerate, we invoke the quantum Perron--Frobenius theorem for irreducible positive maps~\cite{evans1978SpectralPropertiesPositiveMaps, Albeverio1978Frobeniustheorypositivemaps}. 
Here, an eigenvalue is \emph{simple} if it is a root of the characteristic polynomial with multiplicity one, which implies that its eigenspace is one-dimensional.
For details on the Perron--Frobenius theorem for positive maps, see Ref.~\cite[Section~6]{wolf2012guidedtour}.

\begin{lemma}[Perron--Frobenius theorem for irreducible positive maps~\cite{evans1978SpectralPropertiesPositiveMaps, Albeverio1978Frobeniustheorypositivemaps}]
\label{prevthm:perron frobenius}
Let $\cF: \cL(\cH)\to\cL(\cH)$ be an irreducible positive map; that is, the only projectors $\Pi$ satisfying $\cF(\Pi\cL(\cH)\Pi) \subseteq \Pi \cL(\cH) \Pi$ are $\Pi \in \{0, \, \bI\}$. Let $r = \sup\{|\lambda| \mid \lambda \in \mathrm{spec}(\cF)\}$ be its spectral radius.
Then $r$ is a simple eigenvalue of $\cF$, and there is a strictly positive definite operator $T > 0$, unique up to scalar multiplication, such that $\cF(T) = rT$.
\end{lemma}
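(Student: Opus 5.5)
The proof will follow the classical Perron--Frobenius route, with the cone of positive semidefinite operators in $\cL(\cH)$ playing the role of the nonnegative orthant. I will use only three facts: positivity of $\cF$, finite dimensionality, and the irreducibility condition on projectors. The argument has four steps: existence of a positive semidefinite eigenvector for $r$; its strict positivity; geometric simplicity; and algebraic simplicity via the adjoint.

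\emph{Step 1 (existence).} For real $\lambda>r$, the resolvent $(\lambda-\cF)^{-1}=\sum_{k\ge0}\cF^k/\lambda^{k+1}$ converges and is a positive map.
\begin{itemize}
\item Since $r$ is a spectral value and every peripheral eigenvalue has modulus $r$, one shows that $\|(\lambda-\cF)^{-1}\|\to\infty$ as $\lambda\downarrow r$. This is the finite-dimensional Krein--Rutman argument: a positive map attains its norm, up to a constant, on positive semidefinite inputs, so the blow-up is detected on some positive semidefinite $X$.
\item Normalize $T_\lambda=(\lambda-\cF)^{-1}X/\|(\lambda-\cF)^{-1}X\|$ and take a convergent subsequence. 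The limit $T\ge0$ has $\|T\|=1$ and satisfies $(r-\cF)T=0$, because $(\lambda-\cF)T_\lambda=X/\|\cdots\|\to0$.
\end{itemize}
I expect this step to be the main technical obstacle, specifically the claim that the norm blow-up can be witnessed by a positive semidefinite $X$. An alternative I would try if this stalls is Brouwer's fixed-point theorem applied to $X\mapsto(\cF+\epsilon\,\tr[\cdot]\bI)(X)/\tr[(\cF+\epsilon\,\tr[\cdot]\bI)(X)]$ on density operators, followed by $\epsilon\to0$.

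\emph{Step 2 (strict positivity).} Let $P$ be the projector onto $\supp(T)$.
\begin{itemize}
\item Any positive semidefinite $X\in P\cL(\cH)P$ satisfies $X\le cT$ for some $c$. Positivity of $\cF$ then gives $\cF(X)\le c\,\cF(T)=crT$, so $\supp\cF(X)\subseteq\supp(T)$.
\item Every element of $P\cL(\cH)P$ is a linear combination of positive semidefinite elements of that block. Hence $\cF(P\cL(\cH)P)\subseteq P\cL(\cH)P$, and irreducibility forces $P=\bI$, i.e.\ $T>0$.
\item The degenerate case $r=0$ must be excluded. Then $\cF(T)=0$ with $T>0$, and positivity gives $\cF\equiv0$. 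In that case every projector is invariant, which contradicts irreducibility unless $\dim\cH=1$, where the claim is trivial.
\end{itemize}
The same argument shows that \emph{every} nonzero positive semidefinite eigenvector for $r$ is strictly positive.

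\emph{Step 3 (one-dimensional eigenspace).} Since $\cF$ is positive, it preserves Hermiticity. So if $S\in\cS_r(\cF)$, its Hermitian and anti-Hermitian parts are also eigenvectors, and it suffices to treat Hermitian $S$.
\begin{itemize}
\item Let $c=\sup\{t: T-tS\ge0\}$, which is finite (replace $S$ by $-S$ if needed).
\item Then $T-cS$ is a positive semidefinite eigenvector with nontrivial kernel.
\item By Step 2 it must vanish, so $S\propto T$. This gives uniqueness of $T$ up to scalars.
\end{itemize}

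\emph{Step 4 (algebraic simplicity).} The adjoint $\cF^\dag$ is positive and has spectral radius $r$.
\begin{itemize}
\item $\cF^\dag$ is irreducible. Indeed, $\cF^\dag(P\cL P)\subseteq P\cL P$ is equivalent, by duality of the trace pairing, to $\cF((\bI-P)\cL(\bI-P))\subseteq(\bI-P)\cL(\bI-P)$.
\item Steps 1--2 applied to $\cF^\dag$ give $Y>0$ with $\cF^\dag(Y)=rY$.
\item Suppose there were a Jordan chain $(\cF-r)X=T$. Pairing with $Y$ gives $\tr[YT]=\tr[(\cF^\dag-r)(Y)X]=0$. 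This contradicts $\tr[YT]>0$, which holds because $Y,T>0$.
\end{itemize}
Therefore $r$ is a simple root of the characteristic polynomial, which completes the statement.
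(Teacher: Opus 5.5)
The paper does not prove this lemma; it cites the literature (pointing to Wolf's notes). In the paper's application the map $\cE^\Omega_{a_\star a_\star}$ is Hilbert--Schmidt self-adjoint with nonnegative spectrum, so only your Steps~2--3 are actually needed there. Judged on its own, your proposal is correct in Steps~2--4. The support argument shows that every nonzero positive semidefinite eigenvector for $r$ is positive definite. The extremal $T-cS$ trick gives a one-dimensional eigenspace. Pairing with a positive left eigenvector excludes a Jordan block. All three are standard.

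The gap is Step~1, which presupposes what it must establish. A priori you only know that \emph{some} eigenvalue $\mu_0$ with $|\mu_0|=r$ exists, possibly $\mu_0=-r$ or $re^{i\theta}$. That the positive number $r$ itself lies in $\mathrm{spec}(\cF)$ is exactly the nontrivial (Krein--Rutman) content of this step. If it were granted, the blow-up of $\|(\lambda-\cF)^{-1}\|$ as $\lambda\downarrow r$ would be immediate. The fact you invoke---that a positive map attains its norm up to a constant on positive semidefinite inputs---only helps you choose a positive semidefinite witness. It does not move the singularity from $\mu_0$ onto the positive real axis.

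The missing ingredient is that the complex resolvent is dominated by the real one, which follows from the Neumann series. For $|\mu|>r$ and $X\ge0$, using $\|\cF^k(X)\|_1=\tr[\cF^k(X)]$ and positivity of $(|\mu|-\cF)^{-1}$,
\[
\big\|(\mu-\cF)^{-1}(X)\big\|_1\le\sum_{k\ge0}\frac{\|\cF^k(X)\|_1}{|\mu|^{k+1}}=\tr\big[(|\mu|-\cF)^{-1}(X)\big]\le\|X\|_\infty\,\tr\big[(|\mu|-\cF)^{-1}(\bI)\big].
\]
Split an arbitrary $Z$ into four positive semidefinite parts, each of trace norm at most $\|Z\|_1$. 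This gives $\|(\mu-\cF)^{-1}\|_{1\to1}\le 4\tr[(|\mu|-\cF)^{-1}(\bI)]$. Now let $\mu$ approach $\mu_0$ along the ray through $\mu_0$ with $|\mu|=t\downarrow r$. Since $\|(\mu-\cF)^{-1}\|_{1\to1}\ge 1/|\mu-\mu_0|=1/(t-r)$, you get $\tr[(t-\cF)^{-1}(\bI)]\ge 1/(4(t-r))\to\infty$. This proves $r\in\mathrm{spec}(\cF)$ and supplies the fixed witness $X=\bI$. The density operators $T_t=(t-\cF)^{-1}(\bI)/\tr[(t-\cF)^{-1}(\bI)]$ satisfy $\|(t-\cF)(T_t)\|_1=d_\cH/\tr[(t-\cF)^{-1}(\bI)]\to0$. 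Hence any limit point is a density operator $T$ with $\cF(T)=rT$, and the rest of your argument goes through.

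Your Brouwer fallback has a related hole: the $\epsilon\to0$ limit yields some $\lambda_0\ge0$ with a positive semidefinite eigenvector, not necessarily $\lambda_0=r$. It can be closed, because Step~2 applies verbatim and gives $T>0$. Then $-sT\le Z\le sT$ implies $-s\lambda_0T\le\cF(Z)\le s\lambda_0T$, so $\cF$ has norm at most $\lambda_0$ in the order-unit norm of $T$ on Hermitian operators, and hence $r\le\lambda_0$.

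Two minor points in Step~4. First, the equivalence you attribute to duality of the trace pairing also uses positivity. The Hilbert--Schmidt complement of $P\cL(\cH)P$ is $\{X:PXP=0\}$, and you need $\cF(X)\ge0$ to pass from $\tr[P\cF(X)]=0$ to $\cF(X)\in(\bI-P)\cL(\cH)(\bI-P)$. Second, irreducibility of $\cF^\dag$ is unnecessary: any nonzero $Y\ge0$ with $\cF^\dag(Y)=rY$ already gives $\tr[YT]>0$ against $T>0$.
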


It is straightforward to verify that the CP map $\cE_{a_\star a_\star}^\Omega$ restricted to $\cL(\cR_{a_\star})$ is irreducible in the sense of Lemma~\ref{prevthm:perron frobenius}.
Suppose there is a projector $\Pi \not\in \{0, \, \bI_{\cR_{a_\star}}\}$ such that $\cE_{a_\star a_\star}^\Omega(\Pi T \Pi) \in \Pi\cL(\cR_{a_\star})\Pi$ holds for any $T \in \cL(\cR_{a_\star})$. Taking $T = \Pi$ and multiplying by $(\bI_{\cR_{a_\star}}-\Pi)$ yields 
\begin{align}
    0
    &=(\bI_{\cR_{a_\star}}-\Pi)\cE_{a_\star a_\star}^\Omega(\Pi)(\bI_{\cR_{a_\star}}\!-\Pi) \\
    &= \sum_{j, l} q_l (\bI_{\cR_{a_\star}}-\Pi)E_{j|l}\big|_{\cR_{a_\star}} \! \Pi E_{j|l}\big|_{\cR_{a_\star}}\!(\bI_{\cR_{a_\star}}\!-\Pi).
\end{align}
As each term in the sum is PSD, $(\bI_{\cR_{a_\star}}-\Pi)E_{j|l}|_{\cR_{a_\star}}\Pi = 0$ for all $j$ and $l$, which means that the subspace projected by $\Pi$ is invariant under all $E_{j|l}$. This contradicts the fact that $\cR_{a_\star}$ is irreducible. Hence, $\cE_{a_\star a_\star}^\Omega$ is an irreducible map when $\cR_{a_\star}$ is irreducible.

Applying Lemma~\ref{prevthm:perron frobenius} to $\cE_{a_\star a_\star}^\Omega$, we see that its largest eigenvalue $\lambda_1^{(a_\star)}$ is simple, with a corresponding strictly positive definite operator $T^{(a_\star)} > 0$. Here, noting that $\cE_{a_\star a_\star}^\Omega$ is self-adjoint with a non-negative spectrum, $\lambda_1^{(a_\star)}$ equals the spectral radius. As a result, $\dim \cS_{\lambda_1}(\cE_{a_\star a_\star}^\Omega) = 1$. We thus conclude $\dim \cS_{\lambda_1}(\cE^\Omega) = 1$, which is equivalent to $\nu \neq 0$.

\end{proof}

We now prove Theorem~\ref{prop:qualitative equivalent condition}, which is restated below, by clarifying the structure of the unique target state under both conditions $\lambda_1 = \|W\|_\infty$ and $\nu \neq 0$.
If we assume only $\nu \neq 0$---without imposing $\lambda_1 = \|W\|_\infty$---the preceding analysis at least guarantees a unique block index $a_\star$ satisfying $\lambda_1^{(a_\star)} = \lambda_1$.
This index gives a positive definite operator $T^{(a_\star)} \in \cL(\mathcal{R}_{a_\star})$, unique up to a scalar multiple, satisfying $\cE_{a_\star a_\star}^\Omega(T^{(a_\star)}) = \lambda_1 T^{(a_\star)}$.
We write $T_\star$ for the zero extension of $T^{(a_\star)}$ to the full Hilbert space $\cH$ by setting $T_\star$ to zero on $\cR_{a_\star}^\perp$.
By vectorization and normalization, we obtain a state $\ket{\phi} = \rvec(T_\star)/\|T_\star\|_2$, which serves as the target state in Theorem~\ref{thm:1}.
This state can be identified by solving the eigenvalue equation $\cE^\Omega(T) = \lambda_1 T$; however, its explicit form is generally difficult to obtain analytically.
By further imposing $\lambda_1 = \|W\|_\infty$, the state $\ket{\phi}$ is fixed to be the uniform superposition with $\cR_{a_\star} = \cK_\star$, which is the unique $\bmE$-irreducible subspace in $\cV_W$.

\begin{restated}{theorem}{prop:qualitative equivalent condition}
For a family of POVMs $\bmE$ and a probability distribution $\bmq$, the following are equivalent:
\begin{itemize}
\item[(I)] The principal eigenspace $\cV_W$ of $W$ contains exactly one $\bmE$-irreducible subspace.
\item[(II)] $\|\Omega\|_\infty = \|W\|_\infty$ and $\nu(\Omega) \neq 0$.
\end{itemize}
Under these conditions, the principal eigenspace $\cV_\Omega$ of $\Omega^{\sA\hat{\sB}}$ is one-dimensional and is spanned by
\begin{align}
\tag{\ref*{eq:subspace mes on k star}}
\label{eq:subspace mes on k star SI}
    \ket{\Phi_\star}^{\sA\hat{\sB}} = \f{1}{\sqrt{d_{\cK_\star}}}\sum_j \ket{u_j}^\sA\ket{\bar{u}_j}^{\hat{\sB}},
\end{align}
where $\cK_{\star}$ is the unique $\bmE$-irreducible subspace in $\cV_W$, $\{\ket{u_j}^\sA\}_j$ is any orthonormal basis in $\cK_\star$, and $d_{\cK_\star}$ is the dimension of $\cK_\star$.
\end{restated}

\begin{proof}[Proof of Theorem~\ref{prop:qualitative equivalent condition}]

We first show (II) $\Rightarrow$ (I) together with the MES form.
The proof of Theorem~\ref{thm:spectral structure of irreducible subspace} shows that, under $\lambda_1 = \|W\|_\infty$ and $\nu \neq 0$, the space $\cS_W$ is one-dimensional and spanned by $T_\star \geq 0$ with $\rrange(T_\star) = \cR_{a_\star}$. By the range argument in the proof of (i) $\Rightarrow$ (ii) in \Cref{thm:spectral structure of irreducible subspace}, $\cR_{a_\star} = \rrange(T_\star)$ is $\bmE$-invariant and contained in $\cV_W$. Since $\cR_{a_\star}$ is an irreducible block, it is an $\bmE$-irreducible subspace of $\cV_W$. Its uniqueness follows since any $\bmE$-irreducible subspace $\cK \subseteq \cV_W$ gives $\Pi_\cK \in \cS_W$ by the computation in the proof of (i) $\Leftarrow$ (ii), and $\dim \cS_W = 1$ forces $\Pi_\cK \propto T_\star$, i.e., $\cK = \cR_{a_\star}$. We denote this subspace by $\cK_\star$, which establishes (I).
Since $T_\star \propto \Pi_{\cK_\star}$ as shown above, the state $\ket{\Phi_\star} = \rvec(\Pi_{\cK_\star})/\sqrt{d_{\cK_\star}}$ is the unique solution of $\Omega \ket{\Phi_\star} = \|W\|_\infty\ket{\Phi_\star}$, i.e., the MES on $\cK_\star \otimes \bar{\cK}_\star$ in Eq.~\eqref{eq:subspace mes on k star SI}.

Conversely, we show (I) $\Rightarrow$ (II). Assume that $\cV_W$ contains exactly one $\bmE$-irreducible subspace $\cK_\star$. Since $\cK_\star$ is a nonzero $\bmE$-invariant subspace contained in $\cV_W$, Theorem~\ref{thm:spectral structure of irreducible subspace} yields $\lambda_1 = \|W\|_\infty$. To establish $\nu \neq 0$, take a block decomposition in which $\cK_\star$ coincides with a single block $\cR_{a_\star}$. 
For any nonzero PSD $T \in \cS_W$, $\rrange(T)$ is a nonzero $\bmE$-invariant subspace of $\cV_W$ (as shown in the proof of (i) $\Rightarrow$ (ii)). Its irreducible decomposition consists of $\bmE$-irreducible subspaces of $\cV_W$, all of which equal $\cK_\star$ by uniqueness; hence $\rrange(T) = \cK_\star$. Since every element of $\cS_W$ decomposes into PSD elements of $\cS_W$, we conclude that $\cS_W \subseteq \Pi_{a_\star} \cL(\cH) \Pi_{a_\star}$, i.e., no eigenoperator with eigenvalue $\|W\|_\infty$ exists outside the $(a_\star, a_\star)$-block.
This translates to conditions (1) and (2) of Theorem~\ref{thm:spectral structure of irreducible subspace}, hence $\nu \neq 0$.

\end{proof}

We briefly look at an instructive example in which the subspace $\cK_\star$ in Theorem~\ref{prop:qualitative equivalent condition} depends on the choice of the probability distribution $\bmq$, even for a fixed family of POVMs $\bmE$. While $\bmE$-irreducibility depends only on the POVM elements, the principal eigenspace $\cV_W$ of $W$ depends on $\bmq$. Let $\cH \cong \bC^3$, and let $\cR_1 = \rspan\{\ket{1}\}$ and $\cR_2 = \rspan\{\ket{2}, \ket{3}\}$. Consider the two POVMs
\begin{align}
    E_1 = \Big\{\ketbra{1}{1}, \ \f{1}{2}\Big(\bI_{\cR_2} + \f{1}{\sqrt{2}}X_{\cR_2}\Big), \ \f{1}{2}\Big(\bI_{\cR_2} - \f{1}{\sqrt{2}}X_{\cR_2}\Big)\Big\}, \ \ \
    E_2 = \Big\{\f{1}{2}\ketbra{1}{1} + \ketbra{2}{2}, \ \f{1}{2}\ketbra{1}{1} + \ketbra{3}{3}\Big\},
\end{align}
where $\bI_{\cR_2}$ is the identity operator on $\cR_2$ and $X_{\cR_2} = \ketbra{2}{3} + \ketbra{3}{2}$. All the POVM elements are block-diagonal with respect to the decomposition $\cH = \cR_1 \oplus \cR_2$. The one-dimensional subspace $\cR_1$ is trivially $\bmE$-irreducible, and $\cR_2$ is $\bmE$-irreducible as the restrictions of the POVM elements to $\cR_2$ have no common eigenvector. A direct calculation yields
\begin{align}
    W = \Big(q_1 + \f{1}{2}q_2\Big)\ketbra{1}{1} + \Big(\f{3}{4}q_1 + q_2\Big)\bI_{\cR_2},
\end{align}
where $q_1 + q_2 = 1$.
Hence, $\cK_\star = \cV_W = \cR_1$ for $q_1 > 2/3$, in which case $\ket{\Phi_\star}$ is a product state, whereas $\cK_\star = \cV_W = \cR_2$ for $q_1 < 2/3$, where $\ket{\Phi_\star}$ carries one ebit. At $q_1 = 2/3$, $\cV_W = \cH$ contains both $\cR_1$ and $\cR_2$, and thus $\nu(\Omega) = 0$. This example shows that, for a fixed family $\bmE$, the choice of $\bmq$ can change $\cK_\star$ and thus the number of ebits $\log d_{\cK_\star}$ carried by the target state $\ket{\Phi_\star}$. Such a dependence on $\bmq$ does not arise for PVMs, or more generally whenever $W \propto \bI$, since $\cV_W = \cH$ for every $\bmq$ in these cases.

While Theorem~\ref{prop:qualitative equivalent condition} can be proven as shown above, there is an alternative approach based on the commutant $\mathrm{Comm}(\bmE)$ of the POVM elements $\{E_{j|l}\}_{j, l}$ restricted to the subspace $\cK \coloneqq \sum_{T \in \cS_W} \rrange(T)$.
As shown in the proof of (i) $\Rightarrow$ (ii), $\cS_W$ is spanned by its PSD elements, each of which has $\bmE$-invariant range contained in $\cV_W$. Hence $\cK$, which is also the sum of these ranges, is $\bmE$-invariant and $\cK \subseteq \cV_W$, and it is nonzero precisely when $\lambda_1 = \|W\|_\infty$. In contrast to the unique $\bmE$-irreducible subspace $\cK_\star$ in the main text, $\cK$ is defined without presupposing conditions (I) or (II).
Then, we define 
\begin{align}
    \mathrm{Comm}(\bmE) \coloneqq \big\{\hat{T} \in \cL(\cK) \mid \big[\hat{T}, Q_{j|l}\big] = 0, \ \forall j, l \big\},
\end{align}
where $Q_{j|l} \coloneqq E_{j|l}|_{\cK}$ and $[A, B] \coloneqq AB - BA$ is the commutator of operators $A$ and $B$.
This approach establishes an isomorphism between the space $\cS_W$ and $\mathrm{Comm}(\bmE)$ via the restriction map $|_{\cK}$.
Moreover, $\dim \mathrm{Comm}(\bmE) = 1$ if and only if $\cK$ is $\bmE$-irreducible.
The remainder of this section discusses the isomorphism between $\cS_W$ and $\mathrm{Comm}(\bmE)$, and also the $\bmE$-irreducibility of $\cK$.

As is clear from the discussion so far, for any $T \in \cS_W$, we have $\rrange(T) \subseteq \cK$. Its support is also contained in $\cK$, since $T \in \cS_W$ implies $T^\dag \in \cS_W$. Hence, we can write $T = \Pi_{\cK} T \Pi_{\cK}$.
Noting that $[\Pi_{\cK}, E_{j|l}] = 0$ by the $\bmE$-invariance of $\cK$, from $\|W\|_\infty T = \cE^\Omega(T)$ we see that
\begin{align}
    \|W\|_\infty \Pi_{\cK}T\Pi_{\cK} 
    = \sum_{j, l} q_l (\Pi_{\cK}E_{j|l}\Pi_{\cK}) (\Pi_{\cK}T\Pi_{\cK}) (\Pi_{\cK}E_{j|l}\Pi_{\cK}).
\end{align}
This implies that, upon restriction to the subspace $\cK$, $\|W\|_\infty\hat{T} = \sum_{j, l} q_l Q_{j|l} \hat{T} Q_{j|l}$ holds, where $\hat{T} = T|_{\cK}$.
Then, we obtain
\begin{align}
    \sum_{j, l} q_l \big\|\big[\hat{T}, Q_{j|l}\big]\big\|_2^2
    &= \tr\big[W\big|_{\cK}\big(\hat{T}^\dag\hat{T} + \hat{T}\hat{T}^\dag\big)\big]
    - 2\tr\Big[\hat{T}^\dag\sum_{j, l} q_l Q_{j|l}\hat{T}Q_{j|l}\Big] \\
    &= 2\|W\|_\infty\Big(\big\|\hat{T}\big\|_2^2 
       - \big\|\hat{T}\big\|_2^2\Big) \\
    &= 0,
\end{align}
where we used $\sum_{j, l} q_l Q_{j|l}^2 = W \big|_{\cK} = \|W\|_\infty \bI_{\cK}$, and the eigenvalue equation: $\sum_{j, l} q_l Q_{j|l}\hat{T}Q_{j|l} = \|W\|_\infty \hat{T}$. 
Thus, for any $T \in \cS_W$, its restriction $\hat{T} = T|_{\cK}$ satisfies $\big[\hat{T}, Q_{j|l}\big] = 0$ for all $l$ and $j$.

Conversely, any $\hat{T} \in \mathrm{Comm}(\bmE)$ extends to an operator $T \in \cL(\cH)$ by zero extension, i.e., by setting $T$ to zero on $\cK^\perp$. Indeed, $[\hat{T}, Q_{j|l}] = 0$ implies $Q_{j|l}\hat{T}Q_{j|l} = \hat{T}Q_{j|l}^2$, so $\sum_{j, l} q_l Q_{j|l}\hat{T}Q_{j|l} = \hat{T}\sum_{j, l} q_l Q_{j|l}^2 = \|W\|_\infty\hat{T}$, which yields $\cE^\Omega(T) = \|W\|_\infty T$, i.e., $T \in \cS_W$.
Since the restriction and the zero extension are linear and mutually inverse (recall $T = \Pi_\cK T \Pi_\cK$ for $T \in \cS_W$), we obtain the isomorphism $\cS_W \cong \mathrm{Comm}(\bmE)$.

To see that $\dim \mathrm{Comm}(\bmE) = 1$ implies the $\bmE$-irreducibility of $\cK$, we first observe that, for any subspace $\cR \subseteq \cK$, the invariance condition $E_{j|l} \cR \subseteq \cR$ is equivalent to $[E_{j|l}, \Pi_{\cR}] = 0$, where $\Pi_{\cR}$ is the projector onto $\cR$. This follows from the Hermiticity of the POVM elements: the condition $E_{j|l} \cR \subseteq \cR$ implies that $\Pi_{\cR} E_{j|l} \Pi_{\cR} = E_{j|l} \Pi_{\cR}$. Taking the Hermitian conjugate of both sides, we obtain $\Pi_{\cR} E_{j|l} \Pi_{\cR} = \Pi_{\cR} E_{j|l}$.
Combining these two equations yields $E_{j|l} \Pi_{\cR} = \Pi_{\cR} E_{j|l}$, i.e., $[E_{j|l}, \Pi_{\cR}] = 0$.

Since the identity $\bI_{\cK}$ clearly satisfies $[\bI_{\cK}, Q_{j|l}] = 0$ for all $j, l$, $\dim \mathrm{Comm}(\bmE) = 1$ implies that $\bI_{\cK}$ is the unique element in $\mathrm{Comm}(\bmE)$ up to scalar multiples. 
This means that $\cK$ is $\bmE$-irreducible as no nontrivial projector can commute with all $Q_{j|l}$.
Conversely, if $\cK$ is $\bmE$-irreducible, then any $\hat{T} \in \mathrm{Comm}(\bmE)$ must be a scalar multiple of the identity $\bI_{\cK}$. Indeed, if there is a non-scalar operator $\hat{T} \in \mathrm{Comm}(\bmE)$, then its Hermitian parts, $\hat{T}_\rR = (\hat{T} + \hat{T}^\dag)/2$ and $\hat{T}_\rI = (\hat{T} - \hat{T}^\dag)/(2i)$, also belong to $\mathrm{Comm}(\bmE)$. At least one of these must be a non-scalar Hermitian operator. The spectral projectors of such an operator then commute with all $Q_{j|l}$, thereby defining nontrivial $\bmE$-invariant subspaces. This contradicts the $\bmE$-irreducibility of $\cK$; thus, we must have $\dim \mathrm{Comm}(\bmE) = 1$.

The equivalence: $\lambda_1 = \|W\|_\infty$ and $\nu \neq 0 \Leftrightarrow \cV_W$ contains exactly one $\bmE$-irreducible subspace, follows straightforwardly from the argument up to this point. The conditions $\lambda_1 = \|W\|_\infty$ and $\nu \neq 0$ are clearly equivalent to $\dim \cS_W = 1$. Due to the isomorphism $\cS_W \cong \mathrm{Comm}(\bmE)$, this is further equivalent to $\dim \mathrm{Comm}(\bmE) = 1$, which, as we have shown, holds if and only if $\cK$ is $\bmE$-irreducible. 
In that case, $\cK$ is the unique $\bmE$-irreducible subspace contained in $\cV_W$: for any $\bmE$-irreducible $\cR \subseteq \cV_W$, we have $[\Pi_\cR, E_{j|l}] = 0$ as observed, so $\cE^\Omega(\Pi_\cR) = \Pi_\cR W \Pi_\cR = \|W\|_\infty \Pi_\cR$, i.e., $\Pi_\cR \in \cS_W$. Hence $\cR \subseteq \cK$, and the $\bmE$-irreducibility of $\cK$ forces $\cR = \cK$; we obtain $\cK = \cK_\star$, recovering Theorem~\ref{prop:qualitative equivalent condition}.


\section{Constructing POVMs from a target state}
\label{sec:construction povm from given state}

The following proposition shows that, for a target state of the form $\ket{\phi}^{\sA\hat{\sB}} = \sum_{i=1}^{d_\sA} r_i \ket{u_i}^\sA\ket{\bar{u}_i}^{\hat{\sB}}$ with $r_i>0$, one can construct a POVM for which $\ket{\phi}^{\sA\hat{\sB}}$ is the unique eigenvector with the largest eigenvalue of $\Omega^{\sA\hat{\sB}}$.
Considering only this form is sufficiently general, as any bipartite pure state on $\sA\hat{\sB}$ can be obtained from some state of this form via a local unitary on $\hat{\sB}$, which preserves the Schmidt coefficients.

\begin{proposition}
\label{prop:construction povm from given state}
    Let $\ket{\phi}^{\sA\hat{\sB}}=\sum_{i=1}^{d_\sA} r_i\ket{u_i}^\sA\ket{\bar{u}_i}^{\hat{\sB}}$ with $r_i>0$ and $\sum_i r_i^2 = 1$, in an orthonormal basis $\{\ket{u_i}\}_i$. For every $\lambda \in (0,1)$ there exists a POVM $\{E_j\}_j$ such that $\ket{\phi}^{\sA\hat{\sB}}$ is the unique eigenvector corresponding to the largest eigenvalue $\lambda$ of $\Omega^{\sA\hat{\sB}} = \sum_j E_j^\sA \otimes \bar{E}_j^{\hat{\sB}}$.
\end{proposition}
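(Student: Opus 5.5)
The plan is to recast the eigenvalue problem for $\Omega^{\sA\hat{\sB}}$ as one for the CP map $\cE^\Omega(\cdot)=\sum_j E_j(\cdot)E_j$ via vectorization, exactly as in~\cref{sec:proof of irreducibility and unique state spectrum equivalence}. Since $\rvec(\ketbra{u_i}{u_i})=\ket{u_i}\ket{\bar u_i}$, we have $\ket{\phi}^{\sA\hat{\sB}}=\rvec(T)$ with $T\coloneqq\sum_i r_i\ketbra{u_i}{u_i}$. Here $T>0$ because all $r_i>0$. The statement then reduces to constructing a POVM with two properties: (a) $\cE^\Omega(T)=\lambda T$, and (b) $\cE^\Omega$ is irreducible in the sense of Lemma~\ref{prevthm:perron frobenius}, i.e., $\cH^\sA$ is $\bmE$-irreducible for $\bmE=\{E\}$.

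Given (a) and (b), the spectral claim follows from Lemma~\ref{prevthm:perron frobenius}. The spectral radius $r$ of $\cE^\Omega$ is a simple eigenvalue with an eigenvector $T_0>0$ that is unique up to scaling. I would also use the standard companion fact from the same Perron--Frobenius theory~\cite{wolf2012guidedtour}: an irreducible positive map has no positive definite eigenvector other than the Perron one. The argument pairs $\tr[T_0\,\cdot\,]$ with the self-adjointness of $\cE^\Omega$. Since $\lambda T=\cE^\Omega(T)$ and $\tr[T_0T]>0$, this gives $\lambda=r$. Because $\cE^\Omega$ is self-adjoint with nonnegative spectrum, $r=\|\Omega\|_\infty$, and simplicity of $r$ makes $\ket{\phi}$ the unique top eigenvector with eigenvalue $\lambda$.

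For the construction I would take rank-one elements $E_k=\ketbra{a_k}{a_k}$ together with a tuning element. First I would find vectors $\{\ket{a_k}\}$ satisfying two conditions. The first is
\begin{align}
  \sum_k \bra{a_k}T\ket{a_k}\,\ketbra{a_k}{a_k}=c\,T .
\end{align}
The second is that $\sum_k\ketbra{a_k}{a_k}\le\bI$ with the $\ket{a_k}$ spanning an irreducible family. The natural candidate is $\ket{a_k}=s\sum_i\sqrt{r_i}\,\omega^{ik}\ket{u_i}$ with Fourier phases $\omega=e^{2\pi i/d_\sA}$, possibly extended by further phase patterns. The phase averaging kills off-diagonal terms, and the weights $\sqrt{r_i}$ are chosen to reproduce $T$. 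Rank-one projectors onto vectors with all components nonzero have no common invariant subspace once the family is large enough, and this gives irreducibility.

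The remainder $\bI-\sum_k\ketbra{a_k}{a_k}$ is diagonal in $\{\ket{u_i}\}$. I would split it into diagonal elements $D_m$, which contribute $\sum_m D_mTD_m$, again diagonal in $\{\ket{u_i}\}$. Choosing the diagonal entries of the $D_m$, together with the overall scale $s$, lets me force the total to equal $\lambda T$ and sweep $\lambda$ through $(0,1)$. Adding elements to a family that is already irreducible keeps it irreducible, so (b) is preserved.

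The main obstacle is this last matching step. I need the diagonal splitting to simultaneously make $\sum_m D_m$ equal the prescribed remainder and make $\sum_m D_m T D_m$ equal $\lambda T$ minus the rank-one contribution, with every entry nonnegative. This amounts to a per-index system: for each $i$, split a number $1-x_i\ge0$ into nonnegative parts $d_{mi}$ with prescribed $\sum_m d_{mi}^2$. Such a split is feasible iff the prescribed square-sum lies between $(1-x_i)^2/M$ and $(1-x_i)^2$, where $M$ is the number of parts. Near $\lambda\to0$ I would shrink $s$ and use many small parts; near $\lambda\to1$ I would let the $D_m$ approach projectors and shrink the rank-one part. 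I expect the bookkeeping that keeps these inequalities valid across all of $(0,1)$ to be where the real work lies, and I would first check it in the case $d_\sA=2$.
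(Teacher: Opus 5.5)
You follow the paper's proof closely. You vectorize to $T=\sum_i r_i\ketbra{u_i}{u_i}$ and use rank-one elements with $\sqrt{r_i}$-weighted components whose phases cancel the off-diagonal terms. The paper uses the pairs $\sqrt{r_i}\ket{u_i}\pm\sqrt{r_k}\ket{u_k}$; your Fourier family gives $\sum_kE_k=s^2d_\sA T$ and $\sum_kE_kTE_k=s^4d_\sA T$, which serves (a) equally well. You then add diagonal filler fixed index by index through the sum/square-sum constraint, and conclude with Perron--Frobenius plus the pairing $\tr[T_0T]>0$. The matching step you flag is not delicate. For fixed $\lambda$, pick $s$ so small that $0<\lambda-s^4d_\sA<(1-s^2d_\sA r_i)^2$ for all $i$, then take $M$ large. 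This is exactly the paper's choice of $\epsilon$ and $N$, and $\lambda\to0$ and $\lambda\to1$ need no separate treatment.

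The step that fails is (b). You attribute irreducibility to the rank-one family alone and treat the diagonal elements as merely preserving it. But $\braket{a_k}{a_{k'}}=s^2\sum_i r_i\omega^{i(k'-k)}$. For uniform $r_i=d_\sA^{-1/2}$, the maximally entangled target and the main case of interest, the $\ket{a_k}$ therefore form an orthogonal Fourier basis. The $E_k$ commute, and each $\rspan\{\ket{a_k}\}$ is invariant; having all components nonzero in the $\{\ket{u_i}\}$ basis does not help. Non-uniform $r$ can fail too: $(r_1,\dots,r_4)\propto(a,b,a,b)$ splits the family into $\{\ket{a_0},\ket{a_2}\}$ and $\{\ket{a_1},\ket{a_3}\}$.

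In the uniform case the remainder $\bI-\sum_kE_k$ is a multiple of $\bI$, and your per-index system admits the scalar solution $D_m=w_m\bI$. Then for $d_\sA=2$, with $\ket{\pm}\propto\ket{u_1}\pm\ket{u_2}$, one finds $\cE^\Omega(\ketbra{\pm}{\pm})=\lambda\ketbra{\pm}{\pm}$. So $\lambda$ is doubly degenerate and $\ket{\phi}$ is not the unique top eigenvector.

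The repair is to let irreducibility come from the combination, as the paper does. Take the filler rank-one, $D_{i,n}=w_{i,n}\ketbra{u_i}{u_i}$ with some $w_{i,n}>0$ for every $i$; your per-index system permits this. Then any invariant subspace is a span of some $\ket{u_i}$'s, and a single $\ket{a_k}$, having all components nonzero, forces it to be $\{0\}$ or $\cH^\sA$. The paper phrases this as strict positivity: $\{E_j\ket{\psi}\}_j$ spans $\cH$ for every $\ket{\psi}\neq0$. Your alternative of adding further phase patterns can also work, but only after you check that the overlap graph of the enlarged family is connected.
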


The condition $r_i>0$ is without loss of generality. For a target state of Schmidt rank $s<d_\sA$, apply the construction on its support $\cV=\rspan\{\ket{u_i}: r_i>0\}_i$ and complete the POVM on $\cV^\perp$ by adding $M > 1/\lambda$ copies of the element $\f{1}{M}\bI_{\cV^\perp}$, where $\bI_{\cV^\perp}$ is the identity operator on $\cV^\perp$. Since each $E_j$ is supported on $\cV$ or $\cV^\perp$, $\Omega$ is block-diagonal across $\cV\otimes\bar{\cV}$ and $\cV^\perp\otimes\bar{\cV}^\perp$; the cross sectors vanish. 
The operator $\Omega$ on the latter block equals $\sum_{m=1}^M \f{1}{M}\bI_{\cV^\perp}\otimes\f{1}{M}\bI_{\bar{\cV}^\perp} = \f{1}{M}\bI_{\cV^\perp\otimes\bar{\cV}^\perp}$, of spectral radius $1/M < \lambda$. The state $\ket{\phi}$ hence remains the unique top eigenvector of $\Omega$.

We now prove Proposition~\ref{prop:construction povm from given state}. We omit system superscripts and subscripts for brevity.

\begin{proof}[Proof of Proposition~\ref{prop:construction povm from given state}]

Since $0 < \lambda < 1$, we can choose a sufficiently small $\epsilon > 0$ such that, for all $i$, 
\begin{align}
    \label{inteq:82}
    0 < 1 - 2\epsilon (d-1)r_i, \  \ \ \ \ \ \ 
    0 < \lambda - 2\epsilon^2\big(1+(d-2)r_i^2\big) < \big(1 - 2\epsilon(d-1)r_i\big)^2.
\end{align}
This condition on $\epsilon$ ensures that we can choose the coefficients of the POVM elements $D_{i,n}$ constructed from $\{\ket{u_i}\}_i$.
With this $\epsilon$ fixed, for any $i$ and $k$ such that $1 \leq i < k \leq d$, we set $\ket{v^{\pm}_{ik}}=\sqrt{r_i}\ket{u_i} \pm \sqrt{r_k}\ket{u_k}$ and $F^{\pm}_{ik} = \epsilon\ketbra{v^{\pm}_{ik}}{v^{\pm}_{ik}}$. 
For each $i$, we choose positive coefficients $\{w_{i,n}\}_{n=1}^{N}$ satisfying
\begin{align}
    \label{inteq:80}
    &\sum_{n=1}^N w_{i,n}=1-2\epsilon(d-1)r_i, \\
    \label{inteq:81}
    &\sum_{n=1}^N w_{i,n}^2=\lambda-2\epsilon^2\big(1+(d-2)r_i^2\big).
\end{align}
Due to our choice of $\epsilon$, the sum in Eq.~\eqref{inteq:80} is strictly positive, and the sum of squares in Eq.~\eqref{inteq:81} is positive and strictly less than the square of the sum (i.e., $0 < \sum_n w_{i,n}^2 < (\sum_n w_{i,n})^2$). These ensure that such coefficients $w_{i,n} \in (0,1)$ exist for a sufficiently large integer $N$. Indeed, under Eq.~\eqref{inteq:80}, the sum of squares $\sum_n w_{i,n}^2$ varies over the whole interval $\big[(\sum_n w_{i,n})^2/N, (\sum_n w_{i,n})^2\big)$ as the positive coefficients are varied, and its lower end tends to zero as $N$ grows.
We set $D_{i,n}=w_{i,n}\ketbra{u_i}{u_i}$, and define $\{E_j\}_j$ as the collection of all $F^{\pm}_{ik}$ and $D_{i,n}$.
Summing $F^{+}_{ik}+F^{-}_{ik} = 2\epsilon\big(r_i\ketbra{u_i}{u_i}+r_k\ketbra{u_k}{u_k}\big)$ over all pairs $1 \leq i < k \leq d$, each operator $\ketbra{u_i}{u_i}$ appears $d-1$ times. Adding $D_{i,n}$ and applying Eq.~\eqref{inteq:80}, we obtain
\begin{align}
    \sum_j E_j 
    &= \sum_{i=1}^d \Big(2\epsilon(d-1)r_i + \sum_{n=1}^N w_{i,n}\Big) \ketbra{u_i}{u_i} \\ 
    &= \sum_{i=1}^d \ketbra{u_i}{u_i} \\
    &= \bI,
\end{align}
and thus $\{E_j\}_j$ is a POVM.

We first check that $\lambda$ is an eigenvalue of $\Omega$ with eigenvector $\ket{\phi}$.
Let $T = \rvec^{-1}(\ket{\phi}) = \sum_i r_i\ketbra{u_i}{u_i}$. Then $\Omega\ket{\phi} = \lambda\ket{\phi}$ is equivalent to $\cE^{\Omega}(T) = \lambda T$ with $\cE^{\Omega}(\cdot)=\sum_j E_j(\cdot)E_j$. 
Noting that $\bra{v^{\pm}_{ik}}T\ket{v^{\pm}_{ik}} = r_i^2 + r_k^2$, we have
\begin{align}
    \cE^{\Omega}(T) 
    &= \sum_{1 \leq i < k \leq d} 2\epsilon^2\big(r_i^2+r_k^2\big)\big(r_i\ketbra{u_i}{u_i}+r_k\ketbra{u_k}{u_k}\big) + \sum_{i=1}^d r_i\sum_{n=1}^N w_{i,n}^2\ketbra{u_i}{u_i} \\
    &= \sum_{i=1}^d r_i\Big(2\epsilon^2\sum_{k: k \neq i}\big(r_i^2+r_k^2\big)+\sum_{n=1}^N w_{i,n}^2\Big)\ketbra{u_i}{u_i} \\
    &= \sum_i r_i\Big(2\epsilon^2\big(1+(d-2)r_i^2\big)+\sum_n w_{i,n}^2\Big)\ketbra{u_i}{u_i} \\
    &= \lambda T,
\end{align}
where the second line collects the terms proportional to $\ketbra{u_i}{u_i}$, to which each pair containing $i$ contributes, and the third and fourth lines use $\sum_k r_k^2 = 1$ and Eq.~\eqref{inteq:81}, respectively.

We next show that $\cE^{\Omega}$ is irreducible, which fixes $\lambda$ as the largest eigenvalue and $T$ as the unique corresponding eigenoperator.
For any nonzero vector $\ket{\psi}=\sum_i\psi_i\ket{u_i}$, the set $\{E_j\ket{\psi}\}_j$ spans the entire space $\cH$. Indeed, if $\psi_i \neq 0$, then $D_{i,n} \ket{\psi}$ generates $\ket{u_i}$. If $\psi_k = 0$, choose any $i$ with $\psi_i \neq 0$; then the element $F^{+}$ for the pair of $i$ and $k$ maps $\ket{\psi}$ to a vector proportional to $\sqrt{r_i}\ket{u_i}+\sqrt{r_k}\ket{u_k}$, from which $\ket{u_k}$ is obtained. All basis vectors are generated.
Thus, $\cE^{\Omega}$ maps every nonzero positive semidefinite operator to a strictly positive definite one.
This implies irreducibility of $\cE^\Omega$; consider an invariant subspace associated with a projector $\Pi$, which by definition satisfies $\cE^{\Omega}(\Pi\cL(\cH)\Pi) \subseteq \Pi\cL(\cH)\Pi$. 
This condition requires the range of $\cE^{\Omega}(\Pi)$ to be confined within $\Pi\cH$.
However, for any nonzero $\Pi$, the strict positivity of $\cE^{\Omega}$ dictates that $\cE^{\Omega}(\Pi)$ is positive-definite, meaning its range must be the entire space $\cH$.
These two requirements are compatible only if $\Pi\cH = \cH$, i.e., $\Pi=\bI$.
Consequently, $\cE^{\Omega}$ is irreducible in the sense of Lemma~\ref{prevthm:perron frobenius}.
By this lemma, its spectral radius is a simple eigenvalue ($\nu(\Omega) \neq 0$) with a strictly positive operator. 
Since $\cE^{\Omega}$ is self-adjoint, eigenoperators with distinct eigenvalues are orthogonal in the Hilbert--Schmidt inner product, whereas $\tr[AB] > 0$ for any strictly positive $A$ and $B$. 
As $\cE^{\Omega}(T)=\lambda T$ with $T=\sum_i r_i\ketbra{u_i}{u_i} > 0$, the spectral radius is therefore $\lambda$ and the corresponding operator is $T$.
Hence $\ket{\phi} = \rvec(T)$ is the unique top eigenvector.

\end{proof}

This construction extends to an arbitrary number $L$ of POVMs. 
For each $l = 1, \dots, L$, let $\{E_{j|l}\}_j$ be a POVM obtained by the construction above, with its own $\epsilon_l$ and coefficients $\{w_{i,n|l}\}_n$.
For $\Omega_l=\sum_j E_{j|l}\otimes\bar{E}_{j|l}$, the map $\cE^{\Omega_l}(\cdot)=\sum_j E_{j|l}(\cdot)E_{j|l}$ satisfies $\cE^{\Omega_l}(T)=\lambda T$ and is strictly positive.
Then, for any probability distribution $\{q_l\}_{l=1}^L$, the averaged operator $\Omega=\sum_l q_l\Omega_l$ obeys $\cE^{\Omega}(T)=\sum_l q_l\cE^{\Omega_l}(T)=\lambda T$, and $\cE^{\Omega}$ is again strictly positive, hence irreducible. Thus, $\ket{\phi}$ is the unique top eigenvector of $\Omega$ with eigenvalue $\lambda$.


\section{Quantitative bounds on the spectral gap}
\label{sec:equivalent conditions positive spectral gap}

The spectral gap $\nu(\Omega)$ of $\Omega$ in Eq.~\eqref{eq:def of average of omega} appears in the denominator of our error bounds, and thus directly affects their tightness. While \Cref{sec:proof of irreducibility and unique state spectrum equivalence} qualitatively characterizes when $\nu(\Omega)$ vanishes, its value is also of interest. We here further investigate the spectral gap quantitatively.
In \Cref{sec:analytical derivation of spectral gap bounds}, we derive upper and lower bounds on the spectral gap and interpret the quantities appearing in them. In \Cref{sec:numerical evaluation of spectral gap bounds}, we evaluate the tightness of the bounds numerically using a parameterized family of bases.

\subsection{Analytical derivation of upper and lower bounds}
\label{sec:analytical derivation of spectral gap bounds}

For $\Omega=\sum_{l=1}^L q_l\Omega_l$ with $\Omega_l = \sum_j E_{j|l} \otimes \bar{E}_{j|l}$, we here assume that $\|\Omega\|_\infty = 1$ (and thus $\|W\|_\infty = 1$), and fix one corresponding eigenvector, denoted by $\ket{\phi}$. 
For each $l = 1, \ldots, L$, let
\begin{align}
    \widtil{\Omega}_l \coloneqq \Omega_l - \ketbra{\phi}{\phi}. 
\end{align}
Note that $\widtil{\Omega}_l\widtil{\Omega}_m = \Omega_l\Omega_m - \ketbra{\phi}{\phi}$ holds, because $0 \leq \Omega_l \leq \bI$ and $\bra{\phi}\Omega\ket{\phi} = 1$ imply $\Omega_l\ket{\phi} = \ket{\phi}$ for all $l$.
We then provide the following bounds on the spectral gap $\nu(\Omega) = 1-\lambda_2(\Omega)$ of $\Omega$, where $\lambda_2(\Omega)$ is the second-largest eigenvalue of $\Omega$ counted with multiplicity. 
If $\|\Omega\|_\infty$ is degenerate, $\nu(\Omega) = 0$, but the bounds remain valid as they also become zero.

\begin{proposition}
\label{prop:little general lower bound spectral gap: global}
Suppose that $\|\Omega\|_\infty = 1$.
Then, the spectral gap $\nu(\Omega)$ of $\Omega = \sum_{l=1}^L q_l\Omega_l$ satisfies 
\begin{align}
    \max\big\{(1-\mu)^2\min_l q_l, \, 1 - S_q\big\} 
\label{eq:lower bound nu general}
    \leq \nu(\Omega) 
    \leq \min\Big\{1 - \max_l q_l\gamma_l, \, \min_\pi \ell_q^\pi\big(1 - \mu_\pi^2\big)\Big\}.  
\end{align}
Here, for each permutation $\pi$ of $\{1, \ldots, L\}$, $\ell_q^\pi \coloneqq \sum_{m=1}^L m q_{\pi(m)}$, $\mu_\pi \coloneqq \big\|\widtil{\Omega}_{\pi(L)} \cdots \widtil{\Omega}_{\pi(2)}\widtil{\Omega}_{\pi(1)}\big\|_\infty$, and $\mu \coloneqq \min_\pi \mu_\pi$.
With $\gamma_l \coloneqq \|\widtil{\Omega}_l\|_\infty$ and $c_{lm} \coloneqq \big\|\widtil{\Omega}_l^{1/2}\widtil{\Omega}_m^{1/2}\big\|_\infty$,
\begin{align}
\label{eq:def of S}
    S_q 
    \coloneqq \min\bigg\{\max_l\Big(q_l\gamma_l + \sum_{m: m \neq l} \sqrt{q_lq_m}c_{lm}\Big), \sqrt{\sum_l q_l^2\gamma_l^2 + \sum_{l, m: l \neq m} q_lq_m c_{lm}^2}\bigg\}.
\end{align}
\end{proposition}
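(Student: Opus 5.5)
The plan is to reduce everything to the orthogonal complement of $\ket{\phi}$. As noted before the statement, $\|\Omega\|_\infty=1$ together with $0\le\Omega_l\le\bI$ forces $\Omega_l\ket{\phi}=\ket{\phi}$ for every $l$. Hence each $\widtil{\Omega}_l=\Omega_l-\ketbra{\phi}{\phi}$ is PSD, satisfies $\widtil{\Omega}_l\le\bI$, annihilates $\ket{\phi}$, and leaves $\ket{\phi}^\perp$ invariant. Therefore
\begin{align}
\lambda_2(\Omega)=\Big\|\sum_l q_l\widtil{\Omega}_l\Big\|_\infty, \qquad \nu(\Omega)=1-\Big\|\sum_l q_l\widtil{\Omega}_l\Big\|_\infty .
\end{align}
If the top eigenvalue is degenerate, the norm equals $1$ and both sides read correctly. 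All four bounds then become statements about the norm of a positive combination of PSD contractions.

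\emph{Upper bounds on $\nu$.} Since every $\widtil{\Omega}_m\ge0$, we have $\sum_m q_m\widtil{\Omega}_m\ge q_l\widtil{\Omega}_l$ for each $l$. Taking norms gives $\lambda_2\ge \max_l q_l\gamma_l$, which is the first upper bound.

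For the second, fix $\pi$ and write $P_\pi=\widtil{\Omega}_{\pi(L)}\cdots\widtil{\Omega}_{\pi(1)}$. Take a unit vector $x\in\ket{\phi}^\perp$ with $\|P_\pi x\|=\mu_\pi$ and set $y_m=\widtil{\Omega}_{\pi(m)}y_{m-1}$ with $y_0=x$. Telescoping gives
\begin{align}
1-\mu_\pi^2=\sum_m\big(\|y_{m-1}\|^2-\|y_m\|^2\big)=\sum_m\big\lag y_{m-1},(\bI-\widtil{\Omega}_{\pi(m)}^2)y_{m-1}\big\rag .
\end{align}
Each term can be related to the defect $\delta_l(x)=1-\lag x,\widtil{\Omega}_l x\rag$ at the starting vector. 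This is where the weight $m$ in $\ell_q^\pi$ should come from: the deviation $x-y_{m-1}$ accumulates the $m-1$ earlier defects. I would combine the inequality $\bI-\widtil{\Omega}^2\le 2(\bI-\widtil{\Omega})$, the bound $\|(\bI-\widtil{\Omega})z\|^2\le\lag z,(\bI-\widtil{\Omega})z\rag$, and Cauchy--Schwarz. I would also try choosing $x$ instead as the second eigenvector of $\Omega$, so that $\sum_l q_l\delta_l(x)=\nu$. The goal is the inequality $\nu\le\ell^\pi_q(1-\mu_\pi^2)$ after minimizing over $\pi$.

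I expect this step to be the main obstacle. Getting the exact weights $\sum_m mq_{\pi(m)}$, rather than a cruder factor such as $L$, requires a careful choice of $x$ and of how the telescoping errors are charged to individual $q_l$. I would first try the variational choice $x=$ the top singular vector of $P_\pi$ and bound $1-\lag x,\Omega x\rag$ from above. This amounts to showing $\lambda_2\ge 1-\ell^\pi_q(1-\mu_\pi^2)$ directly.

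\emph{Lower bounds on $\nu$.} For $1-S_q$, write $\sum_l q_l\widtil{\Omega}_l=BB^\dag$ with the row operator $B=[\sqrt{q_1}\widtil{\Omega}_1^{1/2},\dots,\sqrt{q_L}\widtil{\Omega}_L^{1/2}]$. Then $\|BB^\dag\|_\infty=\|B^\dag B\|_\infty$, and $B^\dag B$ is the block Gram matrix with blocks $\sqrt{q_lq_m}\,\widtil{\Omega}_l^{1/2}\widtil{\Omega}_m^{1/2}$. Its diagonal blocks have norm $q_l\gamma_l$ and its off-diagonal blocks have norm $\sqrt{q_lq_m}c_{lm}$. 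Bounding the norm of a block matrix by the norm of its matrix of block norms, then applying the Schur (maximum row sum) test and the Frobenius bound, yields the two expressions in $S_q$.

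For $(1-\mu)^2\min_l q_l$, take a unit $y\perp\ket{\phi}$ with $\lag y,\Omega y\rag=1-\nu$. Then $\sum_l q_l\delta_l(y)=\nu$, so every $\delta_l(y)\le\nu/\min_l q_l$. This is a detectability-lemma-type argument. Using $\|(\bI-\widtil{\Omega}_l)z\|^2\le\lag z,(\bI-\widtil{\Omega}_l)z\rag$ along the sequential product, I would bound $1-\mu_\pi\le\|y-P_\pi y\|$ by a quantity of order $\sqrt{\nu/\min_l q_l}$. Squaring then gives the claimed bound; the constant must be tracked so that no factor of $L$ survives, possibly by exploiting the best permutation $\pi$.
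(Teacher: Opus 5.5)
Your reduction to $\nu(\Omega)=1-\|\sum_l q_l\widtil{\Omega}_l\|_\infty$ matches the paper's proof. So do the bound $\nu\le 1-\max_l q_l\gamma_l$ and the block-Gram argument for $1-S_q$ (norm of the matrix of block norms, then row sums and Frobenius). The genuine gap is the lower bound $(1-\mu)^2\min_l q_l$.

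Bounding each defect separately by $\delta_l(y)\le\nu/\min_l q_l$ already costs a factor $L$ once the defects are summed. A triangle inequality along the product, $\|(\bI-P_\pi)y\|\le\sum_m\|(\bI-\widtil{\Omega}_{\pi(m)})y_{m-1}\|$, controls $\|(\bI-P_\pi)y\|$ only through defects at the intermediate vectors $y_{m-1}$, not at $y$. Converting one into the other produces $L$-dependent constants, and no choice of $\pi$ removes them. The missing ingredient is the constant-free operator inequality
\[
(\bI-\Upsilon_L)^\dagger(\bI-\Upsilon_L)\le\sum_{l=1}^L(\bI-\widtil{\Omega}_{l}),\qquad \Upsilon_l=\widtil{\Omega}_{\pi(l)}\Upsilon_{l-1},\quad \Upsilon_0=\bI .
\]
The paper obtains it by summing over $l$ the exact identity $\bI-\widtil{\Omega}_{\pi(l)}=B_l-B_{l-1}+(\bI-\Upsilon_{l-1})^\dagger(\bI-\widtil{\Omega}_{\pi(l)})(\bI-\Upsilon_{l-1})+\Upsilon_{l-1}^\dagger(\bI-\widtil{\Omega}_{\pi(l)})\widtil{\Omega}_{\pi(l)}\Upsilon_{l-1}$, where $B_l=(\bI-\Upsilon_l)^\dagger(\bI-\Upsilon_l)$ and $B_0=0$, and then dropping the two positive semidefinite terms. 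With this inequality your choice of $y$ finishes at once: $\nu=\sum_l q_l\delta_l(y)\ge\min_l q_l\sum_l\delta_l(y)\ge\min_l q_l\,\|(\bI-\Upsilon_L)y\|^2\ge\min_l q_l\,(1-\mu_\pi)^2$. The point is that the defects must be aggregated as a sum, not bounded one at a time.

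For the second upper bound, your skeleton is the paper's: take $x$ to be the top right singular vector of $P_\pi$ and telescope $\|y_m\|^2$. But the step you flag as the obstacle is left open, and one of your listed tools points the wrong way. The inequality $\bI-\widtil{\Omega}^2\le 2(\bI-\widtil{\Omega})$ bounds the telescoping increments by defects, which is the direction useful for lower-bounding $\nu$. To show $\nu\le\sum_l q_l\delta_l(x)\le\ell_q^\pi(1-\mu_\pi^2)$ you need the reverse comparison $\widtil{\Omega}^2\le\widtil{\Omega}$. Set $s_m=\|(\bI-\widtil{\Omega}_{\pi(m)})^{1/2}y_{m-1}\|$; the comparison gives $s_m^2\le\|y_{m-1}\|^2-\|y_m\|^2$, hence $\sum_m s_m^2\le 1-\mu_\pi^2$. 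Next write $x=y_{l-1}+\sum_{m<l}(\bI-\widtil{\Omega}_{\pi(m)})y_{m-1}$. Using $\|(\bI-\widtil{\Omega})^{1/2}z\|\le\|z\|$ together with your bound $\|(\bI-\widtil{\Omega})z\|\le\|(\bI-\widtil{\Omega})^{1/2}z\|$ yields $\delta_{\pi(l)}(x)^{1/2}\le\sum_{m\le l}s_m$. Finally, Cauchy--Schwarz gives $(\sum_{m\le l}s_m)^2\le l\sum_{m\le l}s_m^2$; weighting by $q_{\pi(l)}$ and summing produces exactly the factor $\ell_q^\pi=\sum_l l\,q_{\pi(l)}$. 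Your alternative of taking $x$ as the second eigenvector of $\Omega$ does not work here, because then only $\|P_\pi x\|\le\mu_\pi$ holds, which goes the wrong way.
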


Before proceeding to the proof of Proposition~\ref{prop:little general lower bound spectral gap: global}, let us apply it to the case of $L$ projective measurements onto bases $E_l = \{\ket{e_{j|l}}\}_{j=1}^d$ and clarify the interpretation of the quantities appearing in the bounds.
In this case, $\Omega_l$ reduces to $\sum_{j=1}^d \ketbra{e_{j|l}}{e_{j|l}} \otimes \ketbra{\bar{e}_{j|l}}{\bar{e}_{j|l}}$,
which we denote by $\Pi_l$.
Since the MES $\ket{\Phi}$ is clearly one of the eigenstates corresponding to the eigenvalue $1$ of $\Omega = \sum_{l=1}^L q_l \Pi_l$, the quantity $\mu$ can be rewritten as
\begin{align}
    \mu
    &= \min_{\pi \in \mathrm{perm.}}\big\|\widtil{\Pi}_{\pi(L)}\cdots\widtil{\Pi}_{\pi(1)}\big\|_\infty \\
    &= \min_{\pi \in \mathrm{perm.}}\big\|\Pi_{\pi(L)}\cdots\Pi_{\pi(1)} - \ketbra{\Phi}{\Phi}\big\|_\infty \\
    &= \min_{\pi \in \mathrm{perm.}}\max_{\substack{\ket{T}: \ \ket{T} \neq 0 \\ \|\ket{T}\| = 1, \langle\Phi\ket{T} = 0}} \big\|\Pi_{\pi(L)}\cdots\Pi_{\pi(1)}\ket{T}\big\| \\
    \label{inteq:53}
    &= \min_{\pi \in \mathrm{perm.}}\max_{\substack{T: \ T \neq 0 \\ \|T\|_2 = 1, \tr[T] = 0}}\big\|\Delta_{\pi(L)}\circ\cdots\circ\Delta_{\pi(1)}(T)\big\|_2,
\end{align}
where $\Delta_l(\cdot) = \sum_j \bra{e_{j|l}}(\cdot)\ket{e_{j|l}}\ketbra{e_{j|l}}{e_{j|l}}$ is the completely dephasing channel in the basis. We here used the invariance of the Hilbert--Schmidt norm under vectorization defined in Eq.~\eqref{eq:definition of vectorization}.

The quantity $\mu$ is interpreted as a measure of the (ir)reducibility of the Hilbert space $\cH$.
To see this, recall that the Hilbert--Schmidt norm satisfies $\|\Delta_l(A)\|_2 \leq \|A\|_2$ for any matrix $A$ and the equality holds if and only if $A$ is diagonal in the basis. 
Thus, $\mu = 1$ implies the existence of a nonzero traceless matrix $T$ that is diagonal in all $L$ bases $\{E_l\}_{l=1}^L$. The Hilbert space $\cH$ is then decomposed as a direct sum by the eigenspaces of $T$, meaning that $\cH$ is reducible with respect to these bases. On the other hand, if $\cH$ is irreducible with respect to $L$ bases $\{E_l\}_{l=1}^L$, there is no nonzero traceless matrix that is simultaneously diagonal in all the bases, implying $\mu < 1$.

Regarding the quantities $\gamma_l$ and $c_{lm}$, in the case of basis measurements, we have $\gamma_l = \big\|\widtil{\Pi}_l\big\|_\infty = 1$ for all $l$, and $c_{lm} = \big\|\widtil{\Pi}_l\widtil{\Pi}_m\big\|_\infty = \cos \varphi_{lm}^{(2)}$, where $\varphi_{lm}^{(2)}$ is the second smallest \emph{canonical angle} (also known as principal angle)~\cite{wong1967differential, Jordan1875canonicalangle} between the subspaces $\rspan\{\ket{e_{j|l}}\ket{\bar{e}_{j|l}}\}_{j=1}^d$ and $\rspan\{\ket{e_{j|m}}\ket{\bar{e}_{j|m}}\}_{j=1}^d$.
This quantity thus quantifies the pairwise closeness between the subspaces defined by $\Pi_l$ and $\Pi_m$.

For basis measurements, let us compare the two lower bounds in Eq.~\eqref{eq:lower bound nu general}. 
For simplicity, we take $\{q_l\}_l$ to be uniform, $q_l = 1/L$, so that $S_q = S/L$ with $S \coloneqq \min\big\{\max_l(\gamma_l + \sum_{m: m\neq l}c_{lm}), \sqrt{\sum_l\gamma_l^2 + \sum_{l, m: l\neq m}c_{lm}^2}\big\}$, and the two lower bounds read $(1-\mu)^2/L$ and $1-S/L$.

The first lower bound captures global properties of all $L$ bases via $\mu$. In contrast, the second bound, $1-S/L$, depends only on the pairwise canonical angles $\big\{\varphi_{lm}^{(2)}\big\}_{l,m}$ and thus does not fully reflect collective structure.
For instance, if there exists a basis $E_l$ that decomposes $\cH$ with every other basis $E_m$, then $c_{lm}=1$ for all $m\neq l$, making the first argument in Eq.~\eqref{eq:def of S} trivial, regardless of the collective structure of $\{E_l\}_{l=1}^L$.
Similarly, if every pair $\{E_l, E_m\}$ nontrivially decomposes $\cH$, then $\sum_{l, m: l \neq m}c_{lm}^2 = L(L-1)$, and the second term in Eq.~\eqref{eq:def of S} likewise becomes trivial. Even when $\cH$ is irreducible with respect to the entire family of bases---so that Theorem~\ref{prop:qualitative equivalent condition} ensures $\nu(\Omega) > 0$---the second lower bound may yield only the trivial result $\nu(\Omega) \geq 0$.

Nevertheless, as seen in the following examples, the second bound can outperform the first in certain cases.
When $\{E_l\}_{l=1}^L$ are MUBs, we have $\varphi_{lm}^{(2)} = \pi/2$. Then, the bound is the tightest possible, yielding $1 - 1/L$ and hence $\nu(\Omega) = 1 - 1/L$, which is consistent with the result in Ref.~{\cite[Lemma~3]{Zhu2019optimalverificationfidelityestimation}}.
This is not the case for the first bound in Eq.~\eqref{eq:lower bound nu general}, as it gives $1/L$ with $\mu = 0$ even for MUBs.
In another case $L = 2$ with general bases, the second bound gives $\f{1}{2}(1 - \cos\varphi_{12}^{(2)}) \leq \nu(\Omega)$. In fact, this lower bound is tight, i.e., $\nu(\Omega) = \f{1}{2}(1 - \cos\varphi_{12}^{(2)})$~\cite{Galanta2008Subspacesangles, Zhu2013Anglesbetweensubspaces, KAUR2023lowerboundsontheminimumsingular}.

We now prove Proposition~\ref{prop:little general lower bound spectral gap: global}.
An exact reformulation of the spectral gap (Eq.~\eqref{inteq:57}) is given in this proof.

\begin{proof}[Proof of Proposition~\ref{prop:little general lower bound spectral gap: global}]

Fix an eigenvector $\ket{\phi}$ corresponding to the largest eigenvalue $1$ of the Hermitian matrix $\Omega$, and let $\cV_1^\perp \coloneqq \rspan\{\ket{\phi}\}^\perp$.
Then, $\lambda_2(\Omega)$ is given by the largest eigenvalue of $\Omega - \ketbra{\phi}{\phi}$; namely, 
\begin{align}
    \lambda_2(\Omega) 
    &= \|\Omega - \ketbra{\phi}{\phi}\|_\infty \\
    \label{inteq:37}
    &= \Big\|\sum_{l=1}^{L} q_l\widtil{\Omega}_l\Big\|_\infty \\
    \label{inteq:67}
    &= \max_{\ket{w} \in \cV_1^\perp: \|\ket{w}\| = 1}\bra{w}\sum_{l=1}^{L} q_l\widtil{\Omega}_l\ket{w}, 
\end{align}
where we defined $\widtil{\Omega}_l$ by $\widtil{\Omega}_l \coloneqq \Omega_l - \ketbra{\phi}{\phi}$.

We first derive the upper bounds on $\nu(\Omega)$. 
The first upper bound in Eq.~\eqref{eq:lower bound nu general} is obtained by building on the approach in Ref.~\cite{Zhu2019optimalverificationfidelityestimation}.
Let $\gamma_l$ be the largest eigenvalue of $\widtil{\Omega}_l$. By definition, $\widtil{\Omega}_l \geq 0$ for each $l$, which implies $\big\|\sum_l q_l\widtil{\Omega}_l \big\|_\infty \geq \max_l q_l\gamma_l$.
From this inequality and Eq.~\eqref{inteq:37}, we obtain $\lambda_2(\Omega) \geq \max_l q_l\gamma_l$, and hence
\begin{align}
    \nu(\Omega) \leq 1 - \max_l q_l\gamma_l.
\end{align}
Note that when each $\Omega_l$ is a projector with rank at least $2$, we have $\gamma_l = 1$, recovering the bound $\nu(\Omega) \leq 1 - \max_l q_l$ in Ref.~\cite{Zhu2019optimalverificationfidelityestimation}.

We next derive the second upper bound in Eq.~\eqref{eq:lower bound nu general}. 
To do so, we fix an arbitrary permutation $\pi$. Relabeling the indices as $(\widtil{\Omega}_{\pi(l)}, q_{\pi(l)}) \to (\widtil{\Omega}_l, q_l)$, we may assume $\pi = \rid$, so that $\mu_\pi = \|\widtil{\Omega}_L\cdots\widtil{\Omega}_2\widtil{\Omega}_1\|_\infty$ and $\ell_q^\pi = \sum_{l=1}^L l q_l$.
We denote $\widtil{\Omega}_l\cdots\widtil{\Omega}_2\widtil{\Omega}_1$ by $\Upsilon_l$ and define an operator $D_l \coloneqq (\bI-\widtil{\Omega}_l)\Upsilon_{l-1}$, where we set $\Upsilon_0 = \bI$; in particular, $\mu_\pi = \|\Upsilon_L\|_\infty$.
Since the operator $D_l$ is rewritten as $D_l = \Upsilon_{l-1} - \Upsilon_{l}$, we have 
\begin{align}
\label{inteq:51}
    \Upsilon_l + \sum_{m=1}^l D_m = \bI.
\end{align}
If $\mu_\pi = 0$, the bound $\nu(\Omega) \leq \ell_q^\pi(1-\mu_\pi^2)$ holds trivially, since $\lambda_2(\Omega) \geq 0$ and $\sum_l q_l = 1$ imply $\nu(\Omega) \leq 1 \leq \ell_q^\pi$; we thus assume $\mu_\pi > 0$ in the rest of this derivation.

Let $\ket{u}$ be the normalized right singular vector of $\Upsilon_L$ corresponding to $\mu_\pi$. Since $\widtil{\Omega}_1\ket{\phi} = 0$ implies $\Upsilon_L\ket{\phi} = 0$ and $\ket{u}$ is orthogonal to the kernel of $\Upsilon_L$, we have $\ket{u} \in \cV_1^\perp$. From Eq.~\eqref{inteq:67}, the spectral gap satisfies
\begin{align}
    \nu(\Omega) 
    &\leq \sum_{l=1}^L q_l\bra{u}(\bI - \widtil{\Omega}_l)\ket{u} \\
    \label{inteq:72}
    &= \sum_{l=1}^L q_l\big\|(\bI - \widtil{\Omega}_l)^{1/2} \ket{u}\big\|^2.
\end{align}
To evaluate the right-hand side, we use Eq.~\eqref{inteq:51} and obtain
\begin{align}
    \big\|(\bI - \widtil{\Omega}_l)^{1/2}\ket{u}\big\| 
    &= \Big\|(\bI - \widtil{\Omega}_l)^{1/2}\big(\Upsilon_{l-1}+\sum_{m=1}^{l-1}D_m\big)\ket{u}\Big\| \\
    \label{inteq:71}
    &\leq \big\|(\bI - \widtil{\Omega}_l)^{1/2}\Upsilon_{l-1}\ket{u}\big\|
    + \sum_{m=1}^{l-1}\big\|(\bI - \widtil{\Omega}_l)^{1/2}D_m\ket{u}\big\|,
\end{align}
where the last line follows from the triangle inequality.
Since $0 \leq \bI - \widtil{\Omega}_l \leq \bI$, for any vector $\ket{y}$, we have $\|(\bI - \widtil{\Omega}_l)^{1/2}\ket{y}\| \leq \|\ket{y}\|$. Applying this to the second term in Eq.~\eqref{inteq:71} gives 
\begin{align}
    \big\|(\bI - \widtil{\Omega}_l)^{1/2}D_m\ket{u}\big\|
    &\leq \|D_m\ket{u}\| \\
    &= \big\|(\bI - \widtil{\Omega}_m)\Upsilon_{m-1}\ket{u}\big\| \\
    \label{inteq:73}
    &\leq \big\|(\bI - \widtil{\Omega}_m)^{1/2}\Upsilon_{m-1}\ket{u}\big\|,
\end{align}
where $D_m = (\bI - \widtil{\Omega}_m)\Upsilon_{m-1}$.
Substituting Eq.~\eqref{inteq:73} into Eq.~\eqref{inteq:71}, we obtain
\begin{align}
    \big\|(\bI - \widtil{\Omega}_l)^{1/2}\ket{u}\big\| 
    &\leq \big\|(\bI - \widtil{\Omega}_l)^{1/2}\Upsilon_{l-1}\ket{u}\big\| 
    + \sum_{m=1}^{l-1} \big\|(\bI - \widtil{\Omega}_m)^{1/2}\Upsilon_{m-1}\ket{u}\big\| \\
    \label{inteq:74}
    &= \sum_{m=1}^l \big\|(\bI - \widtil{\Omega}_m)^{1/2}\Upsilon_{m-1}\ket{u}\big\|.
\end{align}

Let $x_m \coloneqq \|(\bI - \widtil{\Omega}_m)^{1/2}\Upsilon_{m-1}\ket{u}\|$ for simplicity.
Squaring Eq.~\eqref{inteq:74} and summing over $l=1, \dots, L$, we obtain
\begin{align}
\label{inteq:75}
    \sum_{l=1}^L q_l\big\|(\bI - \widtil{\Omega}_l)^{1/2}\ket{u}\big\|^2 
    \leq \sum_{l=1}^L q_l\Big(\sum_{m=1}^l x_m\Big)^2.
\end{align}
Applying the Cauchy--Schwarz inequality to the inner sum, $(\sum_{m=1}^l x_m)^2 \leq l \sum_{m=1}^l x_m^2$, and exchanging the order of summation, we have
\begin{align}
    \sum_{l=1}^L q_l\Big(\sum_{m=1}^l x_m\Big)^2 
    &\leq \sum_{l=1}^L q_l l \sum_{m=1}^l x_m^2 \\
    &= \sum_{m=1}^L x_m^2 \sum_{l=m}^L q_l l \\
    &\leq \Big(\sum_{l=1}^L q_l l\Big) \sum_{m=1}^L x_m^2,
\end{align}
where we used $\sum_{l=m}^L q_l l \leq \sum_{l=1}^L q_l l$.

We evaluate the sum of $x_m^2$. From the definition of $x_m$,
\begin{align}
    x_m^2 = \bra{u}\Upsilon_{m-1}^\dag (\bI - \widtil{\Omega}_m)\Upsilon_{m-1}\ket{u}.
\end{align}
Since $\widtil{\Omega}_m^2 \leq \widtil{\Omega}_m$, we have 
\begin{align}
    x_m^2 
    &\leq \bra{u}\Upsilon_{m-1}^\dag (\bI - \widtil{\Omega}_m^2)\Upsilon_{m-1}\ket{u} \\
    &= \big\|\Upsilon_{m-1}\ket{u}\big\|^2 - \big\|\widtil{\Omega}_m\Upsilon_{m-1}\ket{u}\big\|^2 \\
    &= \big\|\Upsilon_{m-1}\ket{u}\big\|^2 - \big\|\Upsilon_m\ket{u}\big\|^2.
\end{align}
Summing this from $m=1$ to $L$ gives
\begin{align}
    \sum_{m=1}^L x_m^2 
    &\leq \big\|\Upsilon_0\ket{u}\big\|^2 - \big\|\Upsilon_L\ket{u}\big\|^2 \\
    &= 1 - \mu_\pi^2,
\end{align}
where the last equation comes from $\|\Upsilon_L\ket{u}\| = \mu_\pi$.
Thus, we obtain
\begin{align}
\label{inteq:76}
    \sum_{l=1}^L q_l\big\|(\bI - \widtil{\Omega}_l)^{1/2}\ket{u}\big\|^2 \leq \Big(\sum_{l=1}^L q_l l\Big)(1 - \mu_\pi^2).
\end{align}

Combining Eqs.~\eqref{inteq:72} and~\eqref{inteq:76}, we arrive at the second upper bound:
\begin{align}
\label{eq:upper bound final}
    \nu(\Omega) \leq \ell_q^\pi(1 - \mu_\pi^2).
\end{align}
Since $\pi$ is arbitrary, minimizing the right-hand side over all permutations yields the second upper bound in Eq.~\eqref{eq:lower bound nu general}.

Next, to derive the lower bounds on $\nu(\Omega)$, we evaluate the sum of $\bI-\widtil{\Omega}_l$ as
\begin{align}
\label{inteq:55}
    \sum_{l=1}^L q_l(\bI-\widtil{\Omega}_l) 
    &= q_L(\bI-\Upsilon_L^\dag)(\bI-\Upsilon_L) 
    + \sum_{l=1}^{L-1}(q_l-q_{l+1})(\bI-\Upsilon_l^\dag)(\bI-\Upsilon_l) \notag \\
    &\hspace{5pc} +\sum_{l=2}^L q_l(\bI-\Upsilon_{l-1}^\dag)(\bI-\widtil{\Omega}_l)(\bI-\Upsilon_{l-1})
    + \sum_{l=1}^L q_l\Upsilon_{l-1}^\dag(\bI-\widtil{\Omega}_l)\widtil{\Omega}_l\Upsilon_{l-1},
\end{align}
where we used the identity $(\bI-\widtil{\Omega}_l) = (\bI-\Upsilon_l^\dag)(\bI-\Upsilon_l) - (\bI-\Upsilon_{l-1}^\dag)(\bI-\Upsilon_{l-1}) + (\bI-\Upsilon_{l-1}^\dag)(\bI-\widtil{\Omega}_l)(\bI-\Upsilon_{l-1}) + \Upsilon_{l-1}^\dag(\bI-\widtil{\Omega}_l)\widtil{\Omega}_l\Upsilon_{l-1}$ with $\Upsilon_l = \widtil{\Omega}_l\Upsilon_{l-1}$, together with the summation by parts $\sum_l q_l(B_l - B_{l-1}) = q_L B_L + \sum_{l=1}^{L-1}(q_l-q_{l+1})B_l$ for $B_l \coloneqq (\bI-\Upsilon_l^\dag)(\bI-\Upsilon_l)$, $B_0 = 0$. Note that terms with $\bI-\Upsilon_0$ vanish since $\bI-\Upsilon_0 = 0$.
Substituting Eq.~\eqref{inteq:55} into Eq.~\eqref{inteq:37}, we obtain
\begin{align}
\label{inteq:57}
    \nu(\Omega) 
    &= q_L\big\|(\bI-\Upsilon_L)\ket{v}\big\|^2 
    + \sum_{l=1}^{L-1}(q_l-q_{l+1})\big\|(\bI-\Upsilon_l)\ket{v}\big\|^2 \notag \\
    &\hspace{5pc}+ \sum_{l=2}^L q_l\big\|(\bI-\widtil{\Omega}_l)^{1/2}(\bI-\Upsilon_{l-1})\ket{v}\big\|^2
    + \sum_{l=1}^L q_l\big\|\big(\widtil{\Omega}_l(\bI-\widtil{\Omega}_l)\big)^{1/2}\Upsilon_{l-1}\ket{v}\big\|^2,
\end{align}
where $\ket{v}$ is the eigenvector with unit norm corresponding to the largest eigenvalue of $\sum_{l=1}^L q_l\widtil{\Omega}_l$.
Up to this point, the steps have been exact equalities, and Eq.~\eqref{inteq:57} gives the precise expression. 
We now turn to deriving a lower bound as follows.

The first bound on the left-hand side in Eq.~\eqref{eq:lower bound nu general} follows straightforwardly. 
Since $\bI-\widtil{\Omega}_m \geq 0$ for all $m$, we have $\sum_m q_m(\bI-\widtil{\Omega}_m) \geq \big(\min_l q_l\big) \sum_m(\bI-\widtil{\Omega}_m)$, and hence
\begin{align}
    \nu(\Omega)
    &\geq \Big(\min_{\ket{w} \in \cV_1^\perp: \|\ket{w}\| = 1}\bra{w}\sum_{m=1}^L(\bI-\widtil{\Omega}_m)\ket{w}\Big)\min_l q_l \\
    &\geq \big\|(\bI-\Upsilon_L)\ket{v'}\big\|^2\min_l q_l \\
    &\geq \big(1 - \big\|\Upsilon_L\ket{v'}\big\|\big)^2\min_l q_l \\
    \label{inteq:56}
    &\geq (1-\mu_\pi)^2\min_l q_l,
\end{align}
where $\ket{v'}$ is the unit eigenvector for the largest eigenvalue of $\sum_l\widtil{\Omega}_l$, so that the minimum in the first line is attained at $\ket{w} = \ket{v'}$. The second inequality follows by applying the identity Eq.~\eqref{inteq:55} with all $q_l$ set to $1$ and dropping the last two nonnegative sums. By definition, $\mu_\pi = \|\Upsilon_L\|_\infty$. Since $\pi$ is arbitrary and $(1-x)^2$ is decreasing in $x \in [0, 1]$, choosing $\pi$ attaining $\mu = \min_\pi \mu_\pi$ yields the first lower bound in Eq.~\eqref{eq:lower bound nu general}.

The derivation of the second bound on the left-hand side of Eq.~\eqref{eq:lower bound nu general} proceeds as follows.
The right-hand side of Eq.~\eqref{inteq:37} can be rewritten as follows:
\begin{align}
    \Big\|\sum_l q_l\widtil{\Omega}_l\Big\|_\infty
    &= \big\|(\sqrt{q_1}\widtil{\Omega}_1^{1/2}, \ldots, \sqrt{q_L}\widtil{\Omega}_L^{1/2})(\sqrt{q_1}\widtil{\Omega}_1^{1/2}, \ldots, \sqrt{q_L}\widtil{\Omega}_L^{1/2})^\dag\big\|_\infty \\
    &= \big\|(\sqrt{q_1}\widtil{\Omega}_1^{1/2}, \ldots, \sqrt{q_L}\widtil{\Omega}_L^{1/2})^\dag(\sqrt{q_1}\widtil{\Omega}_1^{1/2}, \ldots, \sqrt{q_L}\widtil{\Omega}_L^{1/2})\big\|_\infty \\
    \label{inteq:31}
    &= \|C\|_\infty,
\end{align}
where $C \coloneqq \big(\sqrt{q_l q_m} \widtil{\Omega}_l^{1/2}\widtil{\Omega}_m^{1/2}\big)_{l, m}$ is an $Ld^2 \times Ld^2$ matrix whose $(l, m)$-block is given by $\sqrt{q_l q_m}\widtil{\Omega}_l^{1/2}\widtil{\Omega}_m^{1/2}$.
Noting that $C$ is positive semidefinite, we further evaluate $\|C\|_\infty$ as
\begin{align}
    \|C\|_\infty 
    &= \max_{x: \|x\| = 1} x^\dag C x \\
    &= \max_{x: \|x\| = 1} \sum_{l, m} \sqrt{q_lq_m}x^{(l) \dag} \widtil{\Omega}_l^{1/2}\widtil{\Omega}_m^{1/2} x^{(m)} \\
    &\leq \max_{x: \|x\| = 1} \sum_{l, m} \sqrt{q_lq_m}\|x^{(l)}\|\|\widtil{\Omega}_l^{1/2}\widtil{\Omega}_m^{1/2} x^{(m)}\| \\
    &\leq \max_{x: \|x\| = 1} \sum_{l, m} \sqrt{q_lq_m}\|x^{(l)}\|\|\widtil{\Omega}_l^{1/2}\widtil{\Omega}_m^{1/2}\|_\infty \|x^{(m)}\| \\
    &= \max_{y: \|y\| = 1, y_l \geq 0} y^\dag \widtil{C} y \\
    \label{inteq:32}
    &\leq \big\|\widtil{C}\big\|_\infty,
\end{align}
where the vector $x^{(l)}$ denotes a $d^2$-dimensional block of the vector $x = (x^{(1) \top}, \ldots, x^{(L) \top})^\top$, and we define an $L$-dimensional vector $y = (y_l)_l^\top = (\|x^{(1)}\|, \ldots, \|x^{(L)}\|)^\top$ and an $L\times L$ matrix $\widtil{C} \coloneqq \big(\sqrt{q_l q_m}\big\|\widtil{\Omega}_l^{1/2}\widtil{\Omega}_m^{1/2}\big\|_\infty\big)_{l, m}$.
In the first inequality, we used the Cauchy--Schwarz inequality, and the second and last inequalities follow from the definition of the operator norm. 
Substituting Eqs.~\eqref{inteq:37} and~\eqref{inteq:31} into Eq.~\eqref{inteq:32}, we obtain
\begin{align}
    \label{inteq:33}
    \lambda_2(\Omega) \leq \big\|\widtil{C}\big\|_\infty.
\end{align}

So far, the calculations are common to the derivation of both terms in Eq.~\eqref{eq:def of S}. 
We now address its first term by applying a particular form of the well-known Gershgorin circle theorem~\cite{Gersgorin1931circletheorem}; see also, e.g., Refs.~\cite{Varga2004Gersgorintheorem, bhatia2013matrix}.

\begin{lemma}[Gershgorin circle theorem~\cite{Gersgorin1931circletheorem}]
\label{lem:Gershgorin thm}
For an $n \times n$ matrix $G = (g_{ij})_{i,j}$, every eigenvalue of $G$ lies in $\bigcup_{i=1}^n \{z \in \bC \mid |z - g_{ii}| \leq R_i\}$, where $R_i = \sum_{j: j\neq i}|g_{ij}|$.
\end{lemma}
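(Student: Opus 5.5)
The statement to prove is Lemma~\ref{lem:Gershgorin thm}, the Gershgorin circle theorem. The plan is the classical argument: pick an eigenvector and look at its largest-modulus component.

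Let $\lambda \in \bC$ be an eigenvalue of $G$ and $x = (x_1,\dots,x_n)^\top \neq 0$ a corresponding eigenvector, so $Gx = \lambda x$. Choose an index $i$ with $|x_i| = \max_k |x_k|$. Since $x \neq 0$, we have $|x_i| > 0$.

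The $i$-th row of the eigenvalue equation reads $\sum_j g_{ij} x_j = \lambda x_i$. Moving the diagonal term to the right gives
\begin{align}
    (\lambda - g_{ii}) x_i = \sum_{j: j \neq i} g_{ij} x_j .
\end{align}
Take moduli and apply the triangle inequality. Then use $|x_j| \leq |x_i|$ for every $j$:
\begin{align}
    |\lambda - g_{ii}|\,|x_i| \leq \sum_{j: j\neq i} |g_{ij}|\,|x_j| \leq R_i\,|x_i| .
\end{align}
Dividing by $|x_i| > 0$ yields $|\lambda - g_{ii}| \leq R_i$. So $\lambda$ lies in the $i$-th disc, and hence in the union of all discs.

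There is no real obstacle. The only point needing care is to divide by the maximal component, whose modulus is guaranteed to be nonzero, rather than by an arbitrary component, which might vanish.

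For the use that follows, $G = \widtil{C}$ is real symmetric with nonnegative entries $g_{ll} = q_l \gamma_l$ and $g_{lm} = \sqrt{q_l q_m}\, c_{lm}$. Its eigenvalues are therefore real. Since $\widtil{C}$ is Hermitian, $\|\widtil{C}\|_\infty$ equals the largest absolute eigenvalue. By the lemma, that eigenvalue is at most $\max_l \big(q_l\gamma_l + \sum_{m\neq l}\sqrt{q_lq_m}\,c_{lm}\big)$. This is the first term in Eq.~\eqref{eq:def of S}.
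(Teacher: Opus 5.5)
Your proof is correct: it is the standard maximal-component argument, and the paper gives no proof of this lemma at all. It simply cites Gershgorin's classical result and then applies it to $\widetilde{C}$ in Eq.~\eqref{inteq:34}. In your application, passing from the disc condition $|\lambda - g_{ll}| \leq R_l$ to a bound on $|\lambda|$, and hence on $\|\widetilde{C}\|_\infty$, uses $g_{ll} = q_l\gamma_l \geq 0$ via $|\lambda| \leq |\lambda - g_{ll}| + g_{ll}$; you should state this step explicitly.
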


We apply Lemma~\ref{lem:Gershgorin thm} to $\widtil{C}$.
The diagonal entries of $\widtil{C}$ are $q_l\|\widtil{\Omega}_l\|_\infty = q_l\gamma_l$, so we have
\begin{align}
\label{inteq:34}
    \big\|\widtil{C}\big\|_\infty 
    \leq \max_l\Big(q_l\gamma_l + \sum_{m: m\neq l} \sqrt{q_lq_m}c_{lm}\Big),
\end{align}
where $c_{lm} = \big\|\widtil{\Omega}_l^{1/2}\widtil{\Omega}_m^{1/2}\big\|_\infty$.
By Eqs.~\eqref{inteq:33} and~\eqref{inteq:34}, we obtain 
\begin{align}
    \label{eq:first bound}
    \nu(\Omega)
    \geq 1 - \max_l\Big(q_l\gamma_l + \sum_{m: m\neq l} \sqrt{q_lq_m}c_{lm}\Big),
\end{align}
which completes the derivation of the first term in Eq.~\eqref{eq:def of S}.

We next address the second term in Eq.~\eqref{eq:def of S}.
Noting that for any matrix $A = (a_{ij})_{i, j}$, $\|A\|_\infty \leq \|A\|_2$ and $\|A\|_2 = \sqrt{\sum_{i, j}|a_{ij}|^2}$ hold, we have
\begin{align}
    \big\|\widtil{C}\big\|_\infty
    &\leq \big\|\widtil{C}\big\|_2 \\
    &= \sqrt{\sum_{l, m} q_lq_m\big\|\widtil{\Omega}_l^{1/2}\widtil{\Omega}_m^{1/2}\big\|_\infty^2} \\
    \label{inteq:36}
    &= \sqrt{\sum_l q_l^2\gamma_l^2 + \sum_{l, m: l\neq m} q_lq_m c_{lm}^2}.
\end{align}
Combining Eqs.~\eqref{inteq:33} and~\eqref{inteq:36}, we obtain
\begin{align}
\label{eq:second bound}
    \nu(\Omega)
    \geq 1 - \sqrt{\sum_l q_l^2\gamma_l^2 + \sum_{l, m: l\neq m} q_lq_m c_{lm}^2},
\end{align}
which gives the second term in Eq.~\eqref{eq:def of S}.

Taken together, Eqs.~\eqref{eq:first bound} and~\eqref{eq:second bound} show that
$\nu(\Omega) \geq 1 - S_q$, where $S_q$ is given by
\begin{align}
\label{inteq:38}
    \min\bigg\{\max_l\Big(q_l\gamma_l + \sum_{m: m \neq l} \sqrt{q_lq_m}c_{lm}\Big), \sqrt{\sum_l q_l^2\gamma_l^2 + \sum_{l, m: l \neq m} q_lq_m c_{lm}^2}\bigg\}.
\end{align}
This provides the second bound on the left-hand side of Eq.~\eqref{eq:lower bound nu general}.

The lower bound on the spectral gap $\nu(\Omega)$ in Eq.~\eqref{eq:lower bound nu general}, therefore, follows from the combination of Eqs.~\eqref{inteq:56} and~\eqref{inteq:38}, completing the proof of Proposition~\ref{prop:little general lower bound spectral gap: global}.

\end{proof}


\begin{figure}
    \centering
    \includegraphics[width=90mm]{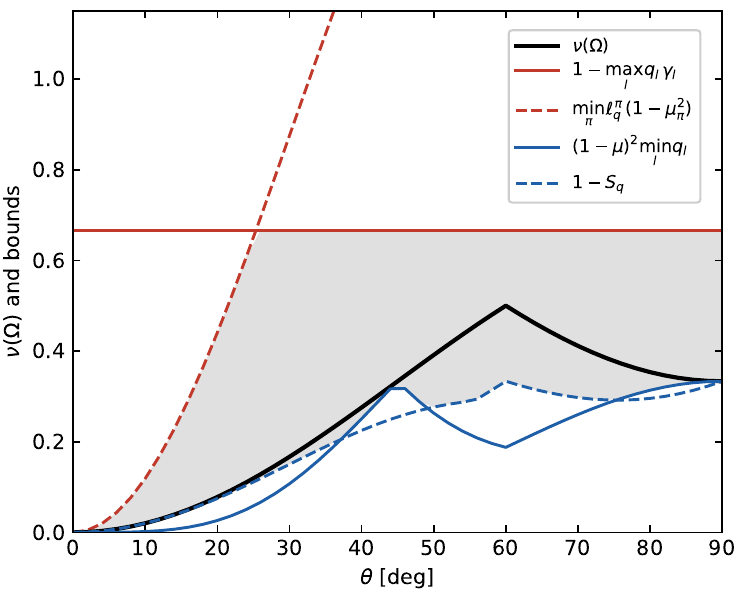}
    \caption{The exact spectral gap $\nu(\Omega)$ (black) and the bounds of Eq.~\eqref{eq:lower bound nu general} (red and blue) for the three bases $\{E_1, E_1^F(\theta), E_2^F(\theta)\}$ with $d=5$ and $q_l = 1/3$ are plotted against $\theta$. The shaded region is the interval certified by the bounds. 
    The two $\theta$-Fourier bases both coincide with the computational basis $E_1$ at $\theta = 0$, where the underlying Hilbert space is reducible, and every bound involving $\mu_\pi$ vanishes.
    At $\theta = 90^\circ$, the $\theta$-Fourier bases coincide with each other, and only two distinct bases remain, namely $E_1$ with weight $1/3$ and the Fourier basis with weight $2/3$.}
    \label{fig:spectral gap bounds}
\end{figure}


\subsection{Numerical evaluation of the spectral gap bounds}
\label{sec:numerical evaluation of spectral gap bounds}

To examine how tight the bounds in \Cref{prop:little general lower bound spectral gap: global} are, we evaluate them numerically for a concrete choice of bases.
We focus here on a parameterized family of bases built from the quantum Fourier transform~\cite{nielsen2010quantum}.

Let $F_d$ denote the $d$-dimensional quantum Fourier transform.
It admits an eigenvalue decomposition of the form $F_d = U \mathrm{diag}(e^{i\phi_1}, \ldots, e^{i\phi_d}) U^\dag$, for some unitary $U$, where $\phi_j$'s take values in $\{0, \pm\pi/2, \pi\}$.
For a parameter $\theta \in \bR$, we define the quantum $\theta$-Fourier transform as $F_d^\theta = e^{i\f{2}{\pi}\theta G}$, where $G = U\mathrm{diag}(\phi_1, \ldots, \phi_d)U^\dag$. By construction, $F_d^\theta$ coincides with $F_d$ at $\theta = \pi/2$. Using this transform, we define the $\theta$-Fourier basis $\{\ket{f_j^\theta}\}_{j=1}^d$ by $\ket{f_j^\theta} = F_d^\theta\ket{j}$, where $\{\ket{j}\}_{j=1}^d$ is the computational basis.
When $\theta = \pi/2$, this basis is mutually unbiased with respect to $\{\ket{j}\}_j$.
From this basis, we form the $\theta$-Fourier basis pair
\begin{align}
\label{eq:theta fourier basis pair}
    E_1^F(\theta) = \{\ket{f_j^\theta}\}_{j=1}^d, \ \ \
    E_2^F(\theta) = \{\ket{f_j^{-\theta}}\}_{j=1}^d.
\end{align}

Writing $E_1 = \{\ket{j}\}_{j=1}^d$ for the computational basis, we evaluate Eq.~\eqref{eq:lower bound nu general} for $\{E_1, E_1^F(\theta), E_2^F(\theta)\}$ ($L=3$) with $d=5$ and $q_l = 1/3$ for $l = 1, 2, 3$, plotting the result as a function of $\theta$ in Fig.~\ref{fig:spectral gap bounds}.
Note that, for the operator $\Omega = \sum_l q_l \Omega_l$, each $\Omega_l$ reduces to the projector $\Pi_l = \sum_{j=1}^d \ketbra{e_{j|l}}{e_{j|l}} \otimes \ketbra{\bar{e}_{j|l}}{\bar{e}_{j|l}}$, and $\gamma_l = 1$ holds for every $l$, since the measurements are onto orthonormal bases.

Neither of the two lower bounds consistently gives a tighter bound than the other: they cross at $\theta \approx 37^\circ$, $49^\circ$, and $75^\circ$.
The two upper bounds show a similar behavior: below $\theta \approx 25^\circ$, the tighter upper bound is
$\min_\pi \ell_q^\pi(1-\mu_\pi^2)$ whereas above $25^\circ$, it is $1 - \max_l q_l\gamma_l = 2/3$, which is independent of $\theta$ since $\gamma_l = 1$ in this case.
Thus, both the maximum and the minimum in Eq.~\eqref{eq:lower bound nu general} are relevant in different regimes.

The role of $\mu$ is most pronounced as $\theta \to 0$, where the three bases become identical. There, $\mu \to 1$ drives the two $\mu$-dependent bounds to zero together with $\nu(\Omega)$, quantitatively reflecting the interpretation of $\mu$ as a measure of irreducibility. At the other endpoint, $\theta=90^\circ$, both lower bounds attain $\nu(\Omega)=1/3$. The upper bound $1-\max_l q_l\gamma_l$ remains at $2/3$ only because the three weights are fixed at $q_l=1/3$; if the two coinciding $\theta$-Fourier bases are merged into a single basis with weight $2/3$, the bound also gives $1/3$.


\section{Derivations of \Cref{thm:vanishing error equivalence,prop:L_1_trivial}}
\label{sec:vanishing error equivalence}

We derive \Cref{thm:vanishing error equivalence,prop:L_1_trivial}, which characterize the equivalence between vanishing errors in one-sided LO transformation to the MES and in state discrimination.

\begin{restated}{corollary}{thm:vanishing error equivalence}
Suppose that the principal eigenspace $\cV_W$ of $W^\sA$ contains exactly one $\bmE$-irreducible subspace $\cK_\star$.
Then, the following are equivalent:
\begin{itemize}
    \item[(i)] $\|W^\sA\|_\infty = 1$.
    \item[(ii)] For every $l=1,\dots,L$, $\big\{E_{j|l}^\sA\big|_{\cK_\star}\big\}_j$ forms a PVM on $\cK_\star$.
    \item[(iii)] For every system $\sB$ and every state $\xi^{\sA\sB}$,
    \begin{align}
    \tag{\ref*{eq:vanish equivalence equation}}
    \label{eq:vanish equivalence equation SI}
    \varepsilon_{\rm opt}^{\Phi_\star}(\xi) = 0  \,\Longrightarrow\,
    \rP_{\rm opt}(\scE_{\xi,E_l}) = 1 \ \text{for all } l = 1, \dots, L.
    \end{align}
\end{itemize}
Here, $\big|_{\cK_\star}$ in (ii) denotes the restriction to $\cK_\star$, $\ket{\Phi_\star}^{\sA\hat{\sB}}$ is the MES on $\cK_\star \otimes \bar{\cK}_\star$, and $\varepsilon_{\rm opt}^{\Phi_\star}(\xi) \coloneqq \min_{\cP}\varepsilon\big(\cP(\xi),\Phi_\star\big)$, where the minimization is taken over all channels $\cP^{\sB\to\hat{\sB}}$.
\end{restated}

\begin{proof}[Proof of \Cref{thm:vanishing error equivalence}]

Since the principal eigenspace $\cV_W$ of $W$ contains exactly one $\bmE$-irreducible subspace $\cK_\star$, we have $\|\Omega\|_\infty = \|W\|_\infty$ and $\nu(\Omega) \neq 0$, so the principal eigenspace $\cV_\Omega$ of $\Omega^{\sA\hat{\sB}}$ is one-dimensional and is spanned by $\ket{\Phi_\star}^{\sA\hat{\sB}}$. This is due to \Cref{prop:qualitative equivalent condition}.
We write $Q_{j|l} \coloneqq E_{j|l}^\sA\big|_{\cK_\star}$.

\medskip

\noindent\textbf{(i) $\Leftrightarrow$ (ii).}
Let $\Pi_{\cK_\star}$ be the projector onto $\cK_\star$. 
The $\bmE$-invariance of $\cK_\star$, i.e., $E_{j|l}^\sA\cK_\star \subseteq \cK_\star$, gives $\Pi_{\cK_\star} E_{j|l}^\sA \Pi_{\cK_\star} = E_{j|l}^\sA \Pi_{\cK_\star}$, and taking the Hermitian conjugate gives $\Pi_{\cK_\star} E_{j|l}^\sA \Pi_{\cK_\star} = \Pi_{\cK_\star}E_{j|l}^\sA$; hence $[\Pi_{\cK_\star}, E_{j|l}^\sA] = 0$.
From the facts that $\sum_j E_{j|l}^\sA = \bI$ and $\sum_l q_l = 1$, we have $\sum_{j, l} q_l Q_{j|l} = \big(\sum_l q_l\sum_j E_{j|l}^\sA\big)\big|_{\cK_\star} = \bI_{\cK_\star}$.
Similarly, $\sum_{j, l} q_l Q_{j|l}^2 = \sum_{j, l} q_l (E_{j|l}^\sA)^2 \big|_{\cK_\star} = W \big|_{\cK_\star} = \|W\|_\infty \bI_{\cK_\star}$, where the last equation follows from $\cK_\star \subseteq \cV_W$.
By subtracting them, we obtain 
\begin{align}
    \sum_{j, l} q_l\big(Q_{j|l} - Q_{j|l}^2\big)
    = (1 - \|W\|_\infty) \bI_{\cK_\star}.
\end{align}
Each summand $Q_{j|l} - Q_{j|l}^2$ is positive semidefinite since $0 \leq Q_{j|l} \leq \bI_{\cK_\star}$.
As $q_l > 0$, the left-hand side vanishes if and only if $Q_{j|l}^2 = Q_{j|l}$ for all $j$ and $l$.
Together with $\sum_j Q_{j|l} = \bI_{\cK_\star}$, obtained by restricting $\sum_j E_{j|l}^\sA = \bI$ to $\cK_\star$, this implies that $\{Q_{j|l}\}_j$ forms a PVM for each $l$, i.e., condition~(ii).
The right-hand side vanishes if and only if $\|W\|_\infty = 1$, which is condition~(i).
Hence, (i) and (ii) are equivalent.

\medskip

\noindent\textbf{(i) $\Leftrightarrow$ (iii).}
We first show (i) $\Rightarrow$ (iii).
As $\cV_\Omega = \rspan\big\{\ket{\Phi_\star}^{\sA\hat{\sB}}\big\}$, the minimization over $\ket{\phi} \in \cV_\Omega$ in Eq.~\eqref{eq:converse relation} in \Cref{prop:converse relation} is trivial, making the second term on its right-hand side $\varepsilon_{\rm opt}^{\Phi_\star}(\xi)$. Under condition~(i), i.e., $\|\Omega\|_\infty = \|W\|_\infty = 1$, if $\varepsilon_{\rm opt}^{\Phi_\star}(\xi) = 0$, Proposition~\ref{prop:converse relation} yields $\big\lag \rP_{\rm opt}(\scE_{\xi, E})\big\rag \geq \|\Omega\|_\infty\big(1 - \varepsilon_{\rm opt}^{\Phi_\star}(\xi)\big) = 1$. 
Since $\big\lag \rP_{\rm opt}(\scE_{\xi, E})\big\rag = \sum_l q_l \rP_{\rm opt}(\scE_{\xi, E_l}) \leq \sum_l q_l = 1$, equality forces $\rP_{\rm opt}(\scE_{\xi, E_l}) = 1$ for all $l$.

To show (i) $\Leftarrow$ (iii), we set the initial state $\xi^{\sA\sB}$ to $\ketbra{\Phi_\star}{\Phi_\star}^{\sA\sB}$.
For this state, $\varepsilon_{\rm opt}^{\Phi_\star}(\xi) = 0$, which is attained by the identity map. By condition~(iii), this implies $\rP_{\rm opt}(\scE_{\xi, E_l}) = 1$ for all $l$, so $\big\lag \rP_{\rm opt}^2(\scE_{\xi, E})\big\rag = \sum_l q_l = 1$. 
Together with the fact that the left-hand side of Eq.~\eqref{eq:equation of main thm1} in \Cref{thm:1} is nonnegative, we obtain $1 = \big\lag \rP_{\rm opt}^2(\scE_{\xi, E})\big\rag \leq \|\Omega\|_\infty = \|W\|_\infty \leq 1$. Therefore, $\|W\|_\infty = 1$.

\end{proof}

Next, we derive \Cref{prop:L_1_trivial}.

\begin{restated}{corollary}{prop:L_1_trivial}
Suppose that the principal eigenspace $\cV_W$ contains exactly one $\bmE$-irreducible subspace $\cK_\star$.
If $\|W\|_\infty = 1$ and $L = 1$, then $d_{\cK_\star} = 1$; that is, $\ket{\Phi_\star}^{\sA\hat{\sB}}$ is a product state. Consequently, for $d_{\cK_\star} \geq 2$, the implication in~\cref{eq:vanish equivalence equation SI} requires $L \geq 2$.
\end{restated}

\begin{proof}[Proof of \Cref{prop:L_1_trivial}]

Since $\|W\|_\infty = 1$, the equivalence (i) $\Leftrightarrow$ (ii) in \Cref{thm:vanishing error equivalence} implies that, for $L=1$, a single POVM forms a PVM $\big\{Q_j \coloneqq E_{j|1}^\sA\big|_{\cK_\star}\big\}_j$ on $\cK_\star$. 
Let $r_j$ be the rank of $Q_j$. Since $\{Q_j\}_j$ is a PVM, $\sum_j r_j = d_{\cK_\star}$ holds. 
Restricting $\Omega^{\sA\hat{\sB}}
= \Omega_1^{\sA\hat{\sB}}$ to $\cK_\star \otimes \bar{\cK}_\star$ gives $\Omega^{\sA\hat{\sB}}\big|_{\cK_\star \otimes \bar{\cK}_\star} = \sum_j Q_j \otimes \bar{Q}_j$, which is an orthogonal projector of rank $\sum_j r_j^2$, as $Q_j Q_k = \delta_{jk} Q_j$, where $\delta_{jk} = 1$ if $j=k$ and $0$ otherwise.
Since all its nonzero eigenvalues equal $1$, the eigenspace $\cV_1$ of $\Omega^{\sA\hat{\sB}}$ corresponding to $\|\Omega\|_\infty = 1$ satisfies $d_{\cV_1} \geq \sum_j r_j^2$, where $d_{\cV_1}$ is the dimension of $\cV_1$.
Note that the inequality accounts for the possibility of additional eigenvectors with eigenvalue $1$ outside of the subspace $\cK_\star \otimes \bar{\cK}_\star$.
However, the condition $\nu(\Omega) \neq 0$ yields $d_{\cV_1} = 1$, so we obtain $\sum_j r_j^2 = 1$. Together with $\sum_j r_j = d_{\cK_\star}$, this implies $d_{\cK_\star} = 1$.

\end{proof}


\section{Details of the send-and-measure protocol}
\label{sec:Details of the send-and-measure protocol}

We provide details of the send-and-measure protocol discussed in Sec.~\ref{SSS:demonstration_ent_distill}. 
In Sec.~\ref{sec:calculation of bounds in a specific case}, the spectral gaps and the error bounds for the two-basis examples in the main text are derived.
In Sec.~\ref{sec:NN-hadamard simulation}, we describe the setup for the numerical study shown in Fig.~\ref{fig:numerical NNH bases simulation}, and in Sec.~\ref{sec:finding basis pairs better than MUBs}, we numerically search for basis pairs that can outperform MUBs.

\subsection{Spectral gaps and error bounds for the two-basis examples}
\label{sec:calculation of bounds in a specific case}

For a given pair of orthonormal bases $E_1 = \{ \ket{e_l} \}_{l=1}^d$ and $E_2 = \{ \ket{f_l} \}_{l=1}^d$, we here investigate the spectral properties of the operator
\begin{equation}
    \Omega_q(E_1, E_2) = q \sum_{l=1}^d \ketbra{e_l}{e_l} \otimes \ketbra{\bar{e}_l}{\bar{e}_l}
    +
    (1- q)
    \sum_{l=1}^d \ketbra{f_l}{f_l} \otimes \ketbra{\bar{f}_l}{\bar{f}_l}, \label{Eq:def_Omega_q}
\end{equation}
where $q \in (0,1)$. When it is clear, we omit $(E_1, E_2)$.

We here particularly consider two choices of basis pairs:
\begin{enumerate}
    \item A pair of MUBs.
    \item A pair of a fixed energy eigenbasis $\{ \ket{j} \}_{j=1}^d$ and the one generated from the energy eigenbasis by the neighboring-level Hadamard (NH) rotations.
\end{enumerate}

In the latter case, depending on the order of applications of the Hadamard rotations, we can define multiple bases. Here, we use the basis that is generated by applying Hadamard rotations to all odd-even nearest-neighbor energy levels $(2j-1, 2j)$, where $j=1,\dots, \lfloor d/2 \rfloor$, and then by applying the rotations to all even-odd nearest-neighboring energy levels $(2j-2, 2j-1)$, where $j=2,\dots, \lceil d/2 \rceil$. Denoting by $H_o$ and $H_e$ the former and the latter ones, the basis is given by $E_2 = \{ H_e H_o \ket{j} \}_{j=1}^d$. More general cases are treated in the next section.

The main statement in this section is the following.
\begin{proposition}\label{Prop:EV_Omega}
The non-zero eigenvalues of $\Omega_q$, defined in~\cref{Eq:def_Omega_q}, are the following.
\begin{enumerate}
    \item If $E_1$ and $E_2$ are MUBs, $\Omega_q$ has eigenvalue $1$ with multiplicity $1$, and $1-q$ and $q$ with multiplicity $d-1$.
    \item If $E_1 = \{\ket{j}\}_{j=1}^d$ and $E_2 = \{H_eH_o\ket{j}\}_{j=1}^d$, $\Omega_q$ has eigenvalue $1$ with multiplicity $1$, eigenvalues
    \begin{equation}
        \frac{1}{2}\left( 1 \pm \sqrt{(1- 2 q)^2 + 4q(1-q) \cos^2\frac{k \pi}{d}} \right),
    \end{equation}
    for $k = 1, \dots, \lfloor \frac{d-1}{2} \rfloor$, with multiplicity $1$, and $1-q$ and $q$ with multiplicity $\lfloor \frac{d}{2}\rfloor$.
\end{enumerate}
When the listed eigenvalues coincide, their multiplicities are added.
\end{proposition}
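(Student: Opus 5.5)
The plan is to treat $\Omega_q$ as a convex combination of two orthogonal projectors and read off its spectrum from Jordan's two-projector (canonical-angle) decomposition.

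\textbf{Reduction to two projectors.} Write $P=\sum_l\ketbra{e_l}{e_l}\otimes\ketbra{\bar e_l}{\bar e_l}$ and $Q=\sum_l\ketbra{f_l}{f_l}\otimes\ketbra{\bar f_l}{\bar f_l}$, both rank-$d$ projectors, so that $\Omega_q=qP+(1-q)Q$. By Jordan's lemma, the space splits into the following mutually orthogonal invariant pieces:
\begin{itemize}
\item the common ranges and kernels $\rrange P\cap\rrange Q$, $\rrange P\cap\ker Q$, $\ker P\cap\rrange Q$, $\ker P\cap\ker Q$;
\item two-dimensional blocks on which $P,Q$ are rank-one projectors at an angle $\theta\in(0,\pi/2)$.
\end{itemize}
On these pieces $\Omega_q$ acts respectively as $1$, $q$, $1-q$, $0$. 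On a two-dimensional block with angle $\theta$, a $2\times2$ computation (trace $1$, determinant $q(1-q)\sin^2\theta$) gives the pair
\[
\tfrac12\bigl(1\pm\sqrt{(1-2q)^2+4q(1-q)\cos^2\theta}\bigr).
\]
The values $\cos^2\theta$ are exactly the eigenvalues of $PQP$ on $\rrange P$. Eigenvalue $1$ of $PQP|_{\rrange P}$ corresponds to $\rrange P\cap\rrange Q$, and eigenvalue $0$ corresponds to $\rrange P\cap\ker Q$. Since $\mathrm{rank}P=\mathrm{rank}Q=d$, the dimension of $\ker P\cap\rrange Q$ equals that of $\rrange P\cap\ker Q$, which produces the matching multiplicities of $q$ and $1-q$.

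\textbf{Computing the spectrum of $PQP$ on $\rrange P$.} Under $\rvec^{-1}$, $\rrange P$ is the set of operators diagonal in $E_1$, and $P,Q$ become the dephasings $\Delta_1,\Delta_2$. Hence $PQP|_{\rrange P}$ is the $d\times d$ matrix $M=BB^\top$ with $B_{jk}=|\langle e_j|f_k\rangle|^2$, a doubly stochastic matrix. The all-ones vector gives the eigenvalue $1$ (the MES). Everything therefore reduces to the spectrum of $M$.

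\textbf{MUB case.} Here $B=J/d$, so $M=J/d$. Its spectrum is $1$ once and $0$ with multiplicity $d-1$. This gives eigenvalue $1$ once, $q$ with multiplicity $d-1$, and $1-q$ with multiplicity $d-1$, and no two-dimensional blocks, which is the first claim.

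\textbf{NH case.} Since $H_e$ and $H_o$ act on disjoint pairs of levels, $B=B_eB_o$ factorizes, where $B_o$ (resp.\ $B_e$) averages over the odd-even (resp.\ even-odd) pairs and is the identity on unpaired levels. Both are symmetric idempotents, so $M=B_eB_oB_e$.

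\begin{itemize}
\item Rather than computing $M$ directly, I would apply the same two-projector reasoning to the real projections $B_o,B_e$. The eigenvalues $c$ of $B_eB_oB_e$ in $(0,1]$ are $\cos^2\alpha$, where the $\cos\alpha$ are the eigenvalues of $B_o+B_e-I$ in $(0,1]$.
\item The matrix $B_o+B_e-I$ is the transition matrix of the simple random walk on the path $1,\dots,d$ with reflecting ends: off-diagonal entries $\tfrac12$, diagonal $\tfrac12$ at the two endpoints and $0$ in the interior. Its spectrum is $\cos(k\pi/d)$, $k=0,\dots,d-1$.
\item Taking squares, $k$ and $d-k$ give the same value. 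So $M$ has eigenvalue $1$ once ($k=0$), eigenvalues $\cos^2(k\pi/d)\in(0,1)$ for $k=1,\dots,\lfloor (d-1)/2\rfloor$, and $0$ with the remaining multiplicity $d-1-\lfloor(d-1)/2\rfloor=\lfloor d/2\rfloor$.
\end{itemize}

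Feeding this into the block structure gives the second claim. The coincidence rule (multiplicities add when listed values agree) is automatic, because the decomposition is orthogonal.

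\textbf{Main obstacle.} I expect the hardest step to be the identification of the spectrum of $B_eB_oB_e$ via the reflecting path walk. This needs care with parity of $d$ (which endpoints are unpaired in $B_o$ versus $B_e$) and with the correct count of the eigenvalue $0$ of $M$, i.e.\ making sure the $\cos\alpha\le0$ part of the path spectrum is matched to kernel directions and not double counted. If the two-projector shortcut proves messy, I would fall back on directly diagonalizing $B_o+B_e-I$ with the cosine eigenvectors $v_k(j)=\cos\bigl((2j-1)k\pi/(2d)\bigr)$ and checking $B_eB_oB_e v=\cos^2(k\pi/d)v$ on the appropriate combinations.
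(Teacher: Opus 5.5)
Your proposal is correct. Its first step matches the paper's: Lemma~\ref{Lemma:projectors_eigenvalues} is a constructive version of the two-projector (Jordan) decomposition, and the overlap matrix $G_{jk}=|\langle e_j|f_k\rangle|^2$ used there is your $B$. So the nonzero eigenvalues of your $M=BB^\top$ are exactly the squared singular values the paper computes.

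The genuine difference is in the neighboring-level Hadamard computation.
The paper writes $G=A_eA_o$ and studies $G^\dagger G=A_oA_eA_o$ through the doubled ansatz $(x_1,x_1,x_2,x_2,\dots)$, i.e.\ it compresses onto $\rrange A_o$. This gives a tridiagonal matrix of size $\lceil d/2\rceil$ whose last row depends on the parity of $d$. The cosine eigenvectors $x_j=\cos\bigl((j-\tfrac12)\theta\bigr)$ are then checked separately for even and odd $d$, and completeness is argued by counting.

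You instead apply the two-projector decomposition a second time, to $B_o$ and $B_e$, replacing $B_eB_oB_e$ by $B_o+B_e-I$. That matrix is the same reflecting-path (DCT-II) matrix for both parities, with the classical simple spectrum $\cos(k\pi/d)$, $k=0,\dots,d-1$. The multiplicity $\lfloor d/2\rfloor$ of the eigenvalue $0$ of $M$ then follows from counting the non-positive cosines, with no separate kernel analysis. Your route is more uniform and structural; the paper's is a fully explicit, self-contained verification.

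Two wording points.
\begin{itemize}
\item The precise reason $B=B_eB_o$ holds is that every even--odd pair meets every odd--even pair in at most one level. Hence $\langle j|H_eH_o|k\rangle$ has at most one nonzero term, and its modulus squared factorizes. Disjointness of the pairs within each layer alone would not suffice; for instance, two Hadamards on the same pair compose to the identity.
\item The remark that ``$k$ and $d-k$ give the same value'' should be replaced by the rule you actually use. Only the positive eigenvalues $\cos(k\pi/d)$ with $k<d/2$ are kept. The negative ones are the $-\cos\alpha$ partners inside the same Jordan blocks and must not be counted again.
\end{itemize}
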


To show~\cref{Prop:EV_Omega}, the following is a useful lemma, reducing the spectral properties of $\Omega_q$ to singular values of a matrix $G$.

\begin{lemma} \label{Lemma:projectors_eigenvalues}
Let $\{\ket{e_a}\}_{a=1}^r$ and $\{\ket{f_b}\}_{b=1}^s$ be orthonormal sets of vectors, and $\Pi_1 = \sum_{a=1}^r \ketbra{e_a}{e_a}$ and $\Pi_2 = \sum_{b=1}^s \ketbra{f_b}{f_b}$ be corresponding projectors, respectively. Let $G$ be an $r \times s$ matrix with elements $G_{ab} = \braket{e_a}{f_b}$, and $\sigma_k$ be the singular values of $G$. For $q \in (0,1)$ and each $\sigma_k<1$,
\begin{equation}
    \frac{1}{2}\left( 1 \pm \sqrt{(1- 2 q)^2 + 4q(1-q) \sigma_k^2} \right) 
\end{equation}
are eigenvalues of $q \Pi_1 + (1-q)\Pi_2$.
Each singular value $\sigma_k=1$ contributes one eigenvector with eigenvalue $1$.
\end{lemma}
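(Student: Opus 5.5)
This is the classical two-projector (Jordan/Halmos) picture, so the plan is to build two-dimensional invariant subspaces of $M_q \coloneqq q\Pi_1 + (1-q)\Pi_2$ out of singular vector pairs of $G$.

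\textbf{Step 1: singular vector pairs.} Take a singular value decomposition $G = \sum_k \sigma_k\, x_k y_k^\dag$, with orthonormal $x_k \in \bC^r$ and $y_k \in \bC^s$. Define
\begin{align}
\ket{a_k} = \sum_a (x_k)_a \ket{e_a}, \qquad \ket{b_k} = \sum_b (y_k)_b \ket{f_b}.
\end{align}
These are orthonormal families in $\rrange(\Pi_1)$ and $\rrange(\Pi_2)$, respectively. A short computation with $G_{ab} = \braket{e_a}{f_b}$ gives
\begin{align}
\braket{a_k}{b_{k'}} = \delta_{kk'}\sigma_k, \qquad \Pi_1\ket{b_k} = \sigma_k\ket{a_k}, \qquad \Pi_2\ket{a_k} = \sigma_k\ket{b_k}.
\end{align}
Without loss of generality the phases are chosen so that each $\sigma_k$ is real and nonnegative.

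\textbf{Step 2: the case $\sigma_k < 1$.} Here $\ket{a_k}$ and $\ket{b_k}$ are linearly independent, so $\cU_k \coloneqq \rspan\{\ket{a_k},\ket{b_k}\}$ is two-dimensional. By the relations in Step 1, $\cU_k$ is invariant under both $\Pi_1$ and $\Pi_2$, hence under $M_q$.

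On $\cU_k$, the restrictions of $\Pi_1$ and $\Pi_2$ are rank-one projectors onto unit vectors with overlap $\sigma_k$. I would write $M_q|_{\cU_k}$ as a $2\times 2$ Hermitian matrix in an orthonormal basis of $\cU_k$ and compute its trace and determinant:
\begin{align}
\tr\big[M_q|_{\cU_k}\big] = q + (1-q) = 1, \qquad \det\big[M_q|_{\cU_k}\big] = q(1-q)\big(1-\sigma_k^2\big).
\end{align}
The determinant follows, for example, from the Gram determinant of $\sqrt{q}\ket{a_k}$ and $\sqrt{1-q}\ket{b_k}$. The roots of $\lambda^2 - \lambda + q(1-q)(1-\sigma_k^2) = 0$ are
\begin{align}
\lambda_\pm = \tfrac{1}{2}\Big(1 \pm \sqrt{1 - 4q(1-q)(1-\sigma_k^2)}\Big),
\end{align}
and $1 - 4q(1-q) = (1-2q)^2$, so this matches the claimed form. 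Both $\lambda_\pm$ are genuine eigenvalues of $M_q$ with eigenvectors inside $\cU_k$.

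\textbf{Step 3: the case $\sigma_k = 1$.} Now $\ket{a_k} = \ket{b_k}$ lies in the range of both projectors. Hence $M_q\ket{a_k} = \ket{a_k}$, which gives one eigenvector with eigenvalue $1$.

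\textbf{Step 4: orthogonality across $k$.} For the statement about eigenvalues in the lemma, the contributions from different $k$ must be counted consistently with multiplicity. By Step 1, $\braket{a_k}{b_{k'}} = 0$ for $k \neq k'$, and $\{\ket{a_k}\}$ and $\{\ket{b_k}\}$ are each orthonormal. Therefore the subspaces $\cU_k$ are mutually orthogonal, and the eigenvectors produced in Steps 2 and 3 are linearly independent.

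The main obstacle is only the bookkeeping in Steps 1 and 4: getting the identities $\Pi_1\ket{b_k} = \sigma_k\ket{a_k}$ and $\Pi_2\ket{a_k} = \sigma_k\ket{b_k}$ right from the SVD conventions, including the complex conjugations coming from $G_{ab} = \braket{e_a}{f_b}$. Once these hold, the rest is the elementary $2\times 2$ calculation above.
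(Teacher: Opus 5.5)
Your proposal is correct and follows essentially the paper's proof: you use the same SVD-paired vectors (your $\ket{a_k},\ket{b_k}$ are the paper's $\ket{v_k},\ket{w_k}$) with $\Pi_1\ket{b_k}=\sigma_k\ket{a_k}$ and $\Pi_2\ket{a_k}=\sigma_k\ket{b_k}$, and you reduce to two-dimensional invariant blocks. The only differences are minor. You read off the eigenvalues from the trace and determinant of the $2\times 2$ block rather than the paper's ansatz $\ket{v_j}+\gamma\ket{w_j}$, which lets you cover $\sigma_k=0$ without a separate case since the paper's $\gamma$ divides by $\sigma_j$. You also make explicit the mutual orthogonality of the blocks, which the paper uses implicitly when it adds multiplicities.
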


\begin{proof}[Proof of~\cref{Lemma:projectors_eigenvalues}]

Let $G = L \Sigma R$ be the singular value decomposition of $G$ and define 
\begin{equation}
    \ket{v_j} := \sum_{a=1}^r L_{aj} \ket{e_a}, \ \ \text{and} \ \  \ket{w_k} := \sum_{b=1}^s \bar{R}_{kb} \ket{f_b}.
\end{equation}
Using them, we can rewrite $\Pi_1$ and $\Pi_2$ as $\Pi_1 = \sum_{j=1}^r \ketbra{v_j}{v_j}$ and $\Pi_2 = \sum_{k=1}^s \ketbra{w_k}{w_k}$. They also satisfy
\begin{align}
    \braket{v_j}{w_k} &= \sum_{a=1}^{r} \sum_{b=1}^s \bar{L}_{aj} \bar{R}_{kb} \braket{e_a}{f_b}\\
    &=\sum_{a=1}^{r} \sum_{b=1}^s \bar{L}_{aj} G_{ab} \bar{R}_{kb}\\
    &= (L^{\dagger}GR^{\dagger})_{jk}\\
    &= \Sigma_{jk}\\
    &= \sigma_j \delta_{jk}.
\end{align}
This implies that $\Pi_1 \ket{w_k} = \sigma_k \ket{v_k}$ and $\Pi_2\ket{v_j} = \sigma_j \ket{w_j}$. It is also trivial that $\Pi_1\ket{v_j}= \ket{v_j}$ and $\Pi_2\ket{w_k} = \ket{w_k}$.

If $\sigma_j=0$, then $\ket{v_j}$ and $\ket{w_j}$ are eigenvectors with eigenvalues $q$ and $1-q$, respectively.
If $\sigma_j=1$, then $\ket{v_j}=\ket{w_j}$ is an eigenvector with eigenvalue $1$.
It remains to consider $0<\sigma_j<1$.

We then have, for any $j = 1, \dots, \min\{r,s \}$ with $0<\sigma_j<1$ and $\gamma \in \mathbb{R}$,
\begin{align}
    \left(q \Pi_1+ (1-q) \Pi_2\right) \left( \ket{v_j} + \gamma \ket{w_j} \right) &= q \left(1 + \gamma \sigma_j\right)\ket{v_j}  + (1-q) \left(\sigma_j + \gamma\right) \ket{w_j}.
\end{align}
This implies that, if 
\begin{equation}
    (1-q) \left(\sigma_j + \gamma\right) = q \gamma \left(1 + \gamma \sigma_j\right), \label{Eq:EV_eq}
\end{equation} 
then $\ket{v_j} + \gamma \ket{w_j}$ is an eigenvector of $q \Pi_1 + (1- q) \Pi_2$ with eigenvalue $q \left(1 + \gamma \sigma_j\right)$. By solving~\cref{Eq:EV_eq}, we have
\begin{equation}
    \gamma = \frac{1}{2q \sigma_j} \left( 1 - 2 q \pm \sqrt{(1-2 q)^2 + 4 q(1-q)\sigma_j^2} \right).
\end{equation}
Substituting this into the eigenvalue $q(1+\gamma \sigma_j)$, we obtain the desired statement.

\end{proof}

Due to~\cref{Lemma:projectors_eigenvalues}, instead of directly investigating spectral properties of $\Omega_q(E_1, E_2)$, we can check the singular values of $G$, defined by $G_{jk} = \braket{e_j \bar{e}_j}{f_k \bar{f}_k} = | \braket{e_j}{f_k}|^2$, which we call an overlap matrix of $E_1$ and $E_2$.

\begin{proposition}\label{Prop:SV_G}
    Let $E_1 = \{\ket{e_j}\}_{j=1}^d$ and $E_2 = \{\ket{f_j}\}_{j=1}^d$ be two orthonormal bases of a
    $d$-dimensional Hilbert space, and define the overlap matrix $G$ by $G_{jk} = |\braket{e_j}{f_k}|^2$.
    Then the following hold:
    \begin{enumerate}
        \item If $E_1$ and $E_2$ are MUBs, then $G$ has eigenvalue $1$ with multiplicity $1$ and eigenvalue $0$ with multiplicity $d-1$.
        \item If $E_1 = \{\ket{j}\}_{j=1}^d$ and $E_2 = \{H_eH_o\ket{j}\}_{j=1}^d$, then $G$ has singular values $\cos(k\pi/d)$, $k=0,\ldots,\lfloor \frac{d-1}{2}\rfloor$, each with multiplicity $1$. The remaining singular values are all $0$.
    \end{enumerate}
\end{proposition}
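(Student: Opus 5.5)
The plan is to treat the two cases separately. The MUB case is immediate, and the neighboring-level Hadamard case reduces to the spectrum of a tridiagonal matrix.

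\textbf{MUB case.} If $E_1$ and $E_2$ are MUBs, then $|\braket{e_j}{f_k}|^2 = 1/d$ for all $j,k$. Hence $G = \f{1}{d}J$, where $J$ is the all-ones matrix. This is the rank-one projector onto $\f{1}{\sqrt d}(1,\dots,1)^\top$, so $G$ has eigenvalue $1$ once and $0$ with multiplicity $d-1$.

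\textbf{NH case: factorization.} I will first show that $G = A_e A_o$, where $A_o$ and $A_e$ are the entrywise squared moduli of $H_o$ and $H_e$. The matrix $H_o$ is block diagonal with $2\times 2$ Hadamard blocks on the pairs $(2j-1,2j)$ and $1\times1$ identity blocks on any unpaired level; $H_e$ has the same structure on the pairs $(2j-2,2j-1)$. In $(H_eH_o)_{jk} = \sum_m (H_e)_{jm}(H_o)_{mk}$, the index $m$ must lie both in the $e$-block of $j$ and in the $o$-block of $k$. An $e$-pair and an $o$-pair intersect in at most one level, so the sum has at most one nonzero term. Therefore $|(H_eH_o)_{jk}|^2 = \sum_m |(H_e)_{jm}|^2 |(H_o)_{mk}|^2$, which is the claimed factorization.

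Each of $A_e$ and $A_o$ is block diagonal with blocks $\f12\begin{pmatrix}1&1\\1&1\end{pmatrix}$ or $(1)$, so both are orthogonal projectors. Their ranges are the vectors constant on $e$-pairs and on $o$-pairs, respectively. The singular values of $G = A_eA_o$ are thus the cosines of the principal angles between $\mathrm{range}(A_e)$ and $\mathrm{range}(A_o)$, together with zeros.

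\textbf{NH case: spectrum.} For two projectors $P,Q$, the standard two-projection (Jordan/Halmos) decomposition, the same structure used in Lemma~\ref{Lemma:projectors_eigenvalues}, gives the following. Each principal cosine $\sigma\in(0,1)$ contributes an eigenvalue pair $1\pm\sigma$ of $P+Q$; $\sigma=1$ contributes eigenvalue $2$; and $\sigma=0$, along with the parts in exactly one range, contributes eigenvalue $1$. I will therefore compute the spectrum of $M = A_e + A_o$.

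Directly, every interior level lies in one $e$-pair and one $o$-pair, giving diagonal entry $1$ and neighbor couplings $\f12$. Level $1$ is unpaired in $H_e$, and level $d$ is unpaired in one of the two matchings, so these endpoints get diagonal $\f32$. Hence $M = \bI + \f12 A'$, where $A'$ is the path adjacency matrix with a self-loop at each endpoint. This is the Neumann-type path, whose eigenvalues are $2\cos(k\pi/d)$ for $k=0,\dots,d-1$, with cosine eigenvectors. So $\mathrm{spec}(M) = \{1+\cos(k\pi/d)\}_{k=0}^{d-1}$.

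Reading off the singular values:
\begin{itemize}
\item $k=0$ gives $\sigma=1$.
\item Each $k\in\{1,\dots,\lfloor\f{d-1}{2}\rfloor\}$ pairs with $d-k$ to give $1\pm\cos(k\pi/d)$, i.e.\ $\sigma=\cos(k\pi/d)$, each with multiplicity one.
\item For even $d$, $k=d/2$ gives eigenvalue $1$, i.e.\ $\sigma=0$.
\end{itemize}
This yields $\lfloor\f{d-1}{2}\rfloor+1$ nonzero singular values. The rank of $A_o$ is $\lceil d/2\rceil = \lfloor\f{d-1}{2}\rfloor+1$, so $G$ has no further nonzero singular values and all remaining ones are $0$.

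The main obstacle is the bookkeeping in the last step. The endpoint diagonal entries must come out as $\f32$ for both parities of $d$, and the correspondence between eigenvalues of $A_e+A_o$ and principal cosines must be applied carefully: eigenvalue $1$ can come from $\sigma=0$ or from vectors lying in exactly one of the two ranges. The rank count is what pins this down.
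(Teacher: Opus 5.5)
Your proof is correct. It shares the paper's first steps: the MUB case $G=\frac1d J$, and the factorization $G=A_eA_o$ into two pair-averaging projectors. The difference is which operator you diagonalize.

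The paper works with $G^\dagger G=A_oA_eA_o$ directly. It restricts to $\mathrm{range}(A_o)$ via the ``doubled'' ansatz $(x_1,x_1,x_2,x_2,\dots)$, which yields a tridiagonal matrix $B$ of size $\lceil d/2\rceil$. The eigenvalues of $B$, namely $\cos^2(k\pi/d)$, are the squared singular values themselves, so no translation step is needed. The price is that even and odd $d$ give different last rows of $B$ and must be treated separately.

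You instead diagonalize $A_e+A_o=\bI+\frac12 A'$ on the full space. This is the same path-with-end-loops matrix for every $d$: the endpoint diagonal entries $\frac32$ and the couplings $\frac12$ come out right in both parities. So the cosine ansatz is done once. You then convert the eigenvalues $1\pm\sigma$ into principal cosines via the Jordan/Halmos two-projection decomposition.

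That conversion is the one extra ingredient your route needs. In particular it uses the converse direction: every eigenvalue of $P+Q$ in $(1,2)$ comes from a generic two-dimensional block with matching multiplicity. This goes beyond what Lemma~\ref{Lemma:projectors_eigenvalues} states, which only lists eigenvalues in one direction.

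Finally, an eigenvalue $1$ of $A_e+A_o$ contributes only zero singular values of $A_eA_o$, whichever part of the decomposition it comes from. Your closing rank count is therefore a consistency check rather than a necessary step.
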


\begin{proof}[Proof of~\cref{Prop:SV_G}]

The first case of MUBs trivially follows from the fact that
\begin{equation}
    G_{jk}  = |\braket{e_j}{f_k}|^2 = \frac{1}{d}
\end{equation}
since $E_1$ and $E_2$ are MUBs. Clearly, $G$ has eigenvalue $1$ with multiplicity $1$, and the other eigenvalues are all $0$.

For $d=2$, $G_{jk}=1/2$, so the claim is immediate. Below we assume $d \geq 3$.
In the analysis of the second case, since the Hadamard rotation is real,
\begin{align}
    G &= \sum_{j,k=1}^d \bra{j} H_eH_o \ket{k}^2 \ketbra{j}{k}\\
    &= \sum_{j,k=1}^d \left( \sum_{l=1}^d  \bra{j} H_e \ket{l} \bra{l} H_o \ket{k} \right)^2 \ketbra{j}{k}\\
    &= 
    \sum_{j,k, l=1}^d \bra{j} H_e \ket{l}^2 \bra{l} H_o \ket{k}^2 \ketbra{j}{k},
\end{align}
where the third equality holds as only one term can be nonzero in the sum in the bracket. Using the notation that $A_o = \sum_{j, k} \bra{j} H_o \ket{k}^2 \ketbra{j}{k}$ and $A_e = \sum_{j, k} \bra{j} H_e \ket{k}^2 \ketbra{j}{k}$, we have $G = A_e A_o$. The matrix form of the $A_o$ and $A_e$, in the $E_1$ basis, are
\begin{align}
    A_o = \begin{cases}
        \frac{1}{2} 
    \begin{pmatrix}
        1 & 1 \\ 1 & 1
    \end{pmatrix}
    \oplus \frac{1}{2} 
    \begin{pmatrix}
        1 & 1 \\ 1 & 1
    \end{pmatrix}
    \oplus
    \dots
    \oplus
    \frac{1}{2}
    \begin{pmatrix}
        1 & 1 \\ 1 & 1
    \end{pmatrix}& \text{if $d$ is even,} \\
    \frac{1}{2} 
    \begin{pmatrix}
        1 & 1 \\ 1 & 1
    \end{pmatrix}
    \oplus 
    \frac{1}{2} 
    \begin{pmatrix}
        1 & 1 \\ 1 & 1
    \end{pmatrix}
    \oplus
    \dots 
    \oplus
    \frac{1}{2} 
    \begin{pmatrix}
        1 & 1 \\ 1 & 1
    \end{pmatrix}
    \oplus
    1& \text{if $d$ is odd,}
    \end{cases}
\end{align}
and
\begin{align}
    A_e =
    \begin{cases}
        1 \oplus 
    \frac{1}{2} 
    \begin{pmatrix}
        1 & 1 \\ 1 & 1
    \end{pmatrix}
    \oplus \frac{1}{2} 
    \begin{pmatrix}
        1 & 1 \\ 1 & 1
    \end{pmatrix}
    \oplus
    \dots
    \oplus
    \frac{1}{2}
    \begin{pmatrix}
        1 & 1 \\ 1 & 1
    \end{pmatrix}
    \oplus 
    1& \text{if $d$ is even,} \\
    1 
    \oplus 
    \frac{1}{2} 
    \begin{pmatrix}
        1 & 1 \\ 1 & 1
    \end{pmatrix}
    \oplus 
    \frac{1}{2} 
    \begin{pmatrix}
        1 & 1 \\ 1 & 1
    \end{pmatrix}
    \oplus
    \dots 
    \oplus
    \frac{1}{2} 
    \begin{pmatrix}
        1 & 1 \\ 1 & 1
    \end{pmatrix} & \text{if $d$ is odd.}
    \end{cases}
\end{align}
This necessitates to investigate the two cases, whether $d$ is even or odd, separately.
Below, we provide the eigenvalues of $G^{\dagger}G = A_oA_eA_o$, from which singular values of $G$ follow. Note that $A_o$ and $A_e$ are always projectors.  

When $d$ is even, we consider the action of $G^{\dagger}G$ to a $d$-dimensional ansatz vector in the form of
\begin{equation}
    \vec{x}_{\mathrm{double}} = \left(x_1, x_1, x_2, x_2, \dots, x_{d/2}, x_{d/2} \right)^T.
\end{equation}
A straightforward calculation leads to
\begin{equation}
    G^{\dagger} G  \vec{x}_{\mathrm{double}} = \frac{1}{4} \begin{pmatrix}
        3x_1+x_2\\ 
        3x_1+x_2\\
        x_1 + 2 x_2 + x_3\\
        x_1 + 2 x_2 + x_3\\
        \vdots \\
        x_{d/2-2} + 2 x_{d/2-1} + x_{d/2}\\
        x_{d/2-2} + 2 x_{d/2-1} + x_{d/2}\\
        x_{d/2-1} + 3 x_{d/2}\\
        x_{d/2-1} + 3 x_{d/2}
    \end{pmatrix}.
\end{equation}
This implies that the action of $G^{\dagger}G$ to $\vec{x}_{\mathrm{double}}$ is simply a double of the action of $B$ to a $d/2$-dimensional vector $\vec{x} =(x_1, x_2, \dots, x_{d/2})^T$, where $B$ is
\begin{equation}
    B
    =
    \frac{1}{4}
    \begin{pmatrix}
        3 & 1 & 0 & 0 & \dots & \dots & 0 \\
        1 & 2 & 1 & 0 & 0 & \dots & 0 \\
        0 & 1 & 2 & 1 & 0 & \dots & 0 \\
        \vdots & & &  \ddots & & & \vdots\\
        0 & \dots & 0 & 1 & 2 & 1 & 0\\
        0 & \dots & & 0 & 0 & 1 & 3
    \end{pmatrix}.
\end{equation}
This implies that $G^{\dagger}G$ has eigenvalues $0$ with multiplicity $d/2$ and that its non-zero eigenvalues are equal to the eigenvalues of $B$.

Let us assume that $x_j$ is in the following form $x_j = \cos \left(j - \frac{1}{2}\right) \theta$, for some $\theta \in [0, \pi)$. In this case, by a direct calculation, we can show that
\begin{align}
    &(B \vec{x})_1 = \frac{1}{4} \left( 3x_1 + x_2 \right) = \cos^2 \frac{\theta}{2} x_1,\\
    &(B \vec{x})_j = \frac{1}{4}\left(x_{j-1} + 2 x_j + x_{j+1} \right)
    =
    \cos^2 \frac{\theta}{2} x_j, \qquad \text{for $j = 2,\dots, d/2-1$}.
\end{align}
Hence, if
\begin{equation}
    (B \vec{x})_{d/2} = \frac{1}{4} \left( x_{d/2-1} + 3 x_{d/2} \right) = \cos^2 \frac{\theta}{2} x_{d/2} \label{Eq:trigonometric_EV_eq}
\end{equation}
holds, then it follows that $B\vec{x} = \cos^2 \frac{\theta}{2} \vec{x}$, implying that $\cos^2 \frac{\theta}{2}$ are the eigenvalues of $B$. 

Substituting $x_j = \cos (j-1/2)\theta$,~\cref{Eq:trigonometric_EV_eq} reduces to 
\begin{equation}
    \cos\frac{d-3}{2} \theta +3 \cos \frac{d-1}{2}\theta = 4 \cos^2 \frac{\theta}{2} \cos \frac{d-1}{2}\theta.
\end{equation}
Using $\cos^2 \frac{\theta}{2} = (1+ \cos \theta)/2$ and $2 \cos \theta \cos \frac{d-1}{2}\theta = \cos \frac{d+1}{2}\theta + \cos \frac{d-3}{2}\theta$, this can be simplified to 
\begin{equation}
    \cos \frac{d-1}{2} \theta = \cos \frac{d+1}{2} \theta
    \Longleftrightarrow
    \sin \frac{d}{2}\theta \sin \frac{\theta}{2} = 0.
\end{equation}
Recalling that $\theta \in [0, \pi)$, if $\theta = 2 \frac{k}{d}\pi$ for $k=0, \dots, \frac{d}{2}-1$, this is satisfied and $\cos^2 \frac{\theta}{2}$, or equivalently $\cos^2 \frac{k}{d}\pi$, are eigenvalues of $B$. 
Since the size of $B$ is $d/2$, all the eigenvalues of $B$ have been found.
This further implies that all the non-zero singular values of $G$ are $\cos \frac{k}{d}\pi$. Noting that $\frac{d}{2}-1 = \lfloor \frac{d-1}{2} \rfloor$ for even $d$, we obtain the desired statement for even $d$.

When $d$ is odd, the analysis is in parallel to the even-$d$ case with minor differences. First, we work with an ansatz vector $\vec{x}$ in the form of
\begin{equation}
    \vec{x}_{\mathrm{double}} = \left(x_1, x_1, x_2, x_2, \dots, x_{(d-1)/2}, x_{(d-1)/2}, x_{(d+1)/2} \right)^T.
\end{equation}
Then, a direct analysis shows that $G^{\dagger}G$ has zero eigenvalues with multiplicity $(d-1)/2$. Furthermore, the remaining eigenvalues are equal to those of the matrix $B$ given by 
\begin{equation}
    B
    =
    \frac{1}{4}
    \begin{pmatrix}
        3 & 1 & 0 &  \dots & \dots & 0 \\
        1 & 2 & 1 & 0 &  \dots & 0 \\
        0 & 1 & 2 & 1 &  \dots & 0 \\
        \vdots & & &  \ddots & & \vdots\\
        0 & \dots & 0 & 1 & 2 & 1 \\
        0 & \dots & & 0 & 2 & 2 
    \end{pmatrix},
\end{equation}
whose size is $(d+1)/2$. Note that the last row of $B$, as well as its size, are different from the even-$d$ case. 

This difference is then reflected in the condition on $x_{(d+1)/2}$. Instead of imposing~\cref{Eq:trigonometric_EV_eq}, the following should hold for $\cos^2 \frac{\theta}{2}$ to be eigenvalues of $B$:
\begin{equation}
    \frac{1}{2}\left( x_{(d-1)/2} + x_{(d+1)/2} \right) = \cos^2 \frac{\theta}{2} x_{(d+1)/2},
\end{equation}
where we have again assumed that $x_j = \cos (j-\frac{1}{2})\theta$. An elementary calculation similar to the even-$d$ case shows that, if $\theta= \frac{2k}{d}\pi$, where $k = 0, \dots, \frac{d-1}{2}$, then $\cos^2 \frac{\theta}{2}$, or equivalently, $\cos^2 \frac{k}{d}\pi$ are eigenvalues of $B$.
    
\end{proof}

Finally, the proof of~\cref{Prop:EV_Omega} follows simply by the combination of~\cref{Lemma:projectors_eigenvalues,Prop:SV_G}.

\begin{proof}[Proof of~\cref{Prop:EV_Omega}]

The statement is obtained by substituting the singular values of the overlap matrix $G$ of each pair of $E_1$ and $E_2$, which are provided in~\cref{Prop:SV_G}, into~\cref{Lemma:projectors_eigenvalues}.
    
\end{proof}

\subsection{Setup for the numerical study}
\label{sec:NN-hadamard simulation}

We here describe the detailed setup used to obtain Fig.~\ref{fig:numerical NNH bases simulation}, which evaluates the send-and-measure protocol in Sec.~\ref{SSS:demonstration_ent_distill} using MUBs and bases generated by neighboring-level Hadamard (NH) operations. While we have considered only one basis generated by $H_eH_o$ from the energy eigenbasis $E_1 = \{\ket{j}\}_{j=1}^d$, multiple bases can be similarly defined because nearest-neighbor Hadamard rotations do not commute when they share a common energy level. Hence, different orders of these rotations produce different bases.

Taking all possible orders into account, we obtain at most $1 + (d-1)!$ distinct bases, including $E_1$.
We then select $L$ bases out of them, choose one of these $L$ bases uniformly at random with probability $q_l = 1/L$ for each $l$, and let Alice measure system $\sA$ in the corresponding conjugate basis $\{\ket{\bar{e}_{j|l}}^{\sA}\}_{j=1}^d$.
In this case, the operator $\Omega^{\sA\hat{\sB}}$ reduces to $\f{1}{L}\sum_{l=1}^L \Pi_l^{\sA\hat{\sB}}$, where
\begin{align}
\label{eq:projector definition equation}
    \Pi_l^{\sA\hat{\sB}} = \sum_{j=1}^{d_{\sA}}\ketbra{\bar{e}_{j|l}}{\bar{e}_{j|l}}^\sA\otimes\ketbra{e_{j|l}}{e_{j|l}}^{\hat{\sB}}.
\end{align}
This operator satisfies $\|\Omega\|_\infty = 1$, and the MES $\ket{\Phi}^{\sA\hat{\sB}}$ is a corresponding eigenstate. When $\nu(\Omega) \neq 0$, these imply $\cK_\star = \cH^\sA$, so the target MES has $\log d$ ebits.

For example, consider the case $d=3$, and let 
\begin{align}
    H_{(1, 2)} =
    \begin{pmatrix}
    \f{1}{\sqrt{2}} & \f{1}{\sqrt{2}} & 0 \\
    \f{1}{\sqrt{2}} & -\f{1}{\sqrt{2}} & 0 \\
    0 & 0 & 1
\end{pmatrix}, \ \ \ 
    H_{(2, 3)}=
    \begin{pmatrix}
    1 & 0 & 0 \\
    0 & \f{1}{\sqrt{2}} & \f{1}{\sqrt{2}} \\
    0 & \f{1}{\sqrt{2}} & -\f{1}{\sqrt{2}}
\end{pmatrix},
\end{align}
which act as Hadamard operations on the level pairs (1,2) and (2,3), respectively.
The bases $\{E_1, E_2, E_3\}$ are generated from $E_1 = \{\ket{1}, \ket{2}, \ket{3}\}$ as follows:
\begin{align}
    &E_2 
    = H_{(2, 3)}H_{(1, 2)}E_1 \\
    \label{eq:nn-basis 1}
    &\hspace{1pc}= \Big\{\f{\sqrt{2}\ket{1}+\ket{2}+\ket{3}}{2}, \f{\sqrt{2}\ket{1}-\ket{2}-\ket{3}}{2}, \f{\ket{2}-\ket{3}}{\sqrt{2}}\Big\}, \\
    &E_3 
    = H_{(1, 2)}H_{(2, 3)}E_1 \\
    \label{eq:nn-basis 2}
    &\hspace{1pc}= \Big\{\f{\ket{1}+\ket{2}}{\sqrt{2}}, \f{\ket{1}-\ket{2}+\sqrt{2}\ket{3}}{2}, \f{\ket{1}-\ket{2}-\sqrt{2}\ket{3}}{2}\Big\}.
\end{align}
For this family of bases, we calculate $\nu(\Omega) \approx 0.362$.

As a noise model, we consider the \emph{$r$-mixed $\alpha$-depolarizing and $\beta$-dephasing noise}:
\begin{align}
\label{eq:def of p-mixed depol-dephase noise}
    \cN_r^{\sA'\to\sB}(\cdot)
    \coloneqq (1-r) \cD_\alpha^{\sA'\to\sB}(\cdot) + r\Delta_\beta^{\sA'\to\sB}(\cdot),
\end{align}
where $r, \alpha, \beta \in [0, 1]$ and $\sA'\cong\sB$. The channels $\cD_\alpha$ and $\Delta_\beta$ are the depolarizing and dephasing channels with strengths $\alpha$ and $\beta$, respectively.
They act on a state $\rho$ as $\cD_\alpha(\rho) = (1-\alpha)\rho + \alpha\bI/d$, and $\Delta_\beta(\rho) = (1-\beta)\rho + \beta \diag(\rho)$, where $\diag(\rho)$ is the diagonal part of $\rho$ in the computational basis.
The noisy state $\xi^{\sA\sB}$ is then given by $\xi^{\sA\sB} = \cN_r^{\sA'\to\sB}(\ketbra{\Phi}{\Phi}^{\sA\sA'})$, where $\ket{\Phi}^{\sA\sA'}$ is the MES.

In the send-and-measure implementation, Alice's measurement on $\sA$ in the conjugate basis is replaced by directly sending a uniformly random basis state $\ket{e_{j|l}}^{\sA'}$, and Bob measures the received state in the same basis $F_l^\sB = \{\ket{e_{j|l}}^\sB\}_{j=1}^d$.
The coincidence probability for the basis is
\begin{align}
    p_l = \f{1}{d}\sum_{j=1}^d \bra{e_{j|l}}^\sB \cN_r^{\sA'\to\sB}\big(\ketbra{e_{j|l}}{e_{j|l}}^{\sA'}\big)\ket{e_{j|l}}^\sB,
\end{align}
which can be estimated experimentally.
Following \cref{cor:entanglement_distillation} with exact coincidence probabilities, $\log d$ ebits can be LOCC distilled with infidelity
\begin{align}
    \label{eq:numeric main equation}
    \varepsilon_{\mathrm{distill}}
    \leq \f{1 - C_{\rm coin}(\bmE, \bmF|\xi)}{\nu(\Omega)},
\end{align}
where $C_{\rm coin}(\bmE, \bmF|\xi)= \f{1}{L}\sum_{l=1}^L p_l^2$.
The spectral gap $\nu(\Omega)$ is determined only by the choice of bases and can be computed in advance.

To check the tightness of the bound in Eq.~\eqref{eq:numeric main equation}, we compare the NH-based bound with the MUB-based one and with the optimal error $\varepsilon_{\mathrm{distill}}$ computed via semidefinite programming (SDP).
In Fig.~\ref{fig:numerical NNH bases simulation}, we plot the left- and right-hand sides of Eq.~\eqref{eq:numeric main equation} for $d=5$ as a function of the mixing probability $r \in [0, 1]$, for the noise parameters $(\alpha, \beta) = (0.01, 0.03), (0.02, 0.02), (0.03, 0.01)$.
For each $L = 2, 4, 6$ and each value of $r$, the NH-based bound (solid) is obtained by optimizing the right-hand side of Eq.~\eqref{eq:numeric main equation} over the choice of $L$ bases among the at most $1+(d-1)!$ NH bases. The MUB-based bound (dashed) is obtained analogously from the $d+1$ MUBs, which exist since $d=5$ is prime.
The optimal error $\varepsilon_{\mathrm{distill}}$ is plotted as a black dotted line.
The behavior of these bounds is discussed in Sec.~\ref{SSS:demonstration_ent_distill}.


\subsection{Numerical search for basis pairs outperforming MUBs}
\label{sec:finding basis pairs better than MUBs}

We numerically search for a pair of orthonormal bases that yields a tighter upper bound on the distillation error $\varepsilon_{\mathrm{distill}}$.
We focus on PVMs onto two orthonormal bases with $L=2$ and $q_1=q_2=1/2$ in the single-qubit case ($d=2$), and for a noisy state, we take $\xi^{\sA\sB} = \cN_r^{\sA'\to\sB}(\ketbra{\Phi}{\Phi}^{\sA\sA'})$, where $\cN_r^{\sA'\to\sB}$ is the $r$-mixed $\alpha$-depolarizing and $\beta$-dephasing noise defined in Eq.~\eqref{eq:def of p-mixed depol-dephase noise}.
The parameters are set as $r=0.89$, $\alpha=0.01$, and $\beta=1.0$.

We now consider two pairs of bases: the $\theta$-Fourier basis pair $E_1^F(\theta), E_2^F(\theta)$ defined in Eq.~\eqref{eq:theta fourier basis pair} with $d=2$, and the $Y$-axis-rotated MUB pair given by 
\begin{align}
    E_1^Y = \{e^{i\pi Y/8}\ket{j}\}_{j=1}^2, \ \ \
    E_2^Y = \{e^{-i\pi Y/8}\ket{j}\}_{j=1}^2, \label{eq: Y-axis-rotated MUB pair}
\end{align}
where $Y$ is the qubit Pauli-$Y$ operator. 
The choice of these bases is motivated by the property of the noise $\cN_r^{\sA'\to\sB}$, which affects states near the equatorial plane more strongly than those close to the $Z$ axis. The $Y$-axis-rotated MUB pair balances the noise effects between the two bases, yielding the smallest $u_2$ among MUB pairs.
In contrast, the $\theta$-Fourier basis pair, constructed symmetrically with respect to the $XZ$ plane, allows us to explore whether non-MUB pairs can yield a tighter bound than the $Y$-axis-rotated MUB pair.


\begin{figure}
    \centering
    \includegraphics[width=130mm]{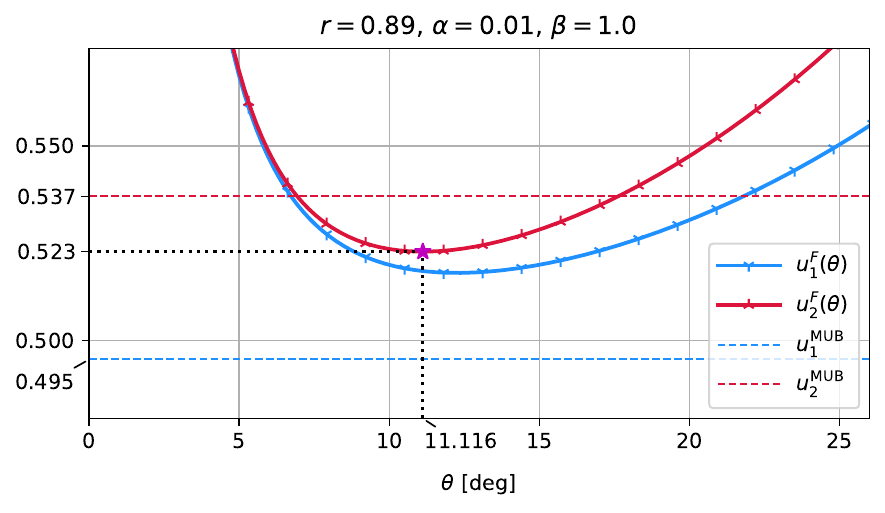}
    \caption{The upper bounds $u_1$ in Eq.~\eqref{eq:u_1} (blue) and $u_2$ in Eq.~\eqref{eq:u_2} (red) are plotted for the $\theta$-Fourier basis pair (solid) and the $Y$-axis-rotated MUB pair (dash-dotted) in the single-qubit case ($d=2$). We set the noisy state to $\xi^{\sA\sB} = \cN_r^{\sA'\to\sB}(\ketbra{\Phi}{\Phi}^{\sA\sA'})$ with $r=0.89$, $\alpha = 0.01$, and $\beta= 1.0$. The $\theta$-Fourier basis pair gives a lower $u_2$ than the $Y$-axis-rotated MUB pair for $\theta \in (6.9^\circ, 17.6^\circ)$, while the latter is always superior for $u_1$.}
    \label{fig:UB comparison}
\end{figure}


We consider two upper bounds $u_1$ and $u_2$ on the distillation error $\varepsilon_{\mathrm{distill}}$:
\begin{itemize}
    \item $u_1$ is an upper bound based on the PGMs $\Gamma_{\scE_{\xi, E_l}}$ for $l=1, 2$ (the definition of the PGM can be found in Supplementary Note~\ref{sec:proof thm1 new}):
\begin{align}
\label{eq:u_1}
    u_1 
    \coloneqq \f{1 - \f{1}{2}\sum_{l=1}^2 \rP\big(\Gamma_{\scE_{\xi, E_l}}|\scE_{\xi, E_l}\big)}{\nu\big(\f{1}{2}\sum_{l=1}^2 \Pi_l\big)},
\end{align}
    \item $u_2$ is an upper bound obtained by applying the Barnum--Knill inequality to $u_1$ (see also Eqs.~\eqref{inteq:2} and~\eqref{inteq:22}):
\begin{align}
\label{eq:u_2}
    u_2 
    \coloneqq \f{1 - \f{1}{2}\sum_{l=1}^2\big(\rP_{\rm opt}(\scE_{\xi, E_l})\big)^2}{\nu\big(\f{1}{2}\sum_{l=1}^2 \Pi_l\big)}.
\end{align}
\end{itemize}
We assume here that the spectral gap $\nu(\Omega)$ is nonzero, in which case $\cK_\star = \cH^\sA$ and the bounds certify that $1$ ebit is LOCC distillable with infidelity $\varepsilon_{\mathrm{distill}} \leq u_i$ for $i = 1, 2$.
For each of the above two basis pairs, $\nu(\Omega)$ is computed by substituting the basis elements into $\Pi_l = \sum_{j=1}^2 \ketbra{e_{j|l}}{e_{j|l}} \otimes \ketbra{\bar{e}_{j|l}}{\bar{e}_{j|l}}$, where $\{\ket{e_{j|l}}\}_{j=1}^2$ denotes the $l$-th basis of the pair, $E_l^F(\theta)$ or $E_l^Y$.

In the above setting, we plot the values of $u_1$ and $u_2$ in Fig.~\ref{fig:UB comparison} as a function of $\theta$.
For the bound $u_2$, the $\theta$-Fourier basis pair outperforms the $Y$-axis-rotated MUB pair for $\theta \in (6.9^\circ, 17.6^\circ)$, attaining its minimum value of $0.523$ at $\theta \approx 11.1^\circ$.
In contrast, for $u_1$, the $Y$-axis-rotated MUB pair is always superior for the parameters used.

As a remark, the $Y$-axis-rotated MUB pair defined in \eqref{eq: Y-axis-rotated MUB pair} achieves the minimum values of both $u_1$ and $u_2$ among the pairs obtained by rotating its bases by the same angle about the $Y$ axis. For any pair of bases obtained in this way, denoted by $E_1$ and $E_2$, a direct calculation gives
 \begin{align}
    &\sum_{l=1}^2
    \rP\big(\Gamma_{\scE_{\xi,E_l}}|\scE_{\xi,E_l}\big)
    = 1+\f{\lambda_x^2+\lambda_z^2}{2},
    \label{eq:PGM common rotation}
    \\
    &\sum_{l=1}^2 \big(\rP_{\rm opt}(\scE_{\xi,E_l})\big)^2
    \leq 2 \bigg(\f{1+\sqrt{(\lambda_x^2+\lambda_z^2)/2}}{2}\bigg)^2
    \label{eq:optimal common rotation}.
\end{align}
Here, $\lambda_x=(1-r)(1-\alpha)+r(1-\beta)$ and $\lambda_z=(1-r)(1-\alpha)+r$ are the contraction factors of the Bloch vector along the $X$ and $Z$ axes, respectively.
We can see that Eq.~\eqref{eq:PGM common rotation} does not depend on the rotation angle. Since the spectral gap is also unchanged when both bases are rotated by the same angle, $u_1$ has the same value for every rotation angle. By contrast, the upper bound in Eq. \eqref{eq:optimal common rotation} is attained by the $Y$-axis-rotated MUB pair defined in \eqref{eq: Y-axis-rotated MUB pair}. Therefore, $u_2$ is minimized by this pair of bases.

\section{Characterization of the one-shot information measures for quantum measurements}
\label{sec:further discussion on information gain disturbance}

We here discuss the one-shot measures of quantum measurements introduced in Sec.~\ref{sec:information gain and irreversibility}.
In Sec.~\ref{sec:upperbound on the info gain}, we prove Proposition~\ref{prop:upperbound infogain} regarding the information gain, and in Sec.~\ref{sec:upperbound on q-distub} we derive an upper bound on the irreversibility. In Sec.~\ref{sec:proof tradeoff onfo gain and q-disturb}, we prove their trade-off relation, stated as Proposition~\ref{thm:Trade-off relation between the information gain and irreversibility}.
Finally, in Sec.~\ref{sec:proof of disturbance noise trade offs}, we prove \Cref{prop:one-shot noise disturbance}, the trade-off relation between the disturbance and noise for an observable.

\subsection{Proof of Proposition~\ref{prop:upperbound infogain}}
\label{sec:upperbound on the info gain}

We here prove that any rank-$1$ projective measurement maximizes the one-shot information gain $\mathrm{IG}(\sigma, \Lambda)$.

\begin{restated}{proposition}{prop:upperbound infogain}
For an arbitrary orthonormal basis $E = \{\ket{e_m}^{\sA'}\}_m$, let $\Lambda_E^{\sA'\to\sM}$ be the rank-1 projective measurement defined by
\begin{align}
\tag{\ref*{eq:rank-1 proj measure}}
\label{eq:rank-1 proj measure SI}
    \Lambda_E^{\sA'\to\sM}(\cdot) 
    = \sum_m \tr_{\sA'}\big[\ketbra{e_m}{e_m}^{\sA'}(\cdot)\big] \ketbra{m}{m}^\sM.
\end{align}
Then, for a state $\ket{\sigma}^{\sA\sA'}$, it achieves the maximal value of the information gain, $\mathrm{IG}(\sigma, \Lambda_E) = -\log{\tr[(\sigma^{\sA'})^2]}$.
\end{restated}

\begin{proof}[Proof of Proposition~\ref{prop:upperbound infogain}]

Recall that the information gain $\mathrm{IG}(\sigma, \Lambda)$ for a state $\ket{\sigma}^{\sA\sA'}$ and a measurement $\Lambda^{\sA'\to\sM} = \tr_\sB\circ\cM^{{\sA'}\to\sB\sM}$ can be written as
\begin{align}
\label{inteq:4}
    \mathrm{IG}(\sigma, \Lambda) = -\log \max_{\tau^\sM} \rF(\Theta^{{\sA}\sM}, \sigma^{\sA} \otimes \tau^\sM),
\end{align}
where $\Theta^{{\sA}\sM} = \Lambda^{\sA'\to\sM}(\ketbra{\sigma}{\sigma}^{{\sA}{\sA'}}) = \sum_m p_m \theta_m^{\sA} \otimes \ketbra{m}{m}^\sM$, $\{p_m\}_m$ is a probability distribution, and $\{\theta_m^{\sA}\}_m$ is a family of states on ${\sA}$.
Note that $\Theta^{\sA} = \sum_m p_m \theta_m^{\sA} = \sigma^{\sA}$.
We first show that the information gain can also be written as
\begin{align}
\label{eq:infogain rewrite}
    \mathrm{IG}(\sigma, \Lambda) = -\log{\sum_m p_m \rF(\theta_m^{\sA}, \sigma)}.
\end{align}

In general, fidelity is monotonically nondecreasing under any quantum channel $\cT$: $\rF\big(\cT(\rho), \cT(\sigma)\big) \geq \rF(\rho, \sigma)$ for any states $\rho$ and $\sigma$.
Since $\Theta^{{\sA}\sM}$ is invariant under the dephasing channel in the basis $\{\ket{m}^\sM\}_m$, the state $\tau^\sM$ that maximizes Eq.~\eqref{inteq:4} can be taken to be diagonal in this basis, i.e., $\tau^\sM = \sum_m \alpha_m \ketbra{m}{m}^\sM$ for some probability distribution $\{\alpha_m\}_m$.
Then, $\max_{\tau^\sM} \rF(\Theta^{{\sA}\sM}, \sigma^{\sA} \otimes \tau^\sM)$ is calculated as
\begin{align}
    \max_{\tau^\sM} \rF(\Theta^{{\sA}\sM}, \sigma^{\sA} \otimes \tau^\sM) 
    = \max_{\{\alpha_m\}_m}\Big(\sum_m \sqrt{\alpha_m p_m} \tr\Big[\sqrt{\sqrt{\theta_m}\sigma\sqrt{\theta_m}}\Big]\Big)^2,
\end{align}
where the maximization over $\{\alpha_m\}_m$ is taken under the conditions $\alpha_m \geq 0$ and $\sum_m \alpha_m = 1$.
Using the Cauchy--Schwarz inequality, we can further proceed as
\begin{align}
    \max_{\{\alpha_m\}_m}\Big(\sum_m \sqrt{\alpha_m p_m} \tr\Big[\sqrt{\sqrt{\theta_m}\sigma\sqrt{\theta_m}}\Big]\Big)^2
    &= \sum_m p_m \Big(\tr\Big[\sqrt{\sqrt{\theta_m}\sigma\sqrt{\theta_m}}\Big]\Big)^2 \\
    &= \sum_m p_m \rF(\theta_m^{\sA}, \sigma^{\sA}),
\end{align}
where $\alpha_m$'s that achieve this equality are given by $\alpha_m = p_m \rF(\theta_m^{\sA}, \sigma^{\sA}) / \sum_m p_m \rF(\theta_m^{\sA}, \sigma^{\sA})$.
By taking $-\log$, we obtain Eq.~\eqref{eq:infogain rewrite}.

We now consider the measurement that maximizes $\mathrm{IG}(\sigma, \Lambda)$. 
The upper bound on $\mathrm{IG}(\sigma, \Lambda)$ is evaluated as
\begin{align}
    \mathrm{IG}(\sigma, \Lambda) 
    &= -\log{\sum_m p_m \rF(\theta_m^{\sA}, \sigma^{\sA})} \\
    \label{inteq:5}
    &\leq -\log \sum_m p_m \tr\big[\theta_m^{\sA}\sigma^{\sA}\big] \\
    &= -\log \tr\big[(\sigma^{\sA})^2\big] \\
    &= -\log \tr\big[(\sigma^{\sA'})^2\big],
\end{align}
where in the inequality we used that $\sqrt{\tr[A^\dag A]} = \|A\|_2 \leq \|A\|_1$ and $\rF(\theta_m, \sigma) = \|\sqrt{\theta_m}\sqrt{\sigma}\|_1^2$.
The last line follows from the fact that $\ket{\sigma}^{{\sA}{\sA'}}$ is pure.
Since $\|A\|_2 = \|A\|_1$ holds only when $A$ has rank at most one, equality in Eq.~\eqref{inteq:5} holds only when $\sqrt{\theta_m}\sqrt{\sigma}$ has rank one. As $\sigma^{\sA} = \sum_m p_m \theta_m^{\sA}$, each $\theta_m^{\sA}$ must be rank one, i.e., a pure state. This condition can be achieved by any rank-$1$ projective measurement on system ${\sA'}$, as we show below.

The initial state $\ket{\sigma}^{{\sA}{\sA'}}$ can be expressed as $\ket{\sigma}^{{\sA}{\sA'}} = \sum_i \sqrt{p_i} \ket{\phi_i}^{\sA} \ket{e_i}^{\sA'}$ by using an arbitrary orthonormal basis $E = \{\ket{e_i}^{\sA'}\}_i$, where the states $\{\ket{\phi_i}^{\sA}\}_i$ are proportional to $\bra{e_i}^{\sA'}\ket{\sigma}^{\sA\sA'}$ but are not necessarily mutually orthogonal.
Then, for a rank-$1$ projective measurement in the basis $E$, i.e.,
\begin{align}
    \Lambda_E^{\sA'\to\sM}(\cdot) 
    = \sum_m \tr_{\sA'}\big[\ketbra{e_m}{e_m}^{\sA'}(\cdot)\big] \ketbra{m}{m}^\sM, 
\end{align}
the post-measurement state $\Theta^{{\sA}\sM}$ is given by
\begin{align}
    \Theta^{{\sA}\sM} 
    &= \sum_m \tr_{\sA'}[\ketbra{e_m}{e_m}^{\sA'} \ketbra{\sigma}{\sigma}^{{\sA}{\sA'}}] \otimes \ketbra{m}{m}^\sM \\
    &= \sum_m p_m \ketbra{\phi_m}{\phi_m}^{\sA} \otimes \ketbra{m}{m}^\sM.
\end{align}
The state on ${\sA}$ after the measurement $\Lambda_E^{\sA'\to\sM}$ is the pure state $\ket{\phi_m}^{\sA}$.  
Thus, the information gain is maximized by this measurement, achieving $\mathrm{IG}(\sigma, \Lambda_E) = -\log{\tr[(\sigma^{\sA'})^2]}$.

\end{proof}


\subsection{Upper bound on the one-shot irreversibility}
\label{sec:upperbound on q-distub}

We show that the irreversibility $\mathrm{Irv}(\sigma, \cM)$ is upper bounded by $-2\log\|\sigma^{\sA'}\|_\infty$.
Recall that the irreversibility is defined by
\begin{align}
\label{inteq:30}
    \mathrm{Irv}(\sigma, \cM) 
    \coloneqq -\log \max_\cP \rF(\cP^{\sB\sM\to\hat{\sB}}(\Theta^{{\sA}\sB\sM}), \ketbra{\sigma}{\sigma}^{{\sA}\hat{\sB}}),
\end{align}
for a state $\ket{\sigma}^{{\sA}{\sA'}}$ and a measurement process $\cM^{{\sA'}\to\sB\sM}$, where $\Theta^{{\sA}\sB\sM} = \cM^{{\sA'}\to\sB\sM}(\ketbra{\sigma}{\sigma}^{{\sA}{\sA'}})$. The maximization is over channels $\cP^{\sB\sM\to\hat{\sB}}$.
Our approach is to consider a measurement intentionally designed to be most disruptive to the system, where the system ${\sA'}$ is discarded to the environment $\sE$, followed by the preparation of states $\{\ket{m}^\sM\}_m$ according to a probability distribution $\{p_m\}_m$.

Let a dilation of the measurement process $\cM^{{\sA'}\to\sB\sM}$ be represented by an isometry $V_\cM^{{\sA'}\to\sB\sM\sE\sM'}$.
That is, when $\cM^{{\sA'}\to\sB\sM}$ is described by the Kraus operators $\{K_{j,m}^{{\sA'}\to\sB}\}_{j,m}$ satisfying $\sum_{j,m} (K_{j,m}^{{\sA'}\to\sB})^\dag K_{j,m}^{{\sA'}\to\sB} = \bI^{{\sA'}}$, so that 
\begin{align}
    \cM^{{\sA'}\to\sB\sM}(\cdot) 
    = \sum_{j,m} K_{j,m}^{{\sA'}\to\sB} (\cdot) (K_{j,m}^{{\sA'}\to\sB})^\dag \otimes \ketbra{m}{m}^{\sM}, 
\end{align}
a corresponding isometric dilation is given by
\begin{align}
\label{eq:dilation of measuremant}
    V_\cM^{{\sA'}\to\sB\sM\sE\sM'} = \sum_{j,m} K_{j,m}^{{\sA'}\to\sB} \otimes \ket{m}^\sM  \ket{j}^{\sE} \ket{m}^{\sM'}.
\end{align}
Then, a purification of the state $\Theta^{{\sA}\sB\sM}$ is 
\begin{align}
\label{eq:purification of post-meas state}
    \ket{\Theta}^{{\sA}\sB\sM\sE\sM'} = V_\cM^{{\sA'}\to\sB\sM\sE\sM'}\ket{\sigma}^{{\sA}{\sA'}}.
\end{align}

Using the purification, the irreversibility defined by Eq.~\eqref{inteq:30} can be equivalently written as 
\begin{align}
    \mathrm{Irv}(\sigma, \cM)
    \label{inteq:6}
    &= -\log \max_{\tau^{\sE\sM'}} \rF(\Theta^{{\sA}\sE\sM'}, \sigma^{\sA} \otimes \tau^{\sE\sM'}) \\
    &= \widetilde{I}_{\f{1}{2}}({\sA}; \sE\sM')_\Theta,
\end{align}
where we note that $\Theta^{\sA}=\sigma^{\sA}$.
This expression is obtained from the following lemma, which is a special case of a result in Ref.~{\cite[Proposition~43]{utsumi2025algorithmsUhlmanntrans}}.

\begin{lemma}[{\cite[Proposition~43]{utsumi2025algorithmsUhlmanntrans}}]
\label{lem:local optimal trans}
For any pure states $\ket{\rho}^{{\sA'}\sB\sC}$ and $\ket{\sigma}^{{\sA'}\sD}$, we have 
\begin{align}
\label{eq:opt ineq local op}
    \max_{\cP^{\sB\to\sD}}\rF\big(\cP^{\sB\to\sD}(\rho^{{\sA'}\sB}), \ketbra{\sigma}{\sigma}^{{\sA'}\sD}\big) 
    = \max_{\tau^\sC} \rF(\rho^{{\sA'}\sC}, \sigma^{\sA'} \otimes \tau^\sC),
\end{align}
where the maximizations on the left- and right-hand sides are taken over all channels from $\sB$ to $\sD$ and all states on $\sC$, respectively.
\end{lemma}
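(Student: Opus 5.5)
The plan is to reduce both sides to statements about a single pure state on $\sA'\sB\sC\sD$ and then apply Uhlmann's theorem twice. Fix a Stinespring dilation of an arbitrary channel $\cP^{\sB\to\sD}$ as an isometry $V^{\sB\to\sD\sE}$. Then $(\bI^{\sA'\sC}\otimes V)\ket{\rho}^{\sA'\sB\sC}$ is a purification of $\cP(\rho^{\sA'\sB})$, with purifying system $\sE\sC$. The target $\ket{\sigma}^{\sA'\sD}$ is pure, so $\rF(\cP(\rho^{\sA'\sB}),\sigma^{\sA'\sD}) = \bra{\sigma}\cP(\rho^{\sA'\sB})\ket{\sigma}$. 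This equals $\|(\bra{\sigma}^{\sA'\sD}\otimes\bI^{\sE\sC})V\ket{\rho}\|^2$.

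For the inequality ``$\le$'', let $\ket{\omega}^{\sE\sC}$ denote the normalized vector $(\bra{\sigma}\otimes\bI)V\ket{\rho}$ and set $\tau^\sC \coloneqq \tr_\sE\omega$. Then $\ket{\sigma}^{\sA'\sD}\ket{\omega}^{\sE\sC}$ is a purification of $\sigma^{\sA'}\otimes\tau^\sC$, and $V\ket{\rho}$ is a purification of $\rho^{\sA'\sC}$, because $V$ acts only on $\sB$. Uhlmann's theorem bounds the overlap of these two purifications by the fidelity: $\rF(\rho^{\sA'\sC},\sigma^{\sA'}\otimes\tau^\sC)\ge|\bra{\sigma}\bra{\omega}V\ket{\rho}|^2$. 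The right-hand side is exactly $\rF(\cP(\rho),\sigma)$, which gives this direction.

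For the inequality ``$\ge$'', fix $\tau^\sC$ and a purification $\ket{\tau}^{\sC\sE}$. Both $\ket{\rho}^{\sA'\sB\sC}$ and $\ket{\sigma}^{\sA'\sD}\ket{\tau}^{\sC\sE}$ purify states on $\sA'\sC$, with purifying systems $\sB$ and $\sD\sE$ respectively. Uhlmann's theorem therefore gives a partial isometry $V^{\sB\to\sD\sE}$ that attains $\rF(\rho^{\sA'\sC},\sigma^{\sA'}\otimes\tau^\sC) = |\bra{\sigma}\bra{\tau}V\ket{\rho}|^2$. We can enlarge $\sE$ if needed so that $V$ extends to a full isometry, and the overlap is unchanged because $V$ already acts isometrically on $\supp(\rho^\sB)$. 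Define the channel $\cP(\cdot)\coloneqq\tr_\sE[V(\cdot)V^\dag]$. Then
\[
\rF(\cP(\rho^{\sA'\sB}),\sigma^{\sA'\sD})=\bra{\sigma}\tr_{\sE\sC}[V\rho V^\dag]\ket{\sigma}\ge|\bra{\sigma}\bra{\tau}V\ket{\rho}|^2,
\]
where the inequality holds because $\ketbra{\tau}{\tau}\le\bI^{\sE\sC}$. Maximizing over $\tau$ completes the proof.

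The main obstacle is the dimension bookkeeping in the second direction. We need the Uhlmann partial isometry to extend to a genuine isometry $\sB\to\sD\sE$ with $\sE$ large enough, which requires choosing $d_\sE\ge d_\sB d_\sC$. Apart from that, both directions are routine applications of Uhlmann's theorem.
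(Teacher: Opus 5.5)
Your proof is correct, but it takes a different route from the one the paper indicates. The paper gives no proof of its own. It cites Proposition~43 of the reference on Uhlmann transformations, and remarks that the identity can alternatively be assembled from two ingredients: the K\"onig--Renner--Schaffner operational characterization of min-entropy-type quantities as optimal local-conversion fidelities, and a Hayashi--Tomamichel duality relation that exchanges a quantity on $\sA'\sB$ for a R\'enyi-$\tfrac{1}{2}$ quantity on the complementary system $\sC$.

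You instead apply Uhlmann's theorem directly in both directions. For ``$\le$'' you extract a candidate $\tau^\sC$ from the normalized vector $(\bra{\sigma}\otimes\bI)V\ket{\rho}$ of an arbitrary Stinespring dilation. For ``$\ge$'' you turn the Uhlmann isometry for a fixed $\tau^\sC$ into a channel and use $\ketbra{\tau}{\tau}\le\bI^{\sE\sC}$. This is elementary, self-contained and constructive: the Uhlmann isometry for an optimal $\tau^\sC$ yields an optimal channel, consistent with the paper's remark that the maximizer is characterized by Uhlmann's theorem. The entropic route relies on more general external results, but it embeds the lemma in a duality framework for R\'enyi quantities. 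That is the form in which the paper uses it, to write $\mathrm{Irv}(\sigma,\cM)=\widetilde{I}_{\frac{1}{2}}(\sA;\sE\sM')_\Theta$.

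One justification in your ``$\ge$'' step needs adjusting. The partial isometry supplied by Uhlmann's theorem need not act isometrically on all of $\supp(\rho^\sB)$. For instance, if $\sigma^{\sA'}\otimes\tau^\sC$ is pure but $\rho^{\sA'\sC}$ is mixed, the polar-decomposition partial isometry has rank at most one. The conclusion still holds, for a different reason. Optimality forces $\bra{\sigma}\bra{\tau}(\bI\otimes Y)\ket{\rho}=0$ for every operator $Y^{\sB\to\sD\sE}$ that vanishes on the initial space of $V$. Hence any isometric extension of $V$ leaves the overlap unchanged.
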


The right-hand side of Eq.~\eqref{eq:opt ineq local op} is known as the decoupling condition~\cite{hayden2008decoupling, dupuis2010decoupling, dupuis2014one}, and the channel that maximizes the left-hand side of Eq.~\eqref{eq:opt ineq local op} is characterized by Uhlmann's theorem~\cite{uhlmann1976transition}.
As an alternative, one can straightforwardly derive Lemma~\ref{lem:local optimal trans} by combining the result in Ref.~\cite[Remark~1]{koning2009operationalmeaningminmax} with a duality relation from Ref.~\cite[Lemma~6]{Hayashi2016Correlationmutualinformation}, both of which are formulated more generally.

The upper bound on the irreversibility, $-2\log\|\sigma^{\sA'}\|_\infty$, is then derived as follows.
We consider the case that the system ${\sA'}$ is discarded to the environment $\sE$, and the states $\{\ket{m}^\sM\}$ are prepared according to a certain probability distribution $\{p_m\}_m$.
In this case, the state on ${\sA}\sE\sM'$ is given by $\ketbra{\sigma}{\sigma}^{{\sA}\sE} \otimes \rho^{\sM'}$, where $\rho^{\sM'} = \sum_m p_m \ketbra{m}{m}^{\sM'}$.
Since 
\begin{align}
    \label{inteq:40}
    \max_{\tau^{\sE\sM'}} \rF(\Theta^{{\sA}\sE\sM'}, \sigma^{\sA} \otimes \tau^{\sE\sM'})
    \geq \max_{\tau^{\sE\sM'}}\rF(\ketbra{\sigma}{\sigma}^{{\sA}\sE}\otimes\rho^{\sM'}, \sigma^{\sA}\otimes\tau^{\sE\sM'})
\end{align}
holds for any state $\Theta^{{\sA}\sE\sM'}$ such that $\Theta^{\sA} = \sigma^{\sA}$, this measurement is the one that maximizes the irreversibility.
We can easily verify Eq.~\eqref{inteq:40} as follows.
Since $\Theta^{\sA} = \sigma^{\sA}$, there exists a channel $\cT^{\sE\to\sE\sM'}$ such that $\Theta^{\sA\sE\sM'} = \cT^{\sE\to\sE\sM'}(\ketbra{\sigma}{\sigma}^{\sA\sE})$~\cite{uhlmann1976transition}. Then, 
\begin{align}
    \max_{\tau^{\sE\sM'}} \rF\big(\Theta^{\sA\sE\sM'}, \sigma^{\sA}\otimes\tau^{\sE\sM'}\big) 
    &\geq \rF\big(\Theta^{{\sA}\sE\sM'}, \sigma^{\sA}\otimes\cT^{\sE\to\sE\sM'}(\til{\tau}^\sE)\big) \\
    &\geq \rF(\ketbra{\sigma}{\sigma}^{\sA\sE}, \sigma^\sA\otimes\til{\tau}^\sE) \\
    &\geq \rF(\ketbra{\sigma}{\sigma}^{\sA\sE} \otimes \rho^{\sM'}, \sigma^\sA\otimes \til{\tau}^{\sE\sM'})
\end{align}
holds for any state $\til{\tau}^{\sE\sM'}$, where $\til{\tau}^\sE = \tr_{\sM'}\big[\til{\tau}^{\sE\sM'}\big]$, so maximizing over $\til{\tau}^{\sE\sM'}$ yields Eq.~\eqref{inteq:40}.

The right-hand side of Eq.~\eqref{inteq:40} can be reduced to 
\begin{align}
    \max_{\tau^{\sE\sM'}} \rF(\ketbra{\sigma}{\sigma}^{{\sA}\sE} \otimes \rho^{\sM'}, \sigma^{\sA} \otimes \tau^{\sE\sM'}) 
    \label{inteq:42}
    = \max_{\tau^\sE} \rF(\ketbra{\sigma}{\sigma}^{{\sA}\sE}, \sigma^{\sA} \otimes \tau^\sE),
\end{align}
which follows from a sandwich argument:
\begin{align}
    \max_{\tau^\sE} \rF(\ketbra{\sigma}{\sigma}^{{\sA}\sE}, \sigma^{\sA} \otimes \tau^\sE)
    &\geq \max_{\tau^{\sE\sM'}} \rF(\ketbra{\sigma}{\sigma}^{{\sA}\sE} \otimes \rho^{\sM'}, \sigma^{\sA} \otimes \tau^{\sE\sM'}) \\
    &\geq \max_{\tau^\sE} \rF(\ketbra{\sigma}{\sigma}^{{\sA}\sE} \otimes \rho^{\sM'}, \sigma^{\sA} \otimes \tau^\sE \otimes \rho^{\sM'}) \\
    &= \max_{\tau^\sE} \rF(\ketbra{\sigma}{\sigma}^{{\sA}\sE}, \sigma^{\sA} \otimes \tau^\sE).
\end{align}
In the first inequality, we used the monotonicity of the fidelity under the partial trace over $\sM'$. The second inequality follows from restricting the maximization to product states of the form $\tau^\sE \otimes \rho^{\sM'}$.
In the last line, we used the multiplicativity of the fidelity: $\rF(\omega_1\otimes\omega_2, \zeta_1\otimes\zeta_2) = \rF(\omega_1, \zeta_1)\rF(\omega_2, \zeta_2)$. 
We further calculate the right-hand side of Eq.~\eqref{inteq:42} as 
\begin{align}
    \max_{\tau^\sE} \rF(\ketbra{\sigma}{\sigma}^{{\sA}\sE}, \sigma^{\sA} \otimes \tau^\sE)
    &= \max_{\tau^\sE} \bra{\sigma}^{{\sA}\sE}(\sigma^{\sA}\otimes\tau^\sE)\ket{\sigma}^{{\sA}\sE} \\
    &= \max_{\tau^\sE} \sum_j \lambda_j^2 \bra{j}\tau\ket{j}^\sE \\
    &= \max_j \lambda_j^2 \\
    \label{inteq:41}
    &= \|\sigma^{\sA'}\|_\infty^2,
\end{align}
where $\{\lambda_j\}_j$ and $\{\ket{j}^\sE\}_j$ are eigenvalues and eigenstates of $\sigma^\sE = \tr_\sA[\ketbra{\sigma}{\sigma}^{\sA\sE}]$, respectively.
In the third equation, we used the fact that $\tau^\sE$ that achieves the maximum is given by the state $\ket{j}^\sE$ corresponding to the index $j$ for which $\lambda_j$ is maximal, and in the last line, we used that $\sE$ is the system discarded from ${\sA'}$.

Hence, from Eqs.~\eqref{inteq:6},~\eqref{inteq:40},~\eqref{inteq:42}, and~\eqref{inteq:41}, we evaluate the irreversibility as $\mathrm{Irv}(\sigma, \cM) \leq -2\log \|\sigma^{\sA'}\|_\infty$.
As is clear from the derivation, this upper bound is achieved by a measurement process that discards the system $\sA'$ into the environment $\sE$.


\subsection{Proof of Proposition~\ref{thm:Trade-off relation between the information gain and irreversibility}}
\label{sec:proof tradeoff onfo gain and q-disturb}

We prove Proposition~\ref{thm:Trade-off relation between the information gain and irreversibility} restated below, which shows a trade-off relation between the information gain and irreversibility.

\begin{restated}{proposition}{thm:Trade-off relation between the information gain and irreversibility}
    For a state $\ket{\sigma}^{\sA\sA'}$, the one-shot information gain and irreversibility of a measurement process $\cM^{{\sA'}\to\sB\sM}$ satisfy the relation: $\mathrm{IG}(\sigma, \Lambda) \leq \mathrm{Irv}(\sigma, \cM)$, where $\Lambda^{\sA'\to\sM} = \tr_\sB \circ \cM^{\sA'\to\sB\sM}$.
\end{restated}

\begin{proof}[Proof of Proposition~\ref{thm:Trade-off relation between the information gain and irreversibility}]

As seen in Sec.~\ref{sec:upperbound on q-distub}, the irreversibility can be equivalently rephrased as $\mathrm{Irv}(\sigma, \cM) = -\log \max_{\tau^{\sE\sM'}} \rF(\Theta^{{\sA}\sE\sM'}, \sigma^{\sA} \otimes \tau^{\sE\sM'})$ using the purification $\ket{\Theta}^{{\sA}\sB\sM\sE\sM'}$ of $\Theta^{{\sA}\sB\sM} = \cM^{{\sA'}\to\sB\sM}(\ketbra{\sigma}{\sigma}^{{\sA}{\sA'}})$.
Our statement then follows from the straightforward calculation:
\begin{align}
    \mathrm{Irv}(\sigma, \cM) 
    &= -\log \max_{\tau^{\sE\sM'}} \rF(\Theta^{{\sA}\sE\sM'}, \sigma^{\sA} \otimes \tau^{\sE\sM'}) \\
    &\geq -\log \max_{\tau^{\sM'}} \rF(\Theta^{{\sA}\sM'}, \sigma^{\sA} \otimes \tau^{\sM'}) \\
    &= -\log \max_{\tau^\sM} \rF(\Theta^{{\sA}\sM}, \sigma^{\sA} \otimes \tau^\sM) \\
    &= \mathrm{IG}(\sigma, \Lambda),
\end{align}
where the inequality is due to the monotonicity of the fidelity under partial trace.  
In the second equation, we used that the systems $\sM$ and $\sM'$ are isomorphic, that is, $\Theta^{{\sA}\sM'} = \cI^{\sM\to\sM'}(\Theta^{{\sA}\sM})$, where $\cI^{\sM\to\sM'}$ denotes the identity channel under the identification of $\sM$ with $\sM'$; see also Eqs.~\eqref{eq:dilation of measuremant} and~\eqref{eq:purification of post-meas state}.

\end{proof}

\subsection{Proof of \Cref{prop:one-shot noise disturbance}}
\label{sec:proof of disturbance noise trade offs}

We prove \Cref{prop:one-shot noise disturbance}, which, for two observables, establishes a trade-off relation between the disturbance of one and the noise on the other. To this end, we use the following norm bound.

\begin{lemma}
\label{lem:monogamy norm bound}
Let $\{Q_j^\sA\}_j$ and $\{R_k^\sA\}_k$ be any PVMs on $\cH^\sA$, and let $\{F_j^\sB\}_j$ and $\{G_k^\sC\}_k$ be any POVMs on $\cH^\sB$ and $\cH^\sC$, respectively. Then, 
\begin{align}
\label{eq:monogamy norm bound}
    \Big\|\sum_j Q_j^\sA \otimes F_j^\sB \otimes \bI^\sC
    + \sum_k R_k^\sA \otimes \bI^\sB \otimes G_k^\sC\Big\|_\infty
    \leq 1 + \max_{j,k}\big\|Q_j^\sA R_k^\sA\big\|_\infty.
\end{align}
\end{lemma}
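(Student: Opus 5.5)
The plan is to split the left-hand side of \cref{eq:monogamy norm bound} into two positive contractions and apply a general norm bound for the sum of two positive contractions. Set
\[
X\coloneqq\sum_j Q_j^\sA\otimes F_j^\sB\otimes\bI^\sC,\qquad Y\coloneqq\sum_k R_k^\sA\otimes\bI^\sB\otimes G_k^\sC .
\]
Both are PSD. Since $F_j\le\bI$ and $\sum_j Q_j=\bI$, we have $X\le\sum_j Q_j\otimes\bI\otimes\bI=\bI$, and likewise $Y\le\bI$. Because the $Q_j$ (resp.\ $R_k$) are mutually orthogonal projectors, the square roots are explicit:
\[
X^{1/2}=\sum_j Q_j\otimes F_j^{1/2}\otimes\bI,\qquad Y^{1/2}=\sum_k R_k\otimes\bI\otimes G_k^{1/2}.
\]

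\emph{Step 1: a general bound for two positive contractions.} For $0\le A,B\le\bI$ we have $\|A+B\|_\infty\le 1+\|A^{1/2}B^{1/2}\|_\infty$. To see this, write $A+B=ZZ^\dag$ with $Z=(A^{1/2},\,B^{1/2})$. Then $\|A+B\|_\infty=\|Z^\dag Z\|_\infty$, and $Z^\dag Z$ is the $2\times2$ block operator with diagonal blocks $A$ and $B$ and off-diagonal blocks $A^{1/2}B^{1/2}$ and its adjoint. Splitting this block operator into its diagonal part (norm $\le\max\{\|A\|_\infty,\|B\|_\infty\}\le1$) and its off-diagonal part (norm $=\|A^{1/2}B^{1/2}\|_\infty$) gives the claim by the triangle inequality.

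\emph{Step 2: bounding $\|X^{1/2}Y^{1/2}\|_\infty$ by $c\coloneqq\max_{j,k}\|Q_jR_k\|_\infty$.} This is the main step. Expanding the product,
\[
X^{1/2}Y^{1/2}=\sum_{j,k}Q_jR_k\otimes F_j^{1/2}\otimes G_k^{1/2}.
\]
Using that operators on distinct tensor factors commute, together with $Q_j=Q_j^2$ and $R_k=R_k^2$, each term factorizes as
\[
(Q_j\otimes\bI\otimes G_k^{1/2})\,(Q_jR_k\otimes\bI\otimes\bI)\,(R_k\otimes F_j^{1/2}\otimes\bI).
\]
For unit vectors $u,v$, define
\[
a_{jk}=(Q_j\otimes\bI\otimes G_k^{1/2})u,\qquad b_{jk}=(R_k\otimes F_j^{1/2}\otimes\bI)v .
\]
Then $|\bra{u}X^{1/2}Y^{1/2}\ket{v}|\le\sum_{j,k}|\langle a_{jk},(Q_jR_k\otimes\bI\otimes\bI)\,b_{jk}\rangle|\le c\sum_{j,k}\|a_{jk}\|\,\|b_{jk}\|$.

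By Cauchy--Schwarz this is at most $c\,(\sum\|a_{jk}\|^2)^{1/2}(\sum\|b_{jk}\|^2)^{1/2}$. The two sums are exactly $1$, since
\[
\sum_{j,k}\|a_{jk}\|^2=\bra{u}\sum_{j,k}Q_j\otimes\bI\otimes G_k\ket{u}=1
\]
by completeness of the PVM $\{Q_j\}_j$ and the POVM $\{G_k\}_k$, and similarly for $b$ using $\{R_k\}_k$ and $\{F_j\}_j$. Maximizing over $u,v$ gives $\|X^{1/2}Y^{1/2}\|_\infty\le c$.

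Combining Steps 1 and 2 yields $\|X+Y\|_\infty\le 1+c$, which is \cref{eq:monogamy norm bound}. The only delicate point I expect is the factorization in Step 2: $F_j^{1/2}$ has to be routed to the $R_k$-side vector and $G_k^{1/2}$ to the $Q_j$-side vector. This crossed routing is what makes each side sum over a full POVM (and a full PVM) and hence have total squared norm $1$.
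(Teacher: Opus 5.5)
Your proof is correct, and it takes a genuinely different route from the paper's.

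\textbf{The paper's route.} It first proves $\|\Pi_1+\Pi_2\|_\infty\le 1+\|\Pi_1\Pi_2\|_\infty$ for two projectors, using a vector argument that relies on $\Pi_l^2=\Pi_l$. It then applies Naimark's theorem to write $F_j=V^\dag\hat F_jV$ and $G_k=W^\dag\hat G_kW$. This makes the operator a compression $U^\dag(\hat\Pi_1+\hat\Pi_2)U$ of a sum of two genuine projectors. Finally it evaluates $\|\hat\Pi_1\hat\Pi_2\|_\infty$ blockwise, using that $(\hat\Pi_1\hat\Pi_2)^\dag\hat\Pi_1\hat\Pi_2$ is block diagonal with respect to the PVM $\{\hat F_j\otimes\hat G_k\}$.

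\textbf{Your route.} You stay in the original space throughout. Your Step~1 proves the more general inequality $\|A+B\|_\infty\le 1+\|A^{1/2}B^{1/2}\|_\infty$ for arbitrary positive contractions. It reduces to the paper's projector bound when $A,B$ are projections, and it is the same $ZZ^\dag$ versus $Z^\dag Z$ Gram trick the paper itself uses in \eqref{inteq:31}. Your Step~2 replaces dilation plus block-diagonalization with two ingredients. The first is the explicit square roots $X^{1/2}$ and $Y^{1/2}$, which is exactly where projectivity of $\{Q_j\}$ and $\{R_k\}$ enters. The second is a Cauchy--Schwarz estimate in which $G_k^{1/2}$ and $F_j^{1/2}$ are routed to opposite sides, so each side sums to a full resolution of the identity.

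\textbf{Comparison.} Your argument is more elementary and self-contained: no Naimark theorem, no compression step, and the intermediate quantity $\|X^{1/2}Y^{1/2}\|_\infty$ depends only on the given operators. It also isolates a reusable lemma for positive contractions. The paper's argument instead reduces everything to two honest projectors, where the product norm can be read off block by block. That makes it transparent that the final step loses something only when some $\hat F_j$ or $\hat G_k$ vanishes.
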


\begin{proof}[Proof of \Cref{lem:monogamy norm bound}]
We first show that any two projectors $\Pi_1$ and $\Pi_2$ satisfy $\|\Pi_1+\Pi_2\|_\infty \leq 1+\|\Pi_1\Pi_2\|_\infty$.
For a unit vector $\ket{\psi}$, let $t \coloneqq \bra{\psi}(\Pi_1 + \Pi_2)\ket{\psi}$ and $s \coloneqq \|\Pi_1 \Pi_2\|_\infty$.
Noting that $\Pi_l^2 = \Pi_l$, we have $t = \|\Pi_1\ket{\psi}\|^2 + \|\Pi_2\ket{\psi}\|^2 \geq 2 \|\Pi_1\ket{\psi}\|\|\Pi_2\ket{\psi}\|$, which follows from $a^2+b^2 \geq 2ab$ for $a, b \in \bR$.
Combining this with $\Pi_1\Pi_2 = \Pi_1\Pi_1\Pi_2\Pi_2$, we obtain $|\bra{\psi}\Pi_1\Pi_2\ket{\psi}| \leq \|\Pi_1\ket{\psi}\| \|\Pi_1\Pi_2\|_\infty \|\Pi_2\ket{\psi}\| \leq st/2$. Here, we used the Cauchy--Schwarz inequality and the definition of the operator norm.
Hence $t^2 \leq \big\|(\Pi_1+\Pi_2)\ket{\psi}\big\|^2 = t + 2\mathrm{Re}(\bra{\psi}\Pi_1\Pi_2\ket{\psi}) \leq t(1+s)$. If $t=0$, the inequality is trivial; otherwise, dividing by $t$ and maximizing over all unit vectors $\ket{\psi}$ yields $\|\Pi_1+\Pi_2\|_\infty \leq 1+\|\Pi_1\Pi_2\|_\infty$.

We now use Naimark's dilation theorem~\cite{Naimark1943representationadditiveoperator} to apply the above projector bound to the POVMs.
There exist isometries $V^{\sB\to\sB'}$ and $W^{\sC\to\sC'}$ such that $F_j^\sB = (V^{\sB\to\sB'})^\dag \hat{F}_j^{\sB'} V^{\sB\to\sB'}$ and $G_k^\sC = (W^{\sC\to\sC'})^\dag \hat{G}_k^{\sC'} W^{\sC\to\sC'}$, where $\{\hat{F}_j^{\sB'}\}_j$ and $\{\hat{G}_k^{\sC'}\}_k$ are PVMs on the larger systems $\sB'$ and $\sC'$, respectively.
Define an isometry $U^{\sB\sC\to\sB'\sC'} \coloneqq V^{\sB\to\sB'} \otimes W^{\sC\to\sC'}$ and operators $\hat{\Pi}_1^{\sA\sB'} \coloneqq \sum_j Q_j^\sA \otimes \hat{F}_j^{\sB'}$ and $\hat{\Pi}_2^{\sA\sC'} \coloneqq \sum_k R_k^\sA\otimes \hat{G}_k^{\sC'}$.
Both are projectors, since $\{Q_j\}_j$, $\{R_k\}_k$, $\{\hat{F}_j\}_j$, and $\{\hat{G}_k\}_k$ each consist of mutually orthogonal projectors summing to the identity.
Using $V^\dag V = \bI^\sB$ and $W^\dag W = \bI^\sC$, the operator on the left-hand side of Eq.~\eqref{eq:monogamy norm bound} equals $U^\dag(\hat{\Pi}_1+\hat{\Pi}_2)U$, so its norm is at most $\|\hat{\Pi}_1+\hat{\Pi}_2\|_\infty \leq 1 + \|\hat{\Pi}_1\hat{\Pi}_2\|_\infty$.

It remains to evaluate $\|\hat{\Pi}_1\hat{\Pi}_2\|_\infty$. We have $\hat{\Pi}_1^{\sA\sB'}\hat{\Pi}_2^{\sA\sC'} = \sum_{j,k} Q_j^\sA R_k^\sA \otimes \hat{F}_j^{\sB'} \otimes \hat{G}_k^{\sC'}$, and thus
\begin{align}
    (\hat{\Pi}_1\hat{\Pi}_2)^\dag(\hat{\Pi}_1\hat{\Pi}_2)
    = \sum_{j,k} (Q_j R_k)^\dag (Q_j R_k) \otimes \hat{F}_j \otimes \hat{G}_k,
\end{align}
where we used the fact that $\{\hat{F}_j\}_j$ and $\{\hat{G}_k\}_k$ are PVMs.
The projectors $\{\hat{F}_j^{\sB'}\otimes\hat{G}_k^{\sC'}\}_{j,k}$ are mutually orthogonal and sum to the identity on $\sB'\sC'$, so the right-hand side is block diagonal with respect to them, with the $(j,k)$-th block acting as $(Q_j^\sA R_k^\sA)^\dag(Q_j^\sA R_k^\sA)$ on $\sA$. The operator norm of a block-diagonal operator is the largest among those of its blocks, and hence
\begin{align}
    \big\|\hat{\Pi}_1\hat{\Pi}_2\big\|_\infty^2
    &= \big\|(\hat{\Pi}_1\hat{\Pi}_2)^\dag(\hat{\Pi}_1\hat{\Pi}_2)\big\|_\infty \\
    &\leq \max_{j,k}\|Q_j R_k\|_\infty^2.
\end{align}
Here, the inequality arises if $\hat{F}_j = 0$ or $\hat{G}_k = 0$ for some $j$ or $k$, in which case the corresponding blocks are absent from the direct sum.
Combining this with the bound above, we obtain Eq.~\eqref{eq:monogamy norm bound}.

\end{proof}

Using the above lemma, we now prove \Cref{prop:one-shot noise disturbance}.

\begin{restated}{proposition}{prop:one-shot noise disturbance}
For a measurement process $\cM^{\sA'\to\sB\sM}$ with $\Lambda^{\sA'\to\sM} = \tr_\sB\circ\cM^{\sA'\to\sB\sM}$ and any two observables $O_1^{\sA'}$ and $O_2^{\sA'}$, we have
\begin{align}
\tag{\ref*{eq:one-shot noise disturbance}}
\label{eq:one-shot noise disturbance SI}
    2^{-\mathrm{Dst}(O_1, \cM)} + 2^{-\mathrm{Noi}(O_2, \Lambda)}
    \leq 1 + \sqrt{c(O_1, O_2)},
\end{align}
where $c(O_1, O_2) \coloneqq \max_{j,k}\big\|P_{j|1}^{\sA'}P_{k|2}^{\sA'}\big\|_\infty^2$.
\end{restated}

\begin{proof}[Proof of \Cref{prop:one-shot noise disturbance}]
Let $V_\cM^{\sA'\to\sB\sM\sE\sM'}$ be the isometric dilation of $\cM^{\sA'\to\sB\sM}$ in Eq.~\eqref{eq:dilation of measuremant} and let $\ket{\Theta}^{\sA\sB\sM\sE\sM'} = V_\cM^{\sA'\to\sB\sM\sE\sM'}\ket{\Phi}^{\sA\sA'}$ be a purification of the post-measurement state, where $\ket{\Phi}^{\sA\sA'}$ is the MES.
Measuring $\sA$ with the complex-conjugate PVM $\{\bar{P}_{j|l}^\sA\}_j$ indexed by $l$ yields outcome $j$ and a state $P_{j|l}^{\sA'}/d_{j|l}$ on $\sA'$, with probability $\tr[\bar{P}_{j|l}^{\sA}\Phi^{\sA\sA'}] = d_{j|l}/d_{\sA'}$.
This follows from $\tr[\bar{P}_{j|l}] = d_{j|l}$ and $\tr_\sA[\bar{P}_{j|l}^\sA\Phi^{\sA\sA'}]/\tr[\bar{P}_{j|l}^{\sA}\Phi^{\sA\sA'}] = P_{j|l}^{\sA'}/d_{j|l}$.
Then, the ensemble 
\begin{align}
\label{eq:ensemble for dist and noise}
    \scE_{O_l, \cM} = \big\{d_{j|l}/d_{\sA'}; \ \cM^{{\sA'}\to\sB\sM}(P_{j|l}^{\sA'}/d_{j|l})\big\}_{j=1}^{D_l}
\end{align}
is induced on $\sB\sM$.

By the definition of the disturbance and the purification, we have 
\begin{align}
    2^{-\mathrm{Dst}(O_1, \cM)}
    &= \max_{\{F_j\}: \, \mathrm{POVM}}\sum_j \tr\big[(\bar{P}_{j|1}^\sA \otimes F_j^{\sB\sM}) \cM^{\sA'\to\sB\sM}(\Phi^{\sA\sA'})\big]  \\
    \label{inteq:dst as guessing}
    &= \max_{\{F_j\}: \, \mathrm{POVM}}\sum_j \bra{\Theta}^{\sA\sB\sM\sE\sM'}\big(\bar{P}_{j|1}^\sA \otimes F_j^{\sB\sM} \otimes \bI^{\sE\sM'}\big)\ket{\Theta}^{\sA\sB\sM\sE\sM'}.
\end{align}

For the noise on an observable, the guess is made from the register $\sM$ alone. Since the registers $\sM$ and $\sM'$ are perfectly correlated by Eq.~\eqref{eq:dilation of measuremant}, any additional measurement on $\sM$ is reproduced by the corresponding measurement on $\sM'$.
Thus, 
\begin{align}
    2^{-\mathrm{Noi}(O_2, \Lambda)}
    &= \max_{\{G_k\}: \, \mathrm{POVM}}\sum_k \tr\big[(\bar{P}_{k|2}^\sA \otimes G_k^{\sM}) \Lambda^{\sA'\to\sM}(\Phi^{\sA\sA'})\big] \\
    &= \max_{\{G_k\}: \, \mathrm{POVM}}\sum_k \bra{\Theta}^{\sA\sB\sM\sE\sM'}\big(\bar{P}_{k|2}^\sA \otimes \bI^{\sB} \otimes G_k^{\sM} \otimes \bI^{\sE\sM'}\big)\ket{\Theta}^{\sA\sB\sM\sE\sM'} \\
    &= \max_{\{G_k\}: \, \mathrm{POVM}}\sum_k \bra{\Theta}^{\sA\sB\sM\sE\sM'}\big(\bar{P}_{k|2}^\sA \otimes \bI^{\sB\sM\sE} \otimes G_k^{\sM'}\big)\ket{\Theta}^{\sA\sB\sM\sE\sM'} \\
    \label{inteq:noi as guessing}
    &\leq \max_{\{G_k\}: \, \mathrm{POVM}}\sum_k \bra{\Theta}^{\sA\sB\sM\sE\sM'}\big(\bar{P}_{k|2}^\sA \otimes \bI^{\sB\sM} \otimes G_k^{\sE\sM'}\big)\ket{\Theta}^{\sA\sB\sM\sE\sM'},
\end{align}
where the inequality follows because the maximization over POVMs on $\sE\sM'$ includes those acting on $\sM'$ alone.

Adding Eqs.~\eqref{inteq:dst as guessing} and~\eqref{inteq:noi as guessing} and applying \Cref{lem:monogamy norm bound} with $\sB\sM$ and $\sE\sM'$ in place of $\sB$ and $\sC$, we obtain
\begin{align}
    &2^{-\mathrm{Dst}(O_1, \cM)} + 2^{-\mathrm{Noi}(O_2, \Lambda)}  \notag\\
    &\leq  \max_{\{F_j\}, \{G_k\}: \, \mathrm{POVM}}\bra{\Theta}^{\sA\sB\sM\sE\sM'} \Big(\sum_j \bar{P}_{j|1}^\sA \otimes F_j^{\sB\sM} \otimes \bI^{\sE\sM'}
    + \sum_k \bar{P}_{k|2}^\sA \otimes \bI^{\sB\sM} \otimes G_k^{\sE\sM'}\Big)\ket{\Theta}^{\sA\sB\sM\sE\sM'} \\
    &\leq \max_{\{F_j\}, \{G_k\}: \, \mathrm{POVM}}\Big\|\sum_j \bar{P}_{j|1}^\sA \otimes F_j^{\sB\sM} \otimes \bI^{\sE\sM'}
    + \sum_k \bar{P}_{k|2}^\sA \otimes \bI^{\sB\sM} \otimes G_k^{\sE\sM'}\Big\|_\infty \\
    &\leq 1 + \max_{j,k}\big\|\bar{P}_{j|1}^\sA\bar{P}_{k|2}^\sA\big\|_\infty \\
    \label{inteq:dist noise trade off proof}
    &= 1 + \sqrt{c(O_1, O_2)},
\end{align}
where $c(O_1, O_2) \coloneqq \max_{j,k}\big\|P_{j|1}^{\sA'} P_{k|2}^{\sA'}\big\|_\infty^2 = \max_{j,k}\big\|\bar{P}_{j|1}^\sA\bar{P}_{k|2}^\sA\big\|_\infty^2$.
The last equality holds because complex conjugation preserves singular values.
For nondegenerate observables, $c(O_1, O_2)$ reduces to $\max_{j,k}|\braket{e_{j|1}}{e_{k|2}}|^2$, where $\{\ket{e_{j|l}}\}_j$ is the eigenbasis of $O_l^{\sA'}$ for $l = 1, 2$.

\end{proof}

A related prior result is the trade-off relation established in Ref.~\cite{Buscemi2014NoiseandDisturbance}, which lower bounds the sum of the Shannon-entropic noise and disturbance by $-\log c(O_1,O_2)$.
That bound and ours are not directly comparable: the R\'{e}nyi entropic uncertainty relations involve conjugate orders (see~\cite{Coles2017Entropicuncertainty} for details), whereas both our disturbance and noise are defined in terms of min-entropies. Within this min-entropy pairing, our bound is optimal, with equality attained for $O_1 = Z$ and $O_2 = X$ on a qubit.


\section{Proof of Theorem~\ref{thm:Trade-off relation between the information gain and the observable disturbance}}
\label{sec:tradeoff infogain and classical disturb}

We here prove Theorem~\ref{thm:Trade-off relation between the information gain and the observable disturbance}, which bounds the irreversibility in Definition~\ref{def:irreversibility} in terms of the disturbance for an observable in Definition~\ref{def:observable disturbance}. The proof strategy is to directly apply Theorem~\ref{thm:1} to the post-measurement state.

\begin{restated}{theorem}{thm:Trade-off relation between the information gain and the observable disturbance}
For a measurement process $\cM^{{\sA'}\to\sB\sM}$ on the MES $\ket{\Phi}^{\sA\sA'}$, the following trade-off relation holds:
\begin{align}
\tag{\ref*{eq:tradeoff infogain and c-disturbance equation}}
\label{eq:tradeoff infogain and c-disturbance equation SI}
    \nu(\Omega)\big(1-2^{-\mathrm{Irv}(\Phi, \cM)}\big) 
    \leq 1 - \big\langle 2^{-2\mathrm{Dst}(O, \cM)} \big\rangle,
\end{align}
where $\big\langle 2^{-2\mathrm{Dst}(O, \cM)}\big\rangle \coloneqq \sum_{l=1}^L q_l 2^{-2\mathrm{Dst}(O_l, \cM)}$, and $\Omega^{\sA\hat{\sB}}$ is given in Eq.~\eqref{inteq:90}.
\end{restated}

\begin{proof}[Proof of Theorem~\ref{thm:Trade-off relation between the information gain and the observable disturbance}]

The infidelity $\varepsilon(\cP(\xi), \phi)$ and the irreversibility $\mathrm{Irv}(\sigma, \cM)$ defined by Eqs.~\eqref{eq:def quantum error main} and~\eqref{eq:definition of q-disturb}, respectively, are related as
\begin{align}
    \mathrm{Irv}(\Phi, \cM) = -\log \max_\cP\big(1-\varepsilon(\cP(\Theta), \Phi)\big), 
\end{align}
where $\Theta^{\sA\sB\sM} = \cM^{\sA'\to\sB\sM}(\ketbra{\Phi}{\Phi}^{\sA\sA'})$ and $\ket{\Phi}^{\sA\sA'}$ is the MES.

When applying Theorem~\ref{thm:1} to $\Theta^{\sA\sB\sM}$ (with $\sB$ replaced by $\sB\sM$ and $\hat{\sB} \cong \sA'$), we choose the POVMs $\{E_l^\sA\}_{l=1}^L$ to be the PVMs $\big\{\{\bar{P}_{j|l}^\sA\}_{j=1}^{D_l}\big\}_{l=1}^L$, so that $\Omega^{\sA\hat{\sB}}$ is given by Eq.~\eqref{inteq:90}.
Since $\bar{P}_{j|l}^\sA\ket{\Phi}^{\sA\hat{\sB}} = P_{j|l}^{\hat{\sB}}\ket{\Phi}^{\sA\hat{\sB}}$ implies $\Omega^{\sA\hat{\sB}}\ket{\Phi}^{\sA\hat{\sB}} = \ket{\Phi}^{\sA\hat{\sB}}$, the MES $\ket{\Phi}^{\sA\hat{\sB}}$ lies in the principal eigenspace $\cV_\Omega$ with $\|\Omega\|_\infty = 1$.
Moreover, measuring $\sA$ with $\{\bar{P}_{j|l}^\sA\}_j$ yields outcome $j$ with probability $d_{j|l}/d_{\sA'}$ and induces the ensemble $\scE_{\Theta, E_l} = \scE_{O_l, \cM}$, which is explicitly given by Eq.~\eqref{eq:ensemble for dist and noise}.
This ensemble defines $\mathrm{Dst}(O_l, \cM)$ through $\rP_{\rm opt}(\scE_{O_l, \cM})$.
We then obtain 
\begin{align}
    \nu(\Omega)\big(1 - 2^{-\mathrm{Irv}(\Phi, \cM)}\big) 
    &= \nu(\Omega) \min_\cP\varepsilon(\cP(\Theta), \Phi) \\
    &\leq 1- \big\lag \rP_{\rm opt}^2(\scE_{\Theta, E})\big\rag \\
    \label{inteq:7}
    &= 1- \big\lag 2^{-2\mathrm{Dst}(O, \cM)}\big\rag,
\end{align}
where the inequality follows from Eq.~\eqref{eq:equation of main thm1} in Theorem~\ref{thm:1} with $\|\Omega\|_\infty = 1$ and the last equation is due to the definition of the disturbance (\Cref{def:observable disturbance}).

\end{proof}


\section{Numerical evaluation of the information gain--disturbance trade-off}
\label{sec:measurement trade-off simulation}

We numerically evaluate the trade-off inequality between the information gain and disturbance for an observable in the case $L=2$. 
In particular, we examine the tightness of Eqs.~\eqref{eq:info gain and cl distu L=2} and~\eqref{eq:info gain and single cl distu L=2} with respect to the measurement sharpness and the choice of observables.
We restate Eqs.~\eqref{eq:info gain and cl distu L=2} and~\eqref{eq:info gain and single cl distu L=2} below:
\begin{align}
\label{eq:info gain and cl distu L=2 restated in appendix}
    &2^{-2\mathrm{Dst}(O_1, \cM)} + 2^{-2\mathrm{Dst}(O_2, \cM)}
    \leq 2\left(1 - \nu\left(\frac{\Omega_1 + \Omega_2}{2}\right)\left(1-2^{-\mathrm{IG}(\Phi, \Lambda)}\right)\right), \\
\label{eq:info gain and single cl distu L=2 restated in appendix}
    &\mathrm{Dst}(O_2, \cM)
    \geq -\f{1}{2}\log{\left(1-2\left(1-\f{1}{d_{\sA'}}\right)\nu\left(\f{\Omega_1 + \Omega_2}{2}\right)\right)}.
\end{align}

In the evaluation of Eq.~\eqref{eq:info gain and cl distu L=2 restated in appendix}, we use the $\theta$-Fourier transform $F_d^\theta$, which we defined in Sec.~\ref{sec:numerical evaluation of spectral gap bounds}, and particularly consider two observables: 
\begin{align}
    &O_1 = F_d^\theta Z (F_d^{\theta})^\dag = \sum_j \omega^j \ketbra{f_j^\theta}{f_j^\theta}, \\
    &O_2 = F_d^{-\theta} Z (F_d^{-\theta})^\dag = \sum_j \omega^j \ketbra{f_j^{-\theta}}{f_j^{-\theta}},    
\end{align}
where $\ket{f_j^\theta} = F_d^\theta\ket{j}$, and $Z = \sum_j \omega^j \ketbra{j}{j}$ is the qudit Pauli-$Z$ operator, whose eigenvalues are given by the $d$-th roots of unity $\omega^j = e^{i 2\pi j/d}$.
Although the qudit Pauli operators are not Hermitian for $d > 2$, they only specify the measurement bases here; any nondegenerate Hermitian observable with the same eigenbasis has the same disturbance.
We fix a measurement process $\cM$ to be a Pauli-$X$ measurement mixed with white noise: 
\begin{align}
    \cM(\cdot) = \sum_j\sqrt{M_j}(\cdot)\sqrt{M_j}\otimes\ketbra{j}{j},
    \ \ \ \ \ \ 
    M_j = (1-\eta) F_d \ketbra{j}{j} (F_d)^\dag + \eta\f{\bI}{d},
\end{align}
where $\{M_j\}_j$ is a POVM with the noise strength $\eta \in [0,1]$ that quantifies the degree of measurement unsharpness. If $\eta = 0$, the measurement is sharp, whereas $\eta = 1$ corresponds to a completely noisy and uninformative measurement.
We then plot, in Fig.~\ref{fig:double_obs_dist_vs_info_gain}, the left- and right-hand sides of Eq.~\eqref{eq:info gain and cl distu L=2 restated in appendix} for $d=2, 3, 4$ as functions of $\theta$, for several values of $\eta$.
To assess the tightness of the bound, we use the nonnegative gap between these two sides, defined as the right-hand side minus the left-hand side and shown in the inset. A smaller gap indicates a tighter bound, although a vanishing gap may correspond to a trivial case.


\begin{figure*}[t]
  \centering

  \begin{minipage}[t]{0.31\textwidth}
    \centering
    \includegraphics[width=\linewidth]{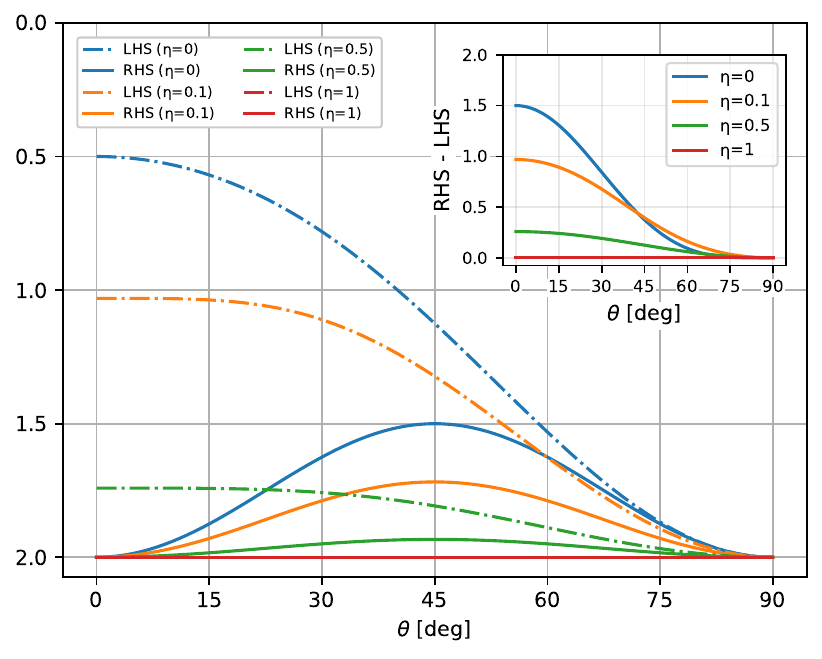}

    \vspace{1mm}
    \small\textbf{(a)} $d=2$
  \end{minipage}\hfill
  \begin{minipage}[t]{0.31\textwidth}
    \centering
    \includegraphics[width=\linewidth]{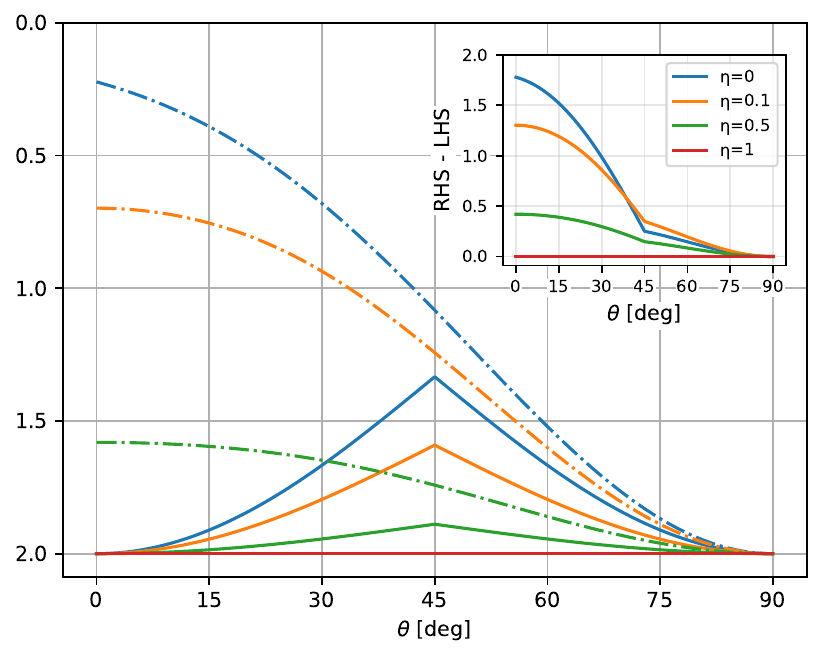}

    \vspace{1mm}
    \small\textbf{(b)} $d=3$
  \end{minipage}\hfill
  \begin{minipage}[t]{0.31\textwidth}
    \centering
    \includegraphics[width=\linewidth]{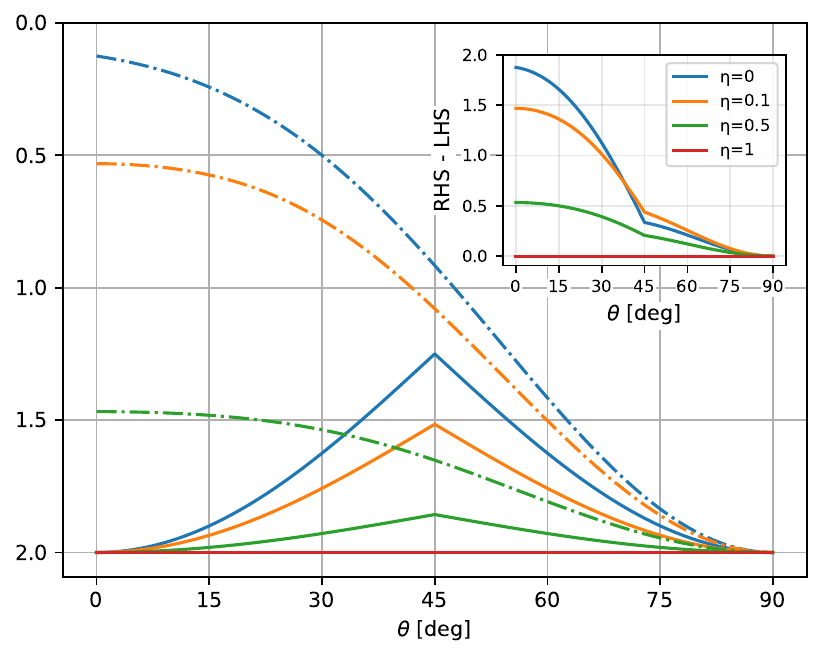}

    \vspace{1mm}
    \small\textbf{(c)} $d=4$
  \end{minipage}

  \caption{The left- and right-hand sides of Eq.~\eqref{eq:info gain and cl distu L=2 restated in appendix} 
  (dash-dotted and solid, respectively) are plotted for several values of the measurement 
  unsharpness parameter $\eta$, with $\theta$ on the horizontal axis. Note that smaller values 
  on the left-hand side indicate larger disturbance, while, for a fixed pair of observables, smaller values on the right-hand side indicate larger information gain; thus, the vertical axis is 
  inverted. The inset plots the difference between the right-hand side and the left-hand side.}
  \label{fig:double_obs_dist_vs_info_gain}
\end{figure*}

\begin{figure}[h]
  \centering
  \begin{minipage}[t]{0.30\columnwidth}
    \centering
    \includegraphics[width=\linewidth]{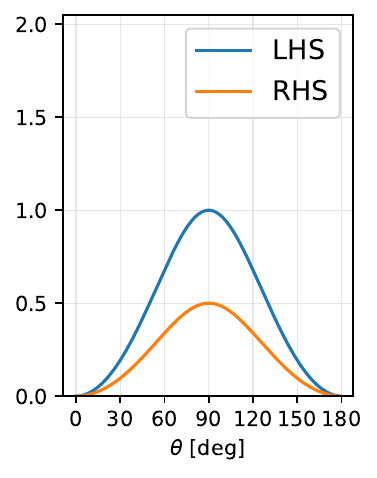}
    \vspace{1mm}
    \small\textbf{(a)} $d=2$
  \end{minipage}\hfill
  \begin{minipage}[t]{0.30\columnwidth}
    \centering
    \includegraphics[width=\linewidth]{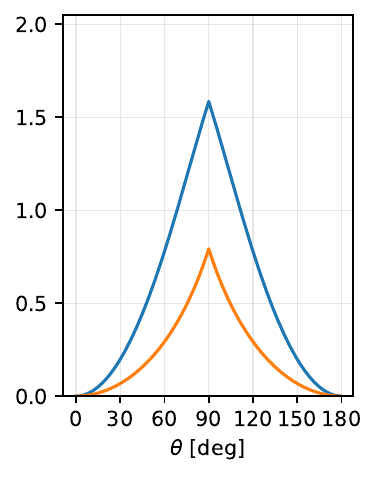}
    \vspace{1mm}
    \small\textbf{(b)} $d=3$
  \end{minipage}\hfill
  \begin{minipage}[t]{0.30\columnwidth}
    \centering
    \includegraphics[width=\linewidth]{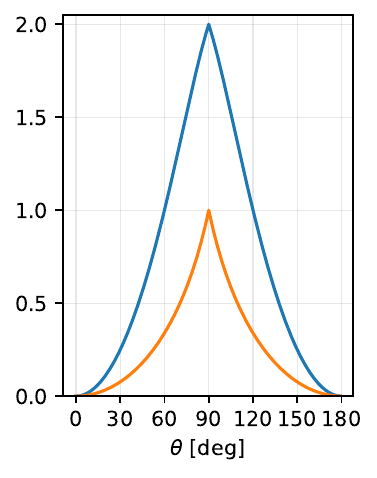}
    \vspace{1mm}
    \small\textbf{(c)} $d=4$
  \end{minipage}
  \caption{The figures show the left- and right-hand sides of Eq.~\eqref{eq:info gain and single cl distu L=2 restated in appendix} as $\theta$ varies. The behavior around $\theta = 90^\circ$ differs between $d=2$ and $d>2$.}
  \label{fig:single_obs_dist_vs_info_gain}
\end{figure}


From Fig.~\ref{fig:double_obs_dist_vs_info_gain}, we first observe that, for each value $\eta < 1$, the gap between the left- and right-hand sides of Eq.~\eqref{eq:info gain and cl distu L=2 restated in appendix} is generally largest near $\theta = 0$ and decreases as $\theta$ approaches $45^\circ$. 
For fixed $\eta$, the measurement process, and thus the information gain, is independent of $\theta$; the $\theta$-dependence of the right-hand side originates from the spectral gap associated with the two observables. 
At $\theta = 0$, the two observables coincide with $Z$ and $\nu((\Omega_1+\Omega_2)/2)$ vanishes. The right-hand side then reduces to the trivial upper bound $2$, even though a sharp Pauli-$X$ measurement strongly disturbs the common $Z$ observable. 
By contrast, at $\theta = 45^\circ$, the eigenbases of the observables become MUBs and the spectral gap reaches its maximum $1/2$. Since the information gain is independent of $\theta$ for fixed $\eta$, this minimizes the right-hand side of Eq. \eqref{eq:info gain and cl distu L=2 restated in appendix} for $\eta<1$ among the observable pairs obtained by varying $\theta$.
The approach to the MUB point differs between $d=2$ and $d>2$. For $d=2$, $\nu((\Omega_1+\Omega_2)/2)$ varies smoothly with $\theta$, while for $d>2$, multiple eigenvalues of $(\Omega_1+\Omega_2)/2$ below the maximum become degenerate at the MUB point and split into branches with different slopes as $\theta$ deviates. 
Consequently, $\nu((\Omega_1+\Omega_2)/2)$, which is determined by the leading branch among these split eigenvalues, exhibits a kink at $\theta = 45^\circ$.

The dependence on $\eta$ clarifies the physical significance of a small gap between the right- and left-hand sides. When $\eta=1$, the measurement yields no information and causes no disturbance, so both sides equal the trivial value $2$ for every $\theta$. In contrast, for small $\eta$, the measurement is nearly sharp and enables us to acquire substantial information. The reduction of the gap near the MUB point therefore shows that the bound remains informative in a regime with nonzero information gain and disturbance, rather than being saturated only in the trivial no-information limit.

Next, we evaluate Eq.~\eqref{eq:info gain and single cl distu L=2 restated in appendix}.
To this end, we take two observables as $O_1 = Z$ and $O_2 = F_d^\theta Z (F_d^{\theta})^\dag$, where
we note that $O_2$ corresponds to the qudit Pauli-$X$ operator at $\theta = 90^\circ$.
The measurement is taken to be the projective measurement in the computational basis: $\{M_j = \ketbra{j}{j}\}_j$.
In this setting, Fig.~\ref{fig:single_obs_dist_vs_info_gain} shows the left- and right-hand sides of Eq.~\eqref{eq:info gain and single cl distu L=2 restated in appendix} for $d=2,3,4$.

As shown in Fig.~\ref{fig:single_obs_dist_vs_info_gain}, Eq.~\eqref{eq:info gain and single cl distu L=2 restated in appendix} is saturated at $\theta=0$. This saturation is trivial: at this point, the two observables coincide, $O_2=O_1=Z$, and the measurement in the $Z$ basis leaves their eigenstates undisturbed, so that $\mathrm{Dst}(O_2,\cM) = 0$. 
The spectral gap also vanishes, and hence the right-hand side of Eq.~\eqref{eq:info gain and single cl distu L=2 restated in appendix} is zero. 
As $\theta$ increases, the eigenbasis of $O_2$ moves away from the measured $Z$ basis, and the disturbance and its lower bound become nonzero. The different behavior around $\theta = 90^\circ$ for $d=2$ and $d>2$ originates from the dimension-dependent spectral structure discussed above. At $\theta = 90^\circ$, $O_2$ is the qudit Pauli-$X$ observable and is mutually unbiased with respect to the $Z$ basis. The $Z$-basis measurement then completely erases the $X$-eigenstate label, giving $\mathrm{Dst}(O_2,\cM)=\log d$, whereas Eq.~\eqref{eq:info gain and single cl distu L=2 restated in appendix} gives the lower bound $\frac{1}{2}\log d$. Thus, the bound is nontrivial at the MUB point but differs from the actual disturbance by a factor of two for all $d$.

\putbib[ref]
\end{bibunit}

\end{document}